\documentclass[a4paper]{article}

\usepackage[utf8]{inputenc}

\usepackage{microtype}

\usepackage{fullpage}

\usepackage[hidelinks,colorlinks,allcolors=blue]{hyperref}

\usepackage[usenames,dvipsnames]{xcolor}

\usepackage{amsmath,amssymb,amsthm}

\usepackage{algorithm2e}
 
\usepackage{graphicx,caption,subfig}

\usepackage{tabularx,enumerate}

\newtheorem{lemma}{Lemma}[section]
\newtheorem{theorem}[lemma]{Theorem}
\newtheorem{claim}[lemma]{Claim}

\newtheorem{corollary}[lemma]{Corollary}
\newtheorem{proposition}[lemma]{Proposition}

\theoremstyle{definition}
\newtheorem{definition}[lemma]{Definition}

\theoremstyle{remark}
\newtheorem{observation}[lemma]{Observation}

\usepackage{tikz} 
\usetikzlibrary{matrix}
\usetikzlibrary{arrows,shapes,positioning}
\usetikzlibrary{decorations.pathreplacing}

\DeclareMathOperator{\fn}{fn}

\newcommand{\F}{\mathcal{F}}

\newcommand{\sm}{\setminus}
\newcommand{\bigO}{\mathcal{O}}

\renewcommand{\phi}{\varphi}

\newcommand{\fd}{{\textsc{$\F$-Completion}}}

\newcommand{\yes}{\textit{\textbf{yes}}}

\newcommand{\no}{\textit{\textbf{no}}}

\newcommand{\defparproblem}[4]{
  \hfill\\\smallskip\noindent%
  \begin{tabularx}{\textwidth}{|l X|}%
    \hline%
    \multicolumn{2}{|l|}{\pname{#1}}\\%
    \textup{Input:}&#2\\%
    \textup{Parameter:}&#3\\%
    \textup{Question:}&#4\\\hline%
  \end{tabularx}%
  \smallskip%
}%

\DeclareMathOperator{\uni}{uni}

\newcommand{\pname}{\textsc}
\newcommand{\cclass}{\textsf}

\newcommand{\vpmc}{vital potential maximal clique}
\newcommand{\Vpmc}{Vital potential maximal clique}

\DeclareMathOperator{\dpt}{dp}

\hyphenation{sub-exponential sub-exponentially para-meter
  para-meter-ized para-meter-ization un-directed}

\title{Exploring Subexponential Parameterized Complexity of Completion
  Problems\thanks{Supported by Rigorous Theory of Preprocessing, ERC
    Advanced Investigator Grant 267959}}

\author{
  Pål Grønås Drange%
  \thanks{University of Bergen, Norway,
    \texttt{\{Pal.Drange|Fedor.Fomin|Michal.Pilipczuk|Yngve.Villager\}%
      @ii.uib.no}}
  \addtocounter{footnote}{-1}
  \and Fedor V.~Fomin\footnotemark \addtocounter{footnote}{-1}
  \and Michał Pilipczuk\footnotemark \addtocounter{footnote}{-1}
  \and Yngve Villanger\footnotemark \addtocounter{footnote}{-1}
}

\date{}

\begin{document}
\maketitle

\begin{abstract}
  Let $\F$ be a family of graphs.  In the \textsc{$\F$-Completion}
  problem, we are given an $n$-vertex graph $G$ and an integer $k$ as
  input, and asked whether at most $k$ edges can be added to $G$ so
  that the resulting graph does not contain a graph from $\F$ as an
  induced subgraph.  It appeared recently that special cases of
  \textsc{$\F$-Completion}, the problem of completing into a chordal
  graph known as \textsc{Minimum Fill-in}, corresponding to the case
  of $\F=\{C_4,C_5,C_6,\ldots\}$, and the problem of completing into a
  split graph, i.e., the case of $\F=\{C_4, 2K_2, C_5\}$, are solvable
  in parameterized subexponential time
  $2^{\bigO(\sqrt{k}\log{k})}n^{\bigO(1)}$.  The exploration of this
  phenomenon is the main motivation for our research on \fd{}.
  
  In this paper we prove that completions into several well studied
  classes of graphs without long induced cycles also admit
  parameterized subexponential time algorithms by showing that:
  \begin{itemize}
  \item The problem \pname{Trivially Perfect Completion} is solvable
    in parameterized subexponential time
    $2^{\bigO(\sqrt{k}\log{k})}n^{\bigO(1)}$, that is \fd{} for $\F
    =\{C_4, P_4\}$, a cycle and a path on four vertices.
  \item The problems known in the literature as \textsc{Pseudosplit
      Completion}, the case where $\mathcal{F} = \{2K_2, C_4\}$, and
    \pname{Threshold Completion}, where $\F = \{2K_2, P_4, C_4\}$, are
    also solvable in time $2^{\bigO(\sqrt{k}\log{k})} n^{\bigO(1)}$.
  \end{itemize}
  
  We complement our algorithms for \textsc{$\F$-Completion} with the
  following lower bounds:
  \begin{itemize}
  \item For $\F = \{2K_2\}$, $\F = \{ C_4\}$, $\F = \{P_4\}$, and $\F
    = \{2K_2, P_4\}$, \textsc{$\F$-Completion} cannot be solved in
    time $2^{o(k)} n^{\bigO(1)}$ unless the Exponential Time
    Hypothesis~(ETH) fails.
  \end{itemize}

  Our upper and lower bounds provide a complete picture of the
  subexponential parameterized complexity of \textsc{$\F$-Completion}
  problems for $\mathcal{F}\subseteq \{2K_2, C_4, P_4\}$.
\end{abstract}

\section{Introduction}
\label{sec:introduction}

Let $\F$ be a family of graphs.  In this paper we study the following
\fd{} problem.

\defparproblem{\fd{}} {A graph $G=(V,E)$ and a non-negative integer
  $k$.}{$k$}{Does there exist a supergraph $H=(V,E\cup S)$ of $G$,
  such that $|S| \leq k$ and $H$ contains no graph from $\F$ as an
  induced subgraph?}

The \fd{} problems form a subclass of graph modification problems
where one is asked to apply a bounded number of changes to an input
graph to obtain a graph with some property.  Graph modification
problems arise naturally in many branches of science and have been
studied extensively during the past 40 years.  Interestingly enough,
despite the long study of the problem, there is no known dichotomy
classification of \fd{} explaining for which classes $\F$ the problem
is solvable in polynomial time and for which the problem is
\cclass{NP}-complete~\cite{Yannakakis81Edge,mancini2008graph,BurzynBD06}.

One of the motivations to study completion problems in graph
algorithms comes from their intimate connections to different width
parameters.  For example, the treewidth of a graph, one of the most
fundamental graph parameters, is the minimum over all possible
completions into a chordal graph of the maximum clique size minus
one~\cite{Bodlaender98}.  The treedepth of a graph, also known as the
vertex ranking number, the ordered chromatic number, and the minimum
elimination tree height, plays a crucial role in the theory of sparse
graphs developed by Ne{\v{s}}et{\v{r}}il and Ossona de
Mendez~\cite{NesetrilOdM12}.  Mirroring the connection between
treewidth and chordal graphs, the treedepth of a graph can be defined
as the largest clique size in a completion to a \emph{trivially
  perfect graph}.  Similarly, the vertex~cover number of a graph is
equal to the minimum of the largest clique size taken over all
completions to a \emph{threshold graph}, minus one.

Recent developments have also led to subexponential parameterized
algorithms for the problems \pname{Interval
  Completion}~\cite{bliznets2014interval} and \pname{Proper Interval
  Completion}~\cite{bliznets2014proper}.  Both these problems have
strong connection to width parameters just like the ones mentioned
above: The \emph{pathwidth} of a graph is the minimum over the maximum
clique size in an \emph{interval completion} of the graph, minus one,
whereas the \emph{bandwidth} mirrors this relation for \emph{proper
  interval completions} of the graph.

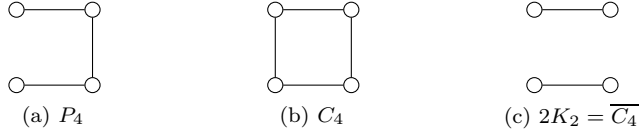
\begin{figure}[t]
  \centering \subfloat[.3\columnwidth][$P_4$] {
    \begin{tikzpicture}[every node/.style={circle, draw, scale=.6},
      scale=1]
      \fill[white] (-0.5,0.4) rectangle (1.5,0.6); \node (1) at (0,0)
      {}; \node (2) at (1,0) {}; \node (3) at (0,1) {}; \node (4) at
      (1,1) {}; \draw (1) -- (2); \draw (2) -- (4); \draw (3) -- (4);
    \end{tikzpicture}
  }\hspace{3em} \subfloat[.3\columnwidth][$C_4$]{
    \begin{tikzpicture}[every node/.style={circle, draw, scale=.6},
      scale=1]
      \fill[white] (-0.5,0.4) rectangle (1.5,0.6); \node (1) at (0,0)
      {}; \node (2) at (1,0) {}; \node (3) at (0,1) {}; \node (4) at
      (1,1) {}; \draw (1) -- (2); \draw (2) -- (4); \draw (3) -- (4);
      \draw (1) -- (3);
    \end{tikzpicture}
  }\hspace{3em} \subfloat[.3\columnwidth][$2K_2=\overline{C_4}$]{
    \begin{tikzpicture}[every node/.style={circle, draw, scale=.6},
      scale=1]
      \fill[white] (-0.5,0.4) rectangle (1.5,0.6); \node (1) at (0,0)
      {}; \node (2) at (1,0) {}; \node (3) at (0,1) {}; \node (4) at
      (1,1) {}; \draw (1) -- (2); \draw (3) -- (4);
    \end{tikzpicture}
  }
  \caption{Forbidden induced subgraphs.  \emph{Trivially perfect graphs}
    are $\{C_4, P_4\}$-free, threshold graphs are $\{2K_2,
    P_4,C_4\}$-free, and cographs are $P_4$-free.} \label{fig:3graphs}
\end{figure}

\vskip -0.5cm

\paragraph{Parameterized algorithms for completion problems}
For a long time in parameterized complexity, the main focus of studies
in \fd{} was for the case when~$\F$ was an infinite family of graphs,
e.g., \pname{Minimum Fill-in} or \pname{Interval
  Completion}~\cite{KaplanST99,NatanzonSS00,villanger2009interval}.
This was mainly due to the fact that when~$\F$ is a finite family,
\fd{} is solvable on an $n$-vertex graph in time $f(k) \cdot
n^{\bigO(1)}$ for some function~$f$ by a simple branching argument;
This was first observed by Cai~\cite{cai1996fixed}.  More precisely,
if the maximum number of non-edges in a graph from~$\F$ is~$d$, then
the corresponding \fd{} is solvable in time~$d^k \cdot n^{\bigO(1)}$.
The interest in \fd{} problems started to increase with the advance of
kernelization.  It appeared that from the perspective of
kernelization, even for the case of finite families~$\F$ the problem
is far from trivial.  Guo~\cite{guo2007problem} initiated the study of
kernelization algorithms for \fd{} in the case when the forbidden
set~$\F$ contains the graph~$C_4$, see Figure~\ref{fig:3graphs}.  (In
fact, Guo considered edge deletion problems, but they are polynomial
time equivalent to completion problems to the complements of the
forbidden induced subgraphs.)  In the literature, the most studied
graph classes containing no induced~$C_4$ are the \emph{split graphs},
i.e., $\{2K_2, C_4, C_5\}$-free graphs, \emph{threshold graphs}, i.e.,
$\{2K_2, P_4, C_4\}$-free graphs, and $\{C_4, P_4\}$-free graphs, that
is, \emph{trivially perfect graphs}~\cite{brandstadt1999graph}.  Guo
obtained polynomial kernels for the completion problems for
\emph{chain graphs}, split graphs, threshold graphs and trivially
perfect graphs and concluded that, as a consequence of his polynomial
kernelization, the corresponding \fd{} problems: \pname{Chain
  Completion}, \pname{Split Completion}, \pname{Threshold Completion}
and \pname{Trivially Perfect Completion} are solvable in times
$\bigO(2^k + mnk)$, $\bigO(5^k + m^4n)$, $\bigO(4^k + kn^4)$, and
$\bigO(4^k + kn^4)$, respectively.

The work on kernelization of \fd{} problems was continued by Kratsch
and Wahlstr{\"o}m~\cite{kratsch2009two} who showed that there exists a
set $\F$ consisting of one graph on seven vertices for which \fd{}
does not admit a polynomial kernel.  Guillemot et
al.~\cite{guillemot2013non} showed that \pname{Cograph Completion},
i.e., the case $\F=\{ P_4\}$, admits a polynomial kernel, while for
$\F=\{ \overline{P_{13}}\}$, the complement of a path on 13 vertices,
\fd{} has no polynomial kernel.  These results were significantly
improved by Cai and Cai~\cite{cai2013incompressibility}: For $\F = \{
P_\ell \}$ or $\F = \{ C_\ell \}$, the problems \fd{} and
\pname{$\F$-Edge Deletion} admit a polynomial kernel if and only if
the forbidden graph has at most three edges.

\begin{figure}[h]
  \centering
  \begin{tabular}{l | l | l }
    Obstruction set $\F$& Graph class name            & Complexity\\
    \hline
    $C_4,C_5, C_6, \dots$ & Chordal                &
    \cclass{SUBEPT}~\cite{fomin2012subexponential}\\
    $C_4,P_4$     & Trivially Perfect          &
    \cclass{SUBEPT}~(Theorem~\ref{thm:co-trivially-perfect-subept})\\
    $2K_2,C_4,C_5$ & Split                     &
    \cclass{SUBEPT}~\cite{ghosh2012faster}\\
    $2K_2,C_4,P_4$ & Threshold                 &
    \cclass{SUBEPT}~(Theorem~\ref{thm:threshold-completion-subept})\\
    $2K_2,C_4$     & Pseudosplit               &
    \cclass{SUBEPT}~(Theorem~\ref{thm:pseudosplit-completion-subept})\\
    $\overline{P_3}, K_t$, $t=o(k)$ & Co-$t$-cluster &
    \cclass{SUBEPT}~\cite{fomin2011subexponential} \\
    $\overline{P_3}$ & Co-cluster                   &
    \cclass{E}~\cite{komusiewicz2012cluster}\\
    $2K_2$         & $2K_2$-free               &
    \cclass{E}~(Theorem~\ref{thm:2k2-completion-exp}) \\
    $C_4$          & $C_4$-free                &
    \cclass{E}~(Theorem~\ref{thm:c4completion-exp}) \\
    $P_4$          & Cograph                   &
    \cclass{E}~(Theorem~\ref{thm:p4-completion-exp})\\
    $2K_2,P_4$      & Co-Trivially Perfect     &
    \cclass{E}~(Theorem~\ref{thm:2k2-p4-completion-exp})
  \end{tabular}
  \caption{Known subexponential complexity of \fd{} for different
    sets~$\F$.  All problems in this table are \cclass{NP}-hard and in
    \cclass{FPT}.  The entry \cclass{SUBEPT} means the problem is
    solvable in subexponential time $2^{o(k)} n^{\bigO(1)}$
    whereas~\cclass{E} means that the problem is not solvable in
    subexponential time unless~ETH
    fails.}  \label{fig:table-completion-complexity}
\end{figure}

It appeared recently that for some choices of $\F$, \fd{} is solvable
in \emph{subexponential} time.  The exploration of this phenomenon is
the main motivation for our research on this problem.  The last
chapter of Flum and Grohe's textbook on parameterized complexity
theory~\cite[Chapter~16]{flum2006parameterized} concerns
subexponential fixed parameter tractability, the complexity class
\cclass{SUBEPT}, which, loosely speaking---we skip here some technical
conditions---is the class of problems solvable in time
$2^{o(k)}n^{\bigO(1)}$, where $n$ is the input length and $k$ is the
parameter.  Until recently, the only notable examples of problems in
\cclass{SUBEPT} were problems on planar graphs, and more generally, on
graphs excluding some fixed graph as a
minor~\cite{demaine2005subexponential}.  In 2009, Alon et
al.~\cite{alon2009fast} used a novel application of color coding,
dubbed \emph{chromatic coding}, to show that parameterized
\pname{Feedback Arc Set in Tournaments} is in \cclass{SUBEPT}.  As
Flum and Grohe~\cite{flum2006parameterized} observed, for most of the
natural parameterized problems, already the classical
\cclass{NP}-hardness reductions can be used to refute the existence of
subexponential parameterized algorithms, unless the following
well-known complexity hypothesis formulated by Impagliazzo, Paturi,
and Zane~\cite{impagliazzo2001which} fails.


\medskip

\textsc{Exponential Time Hypothesis \textit{(ETH)}}.
\textit{There exists a positive real number $s$ such that 3-CNF-SAT
  with $n$ variables cannot be solved in time $2^{sn}$.}

\medskip


Thus, it is most likely that the majority of parameterized problems
are not solvable in subexponential parameterized time and until very
recently no natural parameterized problem solvable in subexponential
parameterized time on general graphs was known.  A subset of the
authors recently showed that \pname{Minimum Fill-in}, also known as
\pname{Chordal Completion}, which is equivalent to \fd{} with $\F$
consisting of cycles of length at least four, is in
\cclass{SUBEPT}~\cite{fomin2012subexponential}, simultaneously
establishing that \pname{Chain Completion} is solvable in
subexponential time.  Later, Ghosh et al.~\cite{ghosh2012faster}
showed that \pname{Split Completion} is solvable in subexponential
time.  On the other hand, Komusiewicz and
Uhlmann~\cite{komusiewicz2012cluster}, showed that an edge
modification problem known as \pname{Cluster Deletion} does not belong
to \cclass{SUBEPT} unless ETH fails.  Note that \pname{Cluster
  Deletion} is equivalent to \fd{} when $\F=\{\overline{P_3}\}$, the
complement of the path $P_3$.  On the other hand, it is interesting to
note that by the result of Fomin et
al.~\cite{fomin2011subexponential}, \pname{Cluster Deletion into $t$
  Clusters}, i.e., the complement problem for \fd{} for
$\F=\{\overline{P_3}, K_t\}$, is in \cclass{SUBEPT} for $t=o(k)$.

\paragraph{Our results}
In this work we extend the class of \fd{} problems admitting
subexponential time algorithms, see
Figure~\ref{fig:table-completion-complexity}.  Our main algorithmic
result is the following:
\begin{quote}
  \pname{Trivially Perfect Completion} is solvable in time
  $2^{\bigO(\sqrt{k}\log{k})} n^{\bigO(1)}$ and is thus in
  \cclass{SUBEPT}.
\end{quote}
This problem is the \fd{} problem for $\F=\{C_4, P_4\}$.

On a very high level, our algorithm is based on the same strategy as
the algorithm for completion into chordal
graphs~\cite{fomin2012subexponential}.  Just like in that algorithm,
we enumerate subexponentially many special objects, here called
\emph{trivially perfect potential maximal cliques} which are the
maximal cliques in some minimal completion into a trivially perfect
graph that uses at most~$k$ edges.  As far as we succeed in
enumerating these objects, we apply dynamic programming in order to
find an optimal completion.  But here the similarities end.  To
enumerate trivially perfect potential maximal cliques (henceforth
referred to as only \emph{potential maximal cliques}) for trivially
perfect graphs, we have to use completely different structural
properties from those used for the case of chordal graphs.

We also show that within the same running time, the \fd{} problem is
solvable for $\F=\{2K_2,C_4\}$, and $\F=\{2K_2, P_4, C_4\}$.  This
corresponds to completion into threshold and pseudosplit graphs,
respectively.  Let us note that combined with the results of Fomin and
Villanger~\cite{fomin2012subexponential} and Ghosh et
al.~\cite{ghosh2012faster}, this implies that all four problems
considered by Guo in~\cite{guo2007problem} are in \cclass{SUBEPT}, in
addition to admitting a polynomial kernel.  We finally complement our
algorithmic findings by showing the following:
\begin{quote}
  For $\F=\{2K_2\}$, $\F=\{C_4\}$, $\F=\{P_4\}$ and $\F=\{2K_2,P_4\}$,
  the \fd{} problem cannot be solved in time $2^{o({k})} n^{\bigO(1)}$
  unless ETH fails.
\end{quote}
Thus, we obtain a complete classification for all $\F \subseteq
\{2K_2, P_4, C_4\}$.

\paragraph{Organization of the paper}
In Section~\ref{sec:cotrivially-perfect-graphs} we give some
structural results about trivially perfect graphs and their
completions, and give the main result of the paper: an algorithm
solving \pname{Trivially Perfect Completion} in subexponential time.
This section also contains some structural results on trivially
perfect graphs that might be interesting on its own.  In
Sections~\ref{sec:threshold-completion}
and~\ref{sec:pseudosplit-completion} we give subexponential time
algorithms for \pname{Threshold Completion} and \pname{Pseudosplit
  Completion}.

In Section~\ref{sec:lower-bounds}, we give the lower bounds on \fd{}
when~$\F$ is $\{2K_2\}$, $\{C_4\}$, $\{P_4\}$, and $\{2K_2, P_4\}$.
Finally, in Section~\ref{sec:conclusion} we give some concluding
remarks and state some interesting remaining questions and future
directions.

\paragraph{Notation and preliminaries on parameterized complexity}
We consider only finite simple undirected graphs.  We use $n_G$ to
denote the number of vertices and $m_G$ the number of edges in a
graph~$G$.  If $G = (V,E)$ is a graph, and $A,B \subseteq V$, we write
$E(A,B)$ for the edges with one endpoint in~$A$ and the other in~$B$,
and we write $E(A) = m_A = m_{G[A]}$ for the edges inside~$A$.

Given a graph $G = (V,E)$, recall that $N_G(v)$ for a vertex $v \in V$
denotes the set of neighbors of~$v$ in~$G$.  We write $N_G[v]$ to mean
the set $N_G(v) \cup \{v\}$.  For sets of vertices $U \subseteq V$, we
write $N_G(U)$ to denote the open neighborhood $\bigcup_{v \in
  U}(N_G(v)) \setminus U$, and $N_G[U] = N_G(U) \cup U$ to denote the
closed neighborhood.  For a set of pairs of vertices $S$, we write $G
+ S = (V, E \cup S)$ and if $U \subseteq V$ is a set of vertices, then
$G - U = G[V \setminus U]$.  We will skip the subscripts when this
will not cause any confusion.

A \emph{universal vertex} in a graph~$G$ is a vertex~$v$ such that
$N[v] = V(G)$.  Let $\uni(G)$ denote the set of universal vertices
of~$G$.  Observe that $\uni(G)$, when non-empty, is always a clique,
and we will refer to it as the \emph{(maximal) universal clique}.  The
maximal universal cliques play an important role in the trivially
perfect graphs; They are the main building blocks we will use to
achieve the algorithm.

We here provide a simplified definition of parameterized problems,
kernels and the class of parameterized subexponential time algorithms.
A \emph{parameterized problem}~$\Pi$ is a problem whose input is a
pair $(x,k)$, where $k \in \mathbb{N}$.  The problem $\Pi$ is
\emph{fixed-parameter tractable}, and thus belongs to the
class~\cclass{FPT}, if there is an algorithm solving this problem in
time $f(k) \cdot x^{O(1)}$ for some function~$f$, depending only
on~$k$.  A \emph{kernelization algorithm} for~$\Pi$ is a polynomial
time algorithm which on input $(x,k)$ gives an output $(x',k')$ such
that $|x'| \leq g(k)$ and $k'\leq g(k)$ for some function~$g$
depending only on~$k$, and such that $(x,k)$ is a \yes{} instance
for~$\Pi$ if and only if $(x',k')$ is a \yes{} instance for~$\Pi$.  We
call the output the \emph{kernel}.  We say a problem admits a
\emph{polynomial kernel} if the function~$g$ is polynomial.

The complexity class \cclass{SUBEPT} is contained in \cclass{FPT}; It
is the class of problems $\Pi$ for which there exists an algorithm
with running time $2^{o(k)} \cdot n^{O(1)}$.  That is, the parameter
function~$f$ is \emph{subexponential}.  Note that if the exponential
time hypothesis is true, then \cclass{SUBEPT}~$\subsetneq$~\cclass{FPT}.

\section{Completion to trivially perfect graphs}
\label{sec:cotrivially-perfect-graphs}
In this section we study the \pname{Trivially Perfect Completion}
problem which is \fd{} for $\F = \{C_4, P_4\}$.  The decision version
of the problem was shown to be \cclass{NP}-complete by
Yannakakis~\cite{yannakakis1981computing}.  As already stated in the
introduction, trivially perfect graphs are characterized by a finite
set of forbidden induced subgraphs, and thus it follows from
Cai~\cite{cai1996fixed} that the problem also is fixed parameter
tractable, i.e., it belongs to the class \cclass{FPT}.

The main result of this section is the following theorem:
\begin{theorem}\label{thm:main-algo}
  \label{thm:co-trivially-perfect-subept} For an input $(G,k)$,
  \pname{Trivially Perfect Completion} is solvable in time
  $2^{\bigO(\sqrt{k}\log{k})} + \bigO(kn^{4})$.
\end{theorem}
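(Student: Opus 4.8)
The plan is to follow the broad template of the subexponential algorithm for \pname{Minimum Fill-in}~\cite{fomin2012subexponential}: enumerate a subexponential-sized family of candidate ``building blocks'' of an optimal solution, and then stitch them together by dynamic programming. Concretely, I first need a clean structural description of trivially perfect graphs. Recall that a graph is trivially perfect if and only if it is $\{C_4,P_4\}$-free, equivalently each connected induced subgraph has a universal vertex, equivalently it admits a rooted-forest (\emph{universal clique decomposition}) representation: the vertex set decomposes recursively into the maximal universal clique $\uni(G)$ on top, below which sit the connected components of $G-\uni(G)$, each again trivially perfect. The first step is to make this decomposition canonical and to read off from it the notion of a \emph{trivially perfect potential maximal clique}: a set $\Omega\subseteq V(G)$ that is the vertex set of some root-to-node bag-union (equivalently, a maximal clique) in some minimal trivially perfect completion $H=G+S$ with $|S|\le k$. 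The aim is to show there are only $2^{\bigO(\sqrt k\log k)}\cdot n^{\bigO(1)}$ such objects and that they can all be listed within that time bound.

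The second step is the combinatorial heart: bounding the number of potential maximal cliques. Here the key leverage is that adding at most $k$ edges cannot change the graph too much. I would argue that any minimal completion $H$ has the property that the set $A$ of vertices incident to an added edge has size $\bigO(\sqrt k)$ in the ``interesting'' part (since $|A|\le 2k$ trivially, but a $P_4$- or $C_4$-counting argument, as in the fill-in case, forces the nontrivial interaction to live on $\bigO(\sqrt k)$ vertices), while the rest of $H$ is essentially dictated by $G$. More precisely, a potential maximal clique $\Omega$ is determined by (i) its trace on the ``affected'' vertices, of which there are $2^{\bigO(\sqrt k \log k)}$ choices, and (ii) a polynomial amount of additional data describing how $\Omega$ hooks into the unaffected, already-trivially-perfect part of $G$ — and crucially each unaffected connected piece contributes its own universal clique in a way that is forced. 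Formalizing ``forced'' is where one exploits that maximal universal cliques are the canonical building blocks, so that once the affected trace is fixed, the completion on each untouched component is either $G$ restricted there or something we can reconstruct greedily. This yields the promised bound on the number, and the enumeration algorithm is the constructive version of the same count, running in time $2^{\bigO(\sqrt k \log k)}\cdot n^{\bigO(1)}$; charging the polynomial factor carefully gives the claimed additive $\bigO(kn^4)$.

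The third step is the dynamic programming. Having the list of potential maximal cliques, I would set up a DP over the universal clique decomposition of the sought completion, indexed by pairs $(\Omega, C)$ where $\Omega$ is a potential maximal clique and $C$ is a connected component of $G[V\setminus \Omega]$ (or a suitable ``block'' thereof), the value being the minimum number of edges needed to turn the subgraph induced on $\Omega\cup C$ into a trivially perfect graph in which $\Omega$ forms the top universal-clique part. The recurrence picks, for each such block, the universal clique of its own completion (another potential maximal clique), recurses on the resulting sub-blocks, and pays for the edges inside $\Omega$-to-block plus edges promoting $\Omega$'s own vertices to universality. Correctness follows from the decomposition characterization of trivially perfect graphs; the running time is polynomial in the number of potential maximal cliques times $n^{\bigO(1)}$, hence $2^{\bigO(\sqrt k\log k)}\cdot n^{\bigO(1)}$, and with the standard kernelization of Guo~\cite{guo2007problem} applied first (a polynomial kernel, so $n$ becomes $\poly(k)$) one collapses the two terms into $2^{\bigO(\sqrt k\log k)}+\bigO(kn^4)$.

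I expect the main obstacle to be step two — proving the subexponential bound on the number of trivially perfect potential maximal cliques. The naive bound is only $2^{\bigO(k)}$; getting down to $\sqrt k$ requires a genuinely trivially-perfect-specific argument, since the chordal-graph machinery (minimal separators, the Bouchitté–Todinca count) does not transfer. The delicate point is to show that the ``branching'' needed to describe a potential maximal clique happens only along $\bigO(\sqrt k)$ coordinates: one must set up the right potential function (number of non-edges inside cliques being built, or depth of the decomposition weighted by added edges) and show that a deep or wide part of the decomposition forces $\Omega(d^2)$ fill edges when it has depth/width $d$, so $d=\bigO(\sqrt k)$. Once that quadratic-versus-linear trade-off is nailed down, the rest is bookkeeping.
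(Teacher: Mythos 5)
Your overall architecture --- kernelize with Guo's algorithm, enumerate trivially perfect potential maximal cliques, then run a dynamic program over pairs consisting of a potential maximal clique and a component hanging below it --- is exactly the plan of the paper, and your Step~3 matches the paper's dynamic programming over blocks of the universal clique decomposition. The problem is Step~2, which you yourself flag as the heart of the matter: the sketch you give for it does not work where it is concrete. First, it is not true that the fill edges ``live on $\bigO(\sqrt{k})$ vertices'': a completion of size $k$ can be incident to $2k$ distinct vertices, each carrying a single fill edge, and no $C_4$/$P_4$-counting argument prevents this. The usable dichotomy is between \emph{cheap} vertices (at most $\sqrt{k}$ incident fill edges) and \emph{expensive} ones; only the expensive vertices number $\bigO(\sqrt{k})$. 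Consequently, guessing ``the trace of $\Omega$ on the affected vertices'' is an enumeration of size $2^{\Theta(k)}$, not $2^{\bigO(\sqrt{k}\log k)}$, and the claim that the completion on the untouched part can be ``reconstructed greedily'' is never justified. Second, your proposed potential-function argument --- that depth or width $d$ of the decomposition forces $\Omega(d^2)$ fill edges, hence $d=\bigO(\sqrt{k})$ --- is simply false: if $G$ is already trivially perfect the optimal completion is empty while its universal clique decomposition can be arbitrarily deep and wide. The subexponential count of vital potential maximal cliques does not come from bounding the dimensions of the decomposition at all.

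What the actual argument does instead is reconstruct each vital potential maximal clique $\Omega$ from a bounded amount of guessed data, using the structure of \emph{minimal} completions (Lemma~\ref{lem:connectedness-of-minimal-completion}: in $G$ itself, every vertex of a bag has a neighbor in every component hanging below it, and those components are connected in $G$). Writing $\Omega$ as the tail $B_0\cup\cdots\cup B_q$ of a leaf block and grouping the subtrees hanging off this root-to-leaf path into sets $Z_0,\ldots,Z_{q-1}$, one distinguishes four types: either $V\setminus\Omega$ has size at most $2\sqrt{k}$; or the leaf bag is large, in which case $\Omega=N_H[v]$ for a cheap vertex $v$, i.e.\ $N_G[v]$ plus at most $\sqrt{k}$ guessed vertices; or all cheap vertices outside $\Omega$ lie in a single $Z_i$; or they lie in at least two of them. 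In the last two cases the remaining $Z_j$'s contain only expensive vertices, hence have total size $\bigO(\sqrt{k})$, and $\Omega$ is expressed exactly as a union of $G$-neighborhoods of these small guessed sets together with guessed $H$-neighborhoods of one or two cheap vertices (equations~\eqref{eq:type3} and~\eqref{eq:type4}), giving $2^{\bigO(\sqrt{k}\log k)}$ candidates on the $\bigO(k^3)$-vertex kernel. This is the genuinely trivially-perfect-specific argument you anticipated but did not supply, so the proposal has a real gap at its central step.
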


Throughout this section, an edge set~$S$ is called a \emph{completion}
for~$G$ if $G+S$ is trivially perfect.  Furthermore, a completion~$S$
is called a \emph{minimal completion} for~$G$ if no proper subset
of~$S$ is a completion for $G$.  The main outline of the algorithm is
as follows:
\begin{enumerate}[Step A:]
\item On input $(G, k)$, we first apply the algorithm by
  Guo~\cite{guo2007problem} to obtain a kernel $\bigO(k^3)$ vertices.
  The running time of this algorithm is $\bigO(kn^{4})$.  The
  kernelization algorithm of Guo can only reduce the parameter, i.e.,
  $k'\leq k$ where $k'$ is the new parameter.  Moreover, the output
  kernel is in fact of size $\bigO(k'^3)$.  Therefore, due to this
  preprocessing step we may assume without loss of generality that we
  work on an instance $(G,k)$ with $|V(G)|\leq \bigO(k^3)$.
\item Assuming our input instance has $\bigO(k^3)$ vertices, we show
  how to generate all special vertex subsets of the kernel which we
  call \emph{\vpmc s} in time $2^{\bigO(\sqrt{k}\log{k})}$.  A \vpmc{}
  $\Omega \subseteq V(G)$ is a vertex subset which is a maximal clique
  in some minimal completion of size at most~$k$.
\item Using dynamic programming, we show how to compute an optimal
  solution or to conclude that $(G,k)$ is a \no{} instance, in time
  polynomial in the number of \vpmc s.
\end{enumerate}

\subsection{Structure of trivially perfect graphs}

Apart from the aforementioned characterization by forbidden induced
subgraphs, an inherently local characterization, several other
equivalent definitions of trivially perfect graphs are known.  These
definitions reveal more structural properties of this graph class
which will be essential in our algorithm.  Therefore, before
proceeding with the proof of
Theorem~\ref{thm:co-trivially-perfect-subept}, we establish a number
of results on the global structure of trivially perfect graphs and
minimal completions which will be useful.

The trivially perfect graphs have a rooted decomposition tree, which
we call a \emph{universal clique decomposition}, in which each node
corresponds to a maximal set of vertices that all are universal for
the graph induced by the vertices in the subtree rooted at this node.
This decomposition is similar to that of a \emph{treedepth
  decomposition}.  We refer to Figure~\ref{fig:structure-of-pmc} for
an example of the concepts that we introduce next.  The following
recursive definition is often used as an alternative definition of
trivially perfect graphs.

\begin{proposition}[\cite{jing1996quasi}]
  \label{def:tp-recursive}
  The class of trivially perfect graphs can be defined recursively as
  follows:
  \begin{itemize}
  \item $K_1$ is a trivially perfect graph.
  \item Adding a universal vertex to a trivially perfect graph results
    in a trivially perfect graph.
  \item The disjoint union of two trivially perfect graphs is a
    trivially perfect graph.
  \end{itemize}
\end{proposition}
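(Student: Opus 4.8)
The plan is to show that the class defined by the three recursive rules---call it $\mathcal{R}$---coincides with the class of $\{C_4,P_4\}$-free graphs, which is the definition of trivially perfect graphs adopted in this paper (see Figure~\ref{fig:3graphs}). Both inclusions are proved by induction, but on different quantities.

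For $\mathcal{R}\subseteq\{C_4,P_4\}\text{-free}$, I would induct on the number of construction steps. The base case $K_1$ is vacuously $\{C_4,P_4\}$-free. For the inductive step, observe that $C_4$ and $P_4$ are both connected and have maximum degree~$2$. Hence adding a universal vertex~$v$ to a $\{C_4,P_4\}$-free graph creates no forbidden induced subgraph: any induced copy of $C_4$ or $P_4$ either avoids~$v$ (and was already present) or contains~$v$, in which case~$v$ is adjacent to all three remaining vertices of that copy, contradicting the degree bound. Likewise, the disjoint union of two $\{C_4,P_4\}$-free graphs is $\{C_4,P_4\}$-free, since any induced $C_4$ or $P_4$, being connected, lies entirely in one of the two parts. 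This disposes of the easy direction.

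For $\{C_4,P_4\}\text{-free}\subseteq\mathcal{R}$, I would induct on $|V(G)|$. If $G$ is disconnected, it is the disjoint union of two smaller nonempty induced subgraphs, each of which is $\{C_4,P_4\}$-free (the property is hereditary) and hence lies in $\mathcal{R}$ by induction, so $G\in\mathcal{R}$. If $G$ is connected, everything hinges on the structural claim that \emph{every connected $\{C_4,P_4\}$-free graph has a universal vertex}. Granting this, pick a universal vertex~$v$; then $G-v$ is $\{C_4,P_4\}$-free with fewer vertices, hence in $\mathcal{R}$, and $G$ arises from it by adding the universal vertex~$v$, so $G\in\mathcal{R}$.

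The heart of the argument---and the only step I expect to require real work---is this universal-vertex claim, which I would prove by taking a vertex~$v$ of maximum degree and showing $N[v]=V(G)$. Suppose not. Since~$G$ is connected, following a shortest path from~$v$ to any vertex outside $N[v]$ produces a vertex~$x_2$ with $\mathrm{dist}(v,x_2)=2$, say with common neighbour~$x_1$, so $v\sim x_1\sim x_2$ and $v\not\sim x_2$. Maximality of $\deg(v)$ forces some $u\in N(v)\setminus N[x_1]$ to exist: otherwise $N(v)\setminus\{x_1\}\subseteq N(x_1)$, and since $N(x_1)$ also contains~$v$ and~$x_2$, neither of which lies in $N(v)$, we would get $\deg(x_1)\ge\deg(v)+1$, a contradiction. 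Now the four distinct vertices $u,v,x_1,x_2$ induce either an (induced) $P_4$, namely $u-v-x_1-x_2$ when $u\not\sim x_2$, or a $C_4$ when $u\sim x_2$ (the cyclic order being $u,v,x_1,x_2$, with non-edges $ux_1$ and $vx_2$); both contradict $G$ being $\{C_4,P_4\}$-free. Hence~$v$ is universal. Beyond this claim, nothing requires more than routine case analysis, and the induction bookkeeping in the second inclusion is immediate once the claim is available.
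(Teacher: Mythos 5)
Your argument is correct, and every step checks out: the easy inclusion via the fact that $C_4$ and $P_4$ are connected with maximum degree~$2$, the reduction of the hard inclusion to the claim that every connected $\{C_4,P_4\}$-free graph has a universal vertex, and the proof of that claim by taking a maximum-degree vertex $v$, a vertex $x_2$ at distance~$2$ with midpoint $x_1$, and extracting $u\in N(v)\setminus N[x_1]$ (whose existence follows from the degree comparison $\deg(x_1)\ge\deg(v)+1$ otherwise), so that $\{u,v,x_1,x_2\}$ induces a $P_4$ or a $C_4$. Note, however, that the paper does not prove this proposition at all: it is quoted as a known characterization of trivially perfect (quasi-threshold) graphs with a citation to the literature, so there is no in-paper argument to compare against. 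What you have written is essentially the standard textbook proof of that cited equivalence, and it is a valid, self-contained justification; the only cosmetic remark is that one should fix the convention that graphs are nonempty (as the recursion starting from $K_1$ implicitly does), so that the disconnected case of your induction splits $G$ into two nonempty parts.
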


Let~$T$ be a rooted tree and~$t$ be a node of~$T$.  We denote by~$T_t$
the maximal subtree of~$T$ rooted in~$t$.  We can now use the
universal clique $\uni(G)$ of a trivially perfect graph $G = (V,E)$ to
make a decomposition structure.

\begin{definition}[Universal clique decomposition]
  \label{def:univeral-clique-decomposition}
  A \emph{universal clique decomposition} of a connected trivially
  perfect graph $G = (V, E)$ is a pair $(T = (V_T, E_T), \mathcal{B} =
  \{B_{t}\}_{t \in V_T})$, where~$T$ is a rooted tree
  and~$\mathcal{B}$ is a partition of the vertex set~$V$ into disjoint
  non-empty subsets, such that
  \begin{itemize}
  \item if $vw \in E(G)$ and $v \in B_t$ and $w \in B_s$, then~$s$ and~$t$
    are on a path from a leaf to the root, with possibly $s = t$, and
  \item for every node $t \in V_T$, the set of vertices $B_t$ is the
    maximal universal clique in the subgraph $G[\bigcup_{s \in V(T_t)}
    B_s]$.
  \end{itemize}
\end{definition}
We call the vertices of~$T$ \emph{nodes} and the sets in~$\mathcal{B}$
\emph{bags} of the universal clique decomposition $(T, \mathcal{B})$.
By slightly abusing the notation, we often do not distinguish between
nodes and bags.  Note that by the definition, in a universal clique
decomposition every non-leaf node has at least two children, since
otherwise the universal clique contained in the corresponding bag
would not be maximal.

\begin{lemma}
  \label{lem:trivial_complete_part}
  A connected graph~$G$ admits a universal clique decomposition if and
  only if it is trivially perfect.  Moreover, such a decomposition is
  unique up to isomorphisms.
\end{lemma}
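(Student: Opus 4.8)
The statement has two parts: existence of a universal clique decomposition for every connected trivially perfect graph (and conversely that admitting one forces trivial perfectness), and uniqueness up to isomorphism. I would prove existence and the converse together by induction, and then handle uniqueness separately.

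\medskip

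\noindent\textbf{Existence.} The plan is to induct on $|V(G)|$ using the recursive characterization from Proposition~\ref{def:tp-recursive}. Given a connected trivially perfect graph $G$, I first argue that $\uni(G)\neq\emptyset$: by Proposition~\ref{def:tp-recursive}, $G$ is built from $K_1$ by adding universal vertices and taking disjoint unions; since $G$ is connected with more than one vertex, the last operation in any such construction cannot be a disjoint union (that would disconnect $G$), so it must be the addition of a universal vertex, giving $\uni(G)\neq\emptyset$. Now set $B_r := \uni(G)$ as the root bag and consider $G - \uni(G)$. This graph decomposes into connected components $G_1,\dots,G_p$; I claim each $G_i$ is again trivially perfect (trivial perfectness is hereditary, being characterized by forbidden induced subgraphs) and that $p\ge 2$ whenever $G-\uni(G)\neq\emptyset$ — otherwise a vertex of the unique component adjacent to all of $\uni(G)$ together with maximality considerations would contradict $B_r$ being the \emph{maximal} universal clique. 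Actually the cleaner route: if $G-\uni(G)$ is connected and nonempty, then it has a universal vertex $v$ by induction; but then $v$ is universal in $G$ too (it sees all of $G-\uni(G)$ and all of $\uni(G)$), contradicting maximality of $\uni(G)$. So $G-\uni(G)$ has at least two components (or is empty). Recurse on each $G_i$ to get a universal clique decomposition $(T_i,\mathcal{B}_i)$, attach each root of $T_i$ as a child of $r$, and let $\mathcal{B} = \{B_r\}\cup\bigcup_i \mathcal{B}_i$. I then verify the two axioms: the first (edges go between nodes on a leaf-to-root path) holds because every vertex of $\uni(G)$ is adjacent to everything, and within each component the inductive hypothesis applies while across distinct components there are no edges; the second (each bag is the maximal universal clique of its subtree's induced subgraph) holds for $B_r$ by construction and for the other bags by induction, since the subgraph induced by $T_i$ is exactly $G_i$.

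\medskip

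\noindent\textbf{Converse.} For the other direction I would show: if $G$ admits a universal clique decomposition $(T,\mathcal{B})$, then $G$ is trivially perfect. I induct on $|V_T|$. The root bag $B_r$ is a clique, and by axiom~1 every vertex of $B_r$ is adjacent to every vertex in every descendant bag, hence $B_r$ consists of universal vertices of $G$. Removing $B_r$ and the root from $(T,\mathcal{B})$ leaves a forest whose components are universal clique decompositions of the corresponding induced subgraphs, each trivially perfect by induction; their disjoint union is trivially perfect by Proposition~\ref{def:tp-recursive}, and re-adding the $|B_r|$ universal vertices one at a time keeps it trivially perfect. Hence $G$ is trivially perfect.

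\medskip

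\noindent\textbf{Uniqueness.} This is the step I expect to be the main obstacle, though it turns out to be short given the right observation. The key claim is that in \emph{any} universal clique decomposition of a connected trivially perfect $G$, the root bag must equal $\uni(G)$: by axiom~2 applied to $t=r$ (whose subtree is all of $G$), $B_r$ \emph{is} the maximal universal clique $\uni(G)$. So the root bag is forced. Then $G - B_r$ must split into the subtrees hanging off $r$, and I must argue these are exactly the connected components of $G-B_r$: each subtree induces a graph with no edges to the others (axiom~1 forbids cross edges between incomparable nodes), so each subtree's vertex set is a union of components of $G-B_r$; and each such induced subgraph is connected because it has a universal clique decomposition with a nonempty root bag whose vertices dominate it — wait, that only shows it is ``dominated by a clique,'' which for trivially perfect graphs does give connectivity since the root bag vertices are universal within that subtree. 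Hence each subtree corresponds to exactly one component of $G-B_r$, the correspondence is canonical, and by induction on $|V(G)|$ each sub-decomposition is unique up to isomorphism. Assembling, the whole decomposition is unique up to isomorphism (the isomorphism being the obvious one matching root to root and permuting the isomorphic child-subtrees). The one subtlety to be careful about is the base case and the exact meaning of ``isomorphism'' — an isomorphism of pairs $(T,\mathcal{B})$ respecting the bag partition — which I would state precisely before running the induction.
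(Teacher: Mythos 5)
Your proposal is correct and follows essentially the same route as the paper: both directions are proved by induction via the recursive characterization of Proposition~\ref{def:tp-recursive} (your existence argument peels off $\uni(G)$ top-down rather than following the construction bottom-up, and your uniqueness paragraph is a fleshed-out version of what the paper merely asserts in one line). One small correction: in your converse direction, the fact that the root-bag vertices are universal in $G$ follows from the second axiom of the decomposition ($B_r$ is the maximal universal clique of the graph induced by the whole tree), not from the first axiom, which only restricts where edges may go --- exactly as you yourself invoke it in the uniqueness step.
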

\begin{proof}
  From right to left, we proceed by induction on the number of
  vertices using Proposition~\ref{def:tp-recursive}.  The base case is
  when we have one vertex,~$K_1$ which is a trivially perfect graph
  and also admits a unique universal clique decomposition.  The
  induction step is when we add a vertex~$v$, and by the definition of
  trivially perfect graphs,~$v$ is a universal vertex.  Either we add
  a universal vertex to a connected trivially perfect graph, in which
  case we simply add the vertex to the root bag, or we add a universal
  vertex to the disjoint union of two or more trivially perfect
  graphs.  In this case, we create a new tree, with~$r_v$ being the
  root connected to the root of each of the trees for the disjoint
  union.  Since~$v$ is the only universal vertex in the graph, the
  constructed structure is a universal clique decomposition.  Observe
  that the constructed decompositions are unique (up to isomorphisms).
  
  From left to right, we proceed by induction on the height of the
  universal clique decomposition.  Suppose $(T, \mathcal{B})$ is a
  universal clique decomposition of a graph~$G$.  Consider the case
  when~$T$ has height~$1$, i.e., we have only one single tree node
  (and one bag).  Then this bag, by
  Proposition~\ref{def:tp-recursive}, is a clique (every vertex in the
  bag is universal), and since a complete graph is trivially perfect,
  the base case holds.  Consider now the case when~$T$ has height at
  least~$2$.  Let~$r$ be the root of~$T$, and let $x_1,x_2,\ldots,x_p$
  be children of~$r$ in~$T$.  Observe that the tree~$T_{x_i}$ is a
  universal clique decomposition for the graph $G[\bigcup_{t\in
    V(T_{x_i})} B_t]$ for each $i=1,2,\ldots,p$.  Hence, by the
  induction hypothesis we have that $G[\bigcup_{t\in V(T_{x_i})} B_t]$
  is trivially perfect.  To see that $G$ is trivially perfect as well,
  observe that~$G$ can be obtained by taking the disjoint union of
  graphs $G[\bigcup_{t\in V(T_{x_i})} B_t]$ for $i=1,2,\ldots,p$, and
  adding $|B_r|$ universal vertices.
\end{proof}

For the purposes of the dynamic programming procedure, we define the
following notion.

\begin{definition}[Block]
  \label{def:tp-block}
  Let $(T=(V_T,E_T), \mathcal{B}=\{B_{t}\}_{t\in V_T})$ be the universal
  clique decomposition of a connected trivially perfect graph $G =
  (V,E)$.  For each node $t \in V_T$, we associate a \emph{block} $L_t
  = ( B_t, D_t)$, where
  \begin{itemize}
  \item $B_t$ is the subset of $V$ contained in the bag corresponding
    to $t$, and
  \item $D_t$ is the set of vertices of $V$ contained in the bags
    corresponding to the nodes of the subtree $T_t$.
  \item The \emph{tail} of a block $L_t$ is the set of vertices $Q_t$
    contained in the bags corresponding to the nodes of the path from
    $t$ to $r$ in $T$, where $r$ is the root of $T$, including~$B_t$
    and~$B_r$.
  \end{itemize}
\end{definition}


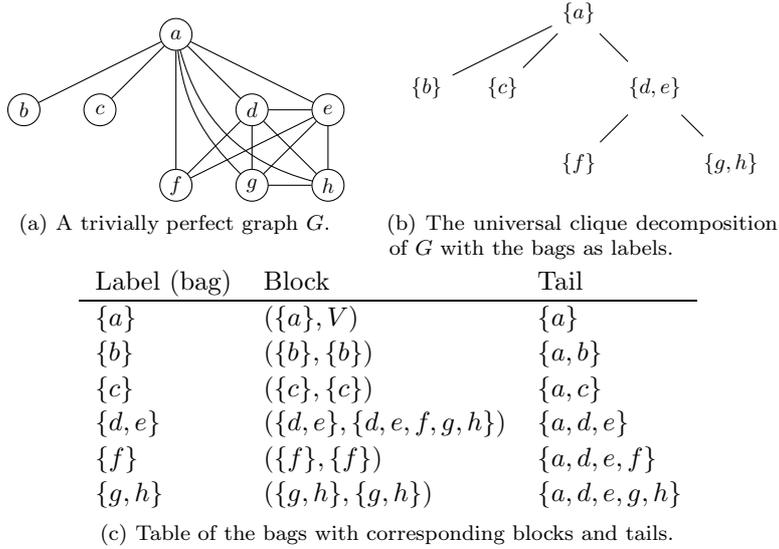
\begin{figure}[t]
  \centering
  \subfloat[A trivially perfect graph~$G$.] {
    \centering
    \begin{tikzpicture}[every
      node/.style={circle,draw,scale=.8},scale=1, minimum size=1.5em,
      inner sep=0pt]
      
      \node (a) at (2,2) {$a$};
      \node (b) at (0,1) {$b$};
      \node (c) at (1,1) {$c$};
      \node (d) at (3,1) {$d$};
      \node (e) at (4,1) {$e$};
      \node (f) at (2,0) {$f$};
      \node (g) at (3,0) {$g$};
      \node (h) at (4,0) {$h$};
      
      \draw (b) -- (a) -- (c);
      \draw (a) -- (f);
      \draw (a) to[out=-83] (g);
      \draw (a) to[bend right] (h);
      \draw (a) -- (d);
      \draw (a) -- (e);
      
      \draw (d) -- (f) -- (e);
      \draw (d) -- (g) -- (h) -- (e) -- (d);
      \draw (d) -- (h);
      \draw (e) -- (g);
    \end{tikzpicture}
    \label{fig:structure-of-pmc-graph}
  }\hspace{1em}
  \subfloat[The universal clique decomposition of~$G$ with the bags as
  labels.] {
    \centering
    \begin{tikzpicture}[every node/.style={circle, scale=.8}, scale=1]
      
      \node (a)  at (2,2) {$\{a\}$};
      \node (b)  at (0,1) {$\{b\}$};
      \node (c)  at (1,1) {$\{c\}$};
      \node (de) at (3,1) {$\{d,e\}$};
      
      \node (f)  at (2,0) {$\{f\}$};
      
      \node (gh) at (4,0) {$\{g,h\}$};
      
      \draw (b) -- (a) -- (c);
      
      \draw (a) -- (de) -- (gh);
      
      \draw (de) -- (f);
    \end{tikzpicture}
    \label{fig:tree-structure-of-pmc-tree}
  }\hspace{1em}
  \subfloat[Table of the bags with corresponding blocks and tails.] {
    \centering
    \begin{tabular}{l l l}
      Label (bag) & Block & Tail\\
      \hline
      $\{a\}$ & $(\{a\}, V)$ & $\{a\}$ \\
      $\{b\}$ & $(\{b\}, \{b\})$ & $\{a,b\}$ \\
      $\{c\}$ & $(\{c\}, \{c\})$ & $\{a,c\}$ \\
      $\{d,e\}$ & $(\{d,e\}, \{d,e,f,g,h\})$ & $\{a,d,e\}$ \\
      $\{f\}$ & $(\{f\}, \{f\})$ & $\{a,d,e,f\}$ \\
      $\{g,h\}$ & $(\{g,h\}, \{g,h\})$ & $\{a,d,e,g,h\}$
    \end{tabular}
  }
  \caption{In the first figure, we have a trivially perfect graph, and in
    the second, a \emph{universal clique decomposition} of the graph
    with the bags as labels.  Finally we have a table of the bags and
    the corresponding blocks and tails.  Notice that for a block
    $(B,D)$ and tail $Q$, $B \subseteq D$ and $B \subseteq Q$.
    Furthermore, in any leaf block it holds that $B = D$, and in the
    root block it holds that $D=V$.}
  \label{fig:structure-of-pmc}
\end{figure}


When~$t$ is a leaf of~$T$, we have that $B_t=D_t$ and we call the
block $L_t = (B_t,D_t)$ a \emph{leaf block}.  If $t$ is the root, we
have that $D_t=V(G)$ and we call~$L_t$ the \emph{root block}.
Otherwise, we call $L_t$ an \emph{internal block}.  Observe that for
every block $L_t = (B_t,D_t)$ with tail $Q_t$ we have that $B_t
\subseteq Q_t$, $B_t \subseteq D_t$, and $D_t\cap Q_t = B_t$, see
Figure~\ref{fig:structure-of-pmc}.  Note also that~$Q_t$ is a clique
and the vertices of~$Q_t$ are universal to $D_t\setminus B_t$.  The
following lemma summarizes the properties of universal clique
decompositions, maximal cliques, and blocks used in our proof.
\begin{lemma}
  \label{lemma:leaf_part}
  Let $(T, \mathcal{B})$ be the universal clique decomposition of a
  connected trivially perfect graph $G$ and let $L=(B,D)$ be a block
  with~$Q$ as its tail.
  \begin{enumerate}[(i)]
  \item\label{lemma:leaf_part_i} If $L$ is a leaf block, then $Q =
    N_G[v]$ for every $v \in B$.
  \item\label{lemma:leaf_part_ii} The following are equivalent:
    \begin{enumerate}[(1)]
    \item $L$ is a leaf block,
    \item $D = B$, and
    \item $Q$ is a maximal clique of $G$.
    \end{enumerate}
  \item\label{lemma:leaf_part_iii} If $L$ is a non-leaf block, then
    for every two vertices $u,v$ from different connected components
    of $G[D \setminus B]$, we have that $Q= N_G(u) \cap N_G(v)$.
  \end{enumerate}
\end{lemma}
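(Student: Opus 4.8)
The plan is to reduce all three parts to a single convenient reformulation of the first condition in Definition~\ref{def:univeral-clique-decomposition}: if $vw \in E(G)$ with $v \in B_t$ and $w \in B_s$, then $s$ and $t$ are comparable in the ancestor order of $T$ (one is an ancestor of the other, possibly with $s = t$). This is equivalent to the stated condition, because a set of nodes of a rooted tree lies on a common leaf-to-root path exactly when it forms a chain under ancestry. I would record this reformulation together with the three facts already noted just before the lemma --- $Q$ is a clique, $Q \cap D = B$, and every vertex of $Q$ is universal to $D \setminus B$ --- and the remark that a non-leaf node of a universal clique decomposition has at least two children. With these in hand, all three parts become short.

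For part (i), fix a leaf block $L = L_t$ and a vertex $v \in B = B_t$. Since $v \in B \subseteq Q$ and $Q$ is a clique, $Q \subseteq N_G[v]$. Conversely, if $w \in N_G[v]$ then either $w = v \in Q$, or $vw \in E(G)$ with $w \in B_s$ and $s$ comparable to $t$; since $t$ is a leaf it has no proper descendant, so $s$ is an ancestor of $t$ or $s = t$, whence $B_s \subseteq Q$ and $w \in Q$. For part (ii), the implication $(1) \Rightarrow (2)$ is immediate because $T_t$ consists of $t$ alone; $(2) \Rightarrow (1)$ holds since a non-leaf $t$ has at least two children, whose non-empty bags are disjoint from $B_t$, forcing $D_t \supsetneq B_t$; $(1) \Rightarrow (3)$ follows from part (i), since any vertex adjacent to all of $Q$ is adjacent to some $v \in B$ and hence lies in $N_G[v] = Q$, so $Q$ is a maximal clique; and $\neg(2) \Rightarrow \neg(3)$ because any $v \in D \setminus B$ is adjacent to every vertex of $Q$ (universality) while $v \notin Q$ (as $Q \cap D = B$), so $Q \cup \{v\}$ is a strictly larger clique.

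Part (iii) carries the real work. Let $c_1, \dots, c_p$ with $p \ge 2$ be the children of $t$. I would first show that the connected components of $G[D \setminus B]$ are exactly the graphs $G[D_{c_i}]$: they partition $D \setminus B$; each $G[D_{c_i}]$ is connected because, by the second condition of Definition~\ref{def:univeral-clique-decomposition}, $B_{c_i}$ is a non-empty set of vertices each adjacent to all other vertices of $D_{c_i}$; and there is no edge between $D_{c_i}$ and $D_{c_j}$ for $i \ne j$ since any node of $T_{c_i}$ is incomparable to any node of $T_{c_j}$ in the ancestor order (their subtrees are disjoint and their only common ancestors lie on and above $t$). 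Now take $u \in D_{c_i}$ and $v \in D_{c_j}$ with $i \ne j$. Every $q \in Q$ is universal to $D \setminus B$, hence adjacent to both $u$ and $v$, and $q \notin \{u, v\}$ since $Q \cap D = B$; thus $Q \subseteq N_G(u) \cap N_G(v)$. For the reverse inclusion, take $w \in N_G(u) \cap N_G(v)$ and write $w \in B_s$, $u \in B_a$ with $a \in V(T_{c_i})$, and $v \in B_b$ with $b \in V(T_{c_j})$. The reformulated edge condition forces $s$ comparable to $a$ and to $b$. If $s$ were a descendant of $a$ or equal to $a$ it would lie in $V(T_{c_i})$ and hence be incomparable to $b$, a contradiction; so $s$ is a proper ancestor of $a$, and symmetrically a proper ancestor of $b$. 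The common ancestors of $a$ and $b$ are exactly the nodes on the path from their lowest common ancestor $t$ to the root, so $s$ lies on that $t$-to-root path, giving $B_s \subseteq Q$ and therefore $w \in Q$.

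The step I expect to be the main obstacle is the bookkeeping in part (iii): converting ``common leaf-to-root path'' into ancestor-comparability, matching the components of $G[D \setminus B]$ with the child subtrees, and then running the comparability case analysis on $s$ relative to $a$ and $b$ without dropping the degenerate cases $s = a$ and $s = b$. Parts (i) and (ii) are then just a routine unwinding of the definitions and of the three recorded facts about blocks and tails.
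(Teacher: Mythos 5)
Your proposal is correct and follows essentially the same route as the paper's proof: unwind Definition~\ref{def:univeral-clique-decomposition} together with the recorded facts that $Q$ is a clique, $Q\cap D=B$, and $Q$ is universal to $D\setminus B$, prove (i) directly, cycle through the equivalences in (ii), and for (iii) combine universality of $Q$ with the fact that different components of $G[D\setminus B]$ live in different child subtrees. The only difference is one of detail: you spell out the reverse inclusion $N_G(u)\cap N_G(v)\subseteq Q$ in (iii) via the ancestor-comparability argument, which the paper compresses into ``by the universality of $Q$, the result follows.''
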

\begin{proof}
  \textit{(\ref{lemma:leaf_part_i})} Since~$Q$ is a clique, we have
  that $Q \subseteq N_G[v]$.  On the other hand, since $v \in B$
  and~$L$ is a leaf block, we have that $Q \supseteq N_G[v]$ by the
  definition of universal clique decomposition.
  
  \textit{(\ref{lemma:leaf_part_ii})} We prove the chain $(1) \to (2)
  \to (3) \to (1)$.  Suppose that~$L$ is a leaf block, and~$D$ is the
  set of vertices in the bags in the subtree rooted at~$L$, then $B =
  D$.  Then by~\textit{(\ref{lemma:leaf_part_i})} we have that
  $N_G[v]=Q$ for any $v \in B$, hence~$Q$ is maximal.  Finally, if~$Q$
  is a maximal clique in the graph, i.e., it cannot be extended, by
  definition~$L$ cannot have any children so~$L$ must be a leaf block.
  
  \textit{(\ref{lemma:leaf_part_iii})} Suppose $L = (B, D)$ is a
  non-leaf block and~$D_1$ and~$D_2$ are two connected components of
  $G' = G[D \setminus B]$.  Let $v \in D_1$ and $u \in D_2$ and
  observe that since they are in different connected components of
  $G[D \setminus B]$, $N_{G'}(v) \cap N_{G'}(u) = \emptyset$.  By the
  universality of~$Q$, the result follows: $Q = N_G(v) \cap N_G(u)$.
\end{proof}

\subsection{Structure of minimal completions}
Before we proceed with the algorithm, we provide some properties of
minimal completions.  The following lemma gives insight to the
structure of a \yes{} instance.
\begin{lemma}
  \label{lem:connectedness-of-minimal-completion}
  Let $G = (V, E)$ be a connected graph, $S$ a minimal completion and
  $H = G+S$.  Suppose $L = (B, D)$ is a block in some universal clique
  decomposition of $H$ and denote by $D_1, D_2,\dots , D_\ell$ the
  connected components of $H[D]- B$.
  \begin{enumerate}[(i)]
  \item\label{lem:conn-min-com:1} If $L$ is not a leaf block, then
    $\ell > 1$;
  \item\label{lem:conn-min-com:2} If $\ell > 1$, then in $G$ every
    vertex $v \in B$ has at least one neighbor in each set $D_1, D_2,\dots
    , D_\ell$;
  \item\label{lem:conn-min-com:3} The graph $G[D_i]$ is connected for
    every $i\in \{1,\dots, \ell\}$;
  \item\label{lem:conn-min-com:4} For every $i\in \{1,\dots, \ell\}$, $B
    \subseteq N_G(D \sm (B \cup D_i))$.
  \end{enumerate}
\end{lemma}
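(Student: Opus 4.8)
The plan is to dispatch the four items in the order (i), (ii), (iii), and then to read off (iv) from (i) and (ii); for (ii) and (iii) the common engine is the observation that if the stated conclusion fails, then a proper subset of $S$ is still a completion, contradicting minimality of $S$. First I would record a structural warm-up that also gives (i). Let $(T,\mathcal{B})$ be the (unique, by Lemma~\ref{lem:trivial_complete_part}) universal clique decomposition of $H$, and let $t$ be the node with $B_t=B$, $D_t=D$, $Q_t$ its tail. If $L$ is not a leaf block then $t$ is not a leaf, so by the remark following Definition~\ref{def:univeral-clique-decomposition} it has children $x_1,\dots,x_p$ with $p\ge 2$. Using the two defining conditions of a universal clique decomposition I would show that the connected components of $H[D\sm B]$ are \emph{exactly} the vertex sets $D_{x_1},\dots,D_{x_p}$: they partition $D\sm B$; each $H[D_{x_a}]$ is connected since it admits a universal clique decomposition rooted at $x_a$ and hence has a universal vertex; and for $a\ne b$ there is no $H$-edge between $D_{x_a}$ and $D_{x_b}$, because the $H$-neighbourhood of any vertex of $D_{x_a}$ is contained in $D_{x_a}\cup Q_t$, which is disjoint from $D_{x_b}$. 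Thus $\ell=p\ge2$, proving (i); moreover $\ell>1$ forces $L$ to be a non-leaf block, so from now on, whenever $\ell>1$, I may identify $D_i$ with $D_{x_i}$.

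For (ii), suppose for contradiction that $\ell>1$ and some $v\in B$ has no $G$-neighbour in $D_j$. Since $v\in B_t$ is universal in $H[D_t]$, every pair $\{v,w\}$ with $w\in D_j$ is an edge of $H$, and since none of these is an edge of $G$, the whole set $R=\{v\}\times D_j$ lies in $S$. I would put $S'=S\sm R$, $H'=G+S'$, and argue that $H'$ is trivially perfect (which is the desired contradiction) by exhibiting a universal clique decomposition of it: it agrees with $(T,\mathcal{B})$ outside $T_t$; at $t$ one replaces $B_t$ by $B_t\sm\{v\}$ (deleting the node if this becomes empty, in which case the two pieces below are re-attached to the parent of $t$), keeps $T_{x_j}$ unchanged as a child of this bag, and places the remaining subtrees $T_{x_i}$, $i\ne j$, below a new sibling bag $\{v\}$ (enlarged by $\uni(H[D_i])$ in the degenerate case $\ell=2$). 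The verification reduces to checking that $v$ is no longer universal in $H'[D_t]$ while $B_t\sm\{v\}$ still is, that no edges were added so the component structure is otherwise untouched, and that the two conditions of Definition~\ref{def:univeral-clique-decomposition} hold for the modified tree; connectedness of $H'$ is immediate from $G\subseteq H'$.

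For (iii), suppose $G[D_i]$ is disconnected with connected components $C_1,\dots,C_q$, $q\ge2$. Since $H[D_i]$ is connected (it is a component of $H[D]-B$) but $G[D_i]$ is not, there is an $H$-edge joining two different $C_a$'s, and every such edge lies in $S$ as it is not an edge of $G$. Let $S'$ be $S$ with all edges inside $D_i$ joining distinct components of $G[D_i]$ removed; then $S'\subsetneq S$. In $H'=G+S'$ the set $D_i$ splits into the disjoint union of the $C_a$, with $H'[C_a]=H[C_a]$ still trivially perfect (an induced subgraph of $H$) and still connected (since $G[C_a]\subseteq H'[C_a]$ is connected). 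A universal clique decomposition of $H'$ is obtained from that of $H$ by deleting the subtree $T_{x_i}$ and hanging the universal clique decompositions of the $H[C_a]$'s as new children of $t$; one checks, again using that no edges were added, that $B_t$ remains the maximal universal clique of $H'[D_t]$ and that the tree conditions hold. Hence $H'$ is trivially perfect, contradicting minimality of $S$.

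Finally, (iv) follows at once: if $\ell=0$ the claim is vacuous, and otherwise $\ell\ge2$ by (i), so $D\sm(B\cup D_i)=\bigcup_{j\ne i}D_j$ is nonempty; picking any $j\ne i$, item (ii) gives that every $v\in B$ has a $G$-neighbour in $D_j\subseteq D\sm(B\cup D_i)$, whence $B\subseteq N_G(D\sm(B\cup D_i))$. The only substantial work is in (ii) and (iii), and the main obstacle there is the bookkeeping needed to certify that the modified structures really are universal clique decompositions, in particular handling the degenerate cases $|B|=1$ and $\ell=2$ where the new top bag must absorb a universal clique of a child. I would keep in reserve an alternative that avoids the reconstruction: argue directly that deleting the edges in question creates no induced $C_4$ or $P_4$, since any such obstruction would have to use a deleted edge, and a short case analysis using that $Q_t$ is a clique universal to $D\sm B$ rules this out; the obstacle there is merely organising the cases.
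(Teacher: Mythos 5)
Your proposal is correct and follows essentially the same route as the paper: for (ii) and (iii) you delete the relevant fill edges and certify that the result is still trivially perfect by explicitly rebuilding a universal clique decomposition (including the same $\ell=2$ degenerate case where $v$ is absorbed into the root bag of the other component), contradicting minimality of $S$, and you derive (iv) from (i) and (ii) exactly as the paper does. Your argument for (i) via the fact that non-leaf nodes of the decomposition have at least two children is only a cosmetic variant of the paper's appeal to the recursive definition, and if anything your write-up is more careful about the bookkeeping (e.g.\ the case $B=\{v\}$) than the paper's.
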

\begin{proof} We prove this case by case.
  \textit{(\ref{lem:conn-min-com:1})} Let $(B,D)$ be a non-leaf block.
  Since~$B$ is maximal,~$D$ is not a clique, so by the recursive
  definition of trivially perfect graphs, $H[D]-B$ is the disjoint
  union of two or more trivially perfect graphs, hence $\ell > 1$.
  
  \textit{(\ref{lem:conn-min-com:2})} Suppose, without loss of
  generality, that there exists a vertex $v\in B$ that has no neighbor
  in~$D_1$.  Let $S'=S\setminus (\{v\}\times V(D_1))$; Note that
  since~$v$ is universal to $V(D_1)$ in $H$ and completely
  non-adjacent to~$V(D_1)$ in~$G$, then $\{v\}\times V(D_1)\subseteq
  S$ and~$S'$ is a proper subset of~$S$.  We claim that $H'=G+S'$ is
  also a trivially perfect graph, which contradicts the minimality
  of~$S$.  Indeed, consider a universal clique decomposition obtained
  from the universal clique decomposition of~$H$ by \textit{(a)}, in
  case $\ell=2$, moving~$v$ from~$B$ to the root bag of~$D_2$, or
  \textit{(b)}, in case $\ell>2$, moving~$v$ from~$B$ to a new bag
  $B'=\{v\}$ attached below~$B$, with all the root bags of
  $D_2,D_3,\ldots,D_\ell$ re-attached from below~$B$ to below~$B'$.
  It can be easily seen that this new universal clique decomposition
  is indeed a universal clique decomposition of~$H'$, which proves
  that~$H'$ is trivially perfect.
  
  \textit{(\ref{lem:conn-min-com:3})} For the sake of a contradiction,
  suppose $G[D_a]$ was disconnected.  Let $(D_{a_1}, D_{a_2})$ be a
  partition of $D_a$ such that there is no edge between $D_{a_1}$ and
  $D_{a_2}$ in $G$.  Clearly, $H[D_{a_1}]$ and $H[D_{a_2}]$ are
  trivially perfect graphs as induced subgraphs of $H$, hence they
  admit some universal clique decompositions.  Since $H[D_a]$ is
  connected, we infer that $S$ contains some edges between $D_{a_1}$
  and $D_{a_2}$.  Let now $S' = S \setminus \{uv \mid u \in D_{a_1}, v
  \in D_{a_2}, uv \in S \}$; By the previous argument we have that $S'
  \subsetneq S$.  Modify now the given universal clique decomposition
  of $H$ by removing the subtree below $B$ that corresponds to $D_a$,
  and attaching instead two subtrees below $B$ that are universal
  clique decompositions of $H[D_{a_1}]$ and $H[D_{a_2}]$.  Observe
  that thus we obtain a universal clique decomposition of $G+S'$,
  which shows that $G+S'$ is trivially perfect.  This is a
  contradiction with the minimality of $S$.
  
  \textit{(\ref{lem:conn-min-com:4})} Follows directly from
  \textit{(\ref{lem:conn-min-com:1})} and
  \textit{(\ref{lem:conn-min-com:2})}: if $\ell>0$, then $\ell>1$ and
  every vertex of $B$ has edges in $G$ to all different connected
  components of $D \setminus B$.
\end{proof}

\subsection{The algorithm}


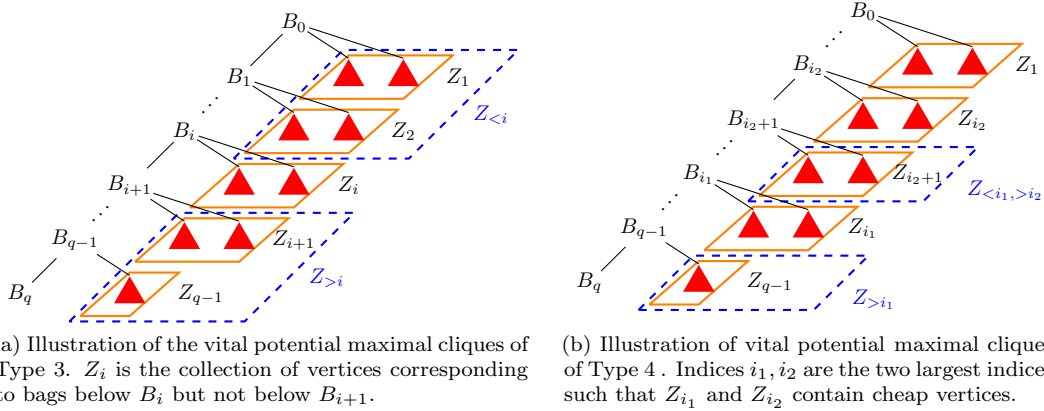
\begin{figure}[t]
  \centering
  \subfloat[Illustration of the \vpmc s of Type~3.  $Z_i$ is the
  collection of vertices corresponding to bags below $B_i$ but not
  below $B_{i+1}$.]{
    \centering
    \begin{tikzpicture}[every node/.style={circle, scale=.8}, scale=.72,
      minimum size=1.3em, inner sep=0pt]

      \foreach \x/\y in {6/4,3/1}
      {
        \draw[blue, dashed, thick] (\x-2.1,\y-1.5) -- (\x-0.1,\y+0.5) -- (\x+1+2.1,\y+0.5) -- (\x+1+0.1,\y-1.5) -- (\x-2.1,\y-1.5);
      }

      \foreach \x/\y/\z in {6/4/1,5/3/1,4/2/1,3/1/1,2/0/0}
      {      
      	\draw [orange,thick] (\x-0.9,\y-0.4) -- (\x,\y+0.4) -- (\x+\z+0.9,\y+0.4) -- (\x+\z,\y-0.4) -- (\x-0.9,\y-0.4);
      }

      \node [draw=none] at (8,4) {$Z_1$};

      \node (b1)  at (5,5) {$B_0$};
      \node[fill=red,regular polygon, regular polygon sides=3,inner sep=3pt] (z11) at (6,4) {};
      \node[fill=red,regular polygon, regular polygon sides=3,inner sep=3pt] (z12) at (7,4) {};

      \node [draw=none] at (7,3) {$Z_2$};
      \node [draw=none,blue] at (8.6,3.3) {$Z_{<i}$};

      \node (b2)  at (4,4) {$B_1$};
      \node[fill=red,regular polygon, regular polygon sides=3,inner sep=3pt] (z21) at (5,3) {};
      \node[fill=red,regular polygon, regular polygon sides=3,inner sep=3pt] (z22) at (6,3) {};    
      
      \node [draw=none,rotate=45] at (3.5,3.5) {$\cdots$};
      
      \node [draw=none] at (6,2) {$Z_i$};
      
      \node (bi)  at (3,3) {$B_i$};
      \node[fill=red,regular polygon, regular polygon sides=3,inner sep=3pt] (zi1) at (4,2) {};
      \node[fill=red,regular polygon, regular polygon sides=3,inner sep=3pt] (zi2) at (5,2) {};

      \node [draw=none] at (5,1) {$Z_{i+1}$};
      
      \node (bi1) at (2,2) {$B_{i+1}$};
      \node[fill=red,regular polygon, regular polygon sides=3,inner sep=3pt] (zi11) at (3,1) {};
      \node[fill=red,regular polygon, regular polygon sides=3,inner sep=3pt] (zi12) at (4,1) {};
      
      \node [draw=none,rotate=45] at (1.5,1.5) {$\cdots$};    
      
      \node [draw=none] at (3.3,0) {$Z_{q-1}$};
      \node [draw=none,blue] at (5.6,0.3) {$Z_{>i}$};

      \node (bt1) at (1,1) {$B_{q-1}$};
      \node[fill=red,regular polygon, regular polygon sides=3,inner sep=3pt] (zt11) at (2,0) {};
      
      \node (bt) at (0,0) {$B_q$};
      
      \draw (b1) -- (b2);
      \draw (b1) -- (z11.north);
      \draw (b1) -- (z12.north);
      
      \draw (b2) -- (z21.north);
      \draw (b2) -- (z22.north);
      
      \draw (bi) -- (bi1);
      \draw (bi) -- (zi1.north);
      \draw (bi) -- (zi2.north);
      
      \draw (bi1) -- (zi11.north);
      \draw (bi1) -- (zi12.north);
      
      \draw (bt1) -- (bt);
      \draw (bt1) -- (zt11.north);
      
    \end{tikzpicture}
    \label{fig:vpmc-tree-zi}
  }
  \hspace{1em}
  \subfloat[t][Illustration of \vpmc s of Type~4 .  Indices $i_1,i_2$
  are the two largest indices such that $Z_{i_1}$ and $Z_{i_2}$
  contain cheap vertices.]{
    \begin{tikzpicture}[every node/.style={circle, scale=.8}, scale=.72,
      minimum size=1.3em, inner sep=0pt]
      
      \foreach \x/\y in {4/2,2/0}
      {
        \draw[blue, dashed, thick] (\x-1.1,\y-0.5) -- (\x-0.1,\y+0.5) -- (\x+1+1+1.1,\y+0.5) -- (\x+1+1+0.1,\y-0.5) -- (\x-1.1,\y-0.5);
      }

      \foreach \x/\y/\z in {6/4/1,5/3/1,4/2/1,3/1/1,2/0/0}
      {      
      	\draw [orange,thick] (\x-0.9,\y-0.4) -- (\x,\y+0.4) -- (\x+\z+0.9,\y+0.4) -- (\x+\z,\y-0.4) -- (\x-0.9,\y-0.4);
      }

      \node [draw=none] at (8,4) {$Z_1$};

      \node (b1)  at (5,5) {$B_0$};
      \node[fill=red,regular polygon, regular polygon sides=3,inner sep=3pt] (z11) at (6,4) {};
      \node[fill=red,regular polygon, regular polygon sides=3,inner sep=3pt] (z12) at (7,4) {};

      \node [draw=none,rotate=45] at (4.5,4.5) {$\cdots$};    

      \node [draw=none] at (7,3) {$Z_{i_2}$};

      \node (b2)  at (4,4) {$B_{i_2}$};
      \node[fill=red,regular polygon, regular polygon sides=3,inner sep=3pt] (z21) at (5,3) {};
      \node[fill=red,regular polygon, regular polygon sides=3,inner sep=3pt] (z22) at (6,3) {};

      \node [draw=none] at (6,2) {$Z_{i_2+1}$};
      \node [draw=none,blue] at (7.6,1.7) {$Z_{<i_1,>i_2}$};
      
      \node (bi)  at (3,3) {$B_{i_2+1}$};
      \node[fill=red,regular polygon, regular polygon sides=3,inner sep=3pt] (zi1) at (4,2) {};
      \node[fill=red,regular polygon, regular polygon sides=3,inner sep=3pt] (zi2) at (5,2) {};

      \node [draw=none,rotate=45] at (2.5,2.5) {$\cdots$};

      \node [draw=none] at (5,1) {$Z_{i_1}$};
      
      \node (bi1) at (2,2) {$B_{i_1}$};
      \node[fill=red,regular polygon, regular polygon sides=3,inner sep=3pt] (zi11) at (3,1) {};
      \node[fill=red,regular polygon, regular polygon sides=3,inner sep=3pt] (zi12) at (4,1) {};
      
      \node [draw=none,rotate=45] at (1.5,1.5) {$\cdots$};    
      
      \node [draw=none] at (3.3,0) {$Z_{q-1}$};
      \node [draw=none,blue] at (5.2,-0.3) {$Z_{>i_1}$};

      \node (bt1) at (1,1) {$B_{q-1}$};
      \node[fill=red,regular polygon, regular polygon sides=3,inner sep=3pt] (zt11) at (2,0) {};
      
      \node (bt) at (0,0) {$B_q$};
      
      \draw (b1) -- (z11.north);
      \draw (b1) -- (z12.north);
      
      \draw (b2) -- (bi);
      \draw (b2) -- (z21.north);
      \draw (b2) -- (z22.north);
      
      \draw (bi) -- (zi1.north);
      \draw (bi) -- (zi2.north);
      
      \draw (bi1) -- (zi11.north);
      \draw (bi1) -- (zi12.north);
      
      \draw (bt1) -- (bt);
      \draw (bt1) -- (zt11.north);
      
    \end{tikzpicture}
    \label{fig:vpmc-tree-zij}
  }
  \caption{Illustration of the different neighborhoods of the maximal
    clique that we use to find the maximal cliques of Types~3 and~4.
    The figure shows a universal clique decomposition of a completed
    graph, where $\Omega = B_0 \cup \ldots \cup B_q$ is a maximal
    clique.  Observe that the leaf block is $L_q = (B_q, B_q)$ and
    that its tail is $\Omega$.}
  \label{fig:vpmc-tree-omega}
\end{figure}


As has been already mentioned, the following concept is crucial for
our algorithm.  Recall that when~$\Omega$ is a set of vertices in a
graph~$G$, by~$m_\Omega$ we mean the number of edges in~$G[\Omega]$.
\begin{definition}[\Vpmc]\label{def:vital_pmc}
  Let $(G,k)$ be an input instance to \pname{Trivially Perfect
    Completion}.  A vertex set $\Omega \subseteq V(G)$ is a
  \emph{trivially perfect potential maximal clique} or simply
  \emph{potential maximal clique}, if~$\Omega$ is a maximal clique in
  some minimal trivially perfect completion of~$G$.  If moreover this
  trivially perfect completion contains at most~$k$ edges, then the
  potential maximal clique is called \emph{vital}.
\end{definition}

Observe that given a \yes{} instance $(G, k)$ and a minimal
completion~$S$ of size at most~$k$, every maximal clique in $G + S$ is
a vital potential maximal clique in~$G$.  Note also that in
particular, any \vpmc{} contains at most~$k$ non-edges.  The following
definition will be useful:

\begin{definition}[Fill number]
  Let $G = (V,E)$ be a graph,~$S$ a completion and $H = G+S$.  We
  define the \emph{fill} of a vertex~$v$, denoted by $\fn^G_H(v)$ as
  the number of edges incident to~$v$ in~$S$.
\end{definition}

\begin{observation}
  There are at most $2\sqrt{k}$ vertices~$v$ such that $\fn_H^G(v) >
  \sqrt{k}$.
\end{observation}
It follows that for every set $U \subseteq V$ such that $|U| > 2\sqrt{k}$,
there is a vertex $u \in U$ with $\fn_H^G(u) \leq \sqrt{k}$.  Any
vertex~$u$ such that $\fn_H^G(u) \leq \sqrt{k}$ will be referred to as
a \emph{cheap} vertex.

Everything is settled to start the proof of
Theorem~\ref{thm:co-trivially-perfect-subept}.  Our algorithm consists
of three steps.  We first compress the instance to an instance of size
$\bigO(k^3)$, then we enumerate all (subexponentially many) \vpmc s in
this new instance, and finally we do a dynamic programming procedure
on these objects.

%
%
\paragraph{Step~A.\ Kernelization}
For a given input $(G,k)$, we start by applying the kernelization
algorithm by Guo~\cite{guo2007problem} to construct in time
$\bigO(kn^{4})$ an equivalent instance $(G',k')$, where~$G'$ has
$\bigO(k^3)$ vertices and~$k'\leq k$.  Thus, from now on we can assume
that the input graph~$G$ has~$\bigO(k^3)$ vertices.  Without loss of
generality, we will also assume that~$G$ is connected, since we can
treat each connected component of~$G$ separately.

%
%
\paragraph{Step~B.\ Enumeration}
In this step, we give an algorithm that in time $2^{\bigO(\sqrt{k}
  \log k)}$ outputs a family~$\mathcal{C}$ of vertex subsets of~$G$
such that
\begin{itemize}
\item the size of~$\mathcal{C}$ is $2^{\bigO(\sqrt{k} \log k)}$, and
\item every vital potential maximal clique belongs to~$\mathcal{C}$.
\end{itemize}

We identify four different types of \vpmc s.  For each type~$i$,
$1\leq i \leq 4$, we list a family~$\mathcal{C}_i$ of
$2^{\bigO(\sqrt{k} \log k)} $ subsets containing all vital potential
maximal cliques of this type.  Finally, $\mathcal{C} = \mathcal{C}_1
\cup \dots \cup \mathcal{C}_4$.  We show that every vital potential
maximal clique of~$(G,k)$ is of at least one of these types and that
all objects of each type can be enumerated in $2^{\bigO(\sqrt{k} \log
  k)} $ time.

Let~$\Omega$ be a vital potential maximal clique.  By the definition
of~$\Omega$, there exists a minimal completion with at most~$k$ edges
into a trivially perfect graph~$H$ such that~$\Omega$ is a maximal
clique in~$H$.  Let $(T=(V_T,E_T), \mathcal{ B}=\{B_{t}\}_{t\in V_T})$
be the universal clique decomposition of~$H$.  Recall that by
Lemma~\ref{lemma:leaf_part},~$\Omega$ corresponds to a path
$P_{rt}=B_{t_0}B_{t_1}\cdots B_{t_q}$ in~$T$ from the root $r=t_0$ to
a leaf $t=t_q$.  Then for the corresponding leaf block $(B_t,D_t)$
with tail~$Q_t$, we have that~$\Omega=Q_t$.  To simplify the notation,
we use~$B_i$ for~$B_{t_i}$.

Note that the algorithm does not know neither the clique~$\Omega$ nor the
completed trivially perfect graph~$H$.  However, in the analysis we
may partition all the vital potential maximal cliques~$\Omega$ with
respect to structural properties of~$\Omega$ and~$H$, and then provide
simple enumeration rules that ensure that all vital potential maximal
cliques of each type are indeed enumerated.  We now proceed to the
description of the types and enumeration rules and refer to
Figure~\ref{fig:vpmc-tree-omega} for a visualization of the concepts.
In the sequel, whenever we are referring to cheap or expensive
vertices, we mean being cheap/expensive with respect to the fixed
completion to $H$.

\medskip\noindent\textit{Type~1.}  Potential maximal cliques of the
first type are such that $|V \setminus \Omega| \leq 2\sqrt{k}$.  The
family $\mathcal{C}_1$ consists of all sets $W \subseteq V$ such that
$|V \setminus W| \leq 2\sqrt{k}$.  There are at most
$(2\sqrt{k}+1)\cdot\binom{\bigO(k^3)}{2\sqrt{k}}$ such sets and we
enumerate all of them in time $2^{\bigO(\sqrt{k} \log k)}$ by trying
all vertex subsets of size at least $|V| - 2\sqrt{k}$.  Thus every
Type~1 vital potential maximal clique is in~$\mathcal{C}_1$.

\medskip\noindent\textit{Type~2.}  By
Lemma~\ref{lemma:leaf_part}~\textit{(\ref{lemma:leaf_part_i})}, we
have that $\Omega = Q_t= N_H[v]$ for each vertex $v \in D_t=B_t$.
\Vpmc s of the second type are such that $|B_t| > 2\sqrt{k}$.  Observe
that then at least one vertex $v \in B_t$ should be \emph{cheap},
i.e., $\fn^G_H(v) \leq \sqrt{k}$.  We generate the family
$\mathcal{C}_2$ as follows.  Every set in~$\mathcal{C}_2$ is of the
form $W_1\cup W_2$, where $W_1 = N_G[v]$ for some $v \in V$, and
$|W_2| \leq \sqrt{k}$.  There are at most
$\binom{\bigO(k^3)}{\sqrt{k}}k^3$ such sets and they can be enumerated
by computing for every vertex $v$ the set $W_1 = N_G[v]$ and adding to
each such set all possible subsets of size at most~$\sqrt{k}$.  Hence
every Type~2 vital potential maximal clique is in~$\mathcal{C}_2$.

\medskip

Thus if~$\Omega$ is not of Types~1 or~2, then $|V \setminus \Omega| > 2 \sqrt{k}$ and
for the corresponding leaf block we have $|B_t|\leq 2\sqrt{k}$.  Since
$|V \setminus \Omega| > 2\sqrt{k}$ it follows that $V \setminus
\Omega$ contains at least one cheap vertex, i.e., a vertex with fill
number at most~$\sqrt{k}$.

We partition the nodes of~$T$ that are not on the path $B_0,
B_1,\dots, B_q$ into~$q$ disjoint sets $Z_0, Z_1, \dots, Z_{q-1}$
according to the nodes of the path~$P_{rt}$.  Node $x \notin
V(P_{rt})$ belongs to~$Z_i$, $i \in \{0,\dots, q - 1\}$, if~$i$ is the
largest integer such that~$t_i$ is an ancestor of~$x$ in~$T$.  In
other words,~$Z_i$ consists of bags of subtrees outside~$P_{rt}$
attached below~$t_i$, see Figure~\ref{fig:vpmc-tree-zi}.  For integers
$p_1,p_2$, we shall denote $B_{p_1,p_2}=\bigcup_{j=p_1}^{p_2} B_j$

For the remaining two types of \vpmc s we distinguish cases depending
on whether all cheap vertices in $V\setminus \Omega$ are located in
exactly one set $Z_i$, or not.  Recall that all \vpmc s for which
$V\setminus \Omega$ does not contain any cheap vertex are already
contained in Type~1.

\medskip\noindent\textit{Type~3.} \Vpmc s $\Omega$ of the third type are
the ones that do not belong to Type 1 or 2, but there exists an index
$i\in\{0,1,\ldots,q-1\}$ such that all cheap vertices of $V\setminus
\Omega$ belong to $Z_i$.  Since $\Omega$ is not of Type 1, $Z_i$ is
non-empty.  Also, since $\Omega$ is not of Type~2, we have that $|B_q|
\leq 2\sqrt{k}$.  Let us denote $Z_{<i}=\bigcup_{j=0}^{i-1} Z_j$ and
$Z_{>i}=\bigcup_{j=i+1}^{q-1} Z_j$ (see
Figure~\ref{fig:vpmc-tree-zi}).  By our assumption, we have that
$Z_{<i}$ and $Z_{>i}$ contain only expensive vertices, and hence
$|Z_{<i}|,|Z_{>i}|\leq 2\sqrt{k}$.  Let $u$ be any cheap vertex
belonging to $Z_i$, and observe that the following equalities and
inclusions are implied by
Lemma~\ref{lem:connectedness-of-minimal-completion}~\textit{(\ref{lem:conn-min-com:2})}:
\begin{itemize}
\item $B_{0,i-1}=N_G(Z_{<i})$;
\item $B_{i+1,q-1}\subseteq N_G(Z_{>i})\subseteq \Omega$;
\item $B_i\subseteq N_G(B_q\cup (N_G[Z_{>i}]\setminus N_H(u))) \subseteq \Omega$.
\end{itemize}
It follows that
\begin{equation}
  \label{eq:type3}
  \Omega=N_G(Z_{<i})\cup N_G(Z_{>i})\cup N_G(B_q\cup (N_G[Z_{>i}]\setminus N_H(u)))\cup B_q .
\end{equation}
Given~\eqref{eq:type3}, we may define family~$\mathcal{C}_3$.  Family
$\mathcal{C}_3$ comprises all the sets that can be constructed as
follows:
\begin{itemize}
\item Pick three disjoint sets $W_1,W_2,W_3\subseteq V$ of size at most
  $2\sqrt{k}$ each.  This corresponds to the choice of $Z_{<i}$,
  $Z_{>i}$ and $B_q$, respectively.
\item Pick a vertex $v\in V$ and a set $A\subseteq V$ of size at most
  $\sqrt{k}$.  This corresponds to the choice of $u$ and fill-in edges
  adjacent to $u$.  Let $N_v=N_G(v)\cup A$.
\item Put the set $N_G(W_1)\cup N_G(W_2)\cup N_G(W_3\cup (N_G[W_2]\setminus N_v))\cup W_3$
  into the family $\mathcal{C}_3$.
\end{itemize}
Observe that since $|V|=\bigO(k^3)$, the number of sets included in
$\mathcal{C}_3$ is at most $2^{\bigO(\sqrt{k}\log k)}$, and that this
family can be enumerated within the same asymptotic running time.
From~\eqref{eq:type3} it follows immediately that each \vpmc{} of
Type~3 is contained in $\mathcal{C}_3$.

\medskip\noindent\textit{Type~4.} \Vpmc s $\Omega$ of the fourth type are
the ones that do not belong to Type~1 or~2, but there exist at least
two indices $i_1$ and $i_2$ such that $Z_{i_1}$ and $Z_{i_2}$ both
contain a cheap vertex.  Let $i_1,i_2$ be the two largest such
indices, where $i_1>i_2$.  Let $Z_{<i_1,>i_2} =
\bigcup_{j=i_2+1}^{i_1-1} Z_j$ and $Z_{>i_1}=\bigcup_{j=i_1+1}^{q-1}
Z_j$.  See Figure~\ref{fig:vpmc-tree-zij} for an illustration.  By the
maximality of $i_1,i_2$ we have that $Z_{<i_1,>i_2}$ and $Z_{>i_1}$
contain only expensive vertices, and hence
$|Z_{<i_1,>i_2}|,|Z_{>i_1}|\leq 2\sqrt{k}$.  Again, since $\Omega$ is
not of Type~2, we have that $|B_q|\leq 2\sqrt{k}$.  Let $u_1\in
Z_{i_1}$ and $u_2\in Z_{i_2}$ be two cheap vertices.  Observe that the
following equalities and inclusions are implied by
Lemma~\ref{lem:connectedness-of-minimal-completion}~\textit{(\ref{lem:conn-min-com:2})}:
\begin{itemize}
\item $B_{0,i_2}=N_H(u_1)\cap N_H(u_2)$;
\item $B_{i_2+1,i_1-1}\subseteq N_G(Z_{<i_1,>i_2})\subseteq \Omega$;
\item $B_{i_1+1,q-1}\subseteq N_G(Z_{>i_1})\subseteq \Omega$;
\item $B_{i_1}\subseteq N_G(B_q\cup (N_G[Z_{>i_1}]\setminus N_H(u_1))) \subseteq \Omega$.
\end{itemize}
It follows that
\begin{equation}\label{eq:type4}
  \Omega=(N_H(u_1)\cap N_H(u_2))\cup N_G(Z_{<i_1,>i_2})\cup N_G(Z_{>i_1})\cup N_G(B_q\cup
  (N_G[Z_{>i_1}]\setminus N_H(u_1)))\cup B_q.
\end{equation}
Given~\eqref{eq:type4}, we may define the family~$\mathcal{C}_4$.  This
family comprises all the sets that can be constructed as follows:
\begin{itemize}
\item Pick three disjoint sets $W_1,W_2,W_3\subseteq V$ of size at most
  $2\sqrt{k}$ each.  This corresponds to the choice of
  $Z_{<i_1,>i_2}$, $Z_{>i_1}$ and $B_q$, respectively.
\item Pick two vertices $v_1,v_2\in V$ and two sets $A_1,A_2\subseteq V$, each
  of size at most $\sqrt{k}$.  This corresponds to the choice of $u_1$
  and $u_2$, and of the neighbors in $H$ adjacent to $u_1$ and $u_2$.
  Let $N_{v_i}=N_G(v_i)\cup A_i$, for $i=1,2$.
\item Put the set $(N_{v_1}\cap N_{v_2})\cup N_G(W_1)\cup N_G(W_2)\cup N_G(W_3\cup
  (N_G[W_2]\setminus N_{v_1}))\cup W_3$ into the family $\mathcal{C}_4$.
\end{itemize}
Observe that since $|V|=\bigO(k^3)$, the number of sets included in
$\mathcal{C}_4$ is at most $2^{\bigO(\sqrt{k}\log k)}$, and that this
family can be enumerated within the same asymptotic running time.
From~\eqref{eq:type4} it follows immediately that each \vpmc{} of
Type~4 is contained in $\mathcal{C}_4$.

Summarizing, every \vpmc{} of Type~1, 2, 3, and~4 is included in the
family~$\mathcal{C}_1$, $\mathcal{C}_2$, $\mathcal{C}_3$,
and~$\mathcal{C}_4$, respectively.  Since every \vpmc{} is of Type~1,
2, 3, or~4, by taking $\mathcal{C}=\mathcal{C}_1\cup \mathcal{C}_2\cup
\mathcal{C}_3\cup \mathcal{C}_4$ we can infer the following lemma that
formalizes the result of Step~B.

\begin{lemma}[Enumeration Lemma]
  \label{lem:enumeration}
  Let $(G,k)$ be an instance of \pname{Trivially Perfect Completion}
  such that $|V(G)| = \bigO(k^3)$.  Then in time
  $2^{\bigO(\sqrt{k}\log{k})}$, we can construct a family
  $\mathcal{C}$ consisting of $2^{\bigO(\sqrt{k}\log{k})}$ subsets of
  $V(G)$ such that every \vpmc{} of $(G,k)$ is in $\mathcal{C}$.
\end{lemma}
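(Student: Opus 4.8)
The plan is to fix, for an arbitrary \vpmc{} $\Omega$, a minimal trivially perfect completion $H = G+S$ with $|S| \le k$ in which $\Omega$ is a maximal clique, together with the universal clique decomposition $(T,\mathcal{B})$ of $H$; by Lemma~\ref{lemma:leaf_part}, $\Omega$ is the tail of a leaf block and corresponds to a root-to-leaf path $B_0 B_1 \cdots B_q$ of $T$, so $\Omega = B_0 \cup \cdots \cup B_q$. Using the Observation on fill numbers (at most $2\sqrt{k}$ vertices are expensive), I would classify $\Omega$ into one of four types according to how the expensive vertices of $H$ sit relative to this path, and for each type $i$ exhibit a family $\mathcal{C}_i$ of size $2^{\bigO(\sqrt{k}\log k)}$, enumerable in the same time, that provably contains every \vpmc{} of that type. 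Then $\mathcal{C} = \mathcal{C}_1 \cup \cdots \cup \mathcal{C}_4$ is the desired family, since its size is $2^{\bigO(\sqrt{k}\log k)}$ and, by exhaustiveness of the classification, it contains all \vpmc s.

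The two cheap cases are direct. If $|V \setminus \Omega| \le 2\sqrt{k}$ (Type~1) I brute-force over all vertex subsets whose complement has size at most $2\sqrt{k}$; there are $2^{\bigO(\sqrt{k}\log k)}$ of them because $|V| = \bigO(k^3)$. If instead the leaf bag $B_q$ has more than $2\sqrt{k}$ vertices (Type~2), it contains a cheap vertex $v$, and $\Omega = N_H[v]$ differs from $N_G[v]$ by at most $\sqrt{k}$ vertices, so I iterate over all $v$ and all enlargements of $N_G[v]$ by a set of size $\le \sqrt{k}$. Outside these two types we have both $|V \setminus \Omega| > 2\sqrt{k}$ and $|B_q| \le 2\sqrt{k}$, so $V \setminus \Omega$ carries a cheap vertex. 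I would then partition the off-path nodes of $T$ into $Z_0,\ldots,Z_{q-1}$ (where $Z_i$ collects the subtrees hanging directly off $B_i$) and split on whether all cheap vertices of $V \setminus \Omega$ lie in one single $Z_i$ (Type~3) or at least two distinct $Z_i$ each contain one (Type~4). The crucial structural ingredient here is Lemma~\ref{lem:connectedness-of-minimal-completion}(ii): every vertex of a bag has, already in $G$, a neighbor in each child component below that bag. Walking this down the path lets me express each bag $B_j$ — hence all of $\Omega$ — through $G$-neighborhoods of three small guessable sets ($Z_{<i}$, $Z_{>i}$, $B_q$, each of size $\le 2\sqrt{k}$ since off the cheap zone only expensive vertices survive, and $B_q$ is small because $\Omega$ is not of Type~2) together with one cheap vertex $u \in Z_i$ whose $H$-neighborhood is guessed up to $\le \sqrt{k}$ fill edges; in Type~4 the top stretch $B_{0,i_2}$ is additionally recovered as $N_H(u_1) \cap N_H(u_2)$ for cheap $u_1 \in Z_{i_1}$, $u_2 \in Z_{i_2}$. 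This yields the closed-form identities \eqref{eq:type3} and \eqref{eq:type4}, and $\mathcal{C}_3$, $\mathcal{C}_4$ are obtained by ranging over all choices of these $\bigO(1)$ small sets and $\bigO(1)$ vertices; the count is $2^{\bigO(\sqrt{k}\log k)}$ because $|V| = \bigO(k^3)$.

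The hard part is the argument behind Types~3 and~4: one must show that a fixed bounded collection of \emph{small, independently guessable} objects — the sets $Z_{<i}, Z_{>i}, B_q$ and one or two cheap vertices, plus $\le \sqrt{k}$ fill edges per cheap vertex — already pins down the \emph{entire} clique $\Omega$, even though the algorithm knows neither $H$ nor $T$. This is precisely where Lemma~\ref{lem:connectedness-of-minimal-completion}(ii) is invoked, and the delicate point is the bookkeeping of which bag each $G$-neighborhood covers, e.g. verifying inclusions such as $B_i \subseteq N_G\bigl(B_q \cup (N_G[Z_{>i}] \setminus N_H(u))\bigr) \subseteq \Omega$. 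A secondary subtlety is checking that the four types are exhaustive — in particular that, once Types~1--3 all fail, there genuinely are two distinct $Z_i$ carrying cheap vertices so that the intersection trick of Type~4 applies. Once these identities and the case analysis are in place, the size bounds and the conclusion ``every \vpmc{} lies in $\mathcal{C}$'' follow immediately, which is exactly the statement of the Enumeration Lemma.
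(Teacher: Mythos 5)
Your proposal is correct and follows essentially the same route as the paper: the same four-type classification of vital potential maximal cliques (small complement, large leaf bag, cheap vertices in one $Z_i$, cheap vertices in two $Z_i$'s), the same use of Lemma~\ref{lem:connectedness-of-minimal-completion}~\textit{(\ref{lem:conn-min-com:2})} to recover $\Omega$ from the small guessable sets $Z_{<i}$, $Z_{>i}$, $B_q$ and one or two cheap vertices with their $\leq\sqrt{k}$ fill edges, and the same counting via $|V|=\bigO(k^3)$. The points you flag as delicate (the bag-by-bag inclusions and the exhaustiveness of the case distinction) are exactly the ones the paper verifies, so nothing essential is missing.
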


%
%
\paragraph{Step~C.\ Dynamic programming}
We first give an intuitive idea of the dynamic procedure: We start off
by assuming that we have the family~$\mathcal{C}$ containing all \vpmc
s of $(G,k)$.  We start by generating in time
$2^{\bigO(\sqrt{k}\log{k})}$ a family~$\mathcal{S}$ of pairs $(X,Y)$,
where $X,Y \subseteq V(G)$, such that for every minimal completion~$S$
of size at most~$k$, and the corresponding universal clique
decomposition $(T,\mathcal{B})$ of~$H=G+S$, it holds that every
block~$(B,D)$ is in~$\mathcal{S}$, and the size of~$\mathcal{S}$ is
$2^{\bigO(\sqrt{k}\log{k})}$.  (See Definition~\ref{def:tp-block} for
the definition of a block.)

The construction of~$\mathcal{S}$ is based on the following
observations about blocks and \vpmc s: Let~$G$ be a graph,~$S$ a
minimal completion and $L=(B,D)$ a block of the universal clique
decomposition of $H=G+S$, where~$H$ is not a complete graph, with~$Q$
being its tail.  Then the following holds:
\begin{itemize}
\item If~$L$ is a leaf block, then $B = \Omega_1 \setminus \Omega_2$ for some \vpmc
  s~$\Omega_1$ and~$\Omega_2$, and $D = B$.
\item If~$L$ is the root block, then the tail of~$L$ is~$B$, $B =
  \Omega_1 \cap \Omega_2$ for some \vpmc s~$\Omega_1$ and~$\Omega_2$, and $D = V$.
\item If~$L$ is an internal block, then~$Q$ is the intersection of two
  \vpmc s~$\Omega_1$ and~$\Omega_2$ of~$G$, $B = Q \setminus \Omega_3$
  for some \vpmc{}~$\Omega_3$, and~$D$ is the connected component of
  $G - (Q \setminus B)$ containing~$B$.
\end{itemize}

From this observation, we can conclude that by going through all
triples $\Omega_1,\Omega_2,\Omega_3$, we can compute the
set~$\mathcal{S}$ consisting of all blocks~$(B,D)$ of minimal
completions.  We now define the value~$\dpt(B,D)$ as follows:
$\dpt(B,D)$ is equal to the minimum number of edges needed to be added
to~$G[D]$ to make it a trivially perfect graph with~$B$ being the
universal clique contained in the root of the universal clique
decomposition, unless this minimum number is larger than $k$; In this
case we put $\dpt(B,D)=+\infty$.  We later derive recurrence equations
that enable us to compute all the relevant values
of~$\dpt(\cdot,\cdot)$ using dynamic programming.  Finally, the
minimum cost of completing~$G$ to a trivially perfect graph is equal
to $\min_{(B,V(G)) \in \mathcal{S}}\dpt(B,V(G))$.  If this minimum is
equal to $+\infty$, then no completion of size at most $k$ exists and
we can conclude that $G,k$ is a \no-instance.

\bigskip

We now proceed to a formal proof of the correctness of the dynamic
programming procedure.  Suppose that we have the family~$\mathcal{C}$
containing all \vpmc s of~$(G,k)$.  We start by generating in time
$2^{\bigO(\sqrt{k}\log{k})}$ a family~$\mathcal{S}$ of pairs~$(X,Y)$,
where $X,Y \subseteq V$, where $V = V(G)$, such that
\begin{itemize}
\item for every minimal completion~$H$ that adds at most $k$ edges,
  every block~$(B,D)$ of the universal clique decomposition of $H$
  belongs to~$\mathcal{S}$, and
\item the size of $\mathcal{S}$ is $2^{\bigO(\sqrt{k}\log{k})}$.
\end{itemize}
The construction of $\mathcal{S}$ is based on the following lemmata.

\begin{lemma}
  \label{lemma:constructingBlocks1}
  Let $G$ be a graph, $S$ a minimal completion of size at most $k$,
  and~$(B,D)$ a non-leaf and non-root block of the universal clique
  decomposition of~$H=G+S$, with~$Q$ being its tail.  Then
  \begin{enumerate}[(i)]
  \item\label{lem:conBl:item1:1} $Q$ is the intersection of two \vpmc
    s~$\Omega_1$ and~$\Omega_2$ of~$G$,
  \item\label{lem:conBl:item1:2} $B = Q \setminus \Omega_3$ for some \vpmc{}~$\Omega_3$,
    and
  \item\label{lem:conBl:item1:3} $D$ is the connected component of $G
    - (Q \setminus B)$ containing~$B$.
  \end{enumerate}
\end{lemma}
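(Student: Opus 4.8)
The plan is to exploit the structure of the universal clique decomposition $(T,\mathcal{B})$ of $H$ together with Lemma~\ref{lemma:leaf_part} and Lemma~\ref{lem:connectedness-of-minimal-completion}, translating each statement about $T$ into a statement about intersections and differences of tails of leaf blocks — which are exactly the maximal cliques of $H$, hence \vpmc s of $G$ by the observation following Definition~\ref{def:vital_pmc}. Fix the node $t \in V_T$ corresponding to the block $(B,D)$ with tail $Q$, and let $P_{rt} = B_{t_0}\cdots B_{t_q}$ be the root-to-$t$ path in $T$. Since $(B,D)$ is non-leaf, by Lemma~\ref{lem:connectedness-of-minimal-completion}~\textit{(\ref{lem:conn-min-com:1})} the graph $H[D]-B$ has at least two connected components $D_1,D_2,\dots,D_\ell$, $\ell>1$.

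For part \textit{(\ref{lem:conBl:item1:1})}, I would pick a leaf $s_1$ of $T$ in the subtree hanging off $t$ through $D_1$ and a leaf $s_2$ through $D_2$; the tails $\Omega_1,\Omega_2$ of the corresponding leaf blocks are maximal cliques of $H$, hence \vpmc s of $G$. Both tails pass through $t$ (they are root-to-leaf paths through $t$), so $Q \subseteq \Omega_1 \cap \Omega_2$; conversely, a vertex in $\Omega_1 \cap \Omega_2$ lies in bags on both the $r$–$s_1$ and $r$–$s_2$ paths, and since these paths diverge at $t$ (one enters $D_1$, the other $D_2$), any common bag must lie on $P_{rt}$, giving $\Omega_1 \cap \Omega_2 \subseteq Q$. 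For part \textit{(\ref{lem:conBl:item1:2})}, note $B = B_t = B_{t_q}$ is the last bag of $P_{rt}$, so $Q \setminus B$ consists of the bags $B_{t_0},\dots,B_{t_{q-1}}$ strictly above $t$; I would take $\Omega_3$ to be the tail of any leaf block in the subtree below a child of $t$ (equivalently, pick $\Omega_3 = \Omega_1$ from part (i), restricted to the part below $B$): a vertex of $Q$ survives in $\Omega_3$ iff its bag is on the $r$-to-that-leaf path, i.e. iff it lies in $B$, so $Q \setminus \Omega_3 = B$. (A small care point: one must make sure $\Omega_3$ is chosen so that it contains all of $B$ but none of $Q\setminus B$; taking the tail of a leaf descendant of $t$ does exactly this, since such a tail contains $B_t$ and then descends, never revisiting $B_{t_0},\dots,B_{t_{q-1}}$.) For part \textit{(\ref{lem:conBl:item1:3})}, observe $Q \setminus B$ is a clique in $H$ and, by Lemma~\ref{lem:connectedness-of-minimal-completion} applied along $P_{rt}$, it is exactly the set whose removal disconnects $D$ from the rest; more directly, in $G$ the vertices of $D$ have no $G$-neighbors outside $Q \cup \bigcup_{j} Z_j$-type sets, and after deleting $Q\setminus B$ the set $B$ together with $D\setminus B$ forms one connected piece of $G - (Q\setminus B)$ — this uses Lemma~\ref{lem:connectedness-of-minimal-completion}~\textit{(\ref{lem:conn-min-com:3})} (each $G[D_i]$ connected) and \textit{(\ref{lem:conn-min-com:2})} (every vertex of $B$ has a $G$-neighbor in each $D_i$) to glue the components of $H[D]-B$ back together through $B$ in $G$, while \textit{(\ref{lem:conn-min-com:4})}, or rather the separator property of $Q\setminus B$ in $H\supseteq G$, ensures nothing outside $D$ is attached.

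The main obstacle I anticipate is part \textit{(\ref{lem:conBl:item1:3})}: proving that $D$ is \emph{exactly} a connected component of $G-(Q\setminus B)$, not merely contained in one. One inclusion (the component is contained in $D$) requires arguing that $Q\setminus B$ separates $D\setminus B$ from $V\setminus D$ already in $G$ — this follows because $Q\setminus B \subseteq Q$ and in $H$ the tail $Q$ is universal to $D\setminus B$ and $Q \setminus B$ separates the $T_t$-part from the rest, and $G\subseteq H$ only has fewer edges, so any $G$-path from $D\setminus B$ out of $D$ must pass through $Q\cap D = B$... and here one needs that $B$ itself is on the $D$-side, which is why we take the component \emph{containing} $B$. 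The reverse inclusion ($D$ is entirely in one component) is the genuinely load-bearing step and is where Lemma~\ref{lem:connectedness-of-minimal-completion}~\textit{(\ref{lem:conn-min-com:2})} and \textit{(\ref{lem:conn-min-com:3})} are essential: each $G[D_i]$ is connected and each attaches to $B$ via a $G$-edge, so $G[D] - (Q\setminus B) = G[D]$ restricted appropriately is connected, hence lies in a single component. I would write this out carefully, as it is the only place where the argument is not a routine bookkeeping of which bags lie on which root-to-leaf paths.
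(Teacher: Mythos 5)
Parts \textit{(\ref{lem:conBl:item1:1})} and \textit{(\ref{lem:conBl:item1:3})} of your plan are sound and essentially the paper's argument: for \textit{(i)} the paper takes $\Omega_j=\Omega_j'\cup Q$ with $\Omega_j'$ a maximal clique inside the component $D_j$ of $H[D]\setminus B$, which is the same object as your ``tail of a leaf in the $j$-th child subtree,'' and for \textit{(iii)} you correctly supply the two ingredients (that $N_G(D)\subseteq N_H(D)\subseteq Q\setminus B$, and that $G[D]$ is connected via Lemma~\ref{lem:connectedness-of-minimal-completion}~\textit{(\ref{lem:conn-min-com:2})} and \textit{(\ref{lem:conn-min-com:3})}) that the paper compresses into one line.

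Part \textit{(\ref{lem:conBl:item1:2})}, however, contains a genuine error. You choose $\Omega_3$ to be the tail of a leaf block \emph{below} $t$ (``below a child of $t$''), but by Definition~\ref{def:tp-block} the tail of such a leaf is the union of \emph{all} bags on its root-to-leaf path, and since that leaf is a descendant of $t$ this path passes through $B_{t_0},\ldots,B_{t_{q-1}}$ and $B_t$; hence $Q\subseteq\Omega_3$ and $Q\setminus\Omega_3=\emptyset$, not $B$. Your ``care point'' is also inverted: to get $B=Q\setminus\Omega_3$ you need $\Omega_3$ to contain all of $Q\setminus B$ and to \emph{avoid} $B$, whereas you ask for the opposite (which would yield $Q\setminus\Omega_3=Q\setminus B$), and your justification that the tail ``never revisits $B_{t_0},\ldots,B_{t_{q-1}}$'' contradicts the definition of tail. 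The correct choice, which is what the paper does, goes through the \emph{parent} of $t$: the parent block $\hat L=(\hat B,\hat D)$ is not a leaf, so $t$ has a sibling $t'$, and a maximal clique of $H$ through a leaf of $T_{t'}$ (equivalently, $\hat Q\cup\Omega'$ with $\Omega'$ a maximal clique in the sibling's component, as in part \textit{(i)} applied to $\hat L$) is a \vpmc{} $\Omega_3$ satisfying $\Omega_3\cap Q=\hat Q=Q\setminus B$, whence $Q\setminus\Omega_3=B$. With that replacement the rest of your argument goes through.
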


\begin{proof}
  \textit{(\ref{lem:conBl:item1:1})} Consider two connected
  components~$D_1$ and~$D_2$ of~$H[D \setminus B]$ and let~$\Omega'_1$
  and~$\Omega'_2$ be maximal cliques in~$D_1$ and~$D_2$.  Observe that
  $\Omega_1 = \Omega'_1 \cup Q$ and $\Omega_2 = \Omega'_2 \cup Q$ are
  maximal cliques in~$H$.  By definition,~$\Omega_1$ and~$\Omega_2$
  are \vpmc s in~$G$ and $\Omega_1 \cap \Omega_2 = Q$.
  
  \textit{(\ref{lem:conBl:item1:2})} Let $\hat{L} = (\hat{B},
  \hat{D})$ be the parent block of $(B,D)$.  Since $\hat{L}$ is not a
  leaf-block, $\hat{L}$ has at least two children and thus there is a
  block $(B',D')$ which is also a child of $\hat{L}$.  By the previous
  point,~$\hat{Q}$, the tail of~$\hat{L}$ is exactly $\hat{Q} =
  \Omega_1 \cap \Omega_3$ for some vital potential maximal
  clique~$\Omega_3$.  It follows that $B = Q \setminus \Omega_3$.
  
  \textit{(\ref{lem:conBl:item1:3})} It follows from
  Lemma~\ref{lem:connectedness-of-minimal-completion} that $G[D]$ is
  connected.  Then it follows immediately that $D$ is the unique
  connected component of $G - (Q \setminus B)$ containing $B$.
\end{proof}

\begin{lemma}
  \label{lemma:constructingBlocks2}
  Let~$G$ be a graph,~$S$ a minimal completion of size at most $k$,
  and $L = (B,D)$ a leaf block of the universal clique decomposition
  of $H = G + S$.  If~$H$ is not a complete graph, then
  \begin{enumerate}[(i)]
  \item\label{lem:conBl:item2:1} $B = \Omega_1 \setminus \Omega_2$ for some \vpmc
    s~$\Omega_1$ and~$\Omega_2$, and
  \item\label{lem:conBl:item2:2} $D = B$.
  \end{enumerate}
\end{lemma}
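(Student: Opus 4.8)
The plan is to exploit the structural characterization of leaf blocks already established in Lemma~\ref{lemma:leaf_part}, namely that a leaf block $L=(B,D)$ satisfies $D=B$ and that its tail $Q$ equals $N_H[v]$ for every $v\in B$, and moreover $Q$ is a maximal clique of $H$. Item~\textit{(\ref{lem:conBl:item2:2})} is therefore immediate: by the definition of a leaf block (a node $t$ with no children), the subtree $T_t$ consists of $t$ alone, so $D_t=B_t$, i.e. $D=B$. This needs no further argument beyond quoting Lemma~\ref{lemma:leaf_part}\textit{(\ref{lemma:leaf_part_ii})}.

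For item~\textit{(\ref{lem:conBl:item2:1})} I would argue as follows. Let $Q$ be the tail of $L$; by Lemma~\ref{lemma:leaf_part}\textit{(\ref{lemma:leaf_part_ii})}, $Q$ is a maximal clique of $H$, hence $Q$ is a \vpmc{} of $G$ witnessed by $S$ — set $\Omega_1=Q$. It remains to find a second \vpmc{} $\Omega_2$ with $B=\Omega_1\setminus\Omega_2=Q\setminus\Omega_2$. Consider the parent node $\hat t$ of $t$ in $T$, which exists because $H$ is not complete: if $H$ were connected with an empty-parent (i.e. $t$ were the root), then $T$ would be a single node and $H=G+S$ would be a clique, contradicting our assumption. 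Let $\hat B=B_{\hat t}$ be the parent bag and let $\hat Q$ be the tail of the parent block $\hat L=(\hat B,\hat D)$; note $\hat Q = Q\setminus B$, since the tail of the child is obtained from the tail of the parent by appending $B$. Now $\hat L$ is a non-leaf block, so by Definition~\ref{def:univeral-clique-decomposition} (every non-leaf node has at least two children) there is a sibling $t'$ of $t$; pick any maximal clique $\Omega'$ of $H[D_{t'}]$. Then $\Omega_2 := \hat Q\cup\Omega'$ is a maximal clique of $H$ (it is a clique since $\hat Q$ is universal to $D_{t'}$ in $H$, and it is maximal because $\hat Q = N_H(u)\cap N_H(v)$ separates the components below $\hat B$, as in Lemma~\ref{lemma:leaf_part}\textit{(\ref{lemma:leaf_part_iii})}), hence a \vpmc{} of $G$. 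Finally $Q\setminus\Omega_2 = Q\setminus(\hat Q\cup\Omega') = Q\setminus\hat Q = B$, where the last equality uses $\hat Q=Q\setminus B$ together with $\Omega'\subseteq D_{t'}$, which is disjoint from $Q$. This gives the desired $\Omega_1,\Omega_2$.

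The one subtlety to handle carefully is the degenerate case where the parent $\hat t$ is the root and $\hat Q=\hat B$ is very small, or where $H$ has only two bags total; in those situations one must double-check that the sibling $t'$ exists (guaranteed by the ``at least two children'' property of non-leaf nodes) and that $\Omega'$ is nonempty (guaranteed since bags are nonempty, so $D_{t'}\supseteq B_{t'}\neq\emptyset$). I expect this is the only place where a reader might worry, and it is dispatched by the structural facts already in hand; the rest is bookkeeping about tails and set differences. No genuinely hard step is anticipated — the lemma is essentially a direct corollary of Lemma~\ref{lemma:leaf_part} plus the observation that adjacent bags on the root-to-leaf path give rise to \vpmc s whose set difference peels off exactly one bag.
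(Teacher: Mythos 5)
Your proposal is correct and takes essentially the same route as the paper's proof: item (ii) is quoted from Lemma~\ref{lemma:leaf_part}, and for item (i) you set $\Omega_1=Q=N_H[B]$ and take $\Omega_2$ to be the parent's tail united with a maximal clique of a sibling's subtree, which is precisely the paper's choice $\Omega_2=N_H[B'']$ for a leaf block $L''$ below a sibling of $L$. The only cosmetic differences are that the maximality of $\Omega_2$ in $H$ is most cleanly justified by noting it is the tail of that leaf block (Lemma~\ref{lemma:leaf_part}) rather than by your appeal to part (iii), and that you compute $Q\setminus\Omega_2=Q\setminus\hat{Q}=B$ directly instead of going through $\hat{Q}=\Omega_1\cap\Omega_2$.
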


\begin{proof}
  \textit{(\ref{lem:conBl:item2:1})} Let $\hat{L} = (\hat{B},
  \hat{D})$ be the parent block of~$L$, which exists since~$L$ is not
  the root block.  Let $L' = (B', D')$ be a child of~$\hat{L}$ which
  is not~$L$.  If $L' = (B', D')$ is a leaf, then set $L'' = L$, and
  if not, then let $L''= (B'',D'')$ be a leaf having~$L'$ as an
  ancestor.  The blocks~$L'$ and~$L''$ exist since~$\hat{L}$ is not a
  leaf.  Furthermore, let~$\hat{Q}$ be the tail of~$\hat{L}$, and let
  $\Omega_1 = N_H[B]$ and $\Omega_2 = N_H[B'']$ be two maximal cliques
  in~$H$.  We know from above that $\hat{Q} = \Omega_1 \cap \Omega_2$
  and hence $B = \Omega_1 \setminus \Omega_2$.
  
  \textit{(\ref{lem:conBl:item2:2})} This follows immediately from
  Lemma~\ref{lemma:leaf_part}.
\end{proof}

\begin{lemma}
  \label{lemma:constructingBlocks3}
  Let $G$ be a connected graph, $S$ a minimal completion of size at
  most $k$, and $L = (B,D)$ the root block of the universal clique
  decomposition of $H = G + S$.  If $H$ is not a complete graph, then
  \begin{enumerate}[(i)]
  \item\label{lem:conBl:item3:1} the tail of $L$ is $B$,
  \item\label{lem:conBl:item3:2} $B = \Omega_1 \cap \Omega_2$ for some \vpmc s
    $\Omega_1$ and $\Omega_2$, and
  \item\label{lem:conBl:item3:3} $D = V$.
  \end{enumerate}
\end{lemma}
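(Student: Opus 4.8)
The plan is to get parts (i) and (iii) for free from the definitions, and to obtain (ii) by essentially repeating the argument behind Lemma~\ref{lemma:constructingBlocks1}\textit{(\ref{lem:conBl:item1:1})}: exhibit two maximal cliques of $H$ living in two different subtrees hanging off the root, whose only common vertices are exactly the root bag.

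For part (i): since $L$ is the root block, we have $t=r$, so the path from $t$ to $r$ in $T$ is the single node $r$, and hence by Definition~\ref{def:tp-block} the tail of $L$ equals $B_r = B$. For part (iii): because $G$ is connected, $H = G+S$ is connected, so by Lemma~\ref{lem:trivial_complete_part} it admits a universal clique decomposition $(T,\mathcal{B})$, and then $D = D_r = \bigcup_{s\in V(T_r)} B_s = \bigcup_{s\in V_T} B_s = V$ directly from Definition~\ref{def:tp-block}. These two steps are immediate bookkeeping.

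For part (ii): since $H$ is not a complete graph, its universal clique decomposition has more than one node, so by the remark following Definition~\ref{def:univeral-clique-decomposition} the root $r$ has at least two children $x_1,x_2$. Set $D_1 = \bigcup_{s\in V(T_{x_1})} B_s$ and $D_2 = \bigcup_{s\in V(T_{x_2})} B_s$; these are distinct connected components of $H[D\setminus B]=H[V\setminus B]$, so in particular $D_1 \cap D_2 = \emptyset$. I would pick a maximal clique $\Omega_1'$ of $H[D_1]$ and a maximal clique $\Omega_2'$ of $H[D_2]$; since $B = B_r$ is universal in $H$, the sets $\Omega_1 = \Omega_1' \cup B$ and $\Omega_2 = \Omega_2' \cup B$ are cliques of $H$. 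The key observation is that $T_{x_1}$ is the universal clique decomposition of $H[D_1]$, so by Lemma~\ref{lemma:leaf_part}\textit{(\ref{lemma:leaf_part_ii})} the clique $\Omega_1'$ is the tail of some leaf block in $T_{x_1}$; the tail of that same leaf block inside the full tree $T$ additionally collects the bags on the path from $x_1$ up to $r$, which contribute precisely $B_r = B$, so $\Omega_1 = \Omega_1' \cup B$ is the tail of a leaf block of $T$ and hence, again by Lemma~\ref{lemma:leaf_part}\textit{(\ref{lemma:leaf_part_ii})}, a maximal clique of $H$. Symmetrically $\Omega_2$ is a maximal clique of $H$, so by Definition~\ref{def:vital_pmc} both $\Omega_1$ and $\Omega_2$ are vital potential maximal cliques of $G$ (vital since $|S|\le k$). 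Finally $\Omega_1 \cap \Omega_2 = (\Omega_1' \cup B)\cap(\Omega_2' \cup B) = (\Omega_1' \cap \Omega_2') \cup B = B$, using $\Omega_1' \subseteq D_1$, $\Omega_2' \subseteq D_2$ and $D_1\cap D_2 = \emptyset$.

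The only point that needs a little care — the ``hard part'', such as it is — is justifying that the cliques $\Omega_1,\Omega_2$ we construct are genuinely \emph{maximal} in $H$ and not merely cliques; this is exactly what the leaf-block characterization of maximal cliques (Lemma~\ref{lemma:leaf_part}\textit{(\ref{lemma:leaf_part_ii})}) buys us, once we notice that prepending the bags on the path to the global root $r$ to a subtree's leaf-tail is precisely the operation that turns a maximal clique of $H[D_i]$ into a maximal clique of $H$. Everything else is a direct unwinding of the definitions of tail, block, and vital potential maximal clique.
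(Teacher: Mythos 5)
Your proposal is correct and follows essentially the same route as the paper: parts \textit{(\ref{lem:conBl:item3:1})} and \textit{(\ref{lem:conBl:item3:3})} by unwinding the definitions of tail and block (using connectivity of $H$), and part \textit{(\ref{lem:conBl:item3:2})} by the same construction as in Lemma~\ref{lemma:constructingBlocks1}\textit{(\ref{lem:conBl:item1:1})}, taking maximal cliques in two child subtrees of the root and unioning each with $B$. The only difference is that you spell out, via the leaf-block/tail characterization of Lemma~\ref{lemma:leaf_part}\textit{(\ref{lemma:leaf_part_ii})}, why $\Omega_1'\cup B$ and $\Omega_2'\cup B$ are maximal cliques of $H$, a step the paper merely asserts.
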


\begin{proof}
  \textit{(\ref{lem:conBl:item3:1})} By definition, the tail is the
  collection of vertices from~$B$ to the root.  Since~$L$ is a root
  block, the tail is~$B$ itself.
  
  \textit{(\ref{lem:conBl:item3:2})} This follows in the same manner
  as in the proof of
  Lemma~\ref{lemma:constructingBlocks1}~\textit{(\ref{lem:conBl:item1:1})},
  since~$B$ is the tail of block~$L$.
  
  \textit{(\ref{lem:conBl:item3:3})} From the definition of universal
  clique decompositions we have that~$D$ is the connected component of
  $H[V \setminus (Q \setminus B)]$ containing~$B$, but $Q \setminus B
  = \emptyset$, hence~$D$ is the connected component of~$H$
  containing~$B$ and since~$H$ is connected, the result follows.
\end{proof}

By making use of
Lemmata~\ref{lemma:constructingBlocks1}--\ref{lemma:constructingBlocks3},
one can construct the required family~$\mathcal{S}$ by going through
all possible triples of elements of~$\mathcal{C}$.  The size
of~$\mathcal{S}$ is at most
$|\mathcal{C}|^3=2^{\bigO(\sqrt{k}\log{k})}$ and the running time of
the construction of~$\mathcal{S}$ is $2^{\bigO(\sqrt{k}\log{k})}$.
Note here that by
Lemma~\ref{lem:connectedness-of-minimal-completion}~\textit{(\ref{lem:conn-min-com:3})}
and the fact that~$G$ is connected, we may discard from~$\mathcal{S}$
every pair~$(B,D)$ where~$G[D]$ is not connected.

For every pair $(X,Y) \in \mathcal{S}$, with $X \subseteq Y \subseteq V$, we define
$\dpt\left(X,Y\right)$ to be the minimum number of edges required to
add to~$G[Y]$ to obtain a trivially perfect graph where~$X$ is the
maximal universal clique; If this minimum value exceeds $k$, we define
$\dpt\left(X,Y\right)=+\infty$.  Thus, to compute an optimal solution,
it is sufficient to go through the values $\dpt\left(X,Y\right)$,
where $(X,Y) \in \mathcal{S}$ with $Y = V$.  In other words, to
compute the size of a minimum completion we can find
\begin{equation}
  \label{eq:DP}
  \min_{(X,V) \in \mathcal{S}} { \dpt\left(X,V\right) } ,
\end{equation}
and if this value is $+\infty$, then the size of a minimum completion
exceeds~$k$.

In the following, for a subset of vertices~$A$ we write~$m_A$ to
denote the number of edges inside~$A$, i.e., $m_A =|E(A)|$.  We
compute~\eqref{eq:DP} by making use of dynamic programming over sets
of~$\mathcal{S}$.  For every pair $(X,Y) \in \mathcal{S}$ which can be
a leaf block for some completion, i.e., for all pairs with $X = Y$, we
put
\[
\dpt\left(X,X\right) = \binom{|X|}{2} - m_X .
\]
Of course, if the computed value exceeds $k$, then we put
$\dpt\left(X,X\right)=+\infty$.

For $(X,Y) \in \mathcal{S}$ with $X \subsetneq Y$, if $(X,Y)$ is a block of some
minimal completion~$H$, then in~$H$, we have that~$X$ is a universal
clique in~$H[Y]$, every vertex of~$X$ is adjacent to all vertices of
$Y \setminus X$ and the number of edges in $H[Y \setminus X]$ is the
sum of edges in the connected components of $H[Y \setminus X]$.  By
Lemma~\ref{lem:connectedness-of-minimal-completion}, the vertices of
every connected component~$Y'$ of~$H[Y \setminus X]$ induce a
connected component in~$G[Y \setminus X]$.  We can notice that for
each connected component~$Y'$ of~$H[Y \setminus X]$ the decomposition
of~$H$ contains a new block $(X',Y')$ and since~$\mathcal{S}$ contains
all blocks of minimal trivially perfect completions it follows that
$(X',Y') \in \mathcal{S}$.

Now for $(X,Y) \in \mathcal{S}$ in increasing size of~$Y$, we use $m_{X,
  Y \setminus X} = |E(X, Y \setminus X)|$ to denote the number of
edges between~$X$ and~$Y \setminus X$ in~$G$.  Let~$C$ be the set of
connected components of~$G[Y \setminus X]$.  Then we have
\[
\dpt\left(X,Y\right) = \binom{|X|}{2} - m_X + |X| \cdot |Y \setminus X| -
m_{X, Y \setminus X} + \sum_{G[Y'] \in C} \min_{(X',Y') \in
  \mathcal{S}} \dpt\left(X',Y'\right) .
\]
Again, if the value on the right hand side exceeds $k$, then we have
$\dpt\left(X,Y\right) = +\infty$.

The cardinality of~$Y'$ is less than~$|Y|$ since~$X \neq \emptyset$ and as blocks
are processed in increasing cardinality of~$Y$, the value for
$\dpt\left(X',Y'\right)$ has been calculated when it is needed for
$\dpt\left(X,Y\right)$.

The running time required to compute $\dpt\left(X,Y\right)$ is up to a
polynomial factor in~$k$ proportional to the number of sets~$(X', Y')
\in \mathcal{S}$, which is~$\bigO(|\mathcal{S}|)$.  Thus the total
running time of the dynamic programming procedure is up to a
polynomial factor in~$k$ proportional to~$\bigO(|\mathcal{S}|^2)$, and
hence~\eqref{eq:DP} can be computed in
time~$2^{\bigO(\sqrt{k}\log{k})}$.  This concludes Step~C and the
proof of Theorem~\ref{thm:co-trivially-perfect-subept}.

\newcommand{\funfamily}{\mathfrak{F}}

\section{Completion to threshold graphs}
\label{sec:threshold-completion}
In this section we give an algorithm which solves \pname{Threshold
  Completion}, which is \fd{} for the case when $\F = \{2K_2, C_4,
P_4\}$, in subexponential parameterized time.  More specifically, we
show the following theorem:

\begin{theorem}
  \label{thm:threshold-completion-subept}
  \pname{Threshold Completion} is solvable in time $2^{\bigO(\sqrt{k}
    \log k)} + \bigO(kn^{4})$.
\end{theorem}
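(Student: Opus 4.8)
The plan is to reuse, almost verbatim, the three-step architecture of the proof of Theorem~\ref{thm:co-trivially-perfect-subept}, exploiting that threshold graphs are exactly the trivially perfect graphs that are in addition $2K_2$-free. \textbf{Step~A} is unchanged: apply Guo's kernelization for \pname{Threshold Completion}~\cite{guo2007problem}, which in time $\bigO(kn^4)$ yields an equivalent instance with $\bigO(k^3)$ vertices and a non-increased parameter, so that we may assume $|V(G)| = \bigO(k^3)$ and $G$ connected. The preparatory work is to specialize the structure theory of Section~\ref{sec:cotrivially-perfect-graphs} to threshold graphs. Since a threshold graph is trivially perfect, it admits the unique universal clique decomposition of Definition~\ref{def:univeral-clique-decomposition}, and Lemmas~\ref{lemma:leaf_part} and~\ref{lem:connectedness-of-minimal-completion} are available. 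The extra input coming from $2K_2$-freeness, which I would prove first, is that this decomposition is a \emph{caterpillar}: there is a root-to-leaf path $B_0,B_1,\dots,B_q$ (the spine) such that every bag off this path is a singleton leaf attached to some $B_i$; equivalently, below any universal clique there is at most one connected component of more than one vertex (otherwise two such components would contain a $2K_2$). Hence every maximal clique of a connected threshold graph is either the whole spine $B_0\cup\dots\cup B_q$ or a prefix $B_0\cup\dots\cup B_i$ together with a single pendant vertex.

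For \textbf{Step~B} I would re-run the enumeration of \vpmc s---now understood as the maximal cliques of minimal \emph{threshold} completions using at most $k$ edges---exactly as in the proof of Lemma~\ref{lem:enumeration}. The cheap/expensive dichotomy (at most $2\sqrt{k}$ vertices have fill number exceeding $\sqrt{k}$) is used as is, and the Type~1--4 case analysis carries over with the trivially perfect structural lemmas replaced by their threshold analogues; the more rigid caterpillar shape only simplifies each case. This produces, in time $2^{\bigO(\sqrt{k}\log k)}$, a family $\mathcal{C}$ of $2^{\bigO(\sqrt{k}\log k)}$ vertex subsets that contains every \vpmc{} of $(G,k)$.

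For \textbf{Step~C} I would set, for $X\subseteq Y\subseteq V(G)$, $\dpt(X,Y)$ to be the minimum number of edges to add to $G[Y]$ so that it becomes a threshold graph with root universal clique $X$ (and $+\infty$ if this exceeds $k$). Threshold analogues of Lemmas~\ref{lemma:constructingBlocks1}--\ref{lemma:constructingBlocks3} then let one assemble, from triples of members of $\mathcal{C}$, a family $\mathcal{S}$ of size $2^{\bigO(\sqrt{k}\log k)}$ containing every block $(B,D)$ occurring in a minimal threshold completion using $\le k$ edges, and the optimum cost is $\min_{(X,V(G))\in\mathcal{S}}\dpt(X,V(G))$. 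The recurrence differs from the trivially perfect one only in the treatment of $G[Y\sm X]$: to keep the completion $2K_2$-free, at most one connected component of the completed $H[Y\sm X]$ may be non-trivial, that component is a union of connected components of $G[Y\sm X]$, and all remaining components of $G[Y\sm X]$ must already be single vertices (contributing cost $0$ as pendants). So one iterates over the candidate non-trivial child $Y'$ ranging over the second coordinates appearing in $\mathcal{S}$---of which there are $2^{\bigO(\sqrt{k}\log k)}$---subject to $Y'$ containing every non-singleton component of $G[Y\sm X]$, recurses into $\dpt(X',Y')$ for $(X',Y')\in\mathcal{S}$, and adds the same $\binom{|X|}{2}-m_X+|X|\cdot|Y\sm X|-m_{X,Y\sm X}$ term as before. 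Processing pairs in increasing $|Y|$, all values are computed in time $\bigO(|\mathcal{S}|^2)$ up to a polynomial factor in $k$, i.e.\ $2^{\bigO(\sqrt{k}\log k)}$, which together with Step~A gives the claimed bound $2^{\bigO(\sqrt{k}\log k)}+\bigO(kn^4)$.

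The part I expect to be genuinely delicate, rather than a mechanical translation, is justifying the two structural claims that keep this skeleton correct. First, minimality with respect to threshold completions is strictly weaker than minimality with respect to trivially perfect completions, so one must re-examine Lemma~\ref{lem:connectedness-of-minimal-completion}: the exchange arguments in its proof (moving a vertex down, or splitting a component into two subtrees) must be redone so as to preserve the caterpillar shape---in particular they may not create a second non-trivial child---and one should check whether the statements survive unchanged or need to be weakened (e.g.\ allowing $G[D_i]$ for the unique non-trivial child to be disconnected when a minimal completion is forced to merge an independent set onto it). Second, one must argue, again from minimality, that a minimal threshold completion never needlessly absorbs singleton components into the non-trivial child, so that the candidate children $Y'$ that the Step~C recurrence must consider indeed all lie in $\mathcal{S}$ and the recurrence is both correct and evaluable in subexponential time.
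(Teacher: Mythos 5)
There is a genuine gap, and it sits exactly where you placed your caveat: the threshold analogue of Lemma~\ref{lem:connectedness-of-minimal-completion}~\textit{(\ref{lem:conn-min-com:2})} is simply false, and your Step~B collapses without it. The exchange argument in that lemma deletes the fill edges between a vertex $v\in B$ and a child component $D_1$ and re-hangs $v$ lower in the decomposition; for trivially perfect graphs this is harmless, but for threshold graphs the deletion can create a $2K_2$ (an edge inside $D_1$ together with an edge from $v$ to a sibling singleton child), so minimality of a \emph{threshold} completion does not force $v$ to have $G$-neighbors in every child component. Concretely, let $G$ have vertices $r,v,w,x,y$ with $r$ universal and the only other edges $vw$ and $xy$. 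Then $S=\{vx,vy\}$ is a minimal threshold completion (removing either edge leaves the induced $P_4$ $w\!-\!v\!-\!x\!-\!y$, and $S=\emptyset$ leaves the $2K_2$ $\{vw,xy\}$), the root block of $G+S$ is $(\{r,v\},V)$ with child components $\{x,y\}$ and $\{w\}$, and $v$ has \emph{no} $G$-neighbor in $\{x,y\}$. Consequently the identities of the form $B_{0,i-1}=N_G(Z_{<i})$, $B_{i+1,q-1}\subseteq N_G(Z_{>i})$, $B_i\subseteq N_G(\cdots)$ underlying \eqref{eq:type3} and \eqref{eq:type4} no longer reconstruct the vital potential maximal clique from $G$-neighborhoods of small sets, and the enumeration lemma (the analogue of Lemma~\ref{lem:enumeration}) is not established; the same exchange-based issue also undermines part~\textit{(\ref{lem:conn-min-com:3})} and hence the block-construction and DP-correctness arguments you import. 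Saying these lemmas ``must be redone'' or ``possibly weakened'' is the whole difficulty, not a footnote, so the proposal is a plan rather than a proof.

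For comparison, the paper does not attempt any threshold analogue of the potential-maximal-clique machinery. After Guo's kernelization (Proposition~\ref{lem:Guo_tresh}) it uses chromatic coding (Proposition~\ref{prop:unsetfamily}) to get $2^{\bigO(\sqrt{k}\log k)}$ colorings of $V(G)$ into $\bigO(\sqrt{k})$ classes such that some coloring makes the solution colorful; then each class induces a threshold (hence split) graph, and since a split graph on $n$ vertices has at most $n+1$ split partitions (Proposition~\ref{prop:Ghosh}), one can enumerate $\bigO((k^3)^{\bigO(\sqrt k)})$ candidate split partitions $(C,I)$ of the completed graph; finally, making the neighborhoods of $I$ nested is exactly a \pname{Chain Completion} instance, solved in subexponential time by Proposition~\ref{prop:FominVill}. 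If you want to salvage your route, you would first have to prove correct threshold replacements for Lemma~\ref{lem:connectedness-of-minimal-completion} (my example shows what they must tolerate) and re-derive the enumeration; as it stands, that work is missing.
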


The proof of Theorem~\ref{thm:threshold-completion-subept} is a
combination of the following known techniques: the kernelization
algorithm by Guo~\cite{guo2007problem}, the chromatic coding technique
of Alon et al.~\cite{alon2009fast}, also used in the subexponential
algorithm of Ghosh et al.~\cite{ghosh2012faster} for split graphs, and
the algorithm of Fomin and Villanger for chain
completion~\cite{fomin2012subexponential}.

For the kernelization part we use the following result from
Guo~\cite{guo2007problem}.  Guo stated and proved it for the
complement problem \pname{Threshold Edge Deletion}, but since the set
of forbidden subgraphs~$\F = \{2K_2, C_4, P_4\}$ is
self-complementary, the deletion and completion problems are
equivalent.

\begin{proposition}[\cite{guo2007problem}]
  \label{lem:Guo_tresh}
  \pname{Threshold Completion} admits a kernel with~$\bigO(k^3)$
  vertices.  The running time of the kernelization algorithm
  is~$\bigO(kn^4)$.
\end{proposition}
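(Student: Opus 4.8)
The plan is to follow the standard recipe for kernelizing modification problems whose target class is characterized by a finite set of small forbidden induced subgraphs, and then exploit the very rigid structure of threshold graphs to push the bound down to $\bigO(k^3)$. First I would observe that since $\F=\{2K_2,C_4,P_4\}$ is self-complementary, \pname{Threshold Completion} and \pname{Threshold Edge Deletion} are polynomial-time equivalent with literally the same forbidden set, so it is enough to produce the kernel for the completion formulation. The opening move is to compute an approximate modulator: greedily build an inclusion-maximal packing of vertex-disjoint induced copies of members of $\F$, each on four vertices. If more than $k$ disjoint copies are found we output \no{}, since pairwise vertex-disjoint obstructions force pairwise disjoint fill edges and hence at least one distinct added edge each. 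Otherwise the union $X$ of the packed copies has $|X|\le 4k$, and by maximality of the packing $G-X$ contains no obstruction, i.e.\ $G-X$ is threshold.

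Second, I would exploit the structure of threshold graphs. A graph is threshold iff it has a creation ordering in which every vertex is added as isolated or as universal; equivalently its closed neighborhoods are linearly ordered by inclusion (the vicinal preorder), and consequently $G-X$ decomposes canonically into homogeneous classes, each being a clique of true twins or an independent set of false twins, these classes being linearly ordered by neighborhood inclusion. For the analysis I would fix a hypothetical minimal solution $S$ with $|S|\le k$, let $A$ be the set of at most $2k$ vertices incident to $S$, and note that every $v\in V\setminus(X\cup A)$ satisfies $N_{G+S}(v)=N_G(v)$, so in the threshold graph $G+S$ such a vertex occupies a position in the vicinal preorder dictated entirely by $N_G(v)$ together with the positions of the $\bigO(k)$ vertices of $X\cup A$.

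Third come the reduction rules. Declare $u,v\in V\setminus X$ equivalent if $N_G(u)\cap X=N_G(v)\cap X$ and $u,v$ are twins (true or false) within $G-X$; each equivalence class is then a homogeneous set of $G$. Rule~1: whenever some equivalence class has size larger than $k+c$ for a suitable absolute constant $c$, delete surplus vertices so that exactly $k+c$ remain; this is safe because no minimal solution can need to treat more than $k+c$ mutually equivalent, interchangeable vertices in distinct roles, and deleting one neither creates nor destroys the need for a fill edge among the rest. Rule~2 is the statement that after Rule~1 there are only $\bigO(k^2)$ equivalence classes: the classes of $G-X$ are linearly ordered, each is refined by the $\bigO(k)$-many possible traces on $X$, and a counting argument using the vicinal order shows that along this linear order the trace on $X$ and the "threshold position'' of the $\bigO(k)$ vertices of $X\cup A$ can change only $\bigO(k)$ times within each of $\bigO(k)$ segments. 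With $\bigO(k^2)$ classes each of size $\bigO(k)$, plus the $\bigO(k)$ modulator vertices, the reduced instance has $\bigO(k^3)$ vertices; each rule is tested and applied in polynomial time, and the overall running time is dominated by the packing phase, which finds one obstruction by scanning all $4$-tuples in $\bigO(n^4)$ and runs at most $k+1$ rounds, giving $\bigO(kn^4)$.

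The hard part will be Rule~2: a naive bound gives $2^{\bigO(k)}$ equivalence classes, one per subset of $X$, and converting this into the polynomial bound requires genuinely using the interaction between the vicinal preorder of the threshold graph $G-X$ and the $\bigO(k)$ modulator vertices—essentially showing that the trace on $X$ is "monotone along segments'' and can switch only boundedly often. I would expect this, together with the careful verification that Rule~1 preserves the answer for the completion (equivalently deletion) problem, to be the technical core of Guo's argument that one must reconstruct in full detail.
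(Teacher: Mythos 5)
First, note that the paper does not prove this proposition at all: it is invoked as a black box from Guo's kernelization paper, and the only argument the authors add is the observation you also make, namely that $\F=\{2K_2,C_4,P_4\}$ is self-complementary, so the completion and deletion variants are equivalent. Your attempt is therefore a from-scratch reconstruction of Guo's kernel, and as written it has a genuine gap exactly where you flag it, at ``Rule~2''. Your only shrinking rule (Rule~1) truncates equivalence classes defined by \emph{same trace on $X$ and twins inside $G-X$}, but nothing in your argument bounds the \emph{number} of such classes by a function of $k$: the obstruction-free part $G-X$ is a threshold graph, and a threshold graph on $n$ vertices can have $\Theta(n)$ twin classes (take an alternating isolated/universal creation sequence). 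Concretely, let $T$ be such a ``staircase'' threshold graph and let $G=T-xy$ for a suitably chosen edge $xy$; then $(G,1)$ is a \yes{} instance, every obstruction contains the non-edge $xy$, so your packing gives $|X|\leq 4$, Rule~1 never fires (all classes are tiny), and the ``kernel'' still has $\Theta(n)$ vertices while $k=1$. So the claimed $\bigO(k^2)$ bound on the number of classes is not a consequence of Rule~1; it is false without further reduction rules, and the sketched ``traces change $\bigO(k)$ times per segment'' argument cannot repair this because the class count is driven by the twin structure of $G-X$, not only by traces on $X$.

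What is missing is precisely the core of Guo's argument: reduction rules that shrink the obstruction-free threshold part itself, e.g.\ a rule deleting (or merging) vertices that do not occur in any forbidden induced subgraph, together with the non-trivial safety proof that such deletions preserve the answer for this particular target class (this is not safe for edge modification problems in general), plus sunflower-type rules forcing edges contained in more than $k$ otherwise disjoint obstructions. Only after such rules does a counting argument over the $\bigO(k)$ touched vertices yield the $\bigO(k^3)$ vertex bound, and the $\bigO(kn^4)$ running time comes from repeatedly locating four-vertex obstructions. Your opening moves (self-complementarity, the disjoint-packing no-certificate, the modulator $X$ with $G-X$ threshold, and the twin-class truncation) are all sound ingredients, but without the ``delete vertices outside all obstructions'' machinery and its correctness proof the kernel bound does not follow.
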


\paragraph{Universal sets}
We start with describing the \emph{chromatic coding} technique by Alon
et al.~\cite{alon2009fast}.  Let~$f$ be a coloring (not necessarily
proper) of the vertex set of a graph~$G = (V,E)$ into~$t$ colors.  We
call an edge~$e \in E$ \emph{monochromatic} if its endpoints have the
same color, and we call a set of edges~$F \subseteq E$ \emph{colorful}
if no edge in~$F$ is monochromatic.
\begin{definition}
  \label{def:universal-coloring-family}
  A \emph{universal~$(n,k,t)$-coloring family} is a
  family~$\funfamily$ of functions from~$[n]$ to~$[t]$ such that for
  any graph~$G$ with vertex set~$[n]$, and~$k$ edges, there is an~$f
  \in \funfamily$ such that~$f$ is a proper coloring of~$G$,
  i.e.,~$E(G)$ is colorful.
\end{definition}

\begin{proposition}[\cite{alon2009fast}]
  \label{prop:unsetfamily}
  For any~$n > 10k^2$, there exists an explicit universal
  $(n,k,\bigO(\sqrt{k}))$-coloring family~$\funfamily$ of size
  $|\funfamily| \leq 2^{ {\bigO}(\sqrt{k}\log{k})} \log n$.
\end{proposition}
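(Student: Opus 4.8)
This is the chromatic-coding lemma of Alon, Lokshtanov and Saurabh~\cite{alon2009fast}; I describe how I would prove it. The plan has two parts: an existential, probabilistic core, and then a derandomisation that makes the family explicit, the latter being where the real work lies.

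\emph{First, the probabilistic core.} I would show that a single uniformly random colouring of $[n]$ with $t := 2\lceil\sqrt{2k}\rceil = \bigO(\sqrt k)$ colours is, for any fixed graph $G$ on $[n]$ with at most $k$ edges, a proper colouring of $G$ with probability at least $p := 2^{-\bigO(\sqrt k)}$. The key observation is that such a $G$ is $\lceil\sqrt{2k}\rceil$-degenerate — any subgraph with minimum degree at least $\lceil\sqrt{2k}\rceil+1$ already has more than $k$ edges — so one can fix a degeneracy ordering $v_1,\dots,v_m$ of its (at most $2k$) non-isolated vertices in which every $v_i$ has some number $d_i \le \lceil\sqrt{2k}\rceil$ of earlier neighbours, with $\sum_i d_i \le k$. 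Exposing the colours in this order, $v_i$ avoids the colours of its earlier neighbours with probability at least $1 - d_i/t \ge e^{-2d_i/t}$ (using $d_i/t \le 1/2$), and multiplying gives probability at least $e^{-(2/t)\sum_i d_i} \ge e^{-2k/t} = 2^{-\bigO(\sqrt k)}$. A union bound over the at most $n^{2k}$ graphs on $[n]$ with at most $k$ edges then shows that $\Theta(p^{-1}\cdot k\log n) = 2^{\bigO(\sqrt k)}\log n$ independently sampled colourings form, with positive probability, a universal $(n,k,\bigO(\sqrt k))$-colouring family; this already beats the stated size bound $2^{\bigO(\sqrt k\log k)}\log n$, but only non-constructively.

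\emph{Second, making the family explicit.} To remove the randomness I would first shrink the ground set. Using an explicit \emph{splitter} — a polynomial-time constructible family of functions $[n]\to[ck^2]$, of size $k^{\bigO(1)}\log n$, such that every $2k$-element set is mapped injectively by some function in it (Naor--Schulman--Srinivasan) — any graph $G$ with at most $k$ edges is mapped, by a suitable function $h$ in the family, to an isomorphic copy $h(G)$ on the bounded ground set $[ck^2]$, and a colouring proper for $h(G)$ pulls back along $h$ to one proper for $G$. It then suffices to construct \emph{explicitly} a universal $(ck^2,k,\bigO(\sqrt k))$-colouring family of size $2^{\bigO(\sqrt k\log k)}$ and compose the two families; the composition has size $k^{\bigO(1)}\log n\cdot 2^{\bigO(\sqrt k\log k)} = 2^{\bigO(\sqrt k\log k)}\log n$, which is what is claimed (the hypothesis $n>10k^2$ guarantees $ck^2<n$ and keeps the counting clean). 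The hard part is exactly this last step: derandomising the probabilistic argument over the bounded ground set $[ck^2]$ — via the method of conditional expectations, or by layering further perfect hash families — and it is here that the extra $\log k$ in the exponent appears. I would treat the two combinatorial ingredients (the degeneracy ordering and the splitter) as routine and concentrate all effort on this derandomisation, following~\cite{alon2009fast}.
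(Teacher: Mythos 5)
This proposition is not proved in the paper at all: it is imported verbatim from Alon, Lokshtanov and Saurabh~\cite{alon2009fast}, so there is no in-paper argument to compare against. Your reconstruction follows the strategy of that cited work, and the parts you actually carry out are correct: the degeneracy bound $\bigO(\sqrt{k})$ for graphs with at most $k$ edges, the conditional-exposure estimate giving success probability $2^{-\bigO(\sqrt{k})}$ for a single random colouring with $\bigO(\sqrt{k})$ colours, the union bound over at most $n^{2k}$ graphs yielding a non-explicit family of size $2^{\bigO(\sqrt{k})}\log n$, and the composition with an explicit splitter $[n]\to[ck^2]$ together with the size accounting $k^{\bigO(1)}\log n\cdot 2^{\bigO(\sqrt{k}\log k)}$. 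The one caveat is that the step you yourself identify as the heart of the matter --- the \emph{explicit} universal colouring family on the ground set of size $\bigO(k^2)$, which is where the extra $\log k$ in the exponent arises --- is left as a black box with only a gesture towards conditional expectations or further hash families, deferring to~\cite{alon2009fast}. As a standalone proof this is therefore an accurate outline rather than a complete argument, but given that the paper itself treats the statement as a cited external result, your sketch is a faithful account of how it is established.
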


Note that by \emph{explicit} we mean here that the family~$\funfamily$
not only exists, but can be constructed in $2^{
  {\bigO}(\sqrt{k}\log{k})} n^{\bigO(1)}$ time.

\subsection{Split, threshold and chain graphs.}
Here we give some known facts about split graphs, threshold graphs and
chain graphs which we will use to obtain the main result.

\begin{definition}
  \label{def:split-partition}
  Given a graph $G = (V, E)$, a partition of the vertex set into
  sets~$C$ and~$I$ is called a \emph{split partition} of~$G$ if~$C$ is
  a clique and~$I$ is an independent set.
\end{definition}
We denote by $(C,I)$ a split partition of a graph.
\begin{definition}[Split graph]
  \label{def:split-graph}
  A graph is a split graph if it admits a split partitioning.
\end{definition}

\begin{proposition}[Theorem 6.2, \cite{golumbic-book}]\label{prop:Ghosh} 
A split graph on~$n$ vertices
  has at most~$n + 1$ split partitions and these partitions can be
  enumerated in polynomial time.
\end{proposition}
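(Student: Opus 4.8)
The plan is to isolate a single structural fact --- that any two split partitions of $G$ differ only slightly --- and then use it both to bound the number of partitions and to enumerate them. First I would establish the following \emph{closeness} property: if $(C_1,I_1)$ and $(C_2,I_2)$ are split partitions of $G$, then $|C_1\setminus C_2|\le 1$ and $|C_2\setminus C_1|\le 1$. Indeed, $C_1\setminus C_2\subseteq I_2$, so $C_1\setminus C_2$ is simultaneously a clique (being contained in $C_1$) and an independent set (being contained in $I_2$), and hence has at most one vertex; the symmetric claim is identical. In particular $\bigl||C_1|-|C_2|\bigr|\le 1$, so all clique sides have one of two consecutive sizes.

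For the bound $n+1$ I would argue by induction on $n=|V(G)|$, the cases $n\le 1$ being immediate ($K_1$ admits exactly the two partitions $(\{v\},\emptyset)$ and $(\emptyset,\{v\})$). For $n\ge 2$, fix an arbitrary vertex $v$ and consider the map $\phi$ that sends a split partition $(C,I)$ of $G$ to $(C\setminus\{v\},I\setminus\{v\})$, which is again a split partition, now of the split graph $G-v$ (an induced subgraph of $G$). The only possible $\phi$-preimages of a split partition $(C',I')$ of $G-v$ are $(C'\cup\{v\},I')$ and $(C',I'\cup\{v\})$, so every fibre of $\phi$ has size at most two. Moreover, a fibre has size two only when \emph{both} of these are split partitions of $G$; this forces $v$ to be adjacent to all of $C'$ and non-adjacent to all of $I'$, and since $C'\cup I'=V(G)\setminus\{v\}$ this pins down $C'=N_G(v)$ and $I'=V(G)\setminus N_G[v]$ --- a single fixed partition. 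Hence at most one fibre has size two, so $G$ has at most one more split partition than $G-v$, which by the induction hypothesis has at most $(n-1)+1=n$ of them.

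To enumerate the partitions in polynomial time, I would first obtain one split partition $(C_0,I_0)$ by running the standard recognition algorithm for split graphs (ordering the degrees as $d_1\ge\cdots\ge d_n$, taking as clique side a set of the $m$ largest-degree vertices for $m$ the largest index with $d_m\ge m-1$, and verifying in $\bigO(n^2)$ time that this yields a split partition). By the closeness property, the clique side of every split partition is obtained from $C_0$ by deleting at most one vertex of $C_0$ and adding at most one vertex of $I_0$; there are only $\bigO(n^2)$ such candidate sets $C$, and for each we test in $\bigO(n^2)$ time whether $(C,V(G)\setminus C)$ is a split partition and output those that pass. This lists exactly the split partitions of $G$, and by the previous paragraph there are at most $n+1$ of them.

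The main obstacle is the size-two-fibre claim in the inductive step: one has to verify precisely that a vertex $v$ can be placed on either side of a split partition only when its neighbourhood \emph{equals} the clique side, so that this degenerate configuration occurs for a unique partition; without this observation the recursion would only yield a bound of $2n$. The closeness property and the enumeration are then routine.
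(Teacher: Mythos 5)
Your proof is correct. Note, however, that the paper itself gives no proof of this proposition: it is quoted verbatim from Golumbic's book (Theorem 6.2 there), so your argument is a self-contained replacement rather than a reproduction of anything in the text. Golumbic's route is a trichotomy statement: for any split partition either both sides are of maximum size, or the clique side is maximum and a unique vertex can be moved from it to the independent side, or symmetrically; from this the bound $n+1$ and the enumeration follow by fixing a partition with maximum clique side and observing that every other clique side arises by moving a single vertex across. Your argument reaches the same conclusion from the same germ --- your \emph{closeness} lemma (a set contained in both a clique side and an independent side has at most one vertex) is exactly the observation underlying Golumbic's trichotomy --- but you convert it into the bound $n+1$ differently, via the induction in which deleting a vertex $v$ gives a map on split partitions whose fibres have size at most two, with the size-two fibre pinned down uniquely by $C'=N_G(v)$; this is a clean elementary alternative, and it is needed, since the closeness lemma alone only gives an $\bigO(n^2)$ bound on the number of candidate clique sides (which is all your enumeration step uses, and is perfectly adequate for polynomial-time listing). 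The only step deserving a remark is the initialization of the enumeration by the Hammer--Simeone degree ordering: with ties at the threshold degree the choice of the $m$ largest-degree vertices is not unique, but any such choice does yield a split partition (a displaced clique vertex of degree $m-1$ has no neighbours outside the clique side, which forces the displacing independent vertex to be adjacent to all remaining clique vertices), and in any case your explicit verification, or any other polynomial-time method of producing one split partition, closes this point.
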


\begin{definition}
  \label{def:chain-graph}
  A \emph{chain graph} is a bipartite graph $G = (A,B,E)$ where the
  neighborhoods of the vertices are nested, i.e., there is an ordering
  of the vertices in~$A$, $a_1,a_2,\dots,a_{n_1}$, such that for
  each~$i < n_1$ we have that~$N(a_i) \subseteq N(a_{i+1})$,
  where~$n_1 = |A|$.
\end{definition}

We will use the following result, which is often used as an
alternative definition of threshold graphs.

\begin{proposition}[\cite{mahadev1995threshold}]
  \label{lem:threshold-nested-neighborhoods}
  A graph~$G$ is a threshold graph if and only if~$G$ has a split
  partition~$(C,I)$ and the neighborhoods of the vertices of~$I$ are
  nested.
\end{proposition}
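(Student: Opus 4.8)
The plan is to establish the two directions of the equivalence separately, working throughout with the characterization of threshold graphs as the $\{2K_2,P_4,C_4\}$-free graphs (Figure~\ref{fig:3graphs}) together with Definitions~\ref{def:split-partition} and~\ref{def:split-graph}. The single observation that drives both directions is that each of the three graphs $2K_2$, $C_4$, $P_4$ has exactly four vertices and has both clique number and independence number equal to $2$.

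For the $(\Leftarrow)$ direction, I would suppose $G$ has a split partition $(C,I)$ in which the family $\{N_G(v):v\in I\}$ is linearly ordered by inclusion, and suppose for contradiction that some four-vertex set $S$ induces one of $2K_2$, $C_4$, $P_4$ in $G$. Since $S\cap C$ is a clique and $S\cap I$ an independent set of $G[S]$, the clique/independence-number remark forces $|S\cap C|=|S\cap I|=2$; write $S\cap C=\{c_1,c_2\}$ and $S\cap I=\{i_1,i_2\}$, so that $c_1c_2\in E(G)$ and $i_1i_2\notin E(G)$. I would then split into cases by the shape of $G[S]$. If $G[S]\cong 2K_2$, its second edge would have to be the non-edge $i_1i_2$, which is impossible; if $G[S]\cong C_4$, its second diagonal non-edge would have to be the edge $c_1c_2$, again impossible; these two cases merely re-derive that split graphs contain no induced $2K_2$ or $C_4$ and do not use nestedness. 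The remaining case $G[S]\cong P_4$ is where nestedness enters: the only way to place an adjacent pair of $P_4$ on the clique side and the complementary (hence non-adjacent) pair on the independent side is to take $\{c_1,c_2\}$ to be the central edge of the $P_4$ and $\{i_1,i_2\}$ its two endpoints, in which case $N_G(i_1)$ and $N_G(i_2)$ each contain exactly one of $c_1,c_2$, and different ones, so they are incomparable --- contradicting the chain assumption. Hence $G$ is $\{2K_2,P_4,C_4\}$-free, i.e.\ threshold.

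For the $(\Rightarrow)$ direction, assume $G$ is threshold, hence $P_4$-free. Deleting any single vertex of $C_5$ leaves an induced $P_4$, so every $P_4$-free graph is automatically $C_5$-free; therefore $G$ is $\{2K_2,C_4,C_5\}$-free and thus a split graph, so it admits some split partition $(C,I)$. It then remains to show that in any split partition of a $P_4$-free graph the independent-side neighborhoods are nested. If not, there are $u,v\in I$ and vertices $x\in N_G(u)\setminus N_G(v)$ and $y\in N_G(v)\setminus N_G(u)$; since $x\not\sim v$ but $y\sim v$ we have $x\neq y$, and since $I$ is independent both $x$ and $y$ lie in $C$, so $x\sim y$. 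Together with $u\sim x$, $v\sim y$, $u\not\sim y$, $v\not\sim x$ and $u\not\sim v$, the set $\{u,x,y,v\}$ induces a $P_4$ in $G$, a contradiction. Hence $\{N_G(v):v\in I\}$ is a chain, which finishes the proof.

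I do not expect a genuinely hard step here; the only things requiring care are the small case analysis in the $(\Leftarrow)$ direction --- in particular, noticing that splitness alone already excludes $2K_2$ and $C_4$, so the nestedness hypothesis is really needed only to rule out an induced $P_4$ --- and the elementary but essential remark that $P_4$-freeness implies $C_5$-freeness, which is precisely what licenses the implication ``threshold $\Rightarrow$ split'' used in the $(\Rightarrow)$ direction.
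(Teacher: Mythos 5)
Your proof is correct. There is nothing in the paper to compare it against: the paper imports this proposition from the Mahadev--Peled monograph~\cite{mahadev1995threshold} without proof, so any sound argument is ``different'' from the paper only in the sense that the paper gives none. Both directions of your argument check out: the counting step ($|S\cap C|=|S\cap I|=2$ because $2K_2$, $C_4$, $P_4$ all have clique and independence number $2$), the observation that splitness alone kills $2K_2$ and $C_4$ while nestedness is needed exactly to exclude the $P_4$ with its central edge in $C$ and endpoints in $I$, and the converse construction of an induced $u$--$x$--$y$--$v$ path from two incomparable neighborhoods. The one point worth flagging is that your forward direction is not fully self-contained: to pass from ``threshold'' to ``admits a split partition'' you invoke the Foldes--Hammer characterization of split graphs as the $\{2K_2,C_4,C_5\}$-free graphs, whose nontrivial direction you do not prove. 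This is legitimate here (the paper itself states this characterization in the introduction, and it is standard), but it is the only genuinely non-elementary ingredient in your argument, and an alternative route---e.g.\ via the recursive construction of threshold graphs by adding isolated and universal vertices, in the spirit of Proposition~\ref{def:tp-recursive}---would avoid it at the cost of a different induction.
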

Thus, the class of threshold graphs is a subclass of split graphs and
by Proposition~\ref{prop:Ghosh}, threshold graphs on~$n$ vertices have
at most~$n + 1$ split partitions.

\subsection{The algorithm.}
We now proceed to the details of the algorithm which solves
\pname{Threshold Completion} in the time stated in the theorem.  Fomin
and Villanger~\cite{fomin2012subexponential} showed that the following
problem is solvable in subexponential time:

\defparproblem{\pname{Chain Completion}}{A bipartite graph $G=(A,B,E)$
  and integer~$k$.}{$k$}{Is there a set of edges~$S$ of size at
  most~$k$ such that~$(A,B, E \cup S)$ is a chain graph?}

Note that in the \pname{Chain Completion} problem, the resulting chain
graph must have the same bipartition as the input graph.  Thus,
despite the fact that chain graphs are exactly the $\{2K_2, C_3, C_4,
P_4\}$-free graphs, formally \pname{Chain Completion} is not an \fd{}
problem according to our definition.

\begin{proposition}[\cite{fomin2012subexponential}]\label{prop:FominVill}
  \textsc{Chain Completion} is solvable in $2^{\bigO(\sqrt{k}\log{k})}
  + \bigO(k^2 nm)$ time.
\end{proposition}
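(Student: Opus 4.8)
The plan is to recast \pname{Chain Completion} as a shortest‑path computation in a subexponentially large, explicitly enumerable digraph, combining a polynomial‑time kernelization (to bound the instance size) with the ``few expensive vertices'' counting trick already used in this paper. Throughout, the completion must keep the bipartition $(A,B)$, so the added edge set $S$ consists only of $A$--$B$ edges.

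\emph{Structural reformulation.} Call a set $W\subseteq B$ \emph{closed} if $W=\bigcup\{N_G(a):a\in A,\ N_G(a)\subseteq W\}$; closed sets are closed under union and always contain $\mathrm{cl}(B):=\bigcup_{a\in A}N_G(a)$. Recall (Definition~\ref{def:chain-graph}) that in a chain graph on parts $A,B$ the neighbourhoods of the $A$‑vertices are linearly ordered by inclusion. First I would prove: (i) given any strictly increasing chain $\emptyset\subseteq W_0\subsetneq\cdots\subsetneq W_r=\mathrm{cl}(B)$ of closed subsets of $B$, assigning every $a\in A$ to the smallest index $j$ with $N_G(a)\subseteq W_j$ and setting $N_H(a):=W_j$ produces a chain completion $H=G+S$ of size $\sum_j t_j|W_j|-|E|$, where $t_j$ is the number of vertices placed at level $j$; and (ii) conversely, if $S$ is a \emph{minimal} completion then every $a$ sits at the smallest level containing $N_G(a)$ — for if $a$ is placed at $W_\ell$ but $N_G(a)\subseteq W_m$ with $m<\ell$, then the edges of $S$ joining $a$ to $W_\ell\sm W_m$ can all be deleted, leaving a smaller chain completion — which forces every level $W_j=\bigcup_{\mathrm{lvl}(a)\le j}N_G(a)$ to be closed. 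Hence the minimum completion size equals $\bigl(\min\sum_j t_j|W_j|\bigr)-|E|$ taken over chains of closed sets, i.e. the cost of a minimum‑cost path from $\emptyset$ to $\mathrm{cl}(B)$ in the digraph $\mathcal D$ on closed subsets of $B$ with an arc $W\to W'$ of cost $|W'|\cdot|\{a:N_G(a)\subseteq W',\ N_G(a)\not\subseteq W\}|$ whenever $W\subsetneq W'$.

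\emph{Bounding the relevant closed sets.} Applying Guo's kernelization~\cite{guo2007problem} (cf.\ also~\cite{fomin2012subexponential}) in time $\bigO(k^2nm)$ we may assume $|A|+|B|=k^{\bigO(1)}$. Fix a minimal completion $H=G+S$ with $|S|\le k$ and its chain $W_0\subsetneq\cdots\subsetneq W_r$. Since $\sum_{a\in A}|N_H(a)\sm N_G(a)|=|S|\le k$, fewer than $\sqrt k$ vertices of $A$ are \emph{expensive} (fill number above $\sqrt k$). If level $j$ is assigned some cheap vertex $a$, then $W_j=N_G(a)\cup R$ with $R:=W_j\sm N_G(a)$ of size at most $\sqrt k$. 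If every vertex placed at level $j$ is expensive, let $j'<j$ be the largest index whose level is of the former kind (or $j'=-1$, $W_{j'}:=\emptyset$); then, using that each level equals the previous level together with the neighbourhoods of the vertices newly placed there, $W_j=W_{j'}\cup\bigcup_{a\in F}N_G(a)$, where $F$ is the set of the (all expensive, hence fewer than $\sqrt k$) vertices placed at levels $j'+1,\dots,j$. Either way $W_j$ lies in the family $\mathcal W$ of closed subsets of $B$ expressible as a union of at most $3\sqrt k$ neighbourhoods $N_G(a)$ together with a set of at most $\sqrt k$ extra vertices of $B$. As $|A|+|B|=k^{\bigO(1)}$, we get $|\mathcal W|\le(|A|+|B|)^{\bigO(\sqrt k)}=2^{\bigO(\sqrt k\log k)}$, and $\mathcal W$ can be enumerated within this time; in particular the chain of \emph{any} minimal completion of size $\le k$ lies inside $\mathcal W\cup\{\emptyset,\mathrm{cl}(B)\}$.

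\emph{Conclusion.} It remains to compute a minimum‑cost path from $\emptyset$ to $\mathrm{cl}(B)$ in the subgraph of $\mathcal D$ induced by $\mathcal W\cup\{\emptyset,\mathrm{cl}(B)\}$ — a standard DAG shortest‑path, running in time polynomial in $|\mathcal W|$ and $|A|+|B|$, i.e. $2^{\bigO(\sqrt k\log k)}$ — and then answer \yes{} iff the optimum does not exceed $k+|E|$. Soundness holds because every chain of closed sets yields a genuine completion of the stated size (part (i)); completeness because the chain of an optimal minimal completion lies inside the enumerated family (previous paragraph), so the search sees a path of cost at most $k+|E|$. Adding the $\bigO(k^2nm)$ spent on kernelization yields the running time claimed in Proposition~\ref{prop:FominVill}. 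I expect the main obstacle to be the structural bound of the third paragraph: the ``all‑expensive'' levels carry sets that need not be close to any single neighbourhood, and pinning them down relies precisely on the fact that such a level stacks fewer than $\sqrt k$ expensive vertices on top of a set already of the ``cheap'' type, together with the local identity $W_j=W_{j-1}\cup\bigcup_{\mathrm{lvl}(a)=j}N_G(a)$.
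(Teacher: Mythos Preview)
The paper does not prove Proposition~\ref{prop:FominVill}; it is quoted from~\cite{fomin2012subexponential} as a black box and only \emph{used} (in Step~D of the threshold algorithm). So there is no in-paper argument to compare your attempt against.

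That said, your direct argument is essentially correct and is worth keeping. The reduction to shortest paths over chains of ``closed'' subsets of $B$ is sound: part~(ii) uses minimality twice --- once to pin every $a$ to its lowest feasible level, and once (which you leave implicit) to show $W_j=\bigcup_{\mathrm{lvl}(a)\le j}N_G(a)$, since any $b\in W_j$ not covered by some $N_G(a)$ with $\mathrm{lvl}(a)\le j$ could be removed from all of $W_0,\ldots,W_j$ without breaking the chain or the completion property. The subexponential enumeration is exactly the cheap/expensive trick the present paper uses for trivially perfect completion, specialised to the much simpler nested structure of chain graphs; your case split (levels carrying a cheap vertex versus runs of all-expensive levels stacked on top of the last cheap one) is the right one and gives each $W_j$ as a union of $O(\sqrt{k})$ neighbourhoods plus $O(\sqrt{k})$ loose vertices. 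One slip of the pen: closed sets are \emph{contained in} $\mathrm{cl}(B)$, not the other way round.

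For comparison, the argument in~\cite{fomin2012subexponential} goes through \textsc{Minimum Fill-in}: one completes one side of the bipartition to a clique and observes that chain completions of $G$ correspond to chordal completions of the resulting graph adding only cross edges, after which the main fill-in machinery (vital potential maximal cliques plus dynamic programming) applies. Your route avoids the chordal detour entirely and is more elementary --- a single DAG shortest path on an explicitly enumerated family --- at the cost of reproving the enumeration bound from scratch; the original route gets the result essentially for free once the fill-in theorem is available.
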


We now have the results needed to give an algorithm for
\pname{Threshold Completion}, thus proving
Theorem~\ref{thm:threshold-completion-subept}.
\begin{proof}[of Theorem~\ref{thm:threshold-completion-subept}]
  We start by using Proposition~\ref{lem:Guo_tresh} to obtain a
  polynomial kernel with~$\bigO(k^3)$ vertices in time~$\bigO(kn^4)$.
  We will therefore from now on assume that the input graph~$G$
  has~$n=\bigO(k^3)$ vertices.
  
  Suppose that $(G,k)$ is a \yes{} instance of \pname{Threshold
    Completion}.  Then there is an edge set~$S$ of size at most~$k$
  such that~$G + S$ is a threshold graph.  Without loss of generality,
  we can assume that $n > 10k^2$.  By
  Proposition~\ref{prop:unsetfamily}, we can construct in
  $2^{{\bigO}(\sqrt{k}\log{k})} n^{\bigO(1)} =
  2^{{\bigO}(\sqrt{k}\log{k})}$ time an explicit universal
  $(n,k,\bigO(\sqrt{k}))$-coloring family~$\funfamily$ of size
  $|\funfamily| \leq 2^{{\bigO}(\sqrt{k}\log{k})} \log n =
  2^{{\bigO}(\sqrt{k}\log{k})}$.  Since~$|S| \leq k$, there is a
  vertex coloring~$f \in \funfamily$ such that~$S$ is colorful.
  
  We iterate through all the colorings~$f \in \funfamily$.  Let us
  examine one coloring~$f \in \funfamily$, and let $V_1, V_2, \dots ,
  V_t$ be the partitioning of~$V(G)$ according~$f$, where $t =
  \bigO(\sqrt{k})$.  Then, since threshold graphs are hereditary and
  we assume~$S$ to be colorful, each~$V_i$ must induce a threshold
  graph---we cannot add edges within a color class.

  By Proposition~\ref{prop:Ghosh},~$G+S$ has~$\bigO(k^3)$ split
  partitions.  Each such split partition of~$G+S$ induces a split
  partition of~$G[V_i]$, $i\in\{1,\dots, t\}$.  Again by
  Proposition~\ref{prop:Ghosh}, each~$G[V_i]$ also has~$\bigO(k^3)$
  split partitions.  We use brute-force to generate the set of
  $\bigO((k^3)^t) = 2^{{\bigO}(\sqrt{k}\log{k})}$ partitions of~$G$,
  and the set of generated partitions contains all split partitions
  of~$G+S$.  By Proposition~\ref{lem:threshold-nested-neighborhoods},
  if~$(G,k)$ is a \yes{} instance and~$f$ is colorful, then for at
  least one of the split partitions~$(C,I)$ of~$G+S$ the neighborhoods
  of~$I$ are nested.  To check if a split partition can be turned into
  a nested partition, we use Proposition~\ref{prop:FominVill}.

  To summarize, we perform the following steps:
  
  \paragraph{Step~A.\ Kernelization}
  Apply Proposition~\ref{lem:Guo_tresh} to obtain in
  time~$\bigO(kn^4)$ a kernel with~$\bigO(k^3)$ vertices.  From now on
  we assume that the number of vertices~$n$ in~$G$ is~$\bigO(k^3)$.
  
  \paragraph{Step~B.\ Generating universal families}
  If necessary, we add a set of isolated vertices to~$G$ to guarantee
  that~$n > 10k^2$.  We apply Proposition~\ref{prop:unsetfamily} to
  construct a universal $(n,k,\bigO(\sqrt{k}))$-coloring
  family~$\funfamily$ of size $2^{{\bigO}(\sqrt{k}\log{k})}$.  For
  each generated coloring~$f$ and the corresponding vertex partition
  $V_1, V_2, \dots, V_t$, $t = \bigO(\sqrt{k})$, we perform the steps
  that follow.
  
  \paragraph{Step~C.\ Generating split partitions}
  We generate a set of partitions~$\mathcal{C}$ of~$V(G)$ as follows.
  Each partition $(C,I)\in \mathcal{C}$ is of the following form.  For
  $i\in\{1, \dots, t\}$, let~$\mathcal{C}_i$,
  $|\mathcal{C}_i|=\bigO(k^3)$, be the set of split partitions
  of~$G[V_i]$.  Then for each $i\in\{1, \dots, t\}$, $(C \cap V_i, I
  \cap V_i) \in \mathcal{C}_i$.  In other words, every partition
  of~$\mathcal{C}$ induces a split partition of~$G[V_i]$.  The time
  required to generate all partitions from $\mathcal{C}$ is
  $\bigO((k^3)^t)=2^{ {\bigO}(\sqrt{k}\log{k})}.$ We also perform a
  sanity check by excluding from~$\mathcal{C}$ all pairs~$(C,I)$,
  where~$I$ is not an independent set.  We perform the next step with
  each pair $(C,I)\in \mathcal{C}$.
  
  \paragraph{Step~D.\ Computing nested split partitions}
  For a pair $(C,I)\in \mathcal{C}$, such that~$I$ is an independent set
  in~$G$, we first compute the number of edges~$c$ needed to turn~$C$
  into a clique, i.e., $c = {|C| \choose 2} - m_C$.  Finally, we use
  Proposition~\ref{prop:FominVill} to check if the neighborhood of~$I$
  in~$C$ can be made nested by adding at most~$k-c$ edges.
  
  \medskip From the discussions above, if~$(G,k)$ is a~\yes{} instance
  of the problem, the solution will be found after completing the
  algorithm.  Otherwise, we conclude that~$(G,k)$ is a~\no{} instance.
  The running time to perform Step~A is $\bigO(kn^4)$ and Step~B is
  done in $2^{{\bigO}(\sqrt{k}\log{k})}$.  For every $f \in
  \funfamily$, in Step~C we generate $2^{{\bigO}(\sqrt{k}\log{k})}$
  partitions.  The total number of times Step~C is called is
  $|\funfamily|$ and the total number of partitions generated is
  $|\funfamily| \cdot 2^{{\bigO}(\sqrt{k}\log{k})} =
  2^{{\bigO}(\sqrt{k}\log{k})}$.  In Step~D, we run the algorithm with
  running time $2^{{\bigO}(\sqrt{k}\log{k})}$ on each of the
  $2^{{\bigO}(\sqrt{k}\log{k})}$ partitions, resulting in a total
  running time of $2^{\bigO(\sqrt{k} \log k)} + \bigO(kn^{4})$.
\end{proof}

\section{Completion to pseudosplit graphs}
\label{sec:pseudosplit-completion}
In this section we show that \pname{Pseudosplit Completion}, or \fd{}
for $\mathcal{F} = \{2K_2, C_4\}$, can be solved by first applying a
polynomial-time and parameter-preserving preprocessing routine, and
then using the subexponential time algorithm of Ghosh et
al.~\cite{ghosh2012faster} for \pname{Split Completion}.

The crucial property of pseudosplit graphs that will be of use is the
following characterization:
\begin{proposition}[\cite{maffray1994linearpseudo}]
  \label{lem:pseudo-charact}
  A graph~$G = (V,E)$ is pseudosplit if and only if one of the
  following holds
  \begin{itemize}
  \item $G$ is a split graph, or
  \item $V$ can be partitioned into $C,I,X$ such that $G[C \cup I]$ is a
    split graph with~$C$ being a clique and~$I$ being an independent
    set, $G[X] \cong C_5$, and moreover, there is no edge between~$X$
    and~$I$ and every edge is present between~$X$ and~$C$.
  \end{itemize}
\end{proposition}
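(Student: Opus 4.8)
The plan is to prove the equivalence directly from the definition of pseudosplit graphs as the $\{2K_2,C_4\}$-free graphs, using as a black box the classical fact (recalled in the introduction) that split graphs are exactly the $\{2K_2,C_4,C_5\}$-free graphs, together with the symmetry of $C_5$ (it is vertex- and edge-transitive and self-complementary). For the ``if'' direction, assume $G$ satisfies one of the two conditions; if $G$ is split it is in particular $\{2K_2,C_4\}$-free. In the second case, suppose for contradiction that some four-element set $R\subseteq V$ induces $2K_2$ or $C_4$. Since both $G[C\cup I]$ (being split) and $G[X]\cong C_5$ are $\{2K_2,C_4\}$-free, $R$ is contained neither in $C\cup I$ nor in $X$, so $j:=|R\cap X|\in\{1,2,3\}$. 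I would then do a short case analysis on $j$, using that every vertex of $X$ is adjacent to all of $C$ and to none of $I$ and that $I$ is independent: any $w\in R\cap(C\cup I)$ with a neighbour in $R\cap X$ must lie in $C$ and is therefore adjacent to all of $R\cap X$, and running through the possible placements of the vertices of $R\cap(C\cup I)$ between $C$ and $I$ always forces an edge or a non-edge of $G[R]$ incompatible with $R$ inducing $2K_2$ or $C_4$. Hence $G$ is $\{2K_2,C_4\}$-free, i.e.\ pseudosplit.

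For the ``only if'' direction, let $G$ be pseudosplit. If $G$ is $C_5$-free then it is $\{2K_2,C_4,C_5\}$-free, hence split, which is the first alternative. Otherwise fix an induced copy of $C_5$ on $X=\{x_1,\dots,x_5\}$ with $x_ix_{i+1}\in E(G)$ (indices modulo $5$). The heart of the argument is the claim that every $v\in V\setminus X$ satisfies $N_G(v)\cap X\in\{\emptyset,X\}$. To prove it I would assume $1\le|N_G(v)\cap X|\le 4$ and, using the transitivity of $C_5$ to reduce to a few configurations, exhibit in each case a four-vertex subset of $\{v,x_1,\dots,x_5\}$ inducing $2K_2$ or $C_4$: e.g.\ $\{v,x_1,x_3,x_4\}$ is a $2K_2$ when $N_G(v)\cap X=\{x_1\}$, $\{v,x_1,x_2,x_3\}$ is a $C_4$ when $N_G(v)\cap X=\{x_1,x_3\}$, and symmetric witnesses handle the cases $|N_G(v)\cap X|\in\{3,4\}$; each witness contradicts pseudosplitness. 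Given the claim, set $C=\{v\in V\setminus X: X\subseteq N_G(v)\}$ and $I=\{v\in V\setminus X: X\cap N_G(v)=\emptyset\}$, so that $(C,I,X)$ is a partition of $V$ with all $X$--$C$ edges present, no $X$--$I$ edge, and $G[X]\cong C_5$.

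It remains to verify that $C$ is a clique and $I$ is an independent set, after which $(C,I)$ is by definition a split partition of $G[C\cup I]$, so $G[C\cup I]$ is split and the second alternative of the proposition holds. If $u,w\in C$ were non-adjacent, then for a non-adjacent pair $x_i,x_j$ of the fixed $C_5$ the set $\{u,w,x_i,x_j\}$ induces a $C_4$; if $u,w\in I$ were adjacent, then for an adjacent pair $x_i,x_j$ the set $\{u,w,x_i,x_j\}$ induces a $2K_2$; both contradict pseudosplitness. I expect the only mildly delicate point of the whole proof to be the case analysis establishing $N_G(v)\cap X\in\{\emptyset,X\}$---in particular one must not forget to rule out $|N_G(v)\cap X|=4$, which indeed yields an induced $C_4$ (on $v$ together with the unique non-neighbour of $v$ in $X$ and its two $C_5$-neighbours)---but once the symmetry of $C_5$ is exploited each individual case is a one-line check, so there is no genuine obstacle.
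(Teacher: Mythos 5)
The paper does not prove this statement at all: it is quoted as Proposition~\ref{lem:pseudo-charact} directly from Maffray and Preissmann, so there is no in-paper argument to compare against. Your blind proof is a correct, self-contained derivation from the forbidden-subgraph definitions, using only that pseudosplit means $\{2K_2,C_4\}$-free and the F\"oldes--Hammer fact that split means $\{2K_2,C_4,C_5\}$-free. The backward direction is fine: with $|R\cap X|\in\{1,2,3\}$, the rules ``$X$ complete to $C$, anticomplete to $I$, $C$ a clique, $I$ independent'' force a degree in $G[R]$ that is impossible in $2K_2$ or $C_4$ in every placement. In the forward direction your key claim $N_G(v)\cap X\in\{\emptyset,X\}$ is correct, and the witnesses you name check out, including the $|N_G(v)\cap X|=4$ case ($v$, the non-neighbour, and its two $C_5$-neighbours give an induced $C_4$). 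The only cases you leave implicit are $|N_G(v)\cap X|=2$ with the two neighbours adjacent on the $C_5$ (e.g.\ $N_G(v)\cap X=\{x_1,x_2\}$, where $\{v,x_1,x_3,x_4\}$ is a $2K_2$) and $|N_G(v)\cap X|=3$, which splits into the consecutive triple $\{x_1,x_2,x_3\}$ (then $\{v,x_2,x_4,x_5\}$ is a $2K_2$) and the pattern $\{x_1,x_2,x_4\}$ (then $\{v,x_2,x_3,x_4\}$ is a $C_4$); since the independence number of $C_5$ is two, no further pattern exists, so these one-line checks close the claim exactly as you predicted. The final verifications that $C$ is a clique and $I$ is independent (via a $C_4$ on $u,x_i,w,x_j$ for a non-adjacent pair, and a $2K_2$ on $uw$ and an edge $x_ix_j$) are correct, so the second alternative of the proposition follows. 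In short: correct proof, necessarily by a different route than the paper, which simply cites the result.
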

In other words, a pseudosplit graph is either a split graph, or a
split graph containing one induced~$C_5$ which is completely
non-adjacent to the independent set of the split graph, and completely
adjacent to the clique set of the split graph.  We call a graph which
falls into the latter category a \emph{proper pseudosplit graph}.

In order to ease the argumentation regarding minimal completions, we
call a split partition $(C,I)$ \emph{$I$-maximal} if there is no
vertex $v \in C$ such that $(C \setminus \{v\}, I \cup \{v\})$ is a
split partition.  Our algorithm uses the subexponential algorithm of
Ghosh et al.~\cite{ghosh2012faster} for \pname{Split Completion} as a
subroutine.  We therefore need the following result:
\begin{proposition}[\cite{ghosh2012faster}]\label{proposition:subexpGhosh}
  \pname{Split Completion} is solvable in time $2^{\bigO(\sqrt{k} \log
    k)} n^{\bigO(1)}$.
\end{proposition}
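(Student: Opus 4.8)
The plan is to reconstruct the algorithm of Ghosh et al., which follows the same template as the proof of Theorem~\ref{thm:threshold-completion-subept} above: kernelize, apply chromatic coding via a universal coloring family (Proposition~\ref{prop:unsetfamily}), and exploit the fact that a split graph has only linearly many split partitions (Proposition~\ref{prop:Ghosh}). The only structural difference is that, once a candidate split partition has been fixed, the remaining check is trivial rather than a call to \pname{Chain Completion}.

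Concretely, I would first apply Guo's polynomial kernel for \pname{Split Completion}~\cite{guo2007problem} to reduce, in polynomial time, to an equivalent instance $(G,k)$ with $|V(G)| = k^{\bigO(1)}$ and the parameter not increased; then pad $G$ with isolated vertices so that $n > 10k^2$, and apply Proposition~\ref{prop:unsetfamily} to obtain an explicit universal $(n,k,\bigO(\sqrt k))$-coloring family $\funfamily$ with $|\funfamily| = 2^{\bigO(\sqrt k\log k)}$. If $(G,k)$ is a \yes{} instance with completion $S$, $|S|\le k$, then $S$ is colorful for some $f\in\funfamily$, so we iterate over all $f\in\funfamily$. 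Fix such an $f$ with color classes $V_1,\dots,V_t$, $t=\bigO(\sqrt k)$. A colorful $S$ has no edge inside any $V_i$, hence $(G+S)[V_i]=G[V_i]$, and since the class of split graphs is hereditary, each $G[V_i]$ is a split graph whose split partitions --- at most $n+1 = k^{\bigO(1)}$ of them, enumerable in polynomial time by Proposition~\ref{prop:Ghosh} --- include the restriction $(C\cap V_i,\, I\cap V_i)$ of any split partition $(C,I)$ of $G+S$. We therefore form, for every $t$-tuple of per-class split partitions $(C_i,I_i)$, the candidate partition $(C',I') = \bigl(\bigcup_i C_i,\; \bigcup_i I_i\bigr)$; there are $(n+1)^t = 2^{\bigO(\sqrt k\log k)}$ of them, and we discard those in which $I'$ is not independent in $G$. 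The candidate obtained from a genuine split partition of $G+S$ survives this sanity check, since there $I' = I$, which is already independent in $G \subseteq G+S$.

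For each surviving candidate $(C',I')$ the minimum number of edges needed to complete $G$ into a split graph with clique side $C'$ and independent side $I'$ equals exactly $\binom{|C'|}{2} - m_{C'}$: all non-edges inside $C'$ must be added, and adding only these already yields a split graph (no edge incident to $I'$ is ever forced, so $I'$ stays independent). We accept $(G,k)$ iff some surviving candidate has $\binom{|C'|}{2} - m_{C'} \le k$. Correctness is immediate in both directions: a colorful $S$ must contain every non-edge inside its clique side, so the candidate derived from its partition has cost $\le |S| \le k$, while any accepted candidate explicitly exhibits a completion of size $\le k$. The running time is polynomial for kernelization plus $|\funfamily| \cdot 2^{\bigO(\sqrt k\log k)} \cdot n^{\bigO(1)} = 2^{\bigO(\sqrt k\log k)} n^{\bigO(1)}$ for the enumeration and the polynomial-time per-candidate check. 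I do not anticipate a real obstacle; the one point requiring care is the synergy behind using only $\bigO(\sqrt k)$ colors --- they simultaneously forbid intra-class fill edges (forcing each color class to induce a split graph in $G$) and keep the product $(k^{\bigO(1)})^{\bigO(\sqrt k)}$ of per-class split-partition counts subexponential in $k$ --- together with the short verification that, for a fixed valid partition, completing the clique side is an optimal partition-respecting completion, which is exactly what makes ``take the cheapest candidate'' correct.
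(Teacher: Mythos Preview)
The paper does not prove this proposition at all: it is stated as a cited result from~\cite{ghosh2012faster} and used as a black box in the \pname{Pseudosplit Completion} algorithm. Your reconstruction is correct and is exactly the expected argument; in fact it is precisely the template the paper itself spells out for the closely related Theorem~\ref{thm:threshold-completion-subept} (\pname{Threshold Completion}), with the final \pname{Chain Completion} call replaced by the trivial check $\binom{|C'|}{2}-m_{C'}\le k$. So there is nothing to compare against beyond noting that your proof and the paper's proof of Theorem~\ref{thm:threshold-completion-subept} share the same kernelize--color--enumerate-partitions skeleton, and your final step is the natural simplification when nestedness of neighborhoods is not required.
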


Formally, in this section we prove the following theorem:

\begin{theorem}
  \label{thm:pseudosplit-completion-subept}
  \pname{Pseudosplit Completion} is solvable in time
  $2^{\bigO(\sqrt{k} \log k)} n^{\bigO(1)}$.
\end{theorem}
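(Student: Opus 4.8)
The main idea is to reduce \pname{Pseudosplit Completion} to \pname{Split Completion} by a polynomial-time, parameter-preserving preprocessing step, after which we invoke Proposition~\ref{proposition:subexpGhosh}. The key observation is that by Proposition~\ref{lem:pseudo-charact}, a pseudosplit graph is either a split graph or a proper pseudosplit graph, and in the latter case it contains a very rigidly embedded $C_5$: there is a set $X$ with $G[X]\cong C_5$, and the rest of the graph splits into a clique $C$ completely adjacent to $X$ and an independent set $I$ completely non-adjacent to $X$. So I would guess the five vertices forming this $C_5$ in the target graph $H=G+S$ (there are only $\bigO(n^5)$ choices), and for each guess reduce to a \pname{Split Completion} instance on the remaining graph.

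\textbf{The reduction.} Fix a guess $X=\{x_1,\dots,x_5\}\subseteq V(G)$. First I would check that $G[X]$ has at most five non-edges, so completing $G[X]$ to a $C_5$ costs some $c_X\le k$ edges; record $c_X=\binom{5}{2}-m_X$ (there is essentially one way up to symmetry to realize a $C_5$ on five vertices, but we can afford to enumerate the constantly many cyclic orderings). Next, observe that in any completion $H$ witnessing that $G+S$ is a \emph{proper} pseudosplit graph with this $C_5$, every vertex of $V\setminus X$ lies either in $C$ (hence fully adjacent to $X$ in $H$) or in $I$ (hence fully non-adjacent to $X$ in $H$). Since $S$ adds at most $k$ edges, a vertex $v\notin X$ that is non-adjacent in $G$ to at least one vertex of $X$ \emph{and} adjacent in $G$ to at least one vertex of $X$ cannot be placed in $I$ for free — but crucially, it \emph{can} always be placed in $C$ by paying for its missing edges to $X$. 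The point is: for a vertex $v$ already adjacent to some $x_i$ in $G$, putting $v\in I$ is impossible without deleting edges, which is forbidden; so $v$ must go to $C$, forcing the edges from $v$ to $X\setminus N_G(v)$ into $S$. Let $C_0$ be the set of such forced-$C$ vertices; let the remaining vertices $R=V\setminus(X\cup C_0)$ be those non-adjacent in $G$ to all of $X$ (they may freely go to either side). Build $G'=G[C_0\cup R]$ together with the constraint that $C_0$ must ultimately become part of the clique side. Formally, add to $G'$ the forced clique edges inside $C_0$ — no wait, more cleanly: run \pname{Split Completion} on $G[C_0\cup R]$ but first turn $C_0$ into a clique by adding $c_{C_0}=\binom{|C_0|}{2}-m_{C_0}$ edges (these are forced since $C_0\subseteq C$ and $C$ is a clique), then ask whether $G[C_0\cup R]+(\text{clique edges on }C_0)$ can be completed to a split graph whose clique side contains $C_0$. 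The last constraint can be handled by the standard trick of checking, among the $\bigO(n)$ split partitions of the resulting split graph (Proposition~\ref{prop:Ghosh}), whether one has clique side $\supseteq C_0$; alternatively one attaches a clique gadget, but since \pname{Split Completion} is already known to be subexponential it is cleanest to enumerate split partitions at the end. Finally account for the edges between $X$ and $C_0\cup R$: every vertex of $C_0$ needs $5-|N_G(v)\cap X|$ new edges to $X$, and every $R$-vertex placed in $C$ needs all $5$ edges to $X$ while those placed in $I$ need none; the former costs are fixed once the split partition is chosen, the latter is part of the split-completion optimization if we instead fold $X$'s adjacency into $G'$ directly. The total budget check is: $c_X + (\text{edges to }X) + (\text{split-completion cost on the rest}) \le k$.

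\textbf{Putting it together and the main obstacle.} The algorithm: if $G$ is already being completed to a split graph, just call Proposition~\ref{proposition:subexpGhosh} directly; otherwise, for each of the $\bigO(n^5)$ choices of $X$ and each of the $\bigO(1)$ cyclic orderings, perform the reduction above and call \pname{Split Completion} on an instance with parameter at most $k$, taking the best answer over all guesses. Correctness follows from Proposition~\ref{lem:pseudo-charact}: if $G+S$ is pseudosplit and not split, it has exactly this structure for some $X$, and our enumeration finds it; conversely any solution produced clearly yields a pseudosplit completion of size $\le k$. The running time is $\bigO(n^5)\cdot 2^{\bigO(\sqrt{k}\log k)} n^{\bigO(1)} = 2^{\bigO(\sqrt{k}\log k)} n^{\bigO(1)}$, as required. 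I expect the main obstacle to be the \emph{interaction between the forced clique $C_0$ and the freedom of the $R$-vertices}: one must argue carefully that it is never advantageous to put an $R$-vertex into $C$ (paying $5$ edges to $X$) rather than into $I$, or rather that the \pname{Split Completion} subroutine, when properly set up on $G[C_0\cup R]$ with $C_0$'s clique edges and $X$-adjacencies hard-wired, already optimizes this trade-off correctly — so the reduction must be stated so that the \pname{Split Completion} instance faithfully encodes ``extend $C_0$ to a clique, keep a split partition, and minimize total edges including those to the fixed $C_5$.'' Making the ``clique side contains $C_0$'' constraint precise without blowing up the parameter, using Proposition~\ref{prop:Ghosh} to enumerate the at most $n+1$ split partitions of the completed graph, is the delicate bookkeeping step; once that is nailed down, everything else is routine.
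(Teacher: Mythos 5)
Your high-level strategy (guess the five vertices $X$ of the $C_5$, then reduce to \pname{Split Completion} via Proposition~\ref{proposition:subexpGhosh}) is the same as the paper's, but the step you yourself flag as ``the delicate bookkeeping step'' is exactly where the proof lives, and the mechanism you lean on does not work. Proposition~\ref{proposition:subexpGhosh} is a black box that decides whether \emph{some} split completion within the budget exists; it gives you no control over which completion it certifies. Enforcing ``the clique side contains $C_0$'' by afterwards enumerating the split partitions of the returned completed graph (via Proposition~\ref{prop:Ghosh}) fails for two reasons: a clique in a split graph need not lie inside the clique side of \emph{any} split partition (take $P_4=v\,c_1\,c_2\,u$ and the clique $\{v,c_1\}$), and the unconstrained optimum may be strictly cheaper than every constrained solution, so a \yes{} answer from the subroutine does not certify a pseudosplit completion and a \no{} answer does not refute one. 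Moreover, the cost of the five edges from each $R$-vertex that ends up in the clique side to $X$ depends on the partition the subroutine chooses, so it cannot be added as a separate term after an unconstrained call; it must be encoded \emph{inside} the instance handed to the subroutine. You gesture at both fixes (``attach a clique gadget'', ``fold $X$'s adjacency into $G'$'') but never carry them out, and without them the reduction is not correct as stated. There is also a small arithmetic slip: completing $G[X]$ to a $C_5$ costs $5-m_X$ edges, not $\binom{5}{2}-m_X$ (the latter is the clique cost), which matters for the budget.

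The paper resolves precisely these points in Algorithm~\ref{alg:pseudosplit-del}: it keeps $X$ in the instance but turns it into a clique, shifts the budget to $k'=k+|E(G[X])|-5$ to pre-pay the $C_5$ edges, and adds $k+2$ auxiliary vertices $A$ adjacent to all of $N_G[X]$, so that placing any vertex of $N_G[X]$ in the independent side would force $A$ into the clique at cost $\binom{k+2}{2}>k'$. Lemma~\ref{lem:pseudo-structural} then uses minimality of the completion and $I$-maximality of the partition to show that any solution of the auxiliary instance automatically satisfies $N_G[X]\subseteq C$, $A\subseteq I$, no solution edge touches $A$, and $I\setminus A$ is non-adjacent to $X$, which is what makes the back-translation to a pseudosplit completion of $G$ (Lemma~\ref{lem:pseudo-iff-alg-yes}) go through. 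Your write-up identifies the right obstacle but does not supply this forcing-and-accounting argument, so as it stands the proof has a genuine gap.
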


\begin{algorithm}[t]
  \begin{enumerate}
  \item Use the algorithm from
    Proposition~\ref{proposition:subexpGhosh} to check in time
    $2^{\bigO(\sqrt{k} \log k)} n^{\bigO(1)}$ if~$(G,k)$ is a \yes{}
    instance of \pname{Split Completion}.  If~$(G,k)$ is a \yes{}
    instance of \pname{Split Completion}, then return that~$(G,k)$ is
    a \yes{} instance of \pname{Pseudosplit Completion}.  Otherwise we
    complete to a proper pseudosplit graph.
    
  \item For each $X = \{x_1,x_2,\dots,x_5\} \subseteq V(G)$ such that there is a
    supergraph~$G_X \supseteq G[X]$ and~$G_X \cong C_5$, we construct
    an instance~$(G',k')$ to \pname{Split Completion} from~$(G,k)$ as
    follows:
    \begin{enumerate}
    \item Let $k'=k+|E(G[X])|-5$.
    \item Add all the possible edges between vertices of~$X$, so
      that~$X$ becomes a clique.
    \item Add a set~$A$ of~$k+2$ vertices to~$G$.
    \item Add every possible edge between~$A$ and~$N_G[X]$.
    \end{enumerate}
  \item Use Proposition~\ref{proposition:subexpGhosh} to check
    if~$(G',k')$ is a~\yes{} instance of \pname{Split Completion}.
    If~$(G',k')$ is a~\yes{} instance of \pname{Split Completion},
    then return that~$(G,k)$ is a~\yes{} instance of
    \pname{Pseudosplit Completion}.
  \item If for no set~$X$ the answer \yes{} was returned, then return
    \no.
  \end{enumerate}
  \caption{Algorithm solving \textsc{Pseudosplit Completion}.}
  \label{alg:pseudosplit-del}
\end{algorithm}

The algorithm whose existence is asserted in
Theorem~\ref{thm:pseudosplit-completion-subept} is given as
Algorithm~\ref{alg:pseudosplit-del}.  We now proceed to prove that
this algorithm is correct, and that its running time on input~$(G, k)$
is $2^{\bigO(\sqrt{k} \log k)} n^{\bigO(1)}$.  In the following we
adopt the notation from Algorithm~\ref{alg:pseudosplit-del}.

As in the algorithm, we denote by~$X$ the set of five vertices which
will be used as the set inducing a~$C_5$ (we try all possible subsets;
note that their number is bounded by~$\bigO(n^5)$).  Note here that
since~$G[X]$ admits a supergraph isomorphic to a~$C_5$, it follows
that $|E(G[X])|\leq 5$ and, consequently, $k'\leq k$.  Similarly,
by~$A$ we denote the set of~$k+2$ vertices we add that are adjacent
only to~$N_G[X]$.  Intuitively, this set will be used to force that in
any minimal split completion of size at most~$k$ it holds that $N_G[X]
\subseteq C$.  From now on~$G'$ is the graph as in the algorithm, that
is,~$G'$ is constructed from~$G$ by making~$X$ into a clique, adding
vertices~$A$ and all the possible edges between~$A$ and~$N_G[X]$.

The following lemma will be crucial in the proof of the correctness of
the algorithm.
\begin{lemma}\label{lem:pseudo-structural}
  Assume that~$S$ is a minimal split completion of~$G'$ of size at
  most~$k'$, and let $(C,I)$ be an $I$-maximal split partition of
  $G'+S$.  Then:
  \begin{enumerate}[(i)]
  \item $N_G[X] \subseteq C$,
  \item $A \subseteq I$,
  \item no edge of $S$ has an endpoint in $A$,
  \item $C \setminus X$ is fully adjacent to $X$ in $G'+S$, and
  \item $I \setminus A$ is fully non-adjacent to $X$ in $G'+S$.
  \end{enumerate}
\end{lemma}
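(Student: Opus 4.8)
The plan is to exploit the two design features of the auxiliary graph $G'$: the set $A$ consists of $k+2$ vertices whose only neighbours in $G'$ are exactly $N_G[X]$, and the target completion $S$ is \emph{minimal} and has at most $k'\le k$ edges. I would prove items (i)--(v) roughly in the order (i), (iii), (ii), (iv), (v). First, $A$ cannot be entirely contained in $C$: since $A$ is independent in $G'$ while $C$ is a clique in $G'+S$, turning $A$ into a clique would by itself require $\binom{k+2}{2}>k$ edges of $S$, a contradiction. Hence there is a vertex $a^\star\in A\cap I$, and since $I$ is independent, all neighbours of $a^\star$ in $G'+S$ lie in $C$; as $N_{G'}(a^\star)=N_G[X]$ by construction, this already gives (i), namely $N_G[X]\subseteq C$.

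The core of the argument is (iii), and the key idea is that minimality of $S$ forbids the completion from doing any work at a vertex of $A$. Fix $a\in A$, let $S_a\subseteq S$ be the edges of $S$ incident to $a$, and consider the partition $(C\setminus\{a\},\,I\cup\{a\})$ of $G'+(S\setminus S_a)$ (when $a\in I$ this is just $(C,I)$). I would check it is a split partition: every removed edge is incident to $a\notin C\setminus\{a\}$, so $C\setminus\{a\}$ is still a clique; $I$ stays independent; and after the removal $a$ has neighbourhood exactly $N_{G'}(a)=N_G[X]\subseteq C$ by (i), so $a$ has no neighbour in $I\cup\{a\}$. Thus $G'+(S\setminus S_a)$ is a split graph, so $S\setminus S_a$ is a split completion of $G'$, and minimality forces $S_a=\emptyset$. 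Since $a\in A$ was arbitrary, this proves (iii). Item (ii) is then a short consequence: by (iii) every $a\in A$ has $N_{G'+S}(a)=N_G[X]\subseteq C$, so if $a\in C$ then cliqueness of $C$ gives $C\subseteq N_G[X]\cup\{a\}$, whence $a$ has no neighbour in $I$ --- contradicting $I$-maximality of $(C,I)$. Therefore $A\subseteq I$.

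Item (iv) is immediate from (i): $X\subseteq N_G[X]\subseteq C$ and $C$ is a clique in $G'+S$, so every vertex of $C\setminus X$ is adjacent to all of $X$. For (v), let $w\in I\setminus A$. Since $A\subseteq I$ by (ii), $I$ is independent in $G'+S$, and $E(G')\subseteq E(G'+S)$, the vertex $w$ is non-adjacent in $G'$ to every vertex of $A$; but in $G'$ every vertex of $N_G[X]$ is adjacent to all of $A$ (step (d) of the construction) and $A\neq\emptyset$, so $w\notin N_G[X]$. Consequently $w$ has no $G'$-neighbour in $X$, because the $G'$-neighbours of $X$ outside $X$ are exactly $N_G(X)\cup A$ and $w$ lies in neither. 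Hence any edge between $w$ and $X$ in $G'+S$ would belong to $S$; deleting such an edge keeps $(C,I)$ a split partition (it is an edge between the clique side and the independent side), contradicting minimality of $S$. So $w$ is non-adjacent to all of $X$ in $G'+S$, which is (v).

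The main obstacle I expect is exactly steps (iii)/(ii): recognising that the correct manipulation is to strip \emph{all} $S$-edges incident to a vertex of $A$ while simultaneously moving that vertex to the independent side, and verifying that the resulting partition is genuinely a split partition of the smaller completion; once (i) and this "peeling" step are in place, the remaining items are routine consequences of (i) together with minimality of $S$ and $I$-maximality of $(C,I)$.
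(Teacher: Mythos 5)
Your proof is correct and uses essentially the same ingredients as the paper's: the counting bound $\binom{k+2}{2}>k'$ to keep $A$ out of the clique side, stripping the $S$-edges incident to $A$ and exchanging sides to invoke minimality of $S$ and $I$-maximality of $(C,I)$, and minimality again for (v). The only difference is cosmetic: you establish (iii) by per-vertex peeling and deduce (ii) from it, whereas the paper peels all of $A\cap C$ at once to get (ii) and then obtains (iii) as an immediate consequence.
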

\begin{proof}
  \textit{(i)} Aiming towards a contradiction, suppose that some $v
  \in N_G[X]$ is in~$I$.  Since $A \subseteq N(v)$, we must then have
  that $A \subseteq C$.  However, since~$A$ is stable in~$G$, this
  demands adding at least ${{k+2} \choose 2}>k'$ edges.
  
  \smallskip
  
  \noindent\textit{(ii)} Aiming towards a contradiction, assume that
  $A\cap C\neq \emptyset$.  Since $N_G(A) \subseteq C$ and~$A$ is stable in~$G$, it follows
  that $G'+S'$, where~$S'$ is~$S$ with all the edges incident to~$A$
  removed, is also a split graph with partition $(C',I')$, where
  $C'=C\setminus A$ and $I'=I\cup (A\cap C)$.  Since $S'\subseteq S$,
  we have that either $|S'|<|S|$ which is a contradiction with
  minimality of~$S$, or that $S'=S$ and we obtain a contradiction with
  the assumption that partition $(C,I)$ was $I$-maximal.
  
  \smallskip
  
  \noindent\textit{(iii)} Suppose that there is an edge $e \in S$
  incident to a vertex of~$A$.  Since $A\subseteq I$, we infer that
  $S\setminus \{e\}$ is still a split completion, which contradicts
  the minimality of~$S$.
  
  \smallskip
  
  \noindent\textit{(iv)}~$C$ is a clique in $G'+S$ and $X\subseteq C$, so this
  holds trivially.
  
  \smallskip
  
  \noindent\textit{(v)}
  Suppose for a contradiction that some $v^i \in I \setminus A$ is adjacent to
  some $v^x \in X$.  Since $N_G[X] \subseteq C$, we have that $v^iv^x
  \in S$.  But then $S \setminus \{ v^iv^x\}$ is also a split
  completion, and we have a contradiction with the minimality of~$S$.
\end{proof}

The correctness of the algorithm is implied by the following lemma:

\begin{lemma}
  \label{lem:pseudo-iff-alg-yes}
  The instance $(G,k)$ is a \yes{} instance of \pname{Pseudosplit
    Completion} if and only if Algorithm~\ref{alg:pseudosplit-del}
  returns \yes{} on input $(G,k)$.
\end{lemma}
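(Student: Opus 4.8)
The plan is to prove both implications of Lemma~\ref{lem:pseudo-iff-alg-yes} by relating minimal pseudosplit completions of $G$ to split completions of the auxiliary graphs $G'$ constructed in Algorithm~\ref{alg:pseudosplit-del}. First I would dispose of the easy case: if $(G,k)$ is a \yes{} instance of \pname{Split Completion}, then since every split graph is pseudosplit (Proposition~\ref{lem:pseudo-charact}), $(G,k)$ is a \yes{} instance of \pname{Pseudosplit Completion}, and the algorithm correctly returns \yes{} in Step~1. So for the remainder I would assume $(G,k)$ is a \no{} instance of \pname{Split Completion}; then any pseudosplit completion of size at most $k$ must yield a \emph{proper} pseudosplit graph, i.e., one containing an induced $C_5$ on some five-vertex set $X$ with the adjacency pattern of Proposition~\ref{lem:pseudo-charact}.

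For the forward direction, suppose $S$ is a minimal pseudosplit completion of $G$ with $|S|\le k$ and $G+S$ is proper pseudosplit with the $C_5$ sitting on a vertex set $X\subseteq V(G)$; in $G+S$ the set $X$ induces exactly a $C_5$, so $G[X]$ admits a supergraph isomorphic to $C_5$, hence this $X$ is one of the sets tried by the algorithm. I would then show that the corresponding instance $(G',k')$ is a \yes{} instance of \pname{Split Completion}. The idea is to take $S$, add all ten edges needed to turn $X$ into a clique — noting $|E(G[X])|$ of them may already be present, so the net cost of turning the $C_5$ into $K_5$ is $10-|E(G[X])|$, but since we also had to \emph{add} $5-|E(G[X])|$ edges to make the $C_5$ in the first place, the bookkeeping is $k' = k + |E(G[X])| - 5$ — and also add all edges needed so that the added clique $A$ (already adjacent to $N_G[X]$ by construction of $G'$) together with $C\cup X$ forms the clique side of a split partition. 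Concretely, if $(C,I,X)$ is the tripartition of $G+S$ from Proposition~\ref{lem:pseudo-charact}, then $(C\cup X, I)$ is a split partition of the graph obtained from $G+S$ by making $X$ a clique, and since in $G'$ the vertices of $A$ are adjacent precisely to $N_G[X]\subseteq C\cup X$, we can extend to a split partition $(C\cup X, I\cup A)$ of the resulting split supergraph of $G'$; one checks the number of added edges is exactly $k'$ (the $+|E(G[X])|-5$ shift accounts for the $C_5$-to-$K_5$ conversion, the $A$-edges are all already in $G'$). This gives a split completion of $G'$ of size at most $k'$.

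For the backward direction, suppose the algorithm returns \yes{} for some $X$, i.e., $(G',k')$ is a \yes{} instance of \pname{Split Completion}; take a \emph{minimal} split completion $S'$ of $G'$ with $|S'|\le k'$ and an $I$-maximal split partition $(C,I)$ of $G'+S'$. Here I would invoke Lemma~\ref{lem:pseudo-structural}, which tells us $N_G[X]\subseteq C$, $A\subseteq I$, no edge of $S'$ touches $A$, $C\setminus X$ is fully adjacent to $X$, and $I\setminus A$ is fully non-adjacent to $X$ in $G'+S'$. Restricting $S'$ to $V(G)$ and removing the ten clique-edges inside $X$, I would argue that $G+(S'\cap\binom{V(G)}{2})$ restricted appropriately, with $X$ reverted to whatever $C_5$-supergraph of $G[X]$ is forced, gives a proper pseudosplit graph: the tripartition is $(C\setminus X,\, I\setminus A,\, X)$, with $X$ inducing a $C_5$ (we only add inside $X$ enough edges to complete $G[X]$ to $C_5$), $X$ fully adjacent to $C\setminus X$ and fully non-adjacent to $I\setminus A$ by Lemma~\ref{lem:pseudo-structural}(iv)--(v), and $(C\setminus X)\cup(I\setminus A)$ forming a split graph since it is an induced subgraph of the split graph $G'+S'$ minus $X$ and $A$. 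Counting edges: the edges of $S'$ not incident to $A$ that lie inside $V(G)$, minus the $|E(G[X])|$-dependent shift, plus the $5-|E(G[X])|$ edges to build the $C_5$, total at most $k'-(|E(G[X])|-5)=k$. The main obstacle I anticipate is this edge-counting reconciliation — making sure the constant $k' = k+|E(G[X])|-5$ exactly balances the cost of turning a $C_5$ on $X$ (which $G+S$ has) into the $K_5$ on $X$ (which $G'+S'$ has), in both directions simultaneously, and verifying that minimality of $S'$ together with $I$-maximality of $(C,I)$ rules out any "wasted" edges that would otherwise break the bound — and also confirming the $k+2$ vertices in $A$ genuinely force $N_G[X]\subseteq C$ via Lemma~\ref{lem:pseudo-structural}(i), whose proof already appears above. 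Once these are nailed down, the equivalence follows and, combined with the $\bigO(n^5)$ choices of $X$ and Proposition~\ref{proposition:subexpGhosh}, yields the claimed running time for Theorem~\ref{thm:pseudosplit-completion-subept}.
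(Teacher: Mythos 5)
Your proposal is correct and takes essentially the same route as the paper's proof: dispose of the split-completion case first, transfer a proper pseudosplit completion of $G$ into a split completion of $G'$ (and back) with the $|E(G[X])|-5$ bookkeeping via the partitions $(C\cup X, I\cup A)$ and $(C\setminus X, I\setminus A, X)$, and invoke Lemma~\ref{lem:pseudo-structural} together with Proposition~\ref{lem:pseudo-charact} for the backward direction. The only differences are cosmetic, e.g.\ the brief detour about the cost of turning the $C_5$ on $X$ into a $K_5$, which is moot since $X$ is already a clique in $G'$ by construction.
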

\begin{proof}
  From left to right, let $(G,k)$ be a \yes{} instance for
  \pname{Pseudosplit Completion}.  We immediately observe that $(G,k)$
  is a~\yes{} instance for \pname{Split Completion} if and only if our
  algorithm returns~\yes{} in the first test.  We therefore assume
  that~$G$ has to be completed to a proper pseudosplit graph.
  
  Let $S_0$ be a completion set with $|S_0|\leq k$ such that $G_0=G+S_0$
  is a proper pseudosplit graph.  Let $(C,I,X)$ be the pseudosplit
  partition of $G+S_0$; hence $G_0[X]$ is isomorphic to a~$C_5$.  We
  claim that the algorithm will return~\yes{} when considering the
  set~$X$ in the second point; let then~$G'$ be the graph constructed
  in the algorithm for the set~$X$.  Let~$S$ be equal to~$S_0$ with
  all the edges of $G_0[X]$ that were not present in $G[X]$ removed;
  note that $|S| = |S_0| + |E(G[X])| - 5 \leq k'$.  We claim that $G'
  + S$ is a split graph with split partition $(C \cup X, I \cup A)$.
  Indeed, since $G'[X]$ is a clique,~$X$ is fully adjacent to~$C$ in
  $G_0 \subseteq G' + S$, and~$C$ is a clique in $G_0 \subseteq G' +
  S$, then $C \cup X$ is a clique in $G' + S$.  On the other hand, $I
  \cup A$ is independent in~$G'$ and all the edges of~$S$ have at
  least one endpoint belonging to $C \cup X$, so $I \cup A$ remains
  independent in $G' + S$.  As a result $G' + S$ is a split graph, and
  so the algorithm will return~\yes{} after the application of
  Proposition~\ref{proposition:subexpGhosh} in the third point.
  
  From right to left, assume that Algorithm~\ref{alg:pseudosplit-del}
  returns~\yes{} on input $(G,k)$.  If it returned~\yes{} already on
  the first test, then~$G$ may be completed into a split graph by
  adding at most~$k$ edges, so in particular $(G,k)$ is a~\yes{}
  instance of \pname{Pseudosplit Completion}.  From now on we assume
  that the algorithm returned~\yes{} in the third point.  More
  precisely, for some set~$X$ the application of
  Proposition~\ref{proposition:subexpGhosh} has found a minimal
  completion set~$S$ of size at most~$k'$ such that $G'+S$ is a split
  graph, with $I$-maximal split partition $(C,I)$.
  
  By Lemma~\ref{lem:pseudo-structural} we have that \textit{(i)}
  $N_G[X] \subseteq C$, \textit{(ii)} $A \subseteq I$, \textit{(iii)}
  no edge of~$S$ has an endpoint in~$A$, \textit{(iv)} $C \setminus X$
  is fully adjacent to~$X$ in $G' + S$, and \textit{(v)} $I \setminus
  A$ is fully non-adjacent to~$X$ in $G' + S$.  By the choice of~$X$,
  there exists a supergraph~$G_X$ of~$G[X]$ such that $G_X \cong C_5$.
  Let now~$S_0$ be equal to~$S$ with all the edges of~$G_X$ that were
  not present in~$G[X]$ included.  Observe that $|S_0| \leq k$ and
  that by \textit{(iii)}~$S_0$ contains only edges incident to
  vertices of~$G$.  Consider now the partition $(C \setminus X, I
  \setminus A, X)$ of $V(G + S_0)$.  Since $(C, I)$ was a split
  partition of $G' + S$, it follows that $C \setminus X$ is a clique
  in $G + S_0$ and $I \setminus A$ is an independent set in $G + S_0$.
  Moreover, from \textit{(iv)} and \textit{(v)} it follows that~$X$ is
  fully adjacent to $C \setminus X$ in $G + S_0$ and fully
  non-adjacent to $I \setminus A$ in $G + S_0$.  Finally, the graph
  induced by~$X$ in $G + S_0$ is $G_X \cong C_5$.  By
  Lemma~\ref{lem:pseudo-charact} we infer that $G + S_0$ is a
  pseudosplit graph, and so the instance $(G,k)$ is a \yes{} instance
  of \pname{Pseudosplit Completion}.
\end{proof}

As for the time complexity of the algorithm, we try sets of five
vertices for~$X$, which is $\bigO(n^5)$ tries.  For each such guess,
we construct the graph~$G'$, which has $n + k+2$ vertices.  Since
$k'\leq k$, by Proposition~\ref{proposition:subexpGhosh} solving
\pname{Split Completion} requires time $2^{\bigO(\sqrt{k} \log k)}
n^{\bigO(1)}$, both in the first and the third point of the algorithm.
Thus the total running time of Algorithm~\ref{alg:pseudosplit-del} is
$2^{\bigO(\sqrt{k} \log k)} n^{\bigO(1)}$.

\section{Lower bounds}
\label{sec:lower-bounds}

In this section we will give the promised lower bounds described in
Figure~\ref{fig:table-completion-complexity}, i.e., we will show that
\fd{} is not solvable in subexponential time for~$\mathcal{F}$ being
one of~$\{2K_2\}$,~$\{C_4\}$,~$\{P_4\}$, and~$\{2K_2,P_4\}$ under ETH.

Throughout this section we will reduce to the above problems from
\pname{3Sat}; We will assume that the input formula~$\phi$ is in
3-CNF, that is, it is a conjunction of a number of clauses, where each
clause is a disjunction of at most three literals.  By applying
standard regularizing preprocessing for~$\phi$ (see for
instance~\cite[Lemma 13]{fomin2011subexponential}) we may also assume
that each clause of~$\phi$ contains \emph{exactly} three literals, and
the variables appearing in these literals are pairwise different.

If~$\phi$ is a \pname{3Sat} instance, we denote by~$\mathcal{V}(\phi)$ the
variables in~$\phi$ and by~$\mathcal{C}(\phi)$ the clauses.  We assume
we have an ordering $c_1, c_2, \ldots, c_m$ for the clauses
in~$\mathcal{C}(\phi)$ and the same for the variables, $x_1, x_2,
\ldots, x_n$.  For simplicity, we also assume the literals in each
clause are ordered internally by the variable ordering.

We restate here the Exponential Time Hypothesis, as that will be the
crucial assumption for proving that the problems mentioned above do
not admit subexponential time algorithms.

\medskip
\textsc{Exponential Time Hypothesis \textit{(ETH)}}.
\textit{There exists a positive real number $s$ such that 3-CNF-SAT
  with $n$ variables cannot be solved in time $2^{sn}$.}
\medskip

\noindent
By the Sparsification Lemma of Impagliazzo, Paturi and
Zane~\cite{impagliazzo2001which}, unless ETH fails, \pname{3Sat}
cannot be solved in time~$2^{o(n+m)}(n + m)^{\bigO(1)}$.

For each considered problem we present a linear reduction from
\pname{3Sat}, that is, a reduction which constructs an instance whose
parameter is bounded linearly in the size of the input formula.
Pipelining such a reduction with the assumed subexponential
parameterized algorithm for the problem would give a subexponential
algorithm for \pname{3Sat}, contradicting ETH.

%
%
%
%
%
\subsection{$2K_2$-free completion is not solvable in subexponential
  time}

For $\F=\{2K_2\}$, we refer to \fd{} as to \pname{$2K_2$-Free
  Completion}.  We show the following theorem.
\begin{theorem}
  \label{thm:2k2-completion-exp}
  The problem \pname{$2K_2$-Free Completion} is not solvable in
  $2^{o(k)}n^{\bigO(1)}$ time unless the Exponential Time Hypothesis
  (ETH) fails.
\end{theorem}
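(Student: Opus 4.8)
The plan is to give a linear reduction from \pname{3Sat}: from a formula $\phi$ with $n$ variables and $m$ clauses we will build a graph $G$ and an integer $k=\bigO(n+m)$ such that $G$ can be completed to a $2K_2$-free graph by adding at most $k$ edges if and only if $\phi$ is satisfiable. Since a $2^{o(k)}n^{\bigO(1)}$ algorithm for \pname{$2K_2$-Free Completion} would, composed with this reduction, solve \pname{3Sat} in time $2^{o(n+m)}(n+m)^{\bigO(1)}$, the theorem follows from ETH and the Sparsification Lemma. Recall a graph is $2K_2$-free precisely when it has no induced pair of disjoint edges; equivalently, in every connected component the complement is… — but more usefully for us, $2K_2$-freeness is a \emph{sparse} obstruction (only one non-edge), so adding an edge is "cheap" and the budget $k$ must be spent very carefully. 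The first thing I would do is fix the right gadget viewpoint: a convenient $2K_2$-free target is a graph whose vertex set splits into a clique $C$ and an independent set $I$ with an arbitrary bipartite pattern between them that is itself $2K_2$-free on that bipartite part — i.e., the bipartite graph between $C$ and $I$ has no induced $2K_2$, which for bipartite graphs means the neighbourhoods in $I$ form a chain. So $2K_2$-free graphs are exactly split-like graphs with a nested bipartite part; this is essentially the structure behind threshold graphs minus the $P_4,C_4$ conditions.

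Concretely I would introduce, for each variable $x_i$, a pair of vertices (a "true" vertex $t_i$ and a "false" vertex $f_i$) together with a large clique of "enforcer" vertices that forces $t_i,f_i$ into the independent side of any cheap completion and forces the variable gadget to "commit" to exactly one of $t_i,f_i$ being pushed up into the clique; the cost of pushing a vertex up is $1$ per enforcer it must connect to, so by making the enforcer clique of size $\Theta(n+m)$ and budgeting tightly, exactly one side can be chosen per variable. For each clause $c_j=(\ell_{j,1}\vee \ell_{j,2}\vee \ell_{j,3})$ I would add a clause vertex $y_j$ adjacent to the three literal vertices whose \emph{negations} appear (so that $y_j$ together with any still-low literal vertex of that clause forms a potential $2K_2$ with one of the enforcer edges); satisfiability of $c_j$ then corresponds to at least one of its literal vertices having been promoted, which is exactly what kills the forbidden $2K_2$ cheaply. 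The budget $k$ is set to (cost of committing every variable) $+$ (a small slack for clause repair), all linear in $n+m$.

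The two directions are then routine bookkeeping: from a satisfying assignment, promote the chosen literal vertex of each variable, pay the committed cost, and verify no induced $2K_2$ survives (every candidate $2K_2$ either involves an enforcer, which is handled by the promotion, or involves a clause vertex, handled by the satisfied literal); conversely, from a completion of size $\le k$ one argues, component by component, that the enforcer cliques force the split-type structure, that the budget is too small to do anything other than commit each variable to one side, and that any remaining $2K_2$ around a clause vertex $y_j$ can only have been destroyed by having promoted a literal of $c_j$, yielding a satisfying assignment. The step I expect to be the main obstacle is the \emph{tightness analysis of the budget} in the backward direction: I must rule out "globally clever" completions that, instead of committing variables cleanly, merge several enforcer cliques or partially promote many vertices to save edges elsewhere; the fix is the standard one of making the enforcer gadgets pairwise "incompatible" (each attached to its own private large clique so that any cross-merging already blows the budget) and then doing a careful exchange argument showing that an optimal completion can be assumed to be in the canonical committed form. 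Once that normalization lemma is in place, both directions close and the parameter is manifestly $\bigO(n+m)$, which is what ETH needs.
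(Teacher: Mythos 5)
There is a genuine gap, and it starts with the choice to attack \pname{$2K_2$-Free Completion} directly with disjoint, locally-built gadgets. In a $2K_2$-free graph \emph{any} two independent edges anywhere in the graph are forbidden, so obstructions are global: if each variable has its own ``private large clique'' and the clause gadgets are separate vertices, then already any two edges taken from two different private cliques form an induced $2K_2$ in the input graph, and destroying all of these cross-gadget $2K_2$s by edge additions costs far more than $\bigO(n+m)$ edges (essentially you must make the union of the gadgets ``almost dominated by one clique''). Your proposal never accounts for these cross-gadget obstructions, and in fact the suggested fix for the tightness analysis (making enforcer gadgets pairwise incompatible via private cliques) makes the problem worse. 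This is exactly why the paper does not reduce to the completion problem in its natural form at all: it passes to the polynomially equivalent complement problem, \pname{$C_4$-Free Edge Deletion} (since $\overline{C_4}=2K_2$), where forbidden structures are local $C_4$s, gadgets can be kept vertex-disjoint, and a tight linear budget ($|\mathcal{V}(\phi)|+2|\mathcal{C}(\phi)|$) with exact per-gadget lower bounds goes through by routine local case analysis.

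Two further points would also need repair even setting the global-obstruction issue aside. First, your budget accounting is internally inconsistent: you say promoting a literal vertex costs one edge per enforcer, with enforcer cliques of size $\Theta(n+m)$, so committing all $n$ variables costs $\Theta(n(n+m))$, which is not linear in $n+m$; with a superlinear parameter a $2^{o(k)}$ algorithm no longer contradicts ETH, so the whole point of the reduction is lost. (If instead promotion is cheap, the enforcers do not enforce.) Second, the structural statement you lean on in the backward direction --- that $2K_2$-free graphs split into a clique and an independent set with nested/chain bipartite adjacency --- is not a characterization of $2K_2$-free graphs ($C_4$ and $C_5$ are $2K_2$-free but not split), so arguing ``component by component that the enforcer cliques force the split-type structure'' is unsupported: a solution of size at most $k$ may complete into a $2K_2$-free graph that is not of your canonical form, and the normalization/exchange lemma you defer is precisely where this would break.
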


For the proof, however, instead of working directly on this problem,
we find it more convenient to show the hardness of the (polynomially)
equivalent problem \pname{$C_4$-Free Edge Deletion}.  We will
throughout this section write $G-S$ when $S \subseteq E(G)$ for the
graph~$(V(G), E \setminus S)$.

\paragraph{Construction}
We reduce from \pname{3Sat} and the gadgets can be seen in
Figure~\ref{fig:c4-del-variable-and-clause-gadget-enlarged}.
Let~$\phi$ be an instance of \pname{3Sat}.  We construct the
instance~$(G_\phi,k_\phi)$ for \pname{$C_4$-Free Edge Deletion} and we
begin by defining the graph~$G_\phi$.  For every variable~$x \in
\mathcal{V}(\phi)$, we construct a variable gadget graph~$G^x$.  The
graph~$G^x$ consists of six vertices~$w^x_{0}$, $w^x_{1}$, $w^x_{2}$,
$n^x$ (for {\emph{negative}}),~$p^x$ (for {\emph{positive}}),
and~$t^x$.  The three vertices~$w^x_{0}$,~$w^x_{1}$ and~$w^x_{2}$ will
induce a triangle whereas~$n^x$ and~$p^x$ are adjacent to the vertices
in the triangle and to~$t^x$.  We can observe that the four
vertices~$n^x$, $t^x$, $p^x$, $w^x_{i}$ induce a~$C_4$ for $i =
0,1,2$, and that no other induced~$C_4$ occurs in the gadget (see
Figure~\ref{fig:c4-del-variable-gadget-enlarged}).  It can also be
observed that by removing either one of the edges~$n^xt^x$ and
$p^xt^x$, the gadget becomes~$C_4$-free.  We will refer to the edge
$t^xp^x$ as the \emph{true edge} and to~$t^x n^x$ as the \emph{false
  edge}.  These edges are the thick edges in
Figure~\ref{fig:c4-del-variable-gadget-enlarged}.  This concludes the
variable gadget construction.


\begin{figure}[t]
  \centering \subfloat[The variable gadget~$G^x$ for a variable~$x$
  with three occurrences of~$C_4$.  The edge $t^xp^x$ is the {true
    edge} and the edge $t^xn^x$ is the {false edge}.  All~$C_4$s
  of~$G^x$ can be eliminated by removing the true or the false edge.]
  {
    \label{fig:c4-del-variable-gadget-enlarged}
    \begin{tikzpicture}[every node/.style={circle, draw, scale=.8},
      scale=1, inner sep=0pt, minimum size=1.5em]
      
      \node (t) at  (3,5) {$t^x$};
      \node (a) at  (1,4) {$n^x$};
      \node (b) at  (5,4) {$p^x$};
      \node (c1) at (2,1) {$w^x_0$};
      \node (c2) at (3,1.5) {$w^x_1$};
      \node (c3) at (4,1) {$w^x_2$};
      \draw[line width=1mm] (t) -- (a);
      \draw[line width=1mm] (t) -- (b);
      \draw (a) -- (c1);
      \draw (a) -- (c2);
      \draw (a) to[bend left] (c3);
      \draw (b) to[bend right] (c1);
      \draw (b) -- (c2);
      \draw (b) -- (c3);
      \draw (c1) -- (c2);
      \draw (c2) -- (c3);
      \draw (c1) -- (c3);
    \end{tikzpicture}
  }\hspace{2em} \subfloat[The clause gadget~$G^c$ for a clause~$c$ has
  three occurrences of~$C_4$, which can be eliminated by removing two
  of the variable-edges, the thick edges in the figure.]{
    \begin{tikzpicture}[every node/.style={circle, draw, scale=.9},
      scale=.3, xscale=-1, minimum size=1.5em, inner sep=0pt]
      
      \node (a1) at (1,7.5) {$a^c_1$};
      \node (b1) at (4,2.3) {$b^c_1$};
      
      \node (a2) at (10,13.2) {$a^c_2$};
      \node (b2) at (4,13.2) {$b^c_2$};

      \node (a0) at (10.3,2.6) {$a^c_0$};
      \node (b0) at (13,7.5) {$b^c_0$};
      
      \draw[line width=1mm] (a1) -- (b1);
      \draw[line width=1mm] (a2) -- (b2);
      \draw[line width=1mm] (a0) -- (b0);
      
      \draw (a1) -- (a2);
      \draw (b1) -- (b2);
      \draw (a1) -- (a0);
      \draw (b1) -- (b0);
      \draw (a2) -- (a0);
      \draw (b2) -- (b0);
    \end{tikzpicture}
    \label{fig:c4-del-clause-gadget-enlarged}
  }
  \caption{Variable (left) and clause (right) gadgets used in the
    reduction.}
  \label{fig:c4-del-variable-and-clause-gadget-enlarged}
\end{figure}
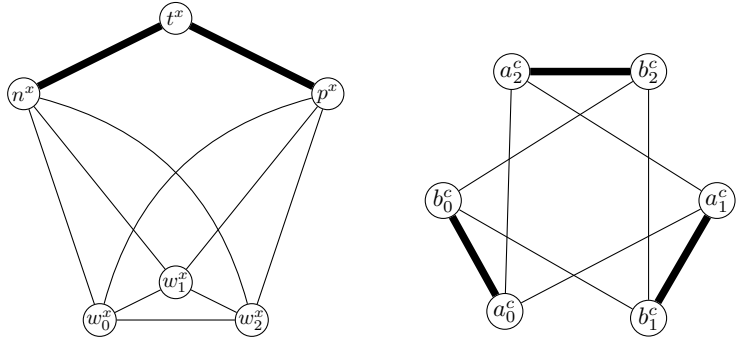



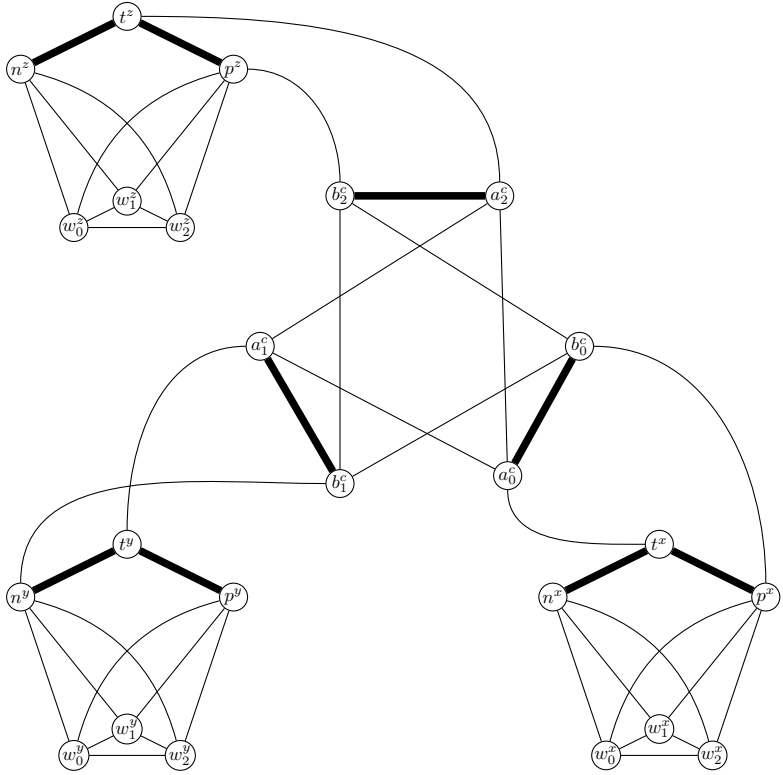
\begin{figure}[t]
  \centering
  \begin{tikzpicture}[every
    node/.style={circle,draw,scale=.7},scale=.7,minimum
    size=1.5em,inner sep=0pt]
    
    \node (tx) at  (13,5)   {$t^x$};
    \node (ax) at  (11,4)   {$n^x$};
    \node (bx) at  (15,4)   {$p^x$};
    \node (cx1) at (12,1)   {$w^x_0$};
    \node (cx2) at (13,1.5) {$w^x_1$};
    \node (cx3) at (14,1)   {$w^x_2$};
    
    \node (ty) at  (3,5)   {$t^y$};
    \node (ay) at  (1,4)   {$n^y$};
    \node (by) at  (5,4)   {$p^y$};
    \node (cy1) at (2,1)   {$w^y_0$};
    \node (cy2) at (3,1.5) {$w^y_1$};
    \node (cy3) at (4,1)   {$w^y_2$};
        
    \node (tz) at  (3,15) {$t^z$};
    \node (az) at  (1,14) {$n^z$};
    \node (bz) at  (5,14) {$p^z$};
    \node (cz1) at (2,11) {$w^z_0$};
    \node (cz2) at (3,11.5) {$w^z_1$};
    \node (cz3) at (4,11) {$w^z_2$};
    
    \node (a1) at ( 5.50, 8.75) {$a^c_1$};
    \node (b1) at ( 7.00, 6.15) {$b^c_1$};
    \node (a2) at (10.00, 11.6) {$a^c_2$};
    \node (b2) at ( 7.00, 11.6) {$b^c_2$};
    \node (a0) at (10.15, 6.3) {$a^c_0$};
    \node (b0) at (11.50, 8.75) {$b^c_0$};
    
    \draw[line width=1mm] (a1) -- (b1);
    \draw[line width=1mm] (a2) -- (b2);
    \draw[line width=1mm] (a0) -- (b0);
    \draw (a1) -- (a2);
    \draw (b1) -- (b2);
    \draw (a1) -- (a0);
    \draw (b1) -- (b0);
    \draw (a2) -- (a0);
    \draw (b2) -- (b0);

    \draw[line width=1mm] (tx) -- (ax);
    \draw[line width=1mm] (tx) -- (bx);
    \draw (ax) -- (cx1);
    \draw (ax) -- (cx2);
    \draw (ax) to[bend left] (cx3);
    \draw (bx) to[bend right] (cx1);
    \draw (bx) -- (cx2);
    \draw (bx) -- (cx3);
    \draw (cx1) -- (cx2);
    \draw (cx2) -- (cx3);
    \draw (cx1) -- (cx3);
    
    \draw[line width=1mm] (ty) -- (ay);
    \draw[line width=1mm] (ty) -- (by);
    \draw (ay) -- (cy1);
    \draw (ay) -- (cy2);
    \draw (ay) to[bend left] (cy3);
    \draw (by) to[bend right] (cy1);
    \draw (by) -- (cy2);
    \draw (by) -- (cy3);
    \draw (cy1) -- (cy2);
    \draw (cy2) -- (cy3);
    \draw (cy1) -- (cy3);

    \draw[line width=1mm] (tz) -- (az);
    \draw[line width=1mm] (tz) -- (bz);
    \draw (az) -- (cz1);
    \draw (az) -- (cz2);
    \draw (az) to[bend left] (cz3);
    \draw (bz) to[bend right] (cz1);
    \draw (bz) -- (cz2);
    \draw (bz) -- (cz3);
    \draw (cz1) -- (cz2);
    \draw (cz2) -- (cz3);
    \draw (cz1) -- (cz3);
    
    \draw (tx) to[out=180, in=-90] (a0);
    \draw (bx) to[out=90, in=0] (b0);
    
    \draw (ty) to[out=90, in=180] (a1);
    \draw (ay) to[out=90, in=-180] (b1);
    
    \draw (tz) to[out=0, in=90] (a2);
    \draw (bz) to[out=0, in=90] (b2);
  \end{tikzpicture}
  \caption{The connections for a clause $c = x \lor \neg y \lor z$.
    For the negated variable,~$\neg y$, we connect the clause gadget
    to~$n^y$ and~$t^y$, whereas for the variables in the non-negated
    form we have the clause connected to the~$t$ and~$p$ vertices.
    Observe that a budget of five is sufficient and necessary for
    eliminating all occurrences of~$C_4$ in the depicted subgraph.}
  \label{fig:2k2-free-compl-connections-enlarged}
\end{figure}


For every clause $c \in \mathcal{C}(\phi)$, we construct a clause gadget
graph~$G^c$ as follows.  The graph~$G^c$ consists of two triangles,
$a^c_0,a^c_1,a^c_2$ and $b^c_0,b^c_1,b^c_2$.  We also add the edges
$a^c_0 b^c_0$, $a^c_1 b^c_1$, and $a^c_2 b^c_2$.  These three latter
edges will correspond to the variables contained in~$c$ and we refer
to them as \emph{variable-edges} (the thick edges in
Figure~\ref{fig:c4-del-clause-gadget-enlarged}).  No more edges are
added.  The clause gadget can be seen in
Figure~\ref{fig:c4-del-clause-gadget-enlarged}.  Observe that there
are exactly three induced $C_4$s in $G^c$, all of the form
$a^c_i,a^c_{i+1},b^c_{i+1},b^c_i$ for $i=0,1,2$, where the indices
behave cyclically modulo $3$.  Moreover, removing any two edges of the
form $a^c_i b^c_i$ for $i=0,1,2$ eliminates all the induced $C_4$s
contained in $G^c$.

To conclude the construction, we give the connections between variable
gadgets and clause gadgets that encode literals in the clauses (see
Figure~\ref{fig:2k2-free-compl-connections-enlarged}).  If a variable
$x$ appears in a non-negated form as the $i$th (for $i = 0,1,2$)
variable in a clause $c$, we add the edges $t^xa^c_i$ and $p^xb^c_i$.
If it appears in a negated form, we add the edges $t^xa^c_i$ and
$n^xb^c_i$.  The connections can be seen in
Figure~\ref{fig:2k2-free-compl-connections-enlarged}.  Observe that we
get exactly one extra induced $C_4$ in the connection, and that this
can be eliminated by removing either one of the thick edges.

This concludes the construction.  We have now obtained a graph $G_\phi$
constructed from an instance $\phi$ of \pname{3Sat}.  We let $k_\phi =
|\mathcal{V}(\phi)| + 2|\mathcal{C}(\phi)|$ be the allowed (and
necessary) budget, and the instance of \pname{$C_4$-Free Edge
  Deletion} is then $(G_\phi, k_\phi)$.

We now proceed to prove the following lemma, which will give the
result.
\begin{lemma}\label{lem:phi-sat-iff-c4-del}
  A given \pname{3Sat} instance~$\phi$ has a satisfying assignment if and
  only if~$(G_\phi,k_\phi)$ is a~\yes{} instance of \pname{$C_4$-Free
    Edge Deletion}.
\end{lemma}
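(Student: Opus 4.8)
I would prove the equivalence by establishing the two implications separately: a satisfying assignment will be compiled into an explicit deletion set of size exactly $k_\phi$, and conversely any deletion set of size at most $k_\phi$ will be shown, via a tight counting argument, to be so rigidly constrained that it must read off a satisfying assignment. Throughout I use that, by the construction, every induced $C_4$ of $G_\phi$ lies inside some variable gadget $G^x$, inside some clause gadget $G^c$, or is one of the \emph{connection} $C_4$'s of the form $t^x,a^c_i,b^c_i,z$ with $z\in\{p^x,n^x\}$ (one per clause and position), and that the edge sets $E(G^x)$ (over variables), $E(G^c)$ (over clauses) and the connection edges are pairwise disjoint and exhaust $E(G_\phi)$.

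For the ($\Rightarrow$) direction, given a satisfying assignment $\psi$ I would put into $S$, for each variable $x$, the true edge $t^xp^x$ if $\psi(x)$ is true and the false edge $t^xn^x$ otherwise; this is $|\mathcal V(\phi)|$ edges and kills every induced $C_4$ inside each $G^x$, since all three of those cycles contain both $t^xn^x$ and $t^xp^x$. For each clause $c$, pick a position $i$ whose literal $\psi$ satisfies (one exists) and add to $S$ the two variable-edges $a^c_jb^c_j$ for $j\neq i$; this is $2|\mathcal C(\phi)|$ more edges, pairwise distinct and disjoint from the first batch, so $|S|=k_\phi$. Then $G_\phi-S$ is $C_4$-free: the $G^x$-cycles are killed; the three squares of the prism $G^c$ are killed because each uses two of the three variable-edges and we removed two of them; and for a connection $C_4$ at position $i$ of $c$, either $i$ is one of the two deleted positions (its variable-edge $a^c_ib^c_i$ is gone) or $i$ is the chosen satisfied position, in which case the true/false edge of the variable at $i$ lying on that cycle is exactly the one we deleted (a satisfied positive occurrence forces $t^xp^x\in S$, a satisfied negated occurrence forces $t^xn^x\in S$).

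For the ($\Leftarrow$) direction, let $|S|\le k_\phi$ with $G_\phi-S$ being $C_4$-free. The three $C_4$'s of each $G^x$ can only be hit within $E(G^x)$ and need at least one edge there; the three squares of each $G^c$ can only be hit within $E(G^c)$ and need at least two edges there, since no single edge lies on all three squares of a prism. Hence $|S|\ge|\mathcal V(\phi)|+2|\mathcal C(\phi)|=k_\phi$, so equality holds throughout: $|S\cap E(G^x)|=1$ for every $x$, $|S\cap E(G^c)|=2$ for every $c$, and $S$ contains no connection edge. As the only one-edge transversals of the three $C_4$'s of $G^x$ are $\{t^xp^x\}$ and $\{t^xn^x\}$, I would set $\psi(x)$ true iff $t^xp^x\in S$. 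Suppose a clause $c$ is violated, i.e. all three of its literals are false. For each position $i$, the connection $C_4$ there has exactly two edges among the gadget edges, namely $a^c_ib^c_i\in E(G^c)$ and the true/false edge $z t^x\in E(G^x)$ carrying the literal, and the latter is not in $S$ because the literal is false (a false positive occurrence means $\psi(x)$ false, so $S\cap E(G^x)=\{t^xn^x\}$; a false negated occurrence means $\psi(x)$ true, so $S\cap E(G^x)=\{t^xp^x\}$). Since $S$ has no connection edge, $S$ must meet this $C_4$ through $a^c_ib^c_i$; as this holds at all three positions we get $|S\cap E(G^c)|\ge 3$, contradicting $|S\cap E(G^c)|=2$. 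Hence $\psi$ satisfies $\phi$.

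The part I expect to require the most care is not the counting but the bookkeeping underpinning it: first, confirming that gluing the gadgets via the edges $t^xa^c_i$ and $p^xb^c_i$ (resp. $n^xb^c_i$) creates \emph{no} induced $4$-cycle other than the advertised connection $C_4$'s — a routine case check on the neighbourhoods of $t^x,p^x,n^x,a^c_i,b^c_i$ showing two of these vertices never get a common neighbour outside the intended cycle — and second, fixing once and for all the correspondence ``satisfied literal $\leftrightarrow$ the true/false edge that the connection cycle passes through'', since both directions of the proof hinge on exactly this alignment.
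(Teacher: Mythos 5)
Your overall route is the same as the paper's: compile a satisfying assignment into the deletion set consisting of the chosen true/false edges plus two variable-edges per clause, and conversely use the tight budget (one edge per variable gadget, two per clause gadget, nothing else) to read off an assignment and derive a contradiction from an unsatisfied clause via the connection cycles. Your backward direction is fine and even slightly cleaner than the paper's: you observe that the unique one-edge transversals of the three gadget cycles are $\{t^xp^x\}$ and $\{t^xn^x\}$ (the paper leaves an ``arbitrary otherwise'' case), and you reach the contradiction by forcing all three variable-edges of the clause gadget into $S$ rather than exhibiting a surviving cycle; these are equivalent arrangements of the same argument.

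There is, however, one genuine omission in your forward direction. You conclude that $G_\phi-S$ is $C_4$-free after checking only that every induced $C_4$ \emph{of $G_\phi$} (variable-gadget, clause-gadget, or connection type) loses an edge. But in an edge \emph{deletion} problem this is not enough: deleting an edge can turn a $4$-cycle that had a chord in $G_\phi$ into a new induced $C_4$ of $G_\phi-S$, and your classification of the induced $C_4$s of $G_\phi$ says nothing about such cycles. The paper closes exactly this hole with a one-line observation that you never make: every edge you delete is of the form $t^xp^x$, $t^xn^x$, or $a^c_ib^c_i$, and in each case the two endpoints have \emph{no common neighbor} in $G_\phi$; hence no deleted edge can serve as the chord of a $4$-cycle surviving in $G_\phi-S$, so no new induced $C_4$ is created. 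Without this (or an equivalent check), the claim ``$G_\phi-S$ is $C_4$-free'' does not follow from what you wrote, even though the statement itself is true and the fix is immediate. The two bookkeeping points you flag at the end (classifying the induced $C_4$s of $G_\phi$, and aligning satisfied literals with the true/false edge on the connection cycle) are indeed needed, but this third point is the one your sketch actually misses.
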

\begin{proof}
  Let $\phi$ be satisfiable and~$G_\phi$ and~$k_\phi$ be as above.  We show
  that $(G_\phi,k_\phi)$ is a \yes{} instance for \pname{$C_4$-Free
    Edge Deletion}.  Let $\alpha\colon \mathcal{V}(\phi) \to
  \{\mathtt{true}, \mathtt{false}\}$ be a satisfying assignment
  for~$\phi$.  For every variable $x \in \mathcal{V}(\phi)$, if
  $\alpha(x) = \mathtt{true}$, we remove the edge corresponding to
  true, i.e.  the edge $t^xp^x$, otherwise we remove the edge
  corresponding to false, i.e., the edge $t^xn^x$.  Every clause $c
  \in \mathcal{C}$ is satisfied by~$\alpha$; we pick an arbitrary
  variable~$x$ whose literal satisfies~$c$ and remove two edges
  corresponding to the two other literals.  If a clause is satisfied
  by more than one literal, we pick any of the corresponding
  variables.
  
  For every clause we deleted exactly two edges and for every variable
  exactly one edge.  Thus the total number of edges removed is
  $2|\mathcal{C}(\phi)| + |\mathcal{V}(\phi)| = k_\phi$.  We argue now
  that the remaining graph~$G'_\phi$ is~$C_4$-free.  Since variables
  appearing in clauses are pairwise different, it can be easily
  observed that every induced cycle of length four in~$G_\phi$ is
  either
  \begin{itemize}
  \item entirely contained in some clause gadget, or
  \item entirely contained in some variable gadget, or
  \item is of form $t^x\gamma^xb_i^ca_i^c$, where $x$ is the $i$th variable
    of clause $c$, and $\gamma\in \{n,p\}$ denotes whether the literal
    in $c$ that corresponds to $x$ is negated or non-negated.
  \end{itemize}
  By the construction of~$G'_\phi$, we destroyed all induced 4-cycles of
  the first two types.  Consider a 4-cycle $t^xp^xb_i^ca_i^c$ of the
  third type, where~$x$ appears positively in clause~$c$.  In the case
  when the literal of variable~$x$ was not chosen to satisfy~$c$, we
  have deleted the edge $a_i^cb_i^c$ and so this 4-cycle is removed.
  Otherwise we have that $\alpha(x) = \mathtt{true}$, and we have
  deleted the edge~$t^xp^x$, thus also removing the considered
  4-cycle.  The case of a 4-cycle of the form $t^xn^xb_i^ca_i^c$ is
  symmetric.
  
  Concluding, all the induced 4-cycles that were contained in~$G_\phi$
  are removed in~$G_\phi'$.  Since vertices in pairs~$(a^c_i,b^c_i)$
  and~$(\gamma^x,t^x)$ for~$\gamma\in \{n,p\}$ do not have common
  neighbors, it follows that no new~$C_4$ could be created when
  obtaining~$G_\phi'$ from~$G_\phi$ by removing the edges.  We infer
  that~$G_\phi'$ is indeed~$C_4$-free.
    
  \medskip We proceed with the opposite direction.  Let~$S$ be an edge
  set of~$G_\phi$ of size at most~$k_\phi$ such that~$G-S$
  is~$C_4$-free.  By the definition of the budget~$k_\phi$ and the
  observation that every variable gadget needs at least one edge to be
  in~$S$ and every clause gadget needs at least two edges to be in~$S$
  (note here that the edge sets of clause and variable gadgets are
  pairwise disjoint), we have that~$S$ contains \emph{exactly} one
  edge from each variable gadget, \emph{exactly} two edges from each
  clause gadget, and no other edges.
  
  We construct an assignment $\alpha\colon \mathcal{V}(\phi) \to
  \{\mathtt{true},\mathtt{false}\}$ for the formula~$\phi$ as follows.
  For a variable $x \in \mathcal{V}(\phi)$, put $\alpha(x) =
  \mathtt{false}$ if the false edge~$t^xn^x$ of~$G^x$ is in~$S$, put
  $\alpha(x) = \mathtt{true}$ if the true edge~$t^xp^x$ of~$G^x$ is
  in~$S$, and put an arbitrary value for~$\alpha(x)$ otherwise.  We
  claim that the assignment~$\alpha$ satisfies~$\phi$.

  Suppose for a contradiction that a clause $c \in \mathcal{C}$ is not
  satisfied.  Since exactly two edges in the clause gadget~$G^c$ are
  in~$S$, there is a variable~$x$ appearing in~$c$ such that the
  corresponding variable-edge of~$G^c$ is not in~$S$.  If $\alpha(x) =
  \mathtt{true}$, then because~$c$ is not satisfied, we have that
  $\neg x \in c$.  By the definition of~$\alpha$ we have that the
  false edge of~$G^x$ does not belong to~$S$.  Then in~$G_\phi$, the
  false edge of~$G^x$ and the variable-edge of~$G^c$ corresponding
  to~$x$ form an induced~$C_4$ that is not destroyed by~$S$, a
  contradiction.  The case $\alpha(x) = \mathtt{false}$ is symmetric.
  This concludes the proof of the lemma.
\end{proof}

Finally, the proof of Theorem~\ref{thm:2k2-completion-exp} follows
from Lemma~\ref{lem:phi-sat-iff-c4-del}; Combining the presented
reduction with an algorithm for \pname{$C_4$-Free Edge Deletion}
working in $2^{o(k)}n^{\bigO(1)}$ time would yield an algorithm for
\pname{3Sat} with time complexity $2^{o(n+m)}(n + m)^{\bigO(1)}$,
which contradicts ETH by the results of Impagliazzo, Paturi and
Zane~\cite{impagliazzo2001which}.

%
%
%
%
%
\subsection{$C_4$-free completion is not solvable in subexponential
  time}
\label{subsec:C_4}
For every \fd{} problem that so far turned out to be
solvable in subexponential time, we had~the graph $C_4$ in~$\F$ together
with some other graphs: trivially perfect graphs are the class
excluding~$C_4$ and~$P_4$, threshold graphs are the class
excluding~$2K_2$,~$P_4$ and~$C_4$, and pseudosplit graphs are the class
excluding~$2K_2$ and~$C_4$. Previous known
subexponentiality results in the area of graph modifications are
completing to chordal graphs and chain
graphs~\cite{fomin2012subexponential}, completing to split
graphs~\cite{ghosh2012faster} and recently, completing to interval
graphs~\cite{bliznets2014interval} and proper interval
graphs~\cite{bliznets2014proper}.  All these graph classes have~$C_4$
as a forbidden induced subgraph.

It is therefore natural to ask whether the~$C_4$ is the ``reason'' for
the existence of subexponential algorithms.  However, in this section
we show that excluding~$C_4$ alone is not sufficient for achieving a
subexponential time algorithm.  For $\F=\{C_4\}$, we refer to \fd{} as
\pname{$C_4$-Free Completion}.
\begin{theorem}
  \label{thm:c4completion-exp}
  The problem \pname{$C_4$-Free Completion} is not solvable in
  $2^{o(k)}n^{\bigO(1)}$ time unless the Exponential Time Hypothesis
  (ETH) fails.
\end{theorem}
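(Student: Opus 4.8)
The plan is to give a polynomial-time \emph{linear} reduction from \pname{3Sat}: from a 3-CNF formula $\phi$ with variable set $\mathcal{V}(\phi)$ and clause set $\mathcal{C}(\phi)$ we build an instance $(G_\phi,k_\phi)$ of \pname{$C_4$-Free Completion} with $k_\phi=\bigO(|\mathcal{V}(\phi)|+|\mathcal{C}(\phi)|)$ which is a \yes{} instance if and only if $\phi$ is satisfiable. A hypothetical $2^{o(k)}n^{\bigO(1)}$ algorithm for \pname{$C_4$-Free Completion} composed with such a reduction would solve \pname{3Sat} in time $2^{o(n+m)}(n+m)^{\bigO(1)}$, contradicting ETH via the Sparsification Lemma of Impagliazzo, Paturi and Zane~\cite{impagliazzo2001which}, exactly as in the proof of Theorem~\ref{thm:2k2-completion-exp}. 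Note that \pname{$C_4$-Free Completion} is genuinely different from \pname{$C_4$-Free Edge Deletion} (it is equivalent instead to deleting edges so that the \emph{complement} becomes $2K_2$-free), so a fresh construction is required, although one may follow the same variable/clause-gadget template as in the previous subsection.

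The combinatorial fact the reduction rests on is that if $G[X]\cong C_4$ for a four-element set $X$, then a completion $S$ destroys this induced $C_4$ if and only if $S$ contains one of the two diagonals of $X$. I would therefore build, for each variable $x$, a \emph{variable gadget} $G^x$ containing a distinguished induced $C_4$ whose two diagonals are a \emph{true edge} and a \emph{false edge}, so that every cheap completion adds exactly one of them; if $x$ occurs many times one blows $G^x$ up into a bounded-degree ``chain'' of such $C_4$'s with one synchronised port per occurrence, keeping its budget contribution linear in the number of occurrences (still $\bigO(n+m)$ overall). For each clause $c$ I would build a bounded-size \emph{clause gadget} $G^c$, attached to the three relevant variable gadgets by \emph{link} edges so that each literal occurrence contributes exactly one extra induced $C_4$ of $G_\phi$, one diagonal of which is a port of the corresponding true/false edge and the other a private clause pair; $G^c$ also contains a ``core'' forcing a fixed number of internal additions, arranged so that $G^c$ together with its three literal-$C_4$'s is fixable within the clause budget share precisely when at least one literal-$C_4$ is killed from the variable side, i.e.\ when at least one literal of $c$ is satisfied. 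The octahedron $K_{2,2,2}$ — whose three induced $C_4$'s need two diagonals (a vertex cover of a triangle) to be destroyed — is a natural candidate for the core.

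With $k_\phi$ set to the sum of the gadget shares, correctness is argued in the usual way. Forward: from a satisfying assignment, add the matching true/false edge(s) in each variable gadget and, in each clause, the internal edges handling the (at most two) unsatisfied literals together with whatever the core demands; then check that the result is $C_4$-free by confirming that every induced $C_4$ of $G_\phi$ is confined to one gadget or one literal-link and that no added edge creates a new induced $C_4$. Backward: using the fact above, any completion $S$ with $|S|\le k_\phi$ must spend at least its share inside each gadget (a variable $C_4$ is fixed only by its true or false diagonal; a clause core plus its literal-$C_4$'s cannot be fixed below the clause share), hence $S$ spends \emph{exactly} each share and nothing else; reading off the value of $x$ from which diagonal of $G^x$ lies in $S$ gives an assignment, and tightness of the clause budget forces at least one literal-$C_4$ of every clause to be killed from the variable side, so the assignment satisfies $\phi$.

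The main obstacle — and the essential difference from the deletion reduction — is that completion is not purely destructive: adding an edge can \emph{create} new induced $C_4$'s. So the delicate part is not the logic of the gadgets but their geometry: the links must be routed so that the endpoints of a true or false edge never accumulate common neighbours across different clauses, since each such shared neighbour would spawn an induced $C_4$ having that non-edge as a diagonal, forcing the edge into every solution regardless of truth value and collapsing the reduction. This is precisely what dictates private link-vertices per occurrence and, if needed, the per-variable synchronising sub-gadget above. Proving that, with these precautions, (a) every induced $C_4$ of $G_\phi$ is local to one gadget or link, (b) each gadget is rigid (its only within-budget fixes are the two intended ones), and (c) the intended completions introduce no new $C_4$, is where the case analysis concentrates.
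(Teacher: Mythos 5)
Your high-level strategy coincides with the paper's: a direct linear reduction from \pname{3Sat} to the completion problem itself (not to deletion), variable gadgets whose induced $C_4$s admit exactly two canonical diagonal-completions encoding true/false, clause gadgets coupled through literal links, and explicit attention to the fact that added edges can create new induced $C_4$s. The problem is that the proposal stops at a blueprint exactly where the work lies, and the concrete candidates you do offer would not survive. First, synchronization: a bare chain (or cycle) of diagonal-sharing squares does \emph{not} force a consistent orientation of the added diagonals. If two consecutive squares receive diagonals of opposite orientation meeting in a shared vertex, no new induced $C_4$ arises inside the ladder, because the only common neighbour of the two far endpoints is the shared rung vertex, which is adjacent to the meeting vertex; so nothing is violated and a solution may mix orientations freely. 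The paper's variable gadget adds the apex rows $u^x_i$ and $d^x_i$ precisely so that mixed orientations create an induced $C_4$ through an apex (Claim~\ref{claim:c4-compl-variable-gadget-p-edges}); without some such mechanism your per-occurrence ports are not tied to a single truth value and the backward direction of your reduction collapses. You defer this to ``a per-variable synchronising sub-gadget, if needed'' --- it is needed, and it is the heart of the construction rather than an optional refinement.

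Second, the clause gadget: the octahedron core requires exactly two added diagonals no matter what the assignment is, so by itself it cannot distinguish satisfied from unsatisfied clauses; everything hinges on the unspecified interaction (``arranged so that\dots'') between the core, the literal links and the budget. The paper instead uses a clause gadget with a single induced $C_4$ whose repair cascades into a forced second addition, with exactly three two-edge repairs corresponding to the three literals, plus a $k_\phi K_2$ blocker that excludes the fourth cheap repair (Claim~\ref{claim:c4-compl-clause-gadget-two-edges}); some device of this kind, together with a proof that no other within-budget repair exists (including repairs using edges with one endpoint outside the gadget), is unavoidable and is missing from your sketch. Likewise, your exact-budget accounting in the backward direction rests on per-gadget lower bounds that are asserted, not proven, and in the completion setting these are precisely the statements that require the careful local case analysis. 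So the approach is sound and essentially the paper's, but as written the proof has genuine gaps at both load-bearing gadgets.
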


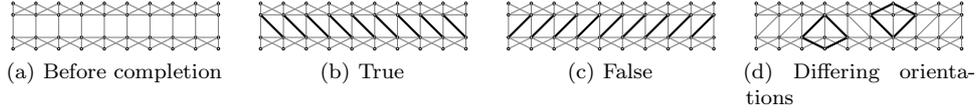
\begin{figure}[t]
  \centering
  \subfloat[Before completion] {
    \centering
    \begin{tikzpicture}[every
      node/.style={circle,draw,scale=.1},scale=.3]
      \foreach \s in {1,...,10}
      {
        \node (\s+u) at (\s,2) {};
        \node (\s+t) at (\s,1.5) {};
        \node (\s+b) at (\s,.5) {};
        \node (\s+d) at (\s,0) {};
        \draw[color=gray] (\s+u) -- (\s+t);
        \draw[color=gray] (\s+t) -- (\s+b);
        \draw[color=gray] (\s+b) -- (\s+d);
      }
      \foreach \s/\t in {1/2, 2/3, 3/4, 4/5, 5/6, 6/7, 7/8, 8/9, 9/10}
      {
        \draw[color=gray] (\s+u) -- (\t+t);
        \draw[color=gray] (\s+t) -- (\t+u);
        \draw[color=gray] (\s+t) -- (\t+t);
        \draw[color=gray] (\s+b) -- (\t+b);
        \draw[color=gray] (\s+b) -- (\t+d);
        \draw[color=gray] (\s+d) -- (\t+b);
      }
    \end{tikzpicture}
    \label{fig:c4compl-variable-gadget}
  }
  \hspace{.01\textwidth}
  \subfloat[True] {
    \centering
    \begin{tikzpicture}[every node/.style={circle,draw,scale=.1},scale=.3]
      \foreach \s in {1,...,10}
      {
        \node (\s+u) at (\s,2) {};
        \node (\s+t) at (\s,1.5) {};
        \node (\s+b) at (\s,.5) {};
        \node (\s+d) at (\s,0) {};
        \draw[color=gray] (\s+u) -- (\s+t);
        \draw[color=gray] (\s+t) -- (\s+b);
        \draw[color=gray] (\s+b) -- (\s+d);
      }
      \foreach \s/\t in {1/2, 2/3, 3/4, 4/5, 5/6, 6/7, 7/8, 8/9, 9/10}
      {
        \draw[color=gray] (\s+u) -- (\t+t);
        \draw[color=gray] (\s+t) -- (\t+u);
        \draw[color=gray] (\s+t) -- (\t+t);
        \draw[color=gray] (\s+b) -- (\t+b);
        \draw[color=gray] (\s+b) -- (\t+d);
        \draw[color=gray] (\s+d) -- (\t+b);
        \draw[thick] (\s+t) -- (\t+b); 
      }
    \end{tikzpicture}
  }
  \hspace{.01\textwidth}
  \subfloat[False]{
    \centering
    \begin{tikzpicture}[every node/.style={circle,draw,scale=.1},scale=.3]
      \foreach \s in {1,...,10}
      {
        \node (\s+u) at (\s,2) {};
        \node (\s+t) at (\s,1.5) {};
        \node (\s+b) at (\s,.5) {};
        \node (\s+d) at (\s,0) {};
        \draw[color=gray] (\s+u) -- (\s+t);
        \draw[color=gray] (\s+t) -- (\s+b);
        \draw[color=gray] (\s+b) -- (\s+d);
      }
      \foreach \s/\t in {1/2, 2/3, 3/4, 4/5, 5/6, 6/7, 7/8, 8/9, 9/10}
      {
        \draw[color=gray] (\s+u) -- (\t+t);
        \draw[color=gray] (\s+t) -- (\t+u);
        \draw[color=gray] (\s+t) -- (\t+t);
        \draw[color=gray] (\s+b) -- (\t+b);
        \draw[color=gray] (\s+b) -- (\t+d);
        \draw[color=gray] (\s+d) -- (\t+b);
        \draw[thick] (\s+b) -- (\t+t); 
      }
    \end{tikzpicture}
  }
  \hspace{.01\textwidth}
  \subfloat[Differing orientations]{
    \centering
    \begin{tikzpicture}[every node/.style={circle,draw,scale=.1},scale=.3]
      \foreach \s in {1,...,10}
      {
        \node (\s+u) at (\s,2) {};
        \node (\s+t) at (\s,1.5) {};
        \node (\s+b) at (\s,.5) {};
        \node (\s+d) at (\s,0) {};
        \draw[color=gray] (\s+u) -- (\s+t);
        \draw[color=gray] (\s+t) -- (\s+b);
        \draw[color=gray] (\s+b) -- (\s+d);
      }
      \foreach \s/\t in {1/2, 2/3, 3/4}
      {
        \draw[color=gray] (\s+u) -- (\t+t);
        \draw[color=gray] (\s+t) -- (\t+u);
        \draw[color=gray] (\s+t) -- (\t+t);
        \draw[color=gray] (\s+b) -- (\t+b);
        \draw[color=gray] (\s+b) -- (\t+d);
        \draw[color=gray] (\s+d) -- (\t+b);
        \draw[color=gray] (\s+b) -- (\t+t); 
      }
      \foreach \s/\t in {4/5, 5/6, 6/7}
      {
        \draw[color=gray] (\s+u) -- (\t+t);
        \draw[color=gray] (\s+t) -- (\t+u);
        \draw[color=gray] (\s+t) -- (\t+t);
        \draw[color=gray] (\s+b) -- (\t+b);
        \draw[color=gray] (\s+b) -- (\t+d);
        \draw[color=gray] (\s+d) -- (\t+b);
        \draw[color=gray] (\s+t) -- (\t+b); 
      }\foreach \s/\t in {7/8, 8/9, 9/10}
      {
        \draw[color=gray] (\s+u) -- (\t+t);
        \draw[color=gray] (\s+t) -- (\t+u);
        \draw[color=gray] (\s+t) -- (\t+t);
        \draw[color=gray] (\s+b) -- (\t+b);
        \draw[color=gray] (\s+b) -- (\t+d);
        \draw[color=gray] (\s+d) -- (\t+b);
        \draw[color=gray] (\s+b) -- (\t+t); 
      }
      \draw[thick] (4+t) -- (5+b) -- (4+d) -- (3+b) -- (4+t);
      \draw[thick] (7+b) -- (8+t) -- (7+u) -- (6+t) -- (7+b);
    \end{tikzpicture}
    \label{fig:c4compl-variable-gadget-differing}
  }
  \caption{The variable gadget $G_x$, before completion, its two
    completions corresponding to assignments and a completion with
    differing orientations.}
  \label{fig:c4compl-variable-gadgets}
\end{figure}

To prove the theorem, we reduce from \pname{3Sat}, and similarly as
before we start with a formula where each clause contains exactly
three literals corresponding to pairwise different variables.  By
duplicating clauses if necessary, we also assume that each variable
appears in at least two clauses.

We again need two types of gadgets, one gadget to emulate variables in
the formula and one type to emulate clauses.  Let~$\phi$ be the
\pname{3Sat} instance and denote by $\mathcal{V}(\phi)$ the variables
in~$\phi$ and by $\mathcal{C}(\phi)$ the clauses.  We construct the
graph~$G_\phi$ as follows:

For each variable $x \in \mathcal{V}(\phi)$ we construct a variable gadget
graph~$G^x$ as depicted in
Figure~\ref{fig:c4compl-variable-gadget-enlarged}.  Let~$p_x$ be the
number of clauses~$x$ occurs in; by our assumption we have that
$p_x\geq 2$.  The graph~$G^x$ consists of a ``tape'' of~$4 p_x$
squares arranged in a cycle, with additional vertices attached to the
sides of the tape.  The intuition is that every fourth square in~$G^x$
is reserved for a clause~$x$ occurs in.  Formally, the vertex set
of~$G^x$ consists of
\[
V(G^x) = \bigcup_{0 \leq i <4p_x} \{u^x_{i}, t^x_{i}, b^x_{i}, d^x_{i} \}
,
\] and the edge set of

\begin{align*}
  E(G^x) = \bigcup_{0 \leq i<4p_x} \{ &u^x_{i} t^x_{i}, u^x_{i}
  t^x_{{i+1}}, t^x_{i} u^x_{{i+1}},
  t^x_{i} t^x_{{i+1}},\\
  & t^x_{i} b^x_{i}, b^x_{i} b^x_{{i+1}}, b^x_{i} d^x_{{i+1}}, b^x_{i}
  d^x_{i}, d^x_{i} b^x_{{i+1}} \},
\end{align*}
where the indices behave cyclically modulo~$4p_x$.  The letters for
the vertices are chosen to correspond with top and bottom ($t^x$
and~$b^x$) of tape, and up and down ($u^x$ and~$d^x$).  The
construction is visualized in
Figures~\ref{fig:c4compl-variable-gadget}
and~\ref{fig:c4compl-variable-gadget-enlarged}.
\begin{figure}[t]
  \centering
  \begin{tikzpicture}[every
    node/.style={circle,draw,scale=.8},scale=1,minimum size=2em, inner
    sep=1pt]
    \node (u0) at (-1,4) {$u^x_{0}$};
    \node (t0) at (-1,3) {$t^x_{0}$};
    \node (b0) at (-1,1) {$b^x_{0}$};
    \node (d0) at (-1,0) {$d^x_{0}$};

    \node (u1) at (1,4) {$u^x_{1}$};
    \node (t1) at (1,3) {$t^x_{1}$};
    \node (b1) at (1,1) {$b^x_{1}$};
    \node (d1) at (1,0) {$d^x_{1}$};

    \node (u2) at (3,4) {$u^x_{2}$};
    \node (t2) at (3,3) {$t^x_{2}$};
    \node (b2) at (3,1) {$b^x_{2}$};
    \node (d2) at (3,0) {$d^x_{2}$};

    \node (u3) at (5,4) {$u^x_{3}$};
    \node (t3) at (5,3) {$t^x_{3}$};
    \node (b3) at (5,1) {$b^x_{3}$};
    \node (d3) at (5,0) {$d^x_{3}$};

    \node (u4) at (7,4) {$u^x_{4}$};
    \node (t4) at (7,3) {$t^x_{4}$};
    \node (b4) at (7,1) {$b^x_{4}$};
    \node (d4) at (7,0) {$d^x_{4}$};

    \node[scale=.3,fill] (dot1) at (8.0, 2) {};
    \node[scale=.3,fill] (dot2) at (8.5, 2) {};
    \node[scale=.3,fill] (dot3) at (9.0, 2) {};

    \draw (u0) -- (t0) -- (b0) -- (d0);
    \draw (u1) -- (t1) -- (b1) -- (d1);
    \draw (u2) -- (t2) -- (b2) -- (d2);
    \draw (u3) -- (t3) -- (b3) -- (d3);
    \draw (u4) -- (t4) -- (b4) -- (d4);
    \draw (t0) -- (t1) -- (t2) -- (t3) -- (t4);
    \draw (b0) -- (b1) -- (b2) -- (b3) -- (b4);
    
    \draw (t0) -- (u1) -- (t2) -- (u3) -- (t4);
    \draw (u0) -- (t1) -- (u2) -- (t3) -- (u4);

    \draw (b0) -- (d1) -- (b2) -- (d3) -- (b4);
    \draw (d0) -- (b1) -- (d2) -- (b3) -- (d4);
  \end{tikzpicture}
  \caption{Variable gadget $G_x$.}
  \label{fig:c4compl-variable-gadget-enlarged}
\end{figure}
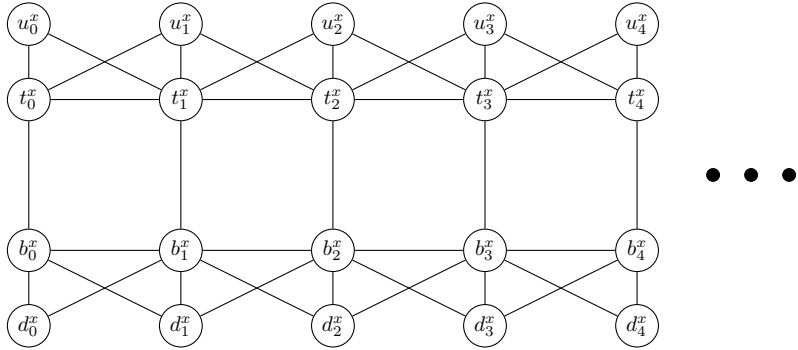

\begin{claim}
  \label{claim:c4-compl-variable-gadget-p-edges}
  The minimum number of edges required to add to~$G^x$ to make it
  $C_4$-free is~$4p_x$.  Moreover, there are exactly two ways of
  eliminating all~$C_4$s with~$4p_x$ edges, namely adding an edge on
  the diagonal for each square.  Furthermore, if we add one edge to
  eliminate some cycle, all the rest must have the same orientation,
  i.e., either all added edges are of the form $t_i^x b_{i+1}^x$ or of
  the form $t_{i+1}^x b_{i}^x$.  See
  Figure~\ref{fig:c4compl-variable-gadgets}.
\end{claim}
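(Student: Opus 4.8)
The plan is to first determine exactly which induced four-cycles occur in $G^x$, then derive the lower bound by a transversal argument, and finally classify the transversals that actually produce a $C_4$-free graph. For Step~1 I would enumerate all pairs of non-adjacent vertices of $G^x$ together with their common neighbourhoods, using the fact that a set $\{a,b,c,d\}$ induces a $C_4$ exactly when $a,c$ are non-adjacent, $b,d$ are non-adjacent, and each of $b,d$ is adjacent to both $a$ and $c$. Since $G^x$ is a fixed four-vertex pattern repeated along a cycle of length $4p_x\ge 8$, only a bounded number of ``shapes'' of non-adjacent pairs must be examined (within one column, within two consecutive columns, within three consecutive columns; any two vertices farther apart share no neighbour). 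Carrying this out, the only induced $C_4$s turn out to be the ``squares'' $R_i=\{t^x_i,b^x_i,b^x_{i+1},t^x_{i+1}\}$ for $0\le i<4p_x$ (indices modulo $4p_x$): the pairs $\{t^x_i,b^x_{i+1}\}$ and $\{b^x_i,t^x_{i+1}\}$ have common neighbourhoods $\{t^x_{i+1},b^x_i\}$ and $\{t^x_i,b^x_{i+1}\}$ respectively, which are independent sets, whereas every other non-adjacent pair has at most one common neighbour, or has every two of its common neighbours adjacent and hence spans no induced $C_4$. The $u$- and $d$-vertices are present exactly to make those last configurations into $K_4$-minus-an-edge rather than induced $C_4$s.

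For Step~2, note that the two non-edges of the induced $C_4$ supported on $R_i$ are $e_i^{\mathsf T}:=t^x_i b^x_{i+1}$ and $e_i^{\mathsf F}:=b^x_i t^x_{i+1}$, so every completion $S$ of $G^x$ must satisfy $S\cap\{e_i^{\mathsf T},e_i^{\mathsf F}\}\ne\emptyset$ for all $i$ (any edge disjoint from $R_i$ leaves $R_i$ an induced $C_4$). A direct check shows the $4p_x$ two-element sets $\{e_i^{\mathsf T},e_i^{\mathsf F}\}$ are pairwise disjoint, hence $|S|\ge 4p_x$; and if $|S|=4p_x$ then $S=\{e_i^{\sigma(i)}:0\le i<4p_x\}$ for some $\sigma\colon\{0,\dots,4p_x-1\}\to\{\mathsf T,\mathsf F\}$ and $S$ contains no other edge. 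Thus it only remains to identify the functions $\sigma$ for which $G^x+S$ is $C_4$-free.

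For Step~3 I would first verify that the two constant transversals $S_{\mathsf T}=\{e_i^{\mathsf T}:i\}$ and $S_{\mathsf F}=\{e_i^{\mathsf F}:i\}$ give $C_4$-free graphs: since $S_{\mathsf T}$ only adds edges of the form $t^x_i b^x_{i+1}$, one reruns the analysis of Step~1 on $G^x+S_{\mathsf T}$ and checks that every former square becomes a $K_4$-minus-an-edge and no non-adjacent pair acquires two mutually non-adjacent common neighbours, so $G^x+S_{\mathsf T}$ (and symmetrically $G^x+S_{\mathsf F}$) is $C_4$-free. Next, assume $\sigma$ is non-constant; by cyclicity there is an index $i$ with $\sigma(i-1)\ne\sigma(i)$. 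If $\sigma(i-1)=\mathsf F$ and $\sigma(i)=\mathsf T$, then $S$ contains $e_{i-1}^{\mathsf F}=b^x_{i-1}t^x_i$ and $e_i^{\mathsf T}=t^x_i b^x_{i+1}$, and $\{t^x_i,b^x_{i+1},d^x_i,b^x_{i-1}\}$ induces a $C_4$ in $G^x+S$: the edges $t^x_i b^x_{i+1}$, $b^x_{i+1}d^x_i$, $d^x_i b^x_{i-1}$, $b^x_{i-1}t^x_i$ are present (the first and last from $S$, the other two already in $G^x$), the diagonals $t^x_i d^x_i$ and $b^x_{i-1}b^x_{i+1}$ are non-edges of $G^x$, and neither diagonal has the form $e_j^{\mathsf T}$ or $e_j^{\mathsf F}$, so $S$ leaves them absent. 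The opposite case $\sigma(i-1)=\mathsf T$, $\sigma(i)=\mathsf F$ is symmetric, yielding an induced $C_4$ on $\{b^x_i,t^x_{i+1},u^x_i,t^x_{i-1}\}$ with $b^x,d^x$ replaced by $t^x,u^x$. Either way $G^x+S$ still contains an induced $C_4$, a contradiction, so $S_{\mathsf T}$ and $S_{\mathsf F}$ are the only completions of size $4p_x$, which establishes all parts of the claim. The genuinely laborious steps are the two common-neighbourhood case analyses (for $G^x$ and for $G^x+S_{\mathsf T}$), which are routine but must be done carefully enough to be sure no unexpected induced $C_4$ has been overlooked; the conceptual heart is the short observation that a change of orientation between consecutive squares forces a new four-cycle through a $u$- or $d$-vertex.
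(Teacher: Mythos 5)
Your proposal is correct and follows essentially the same route as the paper's proof: a lower bound of $4p_x$ because the squares' diagonal pairs are disjoint (the paper phrases this as ``no two of the $4p_x$ induced $C_4$s can be eliminated by one added edge''), a verification that the two constant orientations work, and the observation that differing orientations on consecutive squares create a new induced $C_4$ through $u^x_i$ or $d^x_i$ — exactly the cycles you exhibit. Your version merely makes explicit two points the paper leaves implicit (the enumeration showing the squares are the only induced $C_4$s, and that a budget of exactly $4p_x$ forces one diagonal per square and nothing else), so no further changes are needed.
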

\begin{proof}[of claim] A gadget~$G^x$ contains~$4p_x$ induced~$C_4$,
  and no two of them can be eliminated by adding one edge.  Hence, to
  eliminate all~$C_4$s in~$G^x$, we need at least~$4p_x$ edges.  On
  the other hand, it is easy to verify that after adding~$4p_x$
  diagonals to~$C_4$s of the same orientation the resulting graph does
  not contain any induced~$C_4$, see
  Figure~\ref{fig:c4compl-variable-gadgets} for examples.  Whenever we
  have two consecutive cycles with completion edges of different
  orientation, we create a new~$C_4$ consisting of the two completion
  edges, and (depending on their orientation) either two edges
  incident to vertex~$u_i^x$ above their common vertex, or two edges
  incident to vertex~$d_i^x$ below.  See
  Figure~\ref{fig:c4compl-variable-gadget-differing}.
\end{proof}

\begin{corollary}\label{cor:variableEdges}
  The minimum number of edges required to eliminate all $C_4$s
  appearing inside all the variable gadgets is $12|\mathcal{C}(\phi)|$.
\end{corollary}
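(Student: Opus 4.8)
The plan is to derive the corollary directly from Claim~\ref{claim:c4-compl-variable-gadget-p-edges} by summing over all variables, using that the variable gadgets are pairwise vertex-disjoint. First I would record the simple structural fact that for distinct variables $x \neq y$ the gadgets $G^x$ and $G^y$ share no vertices, so every induced $C_4$ lying entirely inside some variable gadget is supported on a $4$-element subset of a single $V(G^x)$. Since in a completion we may only \emph{add} edges, such a $C_4$ can be destroyed only by adding a chord whose both endpoints lie in that same $V(G^x)$; adding edges elsewhere (in particular between different gadgets, or between a gadget and a clause gadget) cannot affect whether an induced subgraph on four vertices of $V(G^x)$ is a $C_4$.

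Consequently, if $S$ is any set of non-edges of $G_\phi$ whose addition removes all induced $C_4$'s contained inside variable gadgets, then for each $x$ the set $S_x := S \cap \binom{V(G^x)}{2}$ already makes $G^x$ $C_4$-free, and the sets $S_x$ are pairwise disjoint. Invoking Claim~\ref{claim:c4-compl-variable-gadget-p-edges}, we get $|S_x| \geq 4p_x$ for every $x$, hence $|S| \geq \sum_{x \in \mathcal{V}(\phi)} 4p_x$. For the matching upper bound I would, for each $x$, pick one of the two optimal completions of $G^x$ promised by the claim and take $S$ to be their (disjoint) union; this $S$ has size exactly $\sum_x 4p_x$ and destroys every $C_4$ contained in a variable gadget.

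It remains to evaluate $\sum_{x \in \mathcal{V}(\phi)} p_x$. By construction $p_x$ is the number of clauses in which $x$ occurs, so this sum counts the total number of clause–variable incidences. Because $\phi$ has been preprocessed so that every clause consists of exactly three literals on pairwise distinct variables, each clause contributes exactly $3$ to this count, so $\sum_x p_x = 3|\mathcal{C}(\phi)|$ and the minimum equals $4 \cdot 3|\mathcal{C}(\phi)| = 12|\mathcal{C}(\phi)|$.

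I do not expect any real obstacle here: the only point requiring a line of care is making precise what ``$C_4$'s appearing inside the variable gadgets'' means together with the observation that a completion adds edges only, so that chords inside a gadget are forced and edges outside it are useless for this subproblem; after that the statement is a one-step arithmetic consequence of the claim and the regularity of $\phi$.
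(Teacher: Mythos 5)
Your proposal is correct and follows essentially the same route as the paper: invoke Claim~\ref{claim:c4-compl-variable-gadget-p-edges} on each gadget, use that the gadgets are pairwise disjoint (so the per-gadget bounds add up and an optimal completion inside each gadget can be taken independently), and compute $\sum_{x}p_x=3|\mathcal{C}(\phi)|$ from the regularity of $\phi$. Your extra remarks on why destroying a $C_4$ inside a gadget forces a chord inside that gadget just make explicit what the paper leaves implicit.
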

\begin{proof}
  Since each clause of $\mathcal{C}(\phi)$ contains exactly three
  occurrences of variables, it follows that $\sum_{x\in
    \mathcal{V}(\phi)} p_x=3|\mathcal{C}(\phi)|$.  The constructed
  variable gadgets are pairwise disjoint, so by
  Claim~\ref{claim:c4-compl-variable-gadget-p-edges} we infer that the
  minimum number of edges required in all the variable gadgets is
  equal to $\sum_{x \in \mathcal{V}(\phi)} 4p_x = 3\cdot 4
  |\mathcal{C}(\phi)| = 12|\mathcal{C}(\phi)|$.
\end{proof}

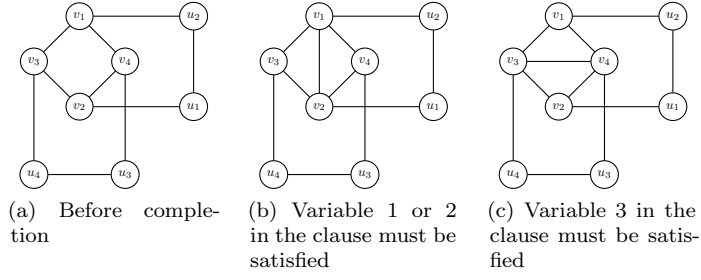
\begin{figure}[t]
  \centering
  \subfloat[Before completion] {
    \centering
    \begin{tikzpicture}[every node/.style={circle, draw, scale=.5},
      scale=.3]
      
      \node (v1) at (3,8) {$v_1$};
      \node (v2) at (3,4) {$v_2$};
      \node (v3) at (1,6) {$v_3$};
      \node (v4) at (5,6) {$v_4$};
      
      \node (u1) at (8,4) {$u_1$};
      \node (u2) at (8,8) {$u_2$};
      \node (u3) at (5,1) {$u_3$};
      \node (u4) at (1,1) {$u_4$};
      
      \draw (v1) -- (v3) -- (v2) -- (v4) -- (v1);
      \draw (v1) -- (u2) -- (u1) -- (v2);
      \draw (v3) -- (u4) -- (u3) -- (v4);
    \end{tikzpicture}
    \label{fig:c4compl-clause-gadget}
  }
  \hspace{.02\textwidth}
  \subfloat[Variable 1 or 2 in the clause must be satisfied] {
    \centering
    \begin{tikzpicture}[every node/.style={circle, draw, scale=.5},
      scale=.3]
      
      \node (v1) at (3,8) {$v_1$};
      \node (v2) at (3,4) {$v_2$};
      \node (v3) at (1,6) {$v_3$};
      \node (v4) at (5,6) {$v_4$};
      
      \node (u1) at (8,4) {$u_1$};
      \node (u2) at (8,8) {$u_2$};
      \node (u3) at (5,1) {$u_3$};
      \node (u4) at (1,1) {$u_4$};
      
      \draw (v1) -- (v3) -- (v2) -- (v4) -- (v1);
      \draw (v1) -- (u2) -- (u1) -- (v2);
      \draw (v3) -- (u4) -- (u3) -- (v4);
      
      \draw (v1) -- (v2);
      
    \end{tikzpicture}
    \label{fig:c4compl-clause-gadget-1-2}
  }
  \hspace{.02\textwidth}
  \subfloat[Variable 3 in the clause must be satisfied] {
    \centering
    \begin{tikzpicture}[every node/.style={circle, draw, scale=.5},
      scale=.3]
      
      \node (v1) at (3,8) {$v_1$};
      \node (v2) at (3,4) {$v_2$};
      \node (v3) at (1,6) {$v_3$};
      \node (v4) at (5,6) {$v_4$};
      
      \node (u1) at (8,4) {$u_1$};
      \node (u2) at (8,8) {$u_2$};
      \node (u3) at (5,1) {$u_3$};
      \node (u4) at (1,1) {$u_4$};
      
      \draw (v1) -- (v3) -- (v2) -- (v4) -- (v1);
      \draw (v1) -- (u2) -- (u1) -- (v2);
      \draw (v3) -- (u4) -- (u3) -- (v4);
      
      \draw (v3) -- (v4);
    \end{tikzpicture}
    \label{fig:c4compl-clause-gadget-3-4}
  }
  \caption{The clause gadget}
  \label{fig:c4compl-clause-gadgets}
\end{figure}

We now proceed to create the clause gadgets.  For each clause $c \in
\mathcal{C}(\phi)$, we create the graph $G^c$ as depicted in
Figure~\ref{fig:c4-free-clause-gadget}.  It consists of an induced
4-cycle $v^c_1 v^c_4 v^c_2 v^c_3$ and induced paths $v^c_2 u^c_1 u^c_2
v^c_1$ and $v^c_3 u^c_4 u^c_3 v^c_4$.  We also attach a gadget
consisting of $k_\phi$ internally disjoint induced paths of four
vertices with endpoints in $v^c_4$ and $u^c_4$, where $k_\phi$ is the
budget to be specified later.  That makes it impossible to add an edge
between $v^c_4$ and $u^c_4$ in any $C_4$-free completion with at most
$k_\phi$ edges.

By the $i$-th \emph{square} we mean a quadruple $(t^x_{i}, b^x_{i},
t^x_{{i+1}}, b^x_{{i+1}})$.  If a clause $c$ is the $\ell$-th clause
the variable $x$ appears in, we will use the vertices of the $4(\ell
-1)$-st square for connections to the gadget corresponding to $c$.
For ease of notation let $j = 4(\ell -1)$.  We also use pairs
$\{v^c_1, u^c_1\}$, $\{v^c_2, u^c_2\}$, and $\{v^c_3, u^c_3\}$ of
$G^c$ for connecting to the corresponding variable gadgets.  If a
variable $x$ appears in a non-negated form as the $i$th (for $i =
1,2,3$) literal of a clause $c$, then we add the edges
$t^x_{j+1}v^c_i$ and $b^x_{j}u^c_i$.  If it appears in a negated form,
we add the edges $t^x_{j}v^c_i$ and $b^x_{j+1}u^c_i$.  See
Figure~\ref{fig:c4-free-compl-connections-enlarged}.  This concludes
the construction of $G_\phi$.  Finally, we set the budget for the
instance equal to $k_\phi=14|\mathcal{C}(\phi)|$.

\begin{figure}[t]
  \centering
  \begin{tikzpicture} [every node/.style={circle, draw, scale=.8},
    scale=.8, minimum size=1.5em, inner sep=0pt]
    
    \node (cv1) at (11,8) {$v^c_1$}; \node (cv2) at (11,6) {$v^c_2$};
    \node (cv3) at (10,7) {$v^c_3$}; \node (cv4) at (12,7) {$v^c_4$};
    
    \node (cu1) at (14,6) {$u^c_1$}; \node (cu2) at (14,8) {$u^c_2$};
    \node (cu3) at (12,4) {$u^c_3$}; \node (cu4) at (10,4) {$u^c_4$};
    
    \draw (cv1) -- (cv4) -- (cv2) -- (cv3) -- (cv1) -- (cu2) -- (cu1) -- (cv2);
    \draw (cv3) -- (cu4) -- (cu3) -- (cv4);

    \node[scale=0.3] (k11) at (12.50, 3.50) {};
    \node[scale=0.3] (k12) at (13.00, 4.00) {};
    
    \node[scale=0.3] (k21) at (12.75, 3.25) {};
    \node[scale=0.3] (k22) at (13.25, 3.75) {};
    
    \node[scale=0.3] (k31) at (13.00, 3.00) {};
    \node[scale=0.3] (k32) at (13.50, 3.50) {};
    
    \node[scale=0.3] (k41) at (13.25, 2.75) {};
    \node[scale=0.3] (k42) at (13.75, 3.25) {};
    
    \node[scale=0.3] (k51) at (13.50, 2.50) {};
    \node[scale=0.3] (k52) at (14.00, 3.00) {};
    
    \draw (cu4) -- (k11) -- (k12) -- (cv4);
    \draw (cu4) -- (k21) -- (k22) -- (cv4);
    \draw (cu4) -- (k31) -- (k32) -- (cv4);
    \draw (cu4) -- (k41) -- (k42) -- (cv4);
    \draw (cu4) -- (k51) -- (k52) -- (cv4);
    
    \draw [decorate, decoration={brace}, xshift=-1.5em, yshift=-1.5em]
    (13.50, 2.5) -- (12.50, 3.50) 
    node [scale=1.5, draw=none, midway, xshift=-1.5em, yshift=-1.5em,
    rotate=45] {$k_\phi K_2$};
  \end{tikzpicture}
  \caption{The clause gadget $G^c$.  It contains one $C_4$, and if we add
    either edge $v^c_1v^c_2$ or edge $v^c_3v^c_4$, we get a new $C_4$
    that must be destroyed by adding one more edge.  The $k_\phi K_2$
    gadget makes sure we cannot add edge $v^c_4u^c_4$.}
  \label{fig:c4-free-clause-gadget}
\end{figure}
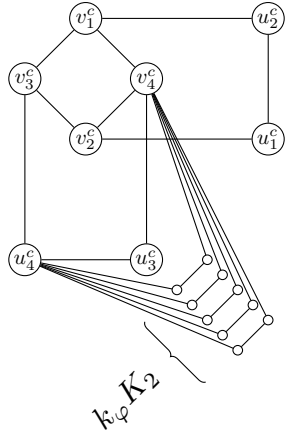

%
%
%
%
\begin{figure}[t]
  \centering
  \begin{tikzpicture}[every node/.style={circle, draw, scale=.8},
    scale=.8, minimum size=1.8em, inner sep=0pt]
    
    \node (xd0)	at (0,10) {$d^x_0$};
    \node (xd1)	at (1,10) {$d^x_1$};
    \node (xd2)	at (2,10) {$d^x_2$};
    \node (xd3)	at (3,10) {$d^x_3$};
    \node (xd4)	at (4,10) {$d^x_4$};
    \node (xd5)	at (5,10) {$d^x_5$};
    \node (xb0)	at (0,11) {$b^x_0$};
    \node (xb1)	at (1,11) {$b^x_1$};
    \node (xb2)	at (2,11) {$b^x_2$};
    \node (xb3)	at (3,11) {$b^x_3$};
    \node (xb4)	at (4,11) {$b^x_4$};
    \node (xb5)	at (5,11) {$b^x_5$};
    \node (xt0)	at (0,12) {$t^x_0$};
    \node (xt1)	at (1,12) {$t^x_1$};
    \node (xt2)	at (2,12) {$t^x_2$};
    \node (xt3)	at (3,12) {$t^x_3$};
    \node (xt4)	at (4,12) {$t^x_4$};
    \node (xt5)	at (5,12) {$t^x_5$};
    \node (xu0)	at (0,13) {$u^x_0$};
    \node (xu1)	at (1,13) {$u^x_1$};
    \node (xu2)	at (2,13) {$u^x_2$};
    \node (xu3)	at (3,13) {$u^x_3$};
    \node (xu4)	at (4,13) {$u^x_4$};
    \node (xu5)	at (5,13) {$u^x_5$};

    \node (xd6)	at (6,10) {$d^x_6$};
    \node (xb6)	at (6,11) {$b^x_6$};
    \node (xt6)	at (6,12) {$t^x_6$};
    \node (xu6)	at (6,13) {$u^x_6$};
    
    \node (yd3)	at (0,5) {$d^y_{7}$};
    \node (yd4)	at (1,5) {$d^y_{8}$};
    \node (yd5)	at (2,5) {$d^y_{9}$};
    \node (yd6)	at (3,5) {$d^y_{10}$};
    \node (yd7)	at (4,5) {$d^y_{11}$};
    \node (yd8)	at (5,5) {$d^y_{12}$};
    \node (yd9)	at (6,5) {$d^y_{13}$};
    \node (yb3)	at (0,6) {$b^y_{7}$};
    \node (yb4)	at (1,6) {$b^y_{8}$};
    \node (yb5)	at (2,6) {$b^y_{9}$};
    \node (yb6)	at (3,6) {$b^y_{10}$};
    \node (yb7)	at (4,6) {$b^y_{11}$};
    \node (yb8)	at (5,6) {$b^y_{12}$};
    \node (yb9)	at (6,6) {$b^y_{13}$};
    \node (yt3)	at (0,7) {$t^y_{7}$};
    \node (yt4)	at (1,7) {$t^y_{8}$};
    \node (yt5)	at (2,7) {$t^y_{9}$};
    \node (yt6)	at (3,7) {$t^y_{10}$};
    \node (yt7)	at (4,7) {$t^y_{11}$};
    \node (yt8)	at (5,7) {$t^y_{12}$};
    \node (yt9)	at (6,7) {$t^y_{13}$};
    \node (yu3)	at (0,8) {$u^y_{7}$};
    \node (yu4)	at (1,8) {$u^y_{8}$};
    \node (yu5)	at (2,8) {$u^y_{9}$};
    \node (yu6)	at (3,8) {$u^y_{10}$};
    \node (yu7)	at (4,8) {$u^y_{11}$};
    \node (yu8)	at (5,8) {$u^y_{12}$};
    \node (yu9)	at (6,8) {$u^y_{13}$};
    
    \node (zd7)	at (0,0) {$d^z_{7}$};
    \node (zd8)	at (1,0) {$d^z_{8}$};
    \node (zd9)	at (2,0) {$d^z_{9}$};
    \node (zd10)	at (3,0) {$d^z_{10}$};
    \node (zd11)	at (4,0) {$d^z_{11}$};
    \node (zd12)	at (5,0) {$d^z_{12}$};
    \node (zd13)	at (6,0) {$d^z_{13}$};
    \node (zb7)	at (0,1) {$b^z_{7}$};
    \node (zb8)	at (1,1) {$b^z_{8}$};
    \node (zb9)	at (2,1) {$b^z_{9}$};
    \node (zb10)	at (3,1) {$b^z_{10}$};
    \node (zb11)	at (4,1) {$b^z_{11}$};
    \node (zb12)	at (5,1) {$b^z_{12}$};
    \node (zb13)	at (6,1) {$b^z_{13}$};
    \node (zt7)	at (0,2) {$t^z_{7}$};
    \node (zt8)	at (1,2) {$t^z_{8}$};
    \node (zt9)	at (2,2) {$t^z_{9}$};
    \node (zt10)	at (3,2) {$t^z_{10}$};
    \node (zt11)	at (4,2) {$t^z_{11}$};
    \node (zt12)	at (5,2) {$t^z_{12}$};
    \node (zt13)	at (6,2) {$t^z_{13}$};
    \node (zu7)	at (0,3) {$u^z_{7}$};
    \node (zu8)	at (1,3) {$u^z_{8}$};
    \node (zu9)	at (2,3) {$u^z_{9}$};
    \node (zu10)	at (3,3) {$u^z_{10}$};
    \node (zu11)	at (4,3) {$u^z_{11}$};
    \node (zu12)	at (5,3) {$u^z_{12}$};
    \node (zu13)	at (6,3) {$u^z_{13}$};
    
    \draw (xt0) -- (xt1) -- (xt2) -- (xt3) -- (xt4) -- (xt5) -- (xt6);
    \draw (xb0) -- (xb1) -- (xb2) -- (xb3) -- (xb4) -- (xb5) -- (xb6);
    
    \draw (xd0) -- (xb0) -- (xt0) -- (xu0);
    \draw (xd1) -- (xb1) -- (xt1) -- (xu1);
    \draw (xd2) -- (xb2) -- (xt2) -- (xu2);
    \draw (xd3) -- (xb3) -- (xt3) -- (xu3);
    \draw (xd4) -- (xb4) -- (xt4) -- (xu4);
    \draw (xd5) -- (xb5) -- (xt5) -- (xu5);
    \draw (xd6) -- (xb6) -- (xt6) -- (xu6);
    
    \draw (xd0) -- (xb1) -- (xd2) -- (xb3) -- (xd4) -- (xb5) -- (xd6);
    \draw (xb0) -- (xd1) -- (xb2) -- (xd3) -- (xb4) -- (xd5) -- (xb6);
    
    \draw (xu0) -- (xt1) -- (xu2) -- (xt3) -- (xu4) -- (xt5) -- (xu6);
    \draw (xt0) -- (xu1) -- (xt2) -- (xu3) -- (xt4) -- (xu5) -- (xt6);
    
    \draw[line width=1mm]  (xb0) -- (xb1) -- (xt1) -- (xt0) -- (xb0);
    
    \draw (yt3) -- (yt4) -- (yt5) -- (yt6) -- (yt7) -- (yt8) -- (yt9);
    \draw (yb3) -- (yb4) -- (yb5) -- (yb6) -- (yb7) -- (yb8) -- (yb9);
    
    \draw (yd3) -- (yb3) -- (yt3) -- (yu3);
    \draw (yd4) -- (yb4) -- (yt4) -- (yu4);
    \draw (yd5) -- (yb5) -- (yt5) -- (yu5);
    \draw (yd6) -- (yb6) -- (yt6) -- (yu6);
    \draw (yd7) -- (yb7) -- (yt7) -- (yu7);
    \draw (yd8) -- (yb8) -- (yt8) -- (yu8);
    \draw (yd9) -- (yb9) -- (yt9) -- (yu9);
    
    \draw (yd3) -- (yb4) -- (yd5) -- (yb6) -- (yd7) -- (yb8) -- (yd9);
    \draw (yb3) -- (yd4) -- (yb5) -- (yd6) -- (yb7) -- (yd8) -- (yb9);
    
    \draw (yu3) -- (yt4) -- (yu5) -- (yt6) -- (yu7) -- (yt8) -- (yu9);
    \draw (yt3) -- (yu4) -- (yt5) -- (yu6) -- (yt7) -- (yu8) -- (yt9);
    
    \draw[line width=1mm]  (yb4) -- (yb5) -- (yt5) -- (yt4) -- (yb4);

    \draw (zt7) -- (zt8) -- (zt9) -- (zt10) -- (zt11) -- (zt12) -- (zt13);
    \draw (zb7) -- (zb8) -- (zb9) -- (zb10) -- (zb11) -- (zb12) -- (zb13);

    \draw (zd7) -- (zb7) -- (zt7) -- (zu7);
    \draw (zd8) -- (zb8) -- (zt8) -- (zu8);
    \draw (zd9) -- (zb9) -- (zt9) -- (zu9);
    \draw (zd10) -- (zb10) -- (zt10) -- (zu10);
    \draw (zd11) -- (zb11) -- (zt11) -- (zu11);
    \draw (zd12) -- (zb12) -- (zt12) -- (zu12);
    \draw (zd13) -- (zb13) -- (zt13) -- (zu13);
    
    \draw (zd7) -- (zb8) -- (zd9) -- (zb10) -- (zd11) -- (zb12) -- (zd13);
    \draw (zb7) -- (zd8) -- (zb9) -- (zd10) -- (zb11) -- (zd12) -- (zb13);
    
    \draw (zu7) -- (zt8) -- (zu9) -- (zt10) -- (zu11) -- (zt12) -- (zu13);
    \draw (zt7) -- (zu8) -- (zt9) -- (zu10) -- (zt11) -- (zu12) -- (zt13);
    
    \draw[line width=1mm]  (zb8) -- (zb9) -- (zt9) -- (zt8) -- (zb8);

    \node (cv1) at (11,8) {$v^c_1$}; \node (cv2) at (11,6) {$v^c_2$};
    \node (cv3) at (10,7) {$v^c_3$}; \node (cv4) at (12,7) {$v^c_4$};
    
    \node (cu1) at (14,6) {$u^c_1$}; \node (cu2) at (14,8) {$u^c_2$};
    \node (cu3) at (12,4) {$u^c_3$}; \node (cu4) at (10,4) {$u^c_4$};
    
    \draw (cv1) -- (cv4) -- (cv2) -- (cv3) -- (cv1) -- (cu2) -- (cu1) -- (cv2);
    \draw (cv3) -- (cu4) -- (cu3) -- (cv4);
    
    \node[scale=0.1,draw=none] (xt1h1) at (2,13.5) {};
    \draw (xt1) to[out=60,in=180] (xt1h1);
    \node[scale=0.1,draw=none] (xt1h2) at (6,13.5) {};
    \draw (xt1h1) to[out=0,in=180] (xt1h2);
    \draw (xt1h2) to[out=0,in=90,looseness=1] (cv1);
    
    \node[scale=0.1,draw=none] (xb0h1) at (1,9.5) {};
    \draw (xb0) to[out=-60,in=180] (xb0h1);
    \node[scale=0.1,draw=none] (xb0h2) at (8,9.5) {};
    \draw (xb0h1) to[out=0,in=180] (xb0h2);
    \draw (xb0h2) to[out=0,in=120] (cu1);

    \node[scale=0.1,draw=none] (yt4h1) at (2,8.5) {};
    \draw (yt4) to[out=60,in=180] (yt4h1);
    \node[scale=0.1,draw=none] (yt4h2) at (9,8.5) {};
    \draw (yt4h1) to[out=0,in=180] (yt4h2);
    \draw (yt4h2) to[out=0,in=90,looseness=1] (cv2);

    \node[scale=0.1,draw=none] (yb5h1) at (3,4.5) {};
    \draw (yb5) to[out=-60,in=180] (yb5h1);
    \node[scale=0.1,draw=none] (yb5h2) at (9,4.5) {};
    \draw (yb5h1) to[out=0,in=180] (yb5h2);
    \draw (yb5h2) to[out=0,in=-120] (cu2);

    \node[scale=0.1,draw=none] (zt9h1) at (3,3.5) {};
    \draw (zt9) to[out=60,in=180] (zt9h1);
    \node[scale=0.1,draw=none] (zt9h2) at (7,3.5) {};
    \draw (zt9h1) to[out=0,in=180] (zt9h2);
    \draw (zt9h2) to[out=0,in=-120] (cv3);
    
    \node[scale=0.1,draw=none] (zb8h1) at (2,-0.5) {};
    \draw (zb8) to[out=-60,in=180] (zb8h1);
    \node[scale=0.1,draw=none] (zb8h2) at (7,-0.5) {};
    \draw (zb8h1) to[out=0,in=180] (zb8h2);
    \draw (zb8h2) to[out=0,in=-90] (cu3);
    
    \node[scale=0.3] (k11) at (12.50, 3.50) {};
    \node[scale=0.3] (k12) at (13.00, 4.00) {};
    
    \node[scale=0.3] (k21) at (12.75, 3.25) {};
    \node[scale=0.3] (k22) at (13.25, 3.75) {};
    
    \node[scale=0.3] (k31) at (13.00, 3.00) {};
    \node[scale=0.3] (k32) at (13.50, 3.50) {};
    
    \node[scale=0.3] (k41) at (13.25, 2.75) {};
    \node[scale=0.3] (k42) at (13.75, 3.25) {};
    
    \node[scale=0.3] (k51) at (13.50, 2.50) {};
    \node[scale=0.3] (k52) at (14.00, 3.00) {};
    
    \draw (cu4) -- (k11) -- (k12) -- (cv4);
    \draw (cu4) -- (k21) -- (k22) -- (cv4);
    \draw (cu4) -- (k31) -- (k32) -- (cv4);
    \draw (cu4) -- (k41) -- (k42) -- (cv4);
    \draw (cu4) -- (k51) -- (k52) -- (cv4);
    
    \draw [decorate,decoration={brace},xshift=-1em,yshift=-1em]
    (13.50,2.5) -- (12.50, 3.50) node
    [scale=1.5,draw=none,midway,xshift=-.65em,yshift=-1.25em,rotate=45]
    {$kK_2$};
    
  \end{tikzpicture}
  \caption{The connections for a clause $c = x \lor \neg y \lor z$.  In this
    example, $c$ is the first clause of appearance for $x$ thus $x$ is
    connected to $G^c$ via the 0th square.  For $y$ and $z$, we assume
    that $c$ is the third clause they appear, thus $y$ and $z$ use the
    8th square.}
  \label{fig:c4-free-compl-connections-enlarged}
\end{figure}
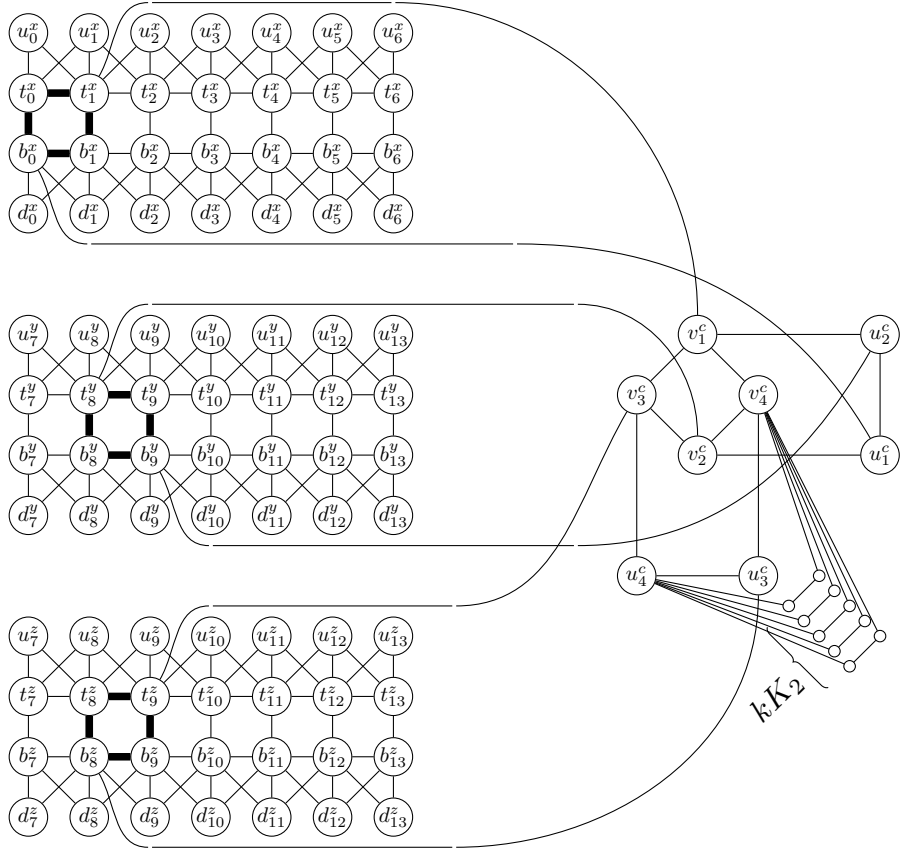

%
%

\begin{claim}
  \label{claim:c4-compl-clause-gadget-two-edges}
  For each clause gadget $G^c$ for a clause $c \in \mathcal{C}(\phi)$, we
  need to add at least \emph{two edges} between vertices of $G^c$ to
  eliminate all induced $C_4$s in $G^c$.  Moreover, there are exactly
  three ways of adding exactly two edges to $G^c$ so that the
  resulting graph does not contain any induced $C_4$: by adding
  $\{v_1^c v_2^c,v_1^c u_1^c\}$, $\{v_1^c v_2^c,v_2^c u_2^c\}$, or
  $\{v_3^c v_4^c,v_3^c u_3^c\}$.
\end{claim}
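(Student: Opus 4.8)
The plan is to reduce everything to the analysis of the unique induced $C_4$ of $G^c$ together with the two ``squares'' that its two chords spawn. First I would verify that $G^c$ has exactly one induced $C_4$, namely the $4$-cycle $C^* = v_1^c v_4^c v_2^c v_3^c$ (whose only non-edges, hence only possible chords, are $v_1^c v_2^c$ and $v_3^c v_4^c$). For this, note that every vertex of $G^c$ other than $v_1^c,v_2^c,v_3^c,v_4^c$ has degree two, and a degree-two vertex lying on an induced $C_4$ forces both of its cycle-neighbours; a one-line check for each such vertex (the $u_i^c$'s and the internal vertices of the $P_4$-subgadget attached at $v_4^c,u_4^c$) shows that no valid fourth vertex exists. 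Hence the vertex set of any induced $C_4$ of $G^c$ lies inside $\{v_1^c,v_2^c,v_3^c,v_4^c\}$, and that set induces exactly $C^*$.

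For the lower bound of two edges, observe that since a completion only adds edges, any $C_4$-free completion of $G^c$ must contain a chord of $C^*$, i.e.\ $v_1^c v_2^c$ or $v_3^c v_4^c$. Adding $v_1^c v_2^c$ makes $\{v_1^c,u_1^c,u_2^c,v_2^c\}$ induce a $C_4$ (its two non-edges $v_1^c u_1^c$ and $v_2^c u_2^c$ remain absent), and symmetrically adding $v_3^c v_4^c$ makes $\{v_3^c,u_3^c,u_4^c,v_4^c\}$ induce a $C_4$. So a single edge never suffices; and since a chord of an induced $C_4$ contained in $G^c$ necessarily has both endpoints in $G^c$, the two required edges must both lie inside $G^c$.

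For the classification, let $S$ be a two-edge set with $G^c+S$ being $C_4$-free. Then $S$ contains $v_1^c v_2^c$ or $v_3^c v_4^c$. If $v_1^c v_2^c\in S$, the second edge of $S$ must destroy the induced $C_4$ on $\{v_1^c,u_1^c,u_2^c,v_2^c\}$, whose only chords are $v_1^c u_1^c$ and $v_2^c u_2^c$; hence $S\in\{\{v_1^c v_2^c,v_1^c u_1^c\},\{v_1^c v_2^c,v_2^c u_2^c\}\}$ (in particular $v_3^c v_4^c\notin S$, which also rules out $S=\{v_1^c v_2^c,v_3^c v_4^c\}$). If instead $v_3^c v_4^c\in S$ and $v_1^c v_2^c\notin S$, the second edge must be a chord of $\{v_3^c,u_3^c,u_4^c,v_4^c\}$, i.e.\ $v_3^c u_3^c$ or $v_4^c u_4^c$; but $v_4^c u_4^c$ is impossible, since adding it turns each of the $k_\phi\ge 1$ length-three paths of the $P_4$-subgadget between $v_4^c$ and $u_4^c$ into an induced $C_4$. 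So the only candidates are the three sets claimed, and it remains to confirm that each of them actually yields a $C_4$-free graph. Since $C^*$ is the only induced $C_4$ of $G^c$ and is destroyed in every case, it suffices to enumerate the induced $C_4$'s passing through one of the two newly added edges; for $\{v_1^c v_2^c,v_1^c u_1^c\}$ and $\{v_3^c v_4^c,v_3^c u_3^c\}$ this is a direct inspection using the small neighbourhoods recorded above, and the case $\{v_1^c v_2^c,v_2^c u_2^c\}$ follows from the first one via the automorphism of $G^c$ that swaps $v_1^c\leftrightarrow v_2^c$ and $u_1^c\leftrightarrow u_2^c$.

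The only step that requires actual work — though it is still only bookkeeping — is this last verification: ruling out \emph{all} newly created induced $4$-cycles for the three candidate completions, keeping careful track of which endpoints the added chords share with the many degree-two vertices and with the $P_4$-subgadget. Everything else is forced by the structure of $C^*$ and the two squares it spawns, so the write-up should be short once that case analysis is organized.
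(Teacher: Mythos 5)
Your proposal is correct and takes essentially the same route as the paper: the induced four-cycle $v^c_1v^c_4v^c_2v^c_3$ forces one of the chords $v^c_1v^c_2$ or $v^c_3v^c_4$, each of which spawns exactly one new induced $C_4$ whose admissible chords yield the three listed sets (with $v^c_4u^c_4$ excluded by the $k_\phi K_2$ gadget), followed by a direct check that the three completions are $C_4$-free. The only nit is your claim that every vertex outside $\{v^c_1,\dots,v^c_4\}$ has degree two --- $u^c_4$ has degree $2+k_\phi$ because of the attached paths --- but your per-vertex inspection still shows no induced $C_4$ passes through it, so the argument stands.
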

\begin{proof}[of claim]
  There is a four-cycle $v^c_1v^c_4v^c_2v^c_3$ which needs to be
  eliminated, either by adding the edge $v^c_1v^c_2$
  (Figure~\ref{fig:c4compl-clause-gadget-1-2}) or $v^c_3v^c_4$
  (Figure~\ref{fig:c4compl-clause-gadget-3-4}).  In any case we create
  a new $C_4$, either $v^c_1u^c_2u^c_1v^c_2$ in the former case, and
  $v^c_4u^c_3u^c_4v^c_3$ in the latter case.  In the former case we
  can eliminate the created $C_4$ by adding $v^c_1u^c_1$ or
  $v^c_2u^c_2$, and in the latter case we can eliminate it by adding
  $v^c_3u^c_3$.  Note that in the latter case we cannot add
  $v^c_4u^c_4$, since then we would create $k_\phi$ new induced
  four-cycles.  A direct check shows that all the three aforementioned
  completion sets lead to a $C_4$-free graph.
\end{proof}

\begin{lemma}
  Given a \pname{3Sat} instance $\phi$, we have that $(G_\phi, k_\phi)$ is a
  \yes{} instance for \pname{$C_4$-Free Completion} for $k_\phi = 14
  |\mathcal{C}(\phi)|$ if and only if $\phi$ is satisfiable.
\end{lemma}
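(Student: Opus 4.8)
The plan is to prove the equivalence by exploiting that the budget $k_\phi=14|\mathcal{C}(\phi)|$ is exactly tight: by Corollary~\ref{cor:variableEdges} the variable gadgets alone require $\sum_x 4p_x=12|\mathcal{C}(\phi)|$ edges, and by Claim~\ref{claim:c4-compl-clause-gadget-two-edges} the clause gadgets require $2|\mathcal{C}(\phi)|$ more, and these two edge‑sets (together with the set of connection edges and the edges of the $k_\phi K_2$ gadgets) are pairwise disjoint. So in any completion of size at most $k_\phi$ every gadget must be completed optimally and nothing else may be added; the whole reduction is then a bookkeeping of how these forced local choices interact at the connections.

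For the forward direction, given a satisfying assignment $\alpha$ I would fix once and for all the naming of the two diagonal orientations of Claim~\ref{claim:c4-compl-variable-gadget-p-edges} so that, say, $\alpha(x)=\mathtt{true}$ corresponds to the orientation using the edges $t^x_i b^x_{i+1}$, and for each variable $x$ add the $4p_x$ diagonals of the orientation matching $\alpha(x)$; this kills every induced $C_4$ inside $G^x$. For each clause $c$ I pick a literal that satisfies it, say at position $i^*$, and add the two edges from Claim~\ref{claim:c4-compl-clause-gadget-two-edges} associated with position $i^*$; this kills the $C_4$ inside $G^c$ and makes $v^c_{i^*}$ adjacent to $u^c_{i^*}$ while leaving $v^c_{i'}\not\sim u^c_{i'}$ for the two other positions. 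The total is $\sum_x 4p_x+2|\mathcal{C}(\phi)|=k_\phi$. To see that the result is $C_4$‑free, classify the induced $C_4$'s of $G_\phi$: each is either internal to a variable gadget, internal to a clause gadget, or uses a variable--clause connection, in which case it is the $4$‑cycle on the two connection endpoints $t,b$ together with $v^c_i,u^c_i$ for the relevant position $i$, which needs \emph{both} the clause edge $v^c_iu^c_i$ and the ``completing diagonal'' of the square used by that connection. For $i\neq i^*$ the edge $v^c_iu^c_i$ is absent; for $i=i^*$ the literal is satisfied, and by the orientation convention the completing diagonal is exactly the one \emph{not} added (the false‑orientation diagonal $t^x_{j+1}b^x_j$ for a non‑negated occurrence, the true‑orientation diagonal $t^x_j b^x_{j+1}$ for a negated one), hence absent. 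Finally one checks no new $C_4$ is created: endpoints of an added edge have no common neighbour across gadget boundaries, and inside a gadget this is exactly what the two Claims assert; in particular the $k_\phi K_2$ gadget only forbids the edge $v^c_4u^c_4$, which is never added.

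For the converse, let $S$ be a completion of size at most $k_\phi$. By disjointness of the edge‑sets above together with Corollary~\ref{cor:variableEdges} and Claim~\ref{claim:c4-compl-clause-gadget-two-edges}, $|S|=k_\phi$, with $S$ consisting of exactly $4p_x$ edges inside each $G^x$, exactly two edges inside each $G^c$, and nothing else. By Claim~\ref{claim:c4-compl-variable-gadget-p-edges} the edges inside $G^x$ form one of the two consistent orientations, which we read off as $\alpha(x)$; by Claim~\ref{claim:c4-compl-clause-gadget-two-edges} the two edges inside $G^c$ form one of the three listed pairs, which ``activates'' a position $i^*=i^*(c)$ in the sense that $v^c_{i^*}\sim u^c_{i^*}$ in $G_\phi+S$. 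Let $x$ be the variable at position $i^*$ of $c$ and look at the four‑vertex set consisting of $v^c_{i^*},u^c_{i^*}$ and the two endpoints in $G^x$ of the corresponding connection: three of the four edges of the putative $C_4$ on this set are always present (the two connection edges plus the activated edge), and its two potential chords are cross‑gadget pairs, which cannot lie in $S$; hence the fourth edge, the completing diagonal of the relevant square of $G^x$, must be absent in $G_\phi+S$, for otherwise $G_\phi+S$ would contain an induced $C_4$. Since all diagonals of $G^x$ share one orientation, this pins that orientation down, and a short case check (negated vs.\ non‑negated occurrence) shows it is exactly the one that makes the corresponding literal true under $\alpha$. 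Thus $\alpha$ satisfies every clause.

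I expect the main obstacle to be the combinatorially fiddly part, best handled by explicit case analysis rather than cleverness: classifying the induced $C_4$'s of $G_\phi$ (verifying that, apart from the gadget‑internal ones, only the ``connection'' $4$‑cycles of the stated form can appear after completion, and that no $C_4$ spans two different connections), matching the three clause‑gadget completion patterns to the three literal positions, and computing precisely which square diagonal is the dangerous one for a negated versus a non‑negated occurrence so that the orientation convention comes out consistent in both directions. The conceptual heart is the propagation step in the converse: the impossibility of a chord—because under a tight budget $S$ contains no cross‑gadget edge—upgrades the purely local constraint ``no induced $C_4$ at this connection'' into a global constraint forcing a single orientation on the whole variable gadget, and hence a definite truth value. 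Combining this linear reduction with a hypothetical $2^{o(k)}n^{\bigO(1)}$ algorithm for \pname{$C_4$-Free Completion} would solve \pname{3Sat} in time $2^{o(n+m)}(n+m)^{\bigO(1)}$, contradicting ETH and thereby establishing Theorem~\ref{thm:c4completion-exp}.
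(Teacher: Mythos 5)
Your proposal is correct and follows essentially the same route as the paper: the tight budget $14|\mathcal{C}(\phi)|$ together with Claim~\ref{claim:c4-compl-variable-gadget-p-edges}, Corollary~\ref{cor:variableEdges} and Claim~\ref{claim:c4-compl-clause-gadget-two-edges} forces exactly one orientation per variable gadget and one of the three two-edge patterns per clause gadget with no further edges, the forward direction classifies induced $C_4$s as gadget-internal or connection cycles exactly as in the paper, and your converse (absence of a connection $C_4$ at the activated position forces the satisfying orientation) is the contrapositive form of the paper's unsatisfied-clause contradiction, with the same orientation conventions.
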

\begin{proof}
  From right to left, suppose $\phi$ is satisfiable.  Let $\alpha\colon
  \mathcal{V}(\phi) \to \{\mathtt{true}, \mathtt{false}\}$ be a
  satisfying assignment for $\phi$.  For every variable $x \in
  \mathcal{V}(\phi)$, if $\alpha(x) = \mathtt{true}$, we add edges
  $t_i^x b^x_{i+1}$ to $S$ for $i \in \{0, \ldots, 4p_x-1\}$ and if
  $\alpha(x) = \mathtt{false}$, we add edges $t_{i+1}^x b^x_{i}$ to
  $S$ for $i \in \{0, \ldots, 4p_x-1\}$.
  
  For a clause $c$ in $\mathcal{C}(\phi)$, if the first literal satisfies
  the clause, we add the edges $v_1^c v_2^c$ and $v_1^c u_1^c$ to $S$.
  If the second literal satisfies the clause, we add $v_1^c v_2^c$ and
  $v_2^c u_2^c$ to $S$ and if it is the third literal, we add $v_3^c
  v_4^c$ and $v_3^c u_3^c$ to $S$.  If more than one literal satisfies
  the clause, we pick any.  In total this makes
  $12|\mathcal{C}(\phi)|$ edges added to the variable gadgets and
  $2|\mathcal{C}(\phi)|$ edges added to the clause gadgets.
  
  Suppose now for a contradiction that $G_\phi+S$ contains a cycle $L$ of
  length four.  In Claims~\ref{claim:c4-compl-variable-gadget-p-edges}
  and~\ref{claim:c4-compl-clause-gadget-two-edges} it is already
  verified that $L$ is not completely contained in a variable or
  clause gadget.  Each vertex has at most one incident edge ending
  outside the gadget of the vertex and there are only edges between
  variable and clause gadgets.  Thus $L$ consist of one edge from a
  variable gadget and one from a clause gadget and two edges between.
  We can observe that $L$ then must contain either $v^c_1u^c_1$,
  $v^c_2u^c_2$, or $v^c_3u^c_3$ of the clause gadget, see
  Figure~\ref{fig:c4-free-compl-connections-enlarged}.  Let us assume
  without loss of generality that $L$ contains the edge $v^c_1u^c_1$.
  By the construction of the set $S$ this implies that the literal of
  the first variable $x$ of $c$ satisfies $c$.  If $x$ is non-negated
  in $c$, then we have that $\alpha(x)=\mathtt{true}$ and that $v^c_1
  t_{j+1}^x$ and $u^c_1 b_j^x$ are edges of $L$.  To complete the
  cycle $t_{j+1}^x b_j^x$ must be an edge of $L$; however, by the
  definition of $S$ we have added the edge $t_j^x b_{j+1}^x$ to $S$
  instead of $t_{j+1}^x b_j^x$, and we obtain a contradiction.  The
  case where $x$ is negated is symmetric.
    
  \medskip
  
  From left to right, suppose $(G_\phi, k_\phi)$ is a \yes{} instance for
  $k_\phi = 14 |\mathcal{C}(\phi)| $ and let $S$ be such that $G_\phi
  + S$ is $C_4$-free with $|S| \leq k_\phi$.  By Corollary
  \ref{cor:variableEdges} and Claim
  \ref{claim:c4-compl-clause-gadget-two-edges} we know that we need to
  use at least $12 |\mathcal{C}(\phi)|$ edges to fix the variable
  gadgets and we need to use at least $2 |\mathcal{C}(\phi)|$ edges
  for the clause gadgets.  Since $|S| \leq k_\phi$, we infer that
  $|S|=k_\phi$, that we use at exactly $4p_x$ edges to fix each
  variable gadgets $G^x$ (and that the orientation of the added edges
  must be the same within the gadget), that we use exactly two edges
  for each clause gadget $G^c$, and that $S$ contains no edges other
  than the mentioned above.
  
  We now define an assignment $\alpha$ for $\mathcal{V}(\phi)$ and prove that
  it is indeed a satisfying assignment.  If $S$ contains the edge
  $t^x_0 b^x_1$, we let $\alpha(x) = \mathtt{true}$, and if $S$
  contains the edge $t^x_1 b^x_0$ we let $\alpha(x) = \mathtt{false}$.
  Let $c \in \mathcal{C}(\phi)$ be a clause and suppose that $c$ is
  not satisfied.  We know by Claim
  \ref{claim:c4-compl-clause-gadget-two-edges} that the gadget for $c$
  contains $\{v^c_1v^c_2,v^c_1u^c_1\}$ or $\{v^c_1v^c_2,v^c_2u^c_2\}$,
  or $\{v^c_3v^c_4,v^c_3u^c_3\}$.
  
  Without loss of generality assume that $G^c$ contains $\{v^c_1v^c_2,
  v^c_1u^c_1\}$ and that $x$ is the first variable in $c$, and it
  appears non-negated.  Since $x$ does not satisfy $c$, we infer that
  $\alpha(x)=\mathtt{false}$.  This means that $t^x_1 b^x_0\in S$, and
  since the orientation of the added edges in the gadget $G^x$ is the
  same, then also $t^x_{i+1}b^x_i\in S$.  As a result, both edges
  $t^x_{i+1}b^x_i$ and $v^c_1u^c_1$ are present in $G_\phi+S$.  But
  then we have an induced four-cycle $v^c_1 u^c_1 b^x_i t^x_{i+1}
  v^c_1$, contradicting the assumption that $G_\phi+S$ was $C_4$-free.
  The cases for $y$, $z$ and negative literals are symmetric.  This
  concludes the proof.
\end{proof}

Similarly as before, the proof of Theorem~\ref{thm:c4completion-exp}
can be completed as follows: combining the presented reduction with an
algorithm for \pname{$C_4$-Free Completion} working in
$2^{o(k)}n^{\bigO(1)}$ time would give an algorithm for \pname{3Sat}
working in $2^{o(n+m)}(n + m)^{\bigO(1)}$ time, which contradicts ETH
by the results of Impagliazzo, Paturi and
Zane~\cite{impagliazzo2001which}.

%
%
%
%
%
%
\subsection{$P_4$-free completion is not solvable in subexponential
  time}
In this section we show that there is no subexponential algorithm for
\fd{} for $\F = \{P_4\}$ unless the ETH fails.  Let us recall that
since $\overline{P_4} = P_4$, the problems \pname{$P_4$-Free Edge
  Deletion} and \pname{$P_4$-Free Completion} are polynomial time
equivalent, and that this graph class more commonly goes under the
name \emph{cographs}.  In other words, we aim to convince the reader
of the following.

\begin{theorem}
  \label{thm:p4-completion-exp}
  The problem \pname{$P_4$-Free Completion} is not solvable in
  $2^{o(k)}n^{\bigO(1)}$ time unless ETH fails.
\end{theorem}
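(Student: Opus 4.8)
The plan is to prove Theorem~\ref{thm:p4-completion-exp} by a polynomial-time \emph{linear} reduction from \pname{3Sat}, in exactly the spirit of the two reductions carried out earlier in this section. Since $\overline{P_4}=P_4$, the problems \pname{$P_4$-Free Completion} and \pname{$P_4$-Free Edge Deletion} are polynomial-time equivalent, so it suffices (and is more convenient) to phrase the construction for the deletion variant. As before, I would first regularize the input $3$-CNF formula $\phi$ (each clause carrying exactly three literals over pairwise distinct variables, each variable occurring in at least two clauses; see~\cite{fomin2011subexponential}), and then build an instance $(G_\phi,k_\phi)$ with $k_\phi=\bigO(|\C(\phi)|)$. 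Composing such a reduction with a hypothetical $2^{o(k)}n^{\bigO(1)}$ algorithm would solve \pname{3Sat} in $2^{o(n+m)}(n+m)^{\bigO(1)}$, contradicting ETH via the Sparsification Lemma of Impagliazzo, Paturi and Zane~\cite{impagliazzo2001which}; this last step is routine and identical to the ones concluding the $\{2K_2\}$- and $\{C_4\}$-free cases.

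The heart of the proof is the gadget design for the obstruction $P_4$. I would aim for a \emph{variable gadget} $G^x$ that is rich in induced $P_4$'s and has the analogue of the rigidity property of Claim~\ref{claim:c4-compl-variable-gadget-p-edges}: it requires a fixed number of edge deletions to become $P_4$-free, it admits exactly two cheapest repairs (one declared \emph{true}, one \emph{false}), and once one local $P_4$ inside $G^x$ is destroyed in the ``true'' way the whole gadget is forced into the ``true'' configuration (a propagating orientation), so that one designated vertex per occurrence of $x$ faithfully transmits the truth value. Dually, I would build a \emph{clause gadget} $G^c$ containing a bounded number of induced $P_4$'s that can be repaired within the clause budget only if at least one of its three literal connections is set to the satisfying value --- the analogue of Claim~\ref{claim:c4-compl-clause-gadget-two-edges}; as in the $C_4$-free construction, a pendant fan of $k_\phi$ internally disjoint short paths between two prescribed vertices of $G^c$ can be used to forbid a ``cheating'' deletion that would otherwise trivialize the clause constraint. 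The \emph{connections} between an occurrence of a literal and its clause should create exactly one additional induced $P_4$, destroyable either by the variable gadget picking the literal's satisfying value or by spending one spare deletion inside the clause gadget, with the clause budget tuned so that this is affordable for at most two of the three literals. Then $k_\phi$ is the sum of the per-gadget costs plus one deletion per connection, which is $\bigO(|\C(\phi)|)$.

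Granting such gadgets, the equivalence ``$\phi$ is satisfiable $\iff$ $(G_\phi,k_\phi)$ is a \yes{} instance of \pname{$P_4$-Free Edge Deletion}'' would be proved along the lines of Lemma~\ref{lem:phi-sat-iff-c4-del}. The forward direction is easy: a satisfying assignment $\alpha$ prescribes the $\alpha(x)$-solution inside each $G^x$, the two prescribed deletions in each $G^c$ selected by a satisfied literal, and one deletion per connection; one then verifies, by classifying every induced $P_4$ of $G_\phi$ as either contained in a single variable gadget, contained in a single clause gadget, or spanning exactly one connecting edge together with one gadget edge on each side, that all of them are destroyed, and that no new $P_4$ appears because gadgets are glued along vertices with no common neighbours. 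The backward direction is the hard part and carries the rigidity burden: from $|S|\le k_\phi$ with $G_\phi-S$ being $P_4$-free, the per-gadget lower bounds should force $|S|=k_\phi$, force exactly the prescribed number of deletions in every gadget and the unique orientation inside each $G^x$, and leave nothing for stray edges; reading $\alpha$ off the orientations and arguing as in the final paragraph of the $C_4$-free proof that an unsatisfied clause would leave an undestroyed induced $P_4$ across a connection gives the contradiction. I expect the main obstacle to be exactly this ``locality'' of the analysis: since $P_4$-freeness is equivalent to admitting a cotree, deleting an edge to kill one $P_4$ can spawn others elsewhere, so the gadgets and their attachment points must be engineered so that the only cheap repairs are the intended ones and the edge accounting is tight --- the same delicate bookkeeping already done in Claims~\ref{claim:c4-compl-variable-gadget-p-edges} and~\ref{claim:c4-compl-clause-gadget-two-edges}, now carried out for the obstruction $P_4$.
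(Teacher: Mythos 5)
There is a genuine gap: what you have written is a proof plan, not a proof. Your framework coincides with the paper's (pass to \pname{$P_4$-Free Edge Deletion} via $\overline{P_4}=P_4$, give a linear reduction from \pname{3Sat}, invoke the Sparsification Lemma), and that part is fine but routine. The entire difficulty of the theorem lies in exhibiting concrete gadgets for the obstruction $P_4$ and proving their rigidity, and your proposal never does this. You only postulate the properties the gadgets should have (``I would aim for a variable gadget \ldots'', ``Dually, I would build a clause gadget \ldots'', ``Granting such gadgets \ldots''), and you yourself flag the crux --- that deleting an edge to kill one $P_4$ can create new $P_4$'s elsewhere, so the cheap repairs must be exactly the intended ones --- as ``the main obstacle'' without resolving it. A reduction whose gadgets are specified only by the constraints they are supposed to satisfy cannot be checked, and the backward (soundness) direction, which you correctly identify as the hard part, has no argument at all.

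For comparison, the paper fills this gap with a specific construction and a careful case analysis: each variable $x$ gets a cyclic gadget of $p_x$ pairs of ``towers'' (triangles with pendant spikes), with consecutive pairs separated by stacks of $k_\phi+3$ triangles sharing a shortcut edge; the stacks make it impossible for a minimal solution to touch stack or shortcut edges (the analogue of your undeletability device, but placed in the \emph{variable} gadget rather than your proposed $k_\phi$-fan inside the clause gadget), and an exhaustive analysis shows each tower pair needs $4$ deletions, that a $4$-deletion repair is one of four explicit eliminations, and that consistency propagates around the cycle so only two global repairs (``true''/``false'') meet the budget $4p_x$. The clause gadget is a seven-vertex graph (a hub with three triangles attached to the appropriate spikes) for which at least $4$ deletions are needed, and spending exactly $4$ forces one literal's connection to remain intact, which is what transfers satisfaction. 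Without supplying gadgets of this kind and verifying the corresponding counting claims (the $P_4$-analogues of Claims~\ref{claim:c4-compl-variable-gadget-p-edges} and~\ref{claim:c4-compl-clause-gadget-two-edges}), your argument does not establish Theorem~\ref{thm:p4-completion-exp}.
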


We reduce from \pname{3Sat} to the complement problem
\pname{$P_4$-Free Edge Deletion}.  Let~$\phi$ be the input
\pname{3Sat} formula, where we again assume that every clause of
$\phi$ contains exactly three literals corresponding to pairwise
different variables.  For a variable $x \in \mathcal{V}(\phi)$ we
denote by~$p_x$ the number of clauses in~$\phi$ containing~$x$.  Note
that since each clause contains exactly three variables, we have that
$\sum_{x \in \mathcal{V}(\phi)} p_x=3|\mathcal{C}(\phi)|$.  We
construct a graph~$G_\phi$ such that for $k_\phi =
4|\mathcal{C}(\phi)| + \sum_{x \in \mathcal{V}(\phi)} 4p_x =
16|\mathcal{C}(\phi)|$,~$\phi$ is satisfiable if and only if
$(G_\phi,k_\phi)$ is a \yes{} instance of \pname{$P_4$-Free Edge
  Deletion}.  Since the complement of~$P_4$ is~$P_4$, this will prove
the theorem.

\begin{figure}[t]
  \centering
  \begin{tikzpicture}[every node/.style={circle, draw, scale=.3},
    scale=.5]

    \node (s11) at (2.5,4.0) {};
    \node (s12) at (2.0,4.0) {};
    \node (s13) at (1.5,4.0) {};
    \node (s14) at (1.0,4.0) {};
    \node (s15) at (0.5,4.0) {};

    \node (b11) at (3.0,5.0) {};
    \node (b12) at (3.5,6.0) {};
    \node (b13) at (4.0,5.0) {};
    \node (b14) at (4.5,6.0) {};
    \node (b15) at (5.0,5.0) {};
    
    \node (t11) at (3.5,8.0) {};
    \node (t12) at (4.5,8.0) {};
    
    \node (s21) at (6.0,5.5) {};
    \node (s22) at (6.0,6.0) {};
    \node (s23) at (6.0,6.5) {};
    \node (s24) at (6.0,7.0) {};
    \node (s25) at (6.0,7.5) {};

    \node (b21) at (7.0,5.0) {};
    \node (b22) at (7.5,6.0) {};
    \node (b23) at (8.0,5.0) {};
    \node (b24) at (8.5,6.0) {};
    \node (b25) at (9.0,5.0) {};
    
    \node (t21) at (7.5,8.0) {};
    \node (t22) at (8.5,8.0) {};
    
    \node (s31) at (10.0,5.5) {};
    \node (s32) at (10.0,6.0) {};
    \node (s33) at (10.0,6.5) {};
    \node (s34) at (10.0,7.0) {};
    \node (s35) at (10.0,7.5) {};

    \node (b31) at (11.0,5.0) {};
    \node (b32) at (11.5,6.0) {};
    \node (b33) at (12.0,5.0) {};
    \node (b34) at (12.5,6.0) {};
    \node (b35) at (13.0,5.0) {};
    
    \node (t31) at (11.5,8.0) {};
    \node (t32) at (12.5,8.0) {};
    
    \node (s41) at (13.5,4.0) {};
    \node (s42) at (14.0,4.0) {};
    \node (s43) at (14.5,4.0) {};
    \node (s44) at (15.0,4.0) {};
    \node (s45) at (15.5,4.0) {};

    \node (b41) at (13.0,3.0) {};
    \node (b42) at (12.5,2.0) {};
    \node (b43) at (12.0,3.0) {};
    \node (b44) at (11.5,2.0) {};
    \node (b45) at (11.0,3.0) {};
            
    \node (t41) at (12.5,0.0) {};
    \node (t42) at (11.5,0.0) {};

    \node (s51) at (10.0,2.5) {};
    \node (s52) at (10.0,2.0) {};
    \node (s53) at (10.0,1.5) {};
    \node (s54) at (10.0,1.0) {};
    \node (s55) at (10.0,0.5) {};

    \node (b51) at (9.0,3.0) {};
    \node (b52) at (8.5,2.0) {};
    \node (b53) at (8.0,3.0) {};
    \node (b54) at (7.5,2.0) {};
    \node (b55) at (7.0,3.0) {};
            
    \node (t51) at (8.5,0.0) {};
    \node (t52) at (7.5,0.0) {};

    \node (s61) at (6.0,2.5) {};
    \node (s62) at (6.0,2.0) {};
    \node (s63) at (6.0,1.5) {};
    \node (s64) at (6.0,1.0) {};
    \node (s65) at (6.0,0.5) {};

    \node (b61) at (5.0,3.0) {};
    \node (b62) at (4.5,2.0) {};
    \node (b63) at (4.0,3.0) {};
    \node (b64) at (3.5,2.0) {};
    \node (b65) at (3.0,3.0) {};
            
    \node (t61) at (4.5,0.0) {};
    \node (t62) at (3.5,0.0) {};

    \draw[Blue] (b11) -- 
                (b12) --
                (b13) --
                (b14) --
                (b15) --
                (b13) --
                (b11);

    \draw[Green,thick](b12) -- 
                (t11);

    \draw[red, thick]  (b14) -- 
                (t12);

    \draw[Blue] (b15) -- 
                (s21) --
                (b21) --
                (s22) --
                (b15) --
                (s23) --
                (b21) --
                (s24) --
                (b15) --
                (s25) --
                (b21) --
                (b15);


    \draw[Blue] (b21) -- 
                (b22) --
                (b23) --
                (b24) --
                (b25) --
                (b23) --
                (b21);
                  
    \draw[Green,thick](b22) -- 
                (t21);
                  
    \draw[red, thick]  (b24) -- 
                (t22);

    \draw[Blue] (b25) -- 
                (s31) --
                (b31) --
                (s32) --
                (b25) --
                (s33) --
                (b31) --
                (s34) --
                (b25) --
                (s35) --
                (b31) --
                (b25);

    \draw[Blue] (b31) -- 
                (b32) --
                (b33) --
                (b34) --
                (b35) --
                (b33) --
                (b31);
                  
    \draw[Green,thick](b32) -- 
                (t31);
                  
    \draw[red, thick]  (b34) -- 
                (t32);

    \draw[Blue] (b35) -- 
                (s41) --
                (b41) --
                (s42) --
                (b35) --
                (s43) --
                (b41) --
                (s44) --
                (b35) --
                (s45) --
                (b41) --
                (b35);

    \draw[Blue] (b41) -- 
                (b42) --
                (b43) --
                (b44) --
                (b45) --
                (b43) --
                (b41);
                  
    \draw[Green,thick](b42) -- 
                (t41);
                  
    \draw[red, thick]  (b44) -- 
                (t42);

    \draw[Blue] (b45) -- 
                (s51) --
                (b51) --
                (s52) --
                (b45) --
                (s53) --
                (b51) --
                (s54) --
                (b45) --
                (s55) --
                (b51) --
                (b45);

    \draw[Blue] (b51) -- 
                (b52) --
                (b53) --
                (b54) --
                (b55) --
                (b53) --
                (b51);
                  
    \draw[Green,thick](b52) -- 
                (t51);
                  
    \draw[red, thick]  (b54) -- 
                (t52);

    \draw[Blue] (b55) -- 
                (s61) --
                (b61) --
                (s62) --
                (b55) --
                (s63) --
                (b61) --
                (s64) --
                (b55) --
                (s65) --
                (b61) --
                (b55);

    \draw[Blue] (b61) -- 
                (b62) --
                (b63) --
                (b64) --
                (b65) --
                (b63) --
                (b61);
                  
    \draw[Green,thick](b62) -- 
                (t61);
                  
    \draw[red, thick]  (b64) -- 
                (t62);

    \draw[Blue] (b65) -- 
                (s11) --
                (b11) --
                (s12) --
                (b65) --
                (s13) --
                (b11) --
                (s14) --
                (b65) --
                (s15) --
                (b11) --
                (b65);
  \end{tikzpicture}
  \caption{Variable gadget $G_x$ for a variable appearing in six clauses
    in $\phi$, i.e., $p_x = 6$.  Deleting the leftmost edges in each
    tower pair corresponds to setting $x$ to false and deleting the
    rightmost edge in each tower pair corresponds to setting $x$ to
    true.}
  \label{fig:p4-variable-gadget}
\end{figure}
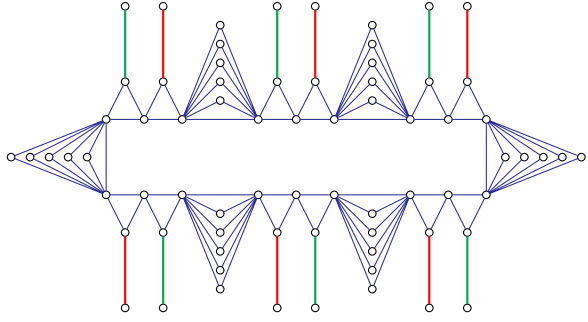

\paragraph{Variable gadget}
For each variable $x \in \mathcal{V}(\phi)$, we create a gadget~$G^x$ which
looks like the one given in Figure~\ref{fig:p4-variable-gadget}.
Before providing the construction formally, let us first describe it
informally.  We call a triangle with a pendant vertex a \emph{tower},
where the triangle will be referred to as the \emph{base} of the
tower, and the pendant vertex the \emph{spike} of the tower.  The
towers will always come in pairs, and they are joined in one of the
vertices in the bases (two vertices are identified, see
Figure~\ref{fig:p4-variable-gadget}).  Pairs of towers will be
separated by~$k'$ (defined below) triangles sharing an edge.  The
vertices not shared between the~$k'$ triangles will be called the
\emph{stack}, whereas the edge shared among the triangles will be
called the \emph{shortcut}.

The gadget~$G^x$ for a variable~$x$ consists of~$p_x$ pairs of towers
arranged in a cycle, one for each clause~$x$ appears in, where every
two consecutive pairs are separated by a shortcut edge and a stack of
vertices.  The stack is chosen to be big enough ($k'=k_\phi+3$
vertices) so that we will never delete the edge that connects the two
towers on each side of the stack, nor any edge incident to a vertex
from the stack.  We will refer to the two towers in the pairs as
Tower~1 (the one with lower index) and Tower~2.

Formally, let~$\phi$ be an instance of \pname{3Sat}.  The budget for the
output instance will be $k_\phi = 4 |\mathcal{C}(\phi)| + \sum_{x \in
  \mathcal{V}(\phi)}4p_x = 16 |\mathcal{C}(\phi)|$.  Let
$k'=k_\phi+3$.  For a variable~$x$ which appears in~$p_x$ clauses, we
create vertices~$s^x_{i,j}$ for $i \in \{1, \dots, p_x\}$ and $j \in
\{1, \dots, k'\}$.  These will be the vertices for the stacks.  For
the spikes of the towers, we add vertices~$t^x_{i,1}$ and~$t^x_{i,2}$
for $i \in \{1, \dots, p_x\}$.  For the base of the towers, we add
vertices~$b^x_{i,j}$ for $j \in \{1, \dots, 5\}$ and $i \in \{1,
\dots, p_x\}$.  These are all the vertices of the gadget~$G^x$ for $x
\in \mathcal{V}(\phi)$.

The vertices denoted by~$t$ are the two spikes in the tower, i.e.,
$t^x_{i, 1}$ is the spike of the Tower~1 of the~$i$th pair for
variable~$x$.  The vertices denoted by~$b$ are for the bases (there
are five vertices in the bases of the two towers).

Now we add the edges to~$G^x$, see
Figure~\ref{fig:p4-variable-gadget-enlarged}:
\begin{itemize}
\item For the stack, we add edges $s^x_{i, j}b^x_{i, 1}$ for all $i
  \in \{1, \dots, p_x\}$ and $j \in \{1, \dots, k'\}$ (right side of
  the stack) and edges $b^x_{i,5}s^x_{i+1, j}$ for $i \in \{1, \dots,
  p_x\}$ and $j \in \{1, \dots, k'\}$ (the left side of the next
  stack), where the indices behave cyclically modulo~$p_x$.
\item For the bases, we add the edges $b^x_{i,1}b^x_{i,2}$,
  $b^x_{i,1}b^x_{i,3}$, $b^x_{i,2}b^x_{i,3}$, $b^x_{i,3}b^x_{i,4}$,
  $b^x_{i,3}b^x_{i,5}$ and $b^x_{i,4}b^x_{i,5}$ for $i \in \{1, \dots,
  p_x\}$.  To attach the towers, we add the edges $b^x_{i,2}t^x_{i,1}$
  and $b^x_{i,4}t^x_{i,2}$.  The set of these eight edges will be
  denoted by~$R^x_i$.
\item The last edges to add are the shortcut edges
  $b^x_{i,5}b^x_{i+1,1}$ for $i \in \{1, \dots, p_x\}$, where again
  the indices behave cyclically modulo~$p_x$.
\end{itemize}

\begin{figure}[t]
  \centering
  \begin{tikzpicture}[every node/.style={circle, draw, scale=.65},
    scale=.68, minimum size=2.5em, inner sep=0pt]
    
    \node[draw=none] (startvertex) at (-2,0) {$\cdots$};

    \node (s11) at (0,1.0) {$s^x_{i,1}$};
    \node (s12) at (0,2.25) {$s^x_{i,2}$};
    
    \node[draw=none,rotate=90] (dots) at (0,3.25) {$\cdots$};
    
    \node (s1k) at (0,4.5) {$s^x_{i,k'}$};
    
    \node (b11) at (2,0) {$b^x_{i,1}$};
    \node (b12) at (3,1.5) {$b^x_{i,2}$};
    \node (b13) at (4,0) {$b^x_{i,3}$};
    \node (b14) at (5,1.5) {$b^x_{i,4}$};
    \node (b15) at (6,0) {$b^x_{i,5}$};
    
    \node (t11) at (3,4.5) {$t^x_{i,1}$};
    \node (t12) at (5,4.5) {$t^x_{i,2}$};

    \node (s21) at (8,1.0) {$s^x_{i+1,1}$};
    \node (s22) at (8,2.25) {$s^x_{i+1,2}$};
    
    \node[draw=none,rotate=90] (dots2) at (8,3.25) {$\cdots$};
    
    \node (s2k) at (8,4.5) {$s^x_{i+1,k'}$};

    \draw (s11) -- (startvertex) -- (s12);
    \draw (s1k) -- (startvertex);    
    \draw (b11) -- (startvertex);

    \draw (s11) -- (b11) -- (b12) -- (b13) -- (b14) -- (b15) -- (s21);
    \draw (s12) -- (b11) -- (b13) -- (b15);
    \draw (s12) -- (b11) -- (s1k);
    \draw (s22) -- (b15) -- (s2k);
    
    \draw[line width=1mm,Green] (b12) -- (t11);
    \draw[line width=1mm,Red]   (b14) -- (t12);
   
    \draw (b12) -- (t11);
    \draw (b14) -- (t12);
    
    \node (b21) at (10,0) {$b^x_{i+1,1}$};
    \node (b22) at (11,1.5) {$b^x_{i+1,2}$};
    \node (b23) at (12,0) {$b^x_{i+1,3}$};
    \node (b24) at (13,1.5) {$b^x_{i+1,4}$};
    \node (b25) at (14,0) {$b^x_{i+1,5}$};
    
    \node (t21) at (11,4.5) {$t^x_{i+1,1}$};
    \node (t22) at (13,4.5) {$t^x_{i+1,2}$};

    \node (s31) at (16,1.0) {$s^x_{i+2,1}$};
    \node (s32) at (16,2.25) {$s^x_{i+2,2}$};
    
    \node[draw=none,rotate=90] (dots3) at (16,3.25) {$\cdots$};
    
    \node (s3k) at (16,4.5) {$s^x_{i+2,k'}$};

    \node[draw=none] (endvertex) at (18,0) {$\cdots$};

    \draw (b25) -- (s31) -- (endvertex);
    \draw (b25) -- (s32) -- (endvertex);
    \draw (b25) -- (s3k) -- (endvertex);
    \draw (b25) --  (endvertex);

    \draw (s21) -- (b21) -- (b22) -- (b23) -- (b24) -- (b25);
    \draw (s22) -- (b21) -- (b23) -- (b25);
    \draw (s22) -- (b21) -- (s2k);
    
    \draw[line width=1mm,Green] (b22) -- (t21);
    \draw[line width=1mm,Red]   (b24) -- (t22);
   
    \draw (b22) -- (t21);
    \draw (b24) -- (t22);
    
    \draw (b15) -- (b21);
    
  \end{tikzpicture}
  \caption{Variable gadget~$G^x$.  The counter~$i$ ranges from~$1$
    to~$p_x$, the number of clauses~$x$ appears in.  This figure does
    not illustrate that the gadget is a cycle, see
    Figure~\ref{fig:p4-variable-gadget} for a zoomed-out version.}
  \label{fig:p4-variable-gadget-enlarged}
\end{figure}
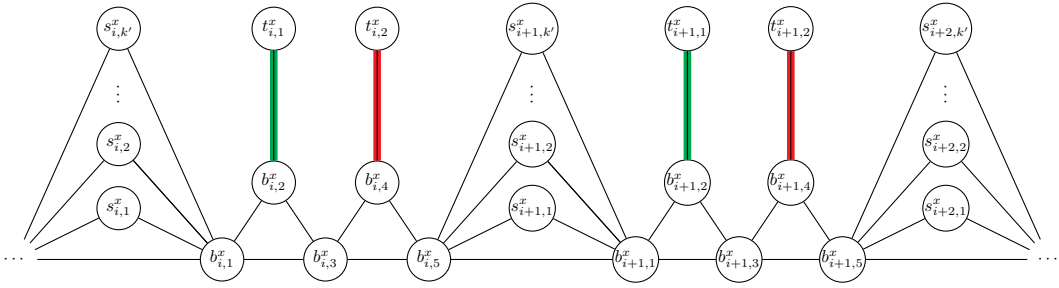

\paragraph{Elimination from variable gadgets}
We will now show that there are exactly two ways of eliminating
all~$P_4$s occurring in a variable gadget using at most~$4p_x$ edges.
To state this claim formally, we need to control how the variable
gadget is situated in a larger construction of the whole output
instance that will be defined later.  We say that a variable
gadget~$G^x$ is \emph{properly embedded} in the output
instance~$G_\phi$ if~$G^x$ is an induced subgraph of~$G_\phi$, and
moreover the only vertices of~$G^x$ that are incident to edges outside
$G^x$ are the spikes of the towers, i.e., vertices~$t^x_{i,1}$ and
$t^x_{i,2}$ for $i\in \{1,2,\ldots,p_x\}$.  This property will be
satisfied for gadgets~$G_x$ for all $x\in \mathcal{V}(\phi)$ in the
next steps of the construction.  Using this notion, we can infer
properties of the variable gadget irrespective of what the whole
output instance~$G_\phi$ constructed later looks like.

We first show that an inclusion minimal deletion set~$S$ that has size
at most~$k_\phi$ cannot touch the stacks nor the shortcut edges.

\begin{claim}\label{cl:not-touching-stacks}
  Assume gadget~$G^x$ is embedded properly in the output graph~$G_\phi$,
  and that~$S$ is an inclusion minimal~$P_4$-free edge deletion set
  in~$G_\phi$ of size at most~$k_\phi$.  Then~$S$ does not contain any
  edge of type $b^x_{i,5}b^x_{i+1,1}$ (a shortcut edge), nor any edge
  incident to a vertex of the form~$s^x_{i,j}$.
\end{claim}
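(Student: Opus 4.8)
The argument is by contradiction, combining inclusion-minimality of $S$ with the fact that each stack contains $k'=k_\phi+3>k_\phi$ vertices, each of degree $2$ in $G_\phi$. I would first dispose of the shortcut edges. Suppose some shortcut edge $e=b^x_{i,5}b^x_{i+1,1}$ lies in $S$; write $a=b^x_{i,5}$, $c=b^x_{i+1,1}$, and let $s_1,\dots,s_{k'}$ be the stack vertices $s^x_{i+1,1},\dots,s^x_{i+1,k'}$, each of which, by the edge set of $G^x$ and the proper embedding, is adjacent in $G_\phi$ exactly to $a$ and $c$. By inclusion-minimality there is an induced $P_4$, say $P$, in $G_\phi-(S\setminus\{e\})=G_\phi-S+e$ that uses the edge $e$; hence $a$ and $c$ are consecutive on $P$, so (up to reversal) $P$ has one of the three shapes $x\,a\,c\,y$, $a\,c\,y\,z$, $z\,y\,a\,c$. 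In each shape one checks that if some $s_j$ still has both edges $as_j$, $cs_j$ present in $G_\phi-S$, then a four-element set such as $\{x,a,s_j,c\}$ (or its analogue) induces a $P_4$ in $G_\phi-S$ — using that $ac\notin E(G_\phi-S)$ because $e\in S$, and that $s_j$ has no neighbour outside $\{a,c\}$ — contradicting that $G_\phi-S$ is $P_4$-free. Thus for every $j\in\{1,\dots,k'\}$ at least one of $as_j$, $cs_j$ lies in $S$; these are $k'$ distinct edges, all different from $e$, so $|S|\ge k'+1=k_\phi+4>k_\phi$, a contradiction. Hence no shortcut edge is in $S$.

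Next I would handle the stack-incident edges, leaning on the shortcut case just established. Suppose $e\in S$ is incident to a stack vertex $s=s^x_{i,j}$; let $a$ be the other endpoint of $e$ and $c$ the second neighbour of $s$, so $\{a,c\}=\{b^x_{i-1,5},b^x_{i,1}\}$ and $ac$ is a shortcut edge, hence $ac\in E(G_\phi-S)$. Since $s$ has degree $2$ in $G_\phi$, if $s$ were internal on the minimality-witness $P_4$ its two path-neighbours would be $a$ and $c$, which are adjacent — impossible in an induced $P_4$; so $s$ is an endpoint and the witness has the form $s\,a\,v_3\,v_4$, induced in $G_\phi-S+e$. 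From induced-ness, $v_3,v_4\notin\{a,c\}$ (for instance $v_4=c$ would force $av_4=ac\in E$), so neither $v_3$ nor $v_4$ is adjacent to any stack vertex $s^x_{i,j'}$. Now if some other stack vertex $s^x_{i,j'}$ still has its edge to $a$ present in $G_\phi-S$, then $\{s^x_{i,j'},a,v_3,v_4\}$ induces a $P_4$ in $G_\phi-S$, a contradiction. Hence all $k'$ edges $s^x_{i,j'}a$ ($j'=1,\dots,k'$) lie in $S$, so $|S|\ge k'=k_\phi+3>k_\phi$, again a contradiction. Since the two endpoints of $e$ at $s$ are symmetric, this covers both possibilities for $e$.

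The routine part is verifying the ``this four-element set induces a $P_4$'' assertions; each reduces to reading off the known adjacencies (as $G^x$ is an induced subgraph) among base, stack and spike vertices, together with the two recorded facts: $ac\notin E(G_\phi-S)$, and every stack vertex has degree $2$ with both neighbours among the base-terminal vertices $b^x_{i-1,5},b^x_{i,1}$. The one genuine obstacle to a slick argument is that we do not control $S$ in advance, so we cannot simply point to a surviving edge and exhibit a $P_4$; instead we must first extract a witnessing $P_4$ from inclusion-minimality and then case on its shape. Once that is done, the degree-$2$ structure of the stack vertices makes every case collapse quickly, and the inequality $k'>k_\phi$ closes each one.
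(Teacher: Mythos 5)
Your handling of the shortcut edges is correct, and it is in fact a different (arguably slicker) argument than the paper's: the paper re-routes the minimality witness through an untouched stack vertex to obtain an induced $P_5$ in $G_\phi-S$, whereas you force one deleted edge per stack vertex and conclude $|S|\ge k'>k_\phi$. The problem is in the second part, for edges incident to a stack vertex. From the witness $s\,a\,v_3\,v_4$ for $e=s^x_{i,j}a$ you assert that induced-ness gives $v_3,v_4\notin\{a,c\}$, but induced-ness only rules out $v_4=c$. The case $v_3=c$, i.e.\ a witness of the form $s-a-c-v_4$, is not excluded: it merely requires that the chord $sv_3=sc$ be absent from $G_\phi-S+e$, which happens exactly when $sc\in S$ --- and you cannot assume $sc\notin S$ at this point, since the non-deletion of the second stack edge is part of what the claim asserts. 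In that case your next step fails concretely: $v_3=c$ is adjacent to every stack vertex, and for a stack vertex $s^x_{i,j'}$ whose two edges both survive, the set $\{s^x_{i,j'},a,c,v_4\}$ induces a triangle with a pendant edge rather than a $P_4$, so you can no longer charge one deleted edge to each stack vertex and the counting collapses.

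This ``both edges at $s$ deleted'' configuration is precisely the delicate one, and it is why the paper's proof removes \emph{both} $s a$ and $s c$ from $S$ at once (still a proper subset, so inclusion-minimality applies): in $G_\phi-S'$ the vertex $s$ then has exactly the neighbourhood $\{a,c\}$, i.e.\ it is a twin of any stack vertex untouched by $S$ and off the witness (such a vertex exists by the count $k'=k_\phi+3$), and substituting that twin for $s$ yields an induced $P_4$ already present in $G_\phi-S$, a contradiction. Your single-edge-at-a-time use of minimality never creates this twin situation, and repairing the missing case within your framework requires genuinely more work (e.g.\ invoking the witness for $sc$ as well and a further case analysis involving an untouched stack vertex). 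As written, the proof of the second assertion of the claim is therefore incomplete.
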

\begin{proof}[of claim] 
  Suppose first that a shortcut edge $b^x_{i,5}b^x_{i+1,1}$ belongs
  to~$S$.  (See Figure~\ref{fig:p4-variable-gadget-enlarged} for
  indices.)  Let $S' = S \setminus \{b^x_{i,5}b^x_{i+1,1}\}$.
  Since~$S$ was inclusion minimal, the graph~$G_\phi-S'$ must contain
  an induced~$P_4$ that contains the edge $b^x_{i,5}b^x_{i+1,1}$;
  denote this~$P_4$ by~$L$.  By the assumption that~$G^x$ is properly
  embedded in~$G_\phi$ we have that~$L$ is entirely contained
  in~$G^x$.  Since the stack between pairs of towers~$i$ and~$i+1$ has
  height $k'=k_\phi+3$, we know that there are at least three vertices
  of the form $s^x_{i+1,j}$ for some $j \leq k$ which are not incident
  to an edge in~$S$.  Since~$L$ passes through~$2$ vertices apart from
  $b^x_{i,5}$ and $b^x_{i+1,1}$, we infer that one of these vertices,
  say $s^x_{i+1,j_0}$, is not incident to any edge of~$S$, nor it lies
  on~$L$.  Create~$L'$ by replacing the edge $b^x_{i,5}b^x_{i+1,1}$
  with the path $b^x_{i,5}-s^x_{i+1,j_0}-b^x_{i+1,1}$ on~$L$.  We
  infer that~$L'$ is an induced~$P_5$ in~$G_\phi-S$, which in
  particular contains an induced~$P_4$.  This is a contradiction to
  the definition of~$S$.
  
  Second, without loss of generality suppose now that the edge
  $b^x_{i,5}s^x_{i+1,j}$ belongs to~$S$ for some $j\in
  \{1,2,\ldots,k'\}$.  Let $S'=S\setminus
  \{b^x_{i,5}s^x_{i+1,j},b^x_{i+1,1}s^x_{i+1,j}\}$; note here that the
  edge $b^x_{i+1,1}s^x_{i+1,j}$ might had not belonged to~$S$, but if
  it had, then we remove it when constructing~$S'$.  Since~$S$ was
  inclusion minimal, the graph $G_\phi-S'$ must contain an induced
  $P_4$ that contain the vertex $s^x_{i+1,j}$, so also one of the
  vertices $b^x_{i,5}$ or $b^x_{i+1,1}$; denote this~$P_4$ by~$L$.
  Again, by the definition of proper embedding we have that~$L$ is
  entirely contained in~$G^x$.  By the same argumentation as before we
  infer that there exists a vertex $s^x_{i+1,j_0}$ such that
  $s^x_{i+1,j_0}$ is not traversed by~$L$ and is not incident to an
  edge of~$S$.  Since vertices $s^x_{i+1,j_0}$ and $s^x_{i+1,j}$ are
  twins in $G_\phi-S'$, it follows that the path~$L'$ constructed
  from~$L$ by substituting $s^x_{i+1,j}$ with $s^x_{i+1,j_0}$ is an
  induced~$P_4$ in~$G_\phi-S$.  This is a contradiction to the
  definition of~$S$.
\end{proof}

Now we show that every minimal deletion set~$S$ must use at least~$4$
edges in each pair of towers, and if it uses exactly~$4$ edges then
there are exactly~$4$ ways how the intersection of~$S$ with this pair
of towers can look like.

\begin{claim}\label{cl:elimination-types}
  Assume that the gadget~$G^x$ is embedded properly in the output
  graph~$G_\phi$, and that~$S$ is an inclusion minimal $P_4$-free edge
  deletion set in~$G_\phi$ of size at most~$k_\phi$.  Then for each $i
  \in \{1, 2, \ldots, p_x\}$ it holds that $|R^x_i \cap S| \geq 4$,
  and if $|R^x_i \cap S| = 4$ then either:
  \begin{itemize}
  \item[Elimination~$A$:] $R^x_i \cap S$ consists of the edges of the
    base of Tower~1 and the spike of Tower~2, or
  \item[Elimination~$B$:] $R^x_i \cap S$ consists of the edges of the
    base of Tower~2 and the spike of Tower~1, or
  \item[Elimination~$C$:] $R^x_i \cap S$ consists of the edges of both
    spikes and of the base of Tower~1 apart from the edge
    $b^x_{i,1}b^x_{i,2}$, or
  \item[Elimination~$D$:] $R^x_i \cap S$ consists of the edges of both
    spikes and of the base of Tower~2 apart from the edge
    $b^x_{i,4}b^x_{i,5}$.
  \end{itemize}
\end{claim}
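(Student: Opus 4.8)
The plan is to localize everything to a single pair of towers and to invoke Claim~\ref{cl:not-touching-stacks} to forbid the use of stack and shortcut edges. Fix $i$ and abbreviate $b_j = b^x_{i,j}$, $t_1 = t^x_{i,1}$, $t_2 = t^x_{i,2}$, and let $s_L$, $s_R$ be two stack vertices adjacent to $b_1$ and to $b_5$ respectively (the stacks are non-empty). The eight edges of $R^x_i$ split into the base edges $b_1b_2, b_1b_3, b_2b_3$ of Tower~1, the base edges $b_3b_4, b_3b_5, b_4b_5$ of Tower~2, and the spike edges $b_2t_1$, $b_4t_2$. By Claim~\ref{cl:not-touching-stacks}, $S$ uses no shortcut edge and no edge incident to a stack vertex, so every induced $P_4$ of $G_\phi$ contained in the closed neighbourhood of the $i$-th pair (the vertices $b_1,\dots,b_5,t_1,t_2$, the neighbours $s_L, s_R$, and the two shortcut neighbours) must be hit by an edge of $R^x_i$; moreover deleting edges incident to spikes cannot destroy any $P_4$ lying entirely inside $G^x$, so $R^x_i\cap S$ by itself must destroy all such $P_4$s.

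For the bound $|R^x_i\cap S|\ge 4$ I would exhibit the four induced $P_4$s $\,s_L\,b_1\,b_2\,t_1$, $\,s_R\,b_5\,b_4\,t_2$, $\,s_L\,b_1\,b_3\,b_4$, $\,s_R\,b_5\,b_3\,b_2$; the only edges of $R^x_i$ lying on them are $\{b_1b_2, b_2t_1\}$, $\{b_4b_5, b_4t_2\}$, $\{b_1b_3, b_3b_4\}$ and $\{b_3b_5, b_2b_3\}$, which are four pairwise disjoint pairs, each of which $S$ must meet. Hence $|R^x_i\cap S|\ge 4$, and if equality holds then $R^x_i\cap S$ contains exactly one edge from each of the four pairs, giving $16$ candidate sets, parametrised by the choices $b_1b_2$ vs.\ $b_2t_1$, $b_4b_5$ vs.\ $b_4t_2$, $b_1b_3$ vs.\ $b_3b_4$, and $b_3b_5$ vs.\ $b_2b_3$.

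Then I would prune these $16$ candidates using two further families of induced $P_4$s. First, the $P_4$s through $b_3$ that are not listed above — namely $s_L\,b_1\,b_3\,b_5$, $t_1\,b_2\,b_3\,b_4$, $t_1\,b_2\,b_3\,b_5$, $t_2\,b_4\,b_3\,b_2$, $t_2\,b_4\,b_3\,b_1$ — must still be hit by $R^x_i\cap S$; in particular, choosing both $b_3b_4$ and $b_2b_3$ (so $b_1b_3, b_3b_5\notin S$) leaves $s_L\,b_1\,b_3\,b_5$ undestroyed, which is impossible. Second, deleting a triangle edge incident to a pendant-attachment vertex creates a new induced $P_4$: if $b_1b_2\in S$ while $b_1b_3, b_2b_3, b_2t_1\notin S$ then $t_1\,b_2\,b_3\,b_1$ is induced in $G_\phi-S$; if $b_1b_3\in S$ while $b_1b_2, b_2b_3, b_3b_4\notin S$ then $b_1\,b_2\,b_3\,b_4$ is; and symmetrically for Tower~2 ($b_4b_5$ and $b_3b_5$ in place of $b_1b_2$ and $b_1b_3$). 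Combining these: if the two "middle" edges chosen are $b_1b_3, b_2b_3$, the surviving-$P_4$ and created-$P_4$ constraints force the right-hand choice to be the spike $b_4t_2$, while the left-hand choice is free, yielding Elimination~$A$ (left choice $b_1b_2$) or Elimination~$C$ (left choice $b_2t_1$); the middle pair $b_3b_4, b_3b_5$ is symmetric and yields $B$ and $D$; and the two mixed middle choices ($\{b_1b_3, b_3b_5\}$ or $\{b_3b_4, b_2b_3\}$) admit no valid size-$4$ set, the former because the pre-existing $P_4$s $t_1b_2b_3b_4$, $t_2b_4b_3b_2$ force deleting both spikes, after which $b_1b_2b_3b_4$ becomes induced, and the latter because $s_L\,b_1\,b_3\,b_5$ then survives for every choice. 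A direct check shows each of $A,B,C,D$ does make the neighbourhood $P_4$-free, so they are realizable.

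The hard part will be the last step: carrying out this finite case analysis carefully, making sure that no induced $P_4$ in the relevant neighbourhood is overlooked — especially the ones created by deletions — and confirming that $P_4$s cannot leak between consecutive pairs of towers. This last point is exactly what Claim~\ref{cl:not-touching-stacks} together with the stack height $k' = k_\phi + 3$ secures: the shortcut edge $b^x_{i-1,5}b^x_{i,1}$ is always present, and the stack vertices behave as non-deletable twins, so any path leaving the $i$-th pair through a stack either meets a present chord or can be rerouted through an untouched stack vertex, hence never produces a new induced $P_4$ straddling two pairs.
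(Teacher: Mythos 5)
Your argument is correct, and it organizes the case analysis differently from the paper. The paper's proof splits into four cases according to $S\cap\{t^x_{i,1}b^x_{i,2},\,t^x_{i,2}b^x_{i,4}\}$ and, inside each case, chases forced deletions along specific induced $P_4$s, concluding in every branch either $|R^x_i\cap S|\ge 5$ or one of Eliminations $A$--$D$; the bound $|R^x_i\cap S|\ge 4$ only emerges implicitly from that split. You instead exhibit four induced $P_4$s ($s_Lb_1b_2t_1$, $s_Rb_5b_4t_2$, $s_Lb_1b_3b_4$, $s_Rb_5b_3b_2$) whose deletable edges partition $R^x_i$ into four disjoint pairs, so that Claim~\ref{cl:not-touching-stacks} gives $|R^x_i\cap S|\ge 4$ at once and reduces the equality case to the $16$ transversals, which you then prune with the remaining $P_4$s through $b_3$ and with the $P_4$s created by deleting $b_1b_2$ or $b_1b_3$ (and their Tower-2 mirrors). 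I verified the pruning: the middle choices $\{b_1b_3,b_2b_3\}$ and $\{b_3b_4,b_3b_5\}$ force the opposite spike and leave the other side free, yielding exactly $A,C$ and $B,D$, while both mixed middle choices are contradictory (via $b_1b_2b_3b_4$ after both spikes are forced, respectively via the surviving $s_Lb_1b_3b_5$), so the surviving transversals are precisely the four eliminations; the realizability check you mention is not even needed for the claim. Your packing argument buys a cleaner, self-contained lower bound and a transparent enumeration of the tight cases, whereas the paper's spike-edge split is a bit shorter because it interleaves the bound with the structure. Two minor wording points: ``deleting edges incident to spikes cannot destroy any $P_4$ lying entirely inside $G^x$'' is false as literally stated (the spike edges $b_2t_1$, $b_4t_2$ are incident to spikes and lie inside $G^x$) -- you mean the edges leaving the gadget through the spikes; and the blanket assertion that every induced $P_4$ in the closed neighbourhood of the pair must be hit by $R^x_i$ is too broad (a $P_4$ could cross a shortcut into the neighbouring pair), but since every $P_4$ you actually invoke has all its edges in $R^x_i$ plus stack edges, neither slip affects the proof.
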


We refer to Figure~\ref{fig:p4-variable-gadget-kill} for visualization
of all the four types of eliminations.  We will say that $R^x_i \cap
S$ {\emph{realizes Elimination~$X$}} for~$X$ being $A$, $B$, $C$,
or~$D$, if $R^x_i \cap S$ is as described in the statement of
Claim~\ref{cl:elimination-types}.  Similarly, we say that the $i$th
pair of towers {\emph{realizes Elimination~$X$}} if $R^x_i \cap S$
does.

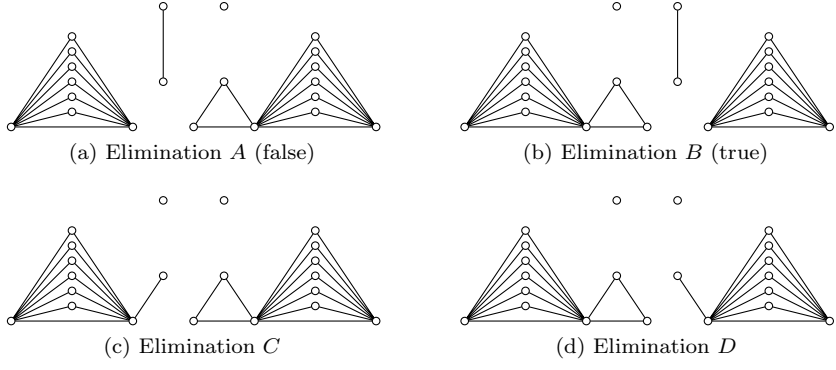
\begin{figure}[t]
  \centering
  \subfloat[Elimination~$A$ (false)] {
    \centering
    \begin{tikzpicture}[every
      node/.style={circle,draw,scale=.3},scale=.4]
      
      \node (s) at (-2, 0) {};
      \node (t) at (10, 0) {};

      \node (i) at (0,0.5) {};
      \node (j) at (0,1) {};
      \node (k) at (0,1.5) {};
      \node (l) at (0,2) {};
      \node (m) at (0,2.5) {};
      \node (n) at (0,3) {};
      
      \node (a) at (2,0) {};
      \node (b) at (4,0) {};
      \node (c) at (6,0) {};
      \node (d) at (3,1.5) {};
      \node (e) at (5,1.5) {};
      \node (f) at (3,4) {};
      \node (g) at (5,4) {};
      
      \node (i2) at (8,0.5) {};
      \node (j2) at (8,1) {};
      \node (k2) at (8,1.5) {};
      \node (l2) at (8,2) {};
      \node (m2) at (8,2.5) {};
      \node (n2) at (8,3) {};
      
      \draw (i) -- (a) -- (j);
      \draw (k) -- (a) -- (l);
      \draw (m) -- (a) -- (n);
      
      \draw (d) -- (f);
      
      \draw (b) -- (c) -- (e) -- (b);
      
      \draw (i2) -- (c) -- (j2);
      \draw (k2) -- (c) -- (l2);
      \draw (m2) -- (c) -- (n2);
      
      \draw (a) -- (s);
      \draw (c) -- (t);

      \draw (i) -- (s) -- (j);
      \draw (k) -- (s) -- (l);
      \draw (m) -- (s) -- (n);
      
      \draw (i2) -- (t) -- (j2);
      \draw (k2) -- (t) -- (l2);
      \draw (m2) -- (t) -- (n2);
    \end{tikzpicture}
    \label{fig:p4-variable-gadget-kill-a}
  }\hspace{2em}
  \subfloat[Elimination~$B$ (true)] {
    \centering
    \begin{tikzpicture}[every
      node/.style={circle,draw,scale=.3},scale=.4]

      \node (s) at (-2, 0) {};
      \node (t) at (10, 0) {};

      \node (i) at (0,0.5) {};
      \node (j) at (0,1) {};
      \node (k) at (0,1.5) {};
      \node (l) at (0,2) {};
      \node (m) at (0,2.5) {};
      \node (n) at (0,3) {};
      
      \node (a) at (2,0) {};
      \node (b) at (4,0) {};
      \node (c) at (6,0) {};
      \node (d) at (3,1.5) {};
      \node (e) at (5,1.5) {};
      \node (f) at (3,4) {};
      \node (g) at (5,4) {};
      
      \node (i2) at (8,0.5) {};
      \node (j2) at (8,1) {};
      \node (k2) at (8,1.5) {};
      \node (l2) at (8,2) {};
      \node (m2) at (8,2.5) {};
      \node (n2) at (8,3) {};
      
      \draw (i) -- (a) -- (j);
      \draw (k) -- (a) -- (l);
      \draw (m) -- (a) -- (n);
      
      \draw (a) -- (b) -- (d) -- (a);

      \draw (e) -- (g);
      
      \draw (i2) -- (c) -- (j2);
      \draw (k2) -- (c) -- (l2);
      \draw (m2) -- (c) -- (n2);
      
      \draw (a) -- (s);
      \draw (c) -- (t);

      \draw (i) -- (s) -- (j);
      \draw (k) -- (s) -- (l);
      \draw (m) -- (s) -- (n);
      
      \draw (i2) -- (t) -- (j2);
      \draw (k2) -- (t) -- (l2);
      \draw (m2) -- (t) -- (n2);
    \end{tikzpicture}
    \label{fig:p4-variable-gadget-kill-b}
  }\\
  \subfloat[Elimination~$C$] {
    \centering
    \begin{tikzpicture}[every
      node/.style={circle,draw,scale=.3},scale=.4]
      
      \node (s) at (-2, 0) {};
      \node (t) at (10, 0) {};
      
      \node (i) at (0,0.5) {};
      \node (j) at (0,1) {};
      \node (k) at (0,1.5) {};
      \node (l) at (0,2) {};
      \node (m) at (0,2.5) {};
      \node (n) at (0,3) {};
      
      \node (a) at (2,0) {};
      \node (b) at (4,0) {};
      \node (c) at (6,0) {};
      \node (d) at (3,1.5) {};
      \node (e) at (5,1.5) {};
      \node (f) at (3,4) {};
      \node (g) at (5,4) {};
      
      \node (i2) at (8,0.5) {};
      \node (j2) at (8,1) {};
      \node (k2) at (8,1.5) {};
      \node (l2) at (8,2) {};
      \node (m2) at (8,2.5) {};
      \node (n2) at (8,3) {};
      
      \draw (i) -- (a) -- (j);
      \draw (k) -- (a) -- (l);
      \draw (m) -- (a) -- (n);
      
      \draw (a) -- (d);
      
      \draw (b) -- (c) -- (e) -- (b);
      
      \draw (i2) -- (c) -- (j2);
      \draw (k2) -- (c) -- (l2);
      \draw (m2) -- (c) -- (n2);
      
      \draw (a) -- (s);
      \draw (c) -- (t);

      \draw (i) -- (s) -- (j);
      \draw (k) -- (s) -- (l);
      \draw (m) -- (s) -- (n);
      
      \draw (i2) -- (t) -- (j2);
      \draw (k2) -- (t) -- (l2);
      \draw (m2) -- (t) -- (n2);

    \end{tikzpicture}
    \label{fig:p4-variable-gadget-kill-c}
  }\hspace{2em}
  \subfloat[Elimination~$D$] {
    \centering
    \begin{tikzpicture}[every node/.style={circle, draw, scale=.3},
      scale=.4]
            
      \node (s) at (-2, 0) {};
      \node (t) at (10, 0) {};

      \node (i) at (0,0.5) {};
      \node (j) at (0,1) {};
      \node (k) at (0,1.5) {};
      \node (l) at (0,2) {};
      \node (m) at (0,2.5) {};
      \node (n) at (0,3) {};
      
      \node (a) at (2,0) {};
      \node (b) at (4,0) {};
      \node (c) at (6,0) {};
      \node (d) at (3,1.5) {};
      \node (e) at (5,1.5) {};
      \node (f) at (3,4) {};
      \node (g) at (5,4) {};
      
      \node (i2) at (8,0.5) {};
      \node (j2) at (8,1) {};
      \node (k2) at (8,1.5) {};
      \node (l2) at (8,2) {};
      \node (m2) at (8,2.5) {};
      \node (n2) at (8,3) {};
      
      \draw (i) -- (a) -- (j);
      \draw (k) -- (a) -- (l);
      \draw (m) -- (a) -- (n);
      
      \draw (a) -- (b) -- (d) -- (a);

      \draw (c) -- (e);
      
      \draw (i2) -- (c) -- (j2);
      \draw (k2) -- (c) -- (l2);
      \draw (m2) -- (c) -- (n2);
      
      \draw (a) -- (s);
      \draw (c) -- (t);

      \draw (i) -- (s) -- (j);
      \draw (k) -- (s) -- (l);
      \draw (m) -- (s) -- (n);
      
      \draw (i2) -- (t) -- (j2);
      \draw (k2) -- (t) -- (l2);
      \draw (m2) -- (t) -- (n2);
    \end{tikzpicture}
    \label{fig:p4-variable-gadget-kill-d}
  }
  \caption{The four different ways of eliminating a tower pair in a
    variable gadget.  Only Eliminations~$A$ and~$B$ yield optimum
    deletion sets in an entire variable gadget.  They all use exactly
    four edges per pair of towers, as is evident in the figure.}
  \label{fig:p4-variable-gadget-kill}
\end{figure}

\begin{proof}[of Claim~\ref{cl:elimination-types}]
  By Claim~\ref{cl:not-touching-stacks} we infer that~$S$ does not
  contain any edge incident to stacks~$i$ and~$i+1$, nor any of the
  shortcut edges incident to the considered pair of towers.  We
  consider four cases, depending on how the set $S \cap
  \{t_{i,1}^xb_{i,2}^x,t_{i,2}^xb_{i,4}^x\}$ looks like.  In each case
  we prove that $|R^x_i \cap S|\geq 4$, and that $|R^x_i \cap S|=4$
  implies that one of four listed elimination types is used.
  
  First assume that $S \cap
  \{t_{i,1}^xb_{i,2}^x,t_{i,2}^xb_{i,4}^x\}=\emptyset$ and observe
  that $s_{i,1}^x-b_{i,1}^x-b_{i,2}^x-t_{i,1}^x$ and
  $s_{i+1,1}^x-b_{i,5}^x-b_{i,4}^x-t_{i,2}^x$ are induced~$P_4$s
  in~$G_\phi$.  Since on each of these~$P_4$s there is only one edge
  that is not assumed to be not belonging to~$S$, it follows that both
  $b_{i,1}^xb_{i,2}^x$ and $b_{i,4}^xb_{i,5}^x$ must belong to~$S$.
  Suppose that $b_{i,1}^xb_{i,3}^x\notin S$.  Then we infer that
  $b_{i,2}^xb_{i,3}^x,b_{i,3}^xb_{i,4}^x,b_{i,3}^xb_{i,5}^x\in S$,
  since otherwise any of these edges would form an induced~$P_4$ in
  $G_\phi-S$ together with edges $b_{i,1}^xb_{i,3}^x$ and
  $b_{i,1}^xs_{i,1}^x$.  We infer that in this case $|R^x_i \cap
  S|\geq 5$, and a symmetric conclusion can be drawn when
  $b_{i,3}^xb_{i,5}^x\notin S$.  We are left with the case when
  $b_{i,1}^xb_{i,3}^x,b_{i,3}^xb_{i,5}^x\in S$.  But then~$S$ must
  include also one of the edges $b_{i,2}^xb_{i,3}^x$ or
  $b_{i,3}^xb_{i,4}^x$ so that the induced~$P_4$
  $t^x_{i,1}-b^x_{i,2}-b^x_{i,3}-b^x_{i,4}$ is destroyed.  Hence, in
  all the considered cases we conclude that $|R^x_i \cap S|\geq 5$.
  
  Second, assume that $S \cap
  \{t_{i,1}^xb_{i,2}^x,t_{i,2}^xb_{i,4}^x\}=\{t_{i,2}^xb_{i,4}^x\}$.
  The same reasoning as in the previous paragraph shows that
  $b_{i,1}^xb_{i,2}^x$ must belong to~$S$.  Again, if
  $b_{i,1}^xb_{i,3}^x\notin S$, then all the edges
  $b_{i,2}^xb_{i,3}^x,b_{i,3}^xb_{i,4}^x,b_{i,3}^xb_{i,5}^x$ must
  belong to~$S$, and so $|R^x_i \cap S|\geq 5$.  Assume then that
  $b_{i,1}^xb_{i,3}^x\in S$.  Note now that we have two induced
  $P_4s$: $t^x_{i,1}-b^x_{i,2}-b^x_{i,3}-b^x_{i,4}$ and
  $t^x_{i,1}-b^x_{i,2}-b^x_{i,3}-b^x_{i,5}$ that share the edge
  $t_{i,1}^xb_{i,2}^x$ about which we assumed that it does not belong
  to~$S$, and the edge $b^x_{i,2}b^x_{i,3}$.  To remove both
  these~$P_4$s we either remove at least two more edges, which results
  in conclusion that $|R^x_i \cap S|\geq 5$, or remove the edge
  $b^x_{i,2}b^x_{i,3}$, which results in Elimination~$A$.
  
  The third case when $S \cap
  \{t_{i,1}^xb_{i,2}^x,t_{i,2}^xb_{i,4}^x\}=\{t_{i,1}^xb_{i,2}^x\}$ is
  symmetric to the second case, and leads to a conclusion that either
  $|R^x_i \cap S|\geq 5$ or $R^x_i \cap S$ realizes Elimination~$B$.
  
  Finally, assume that $t_{i,1}^xb_{i,2}^x,t_{i,2}^xb_{i,4}^x\in S$.
  Observe that we have an induced~$P_4$
  $s^x_{i,1}-b^x_{i,1}-b^x_{i,3}-b^x_{i,5}$ in~$G_\phi$, so one of the
  edges $b^x_{i,1}b^x_{i,3}$ or $b^x_{i,3}b^x_{i,5}$ must be included
  in~$S$.  Assume first that $b^x_{i,1}b^x_{i,3}\in S$.  Consider now
  $P_4$s $s^x_{i,1}-b^x_{i,1}-b^x_{i,2}-b^x_{i,3}$ and
  $b^x_{i,2}-b^x_{i,3}-b^x_{i,5}-s^x_{i+1,1}$.  Both these~$P_4$s need
  to be destroyed by~$S$ since after removing $b^x_{i,1}b^x_{i,3}$ the
  first~$P_4$ becomes induced, while the second is induced already
  in~$G_\phi$.  Moreover, these~$P_4$s share only the edge
  $b^x_{i,2}b^x_{i,3}$, which means that either $|R^x_i \cap S|\geq 5$
  or $b^x_{i,2}b^x_{i,3}\in S$ and $R^x_i \cap S$ realizes
  Elimination~$C$.  The case when $b^x_{i,3}b^x_{i,5}\in S$ is
  symmetric and leads to a conclusion that either $|R^x_i \cap S|\geq
  5$ or $R^x_i\cap S$ realizes Elimination~$D$.
\end{proof}

Finally, we are able to prove that the variable gadget~$G^x$ requires
at least~$4p_x$ edge deletions, and that there are only two ways of
destroying all~$P_4$s by using exactly~$4p_x$ edge deletions: either
by applying Elimination~$A$ or Elimination~$B$ to all the pairs of
towers.

\begin{claim}\label{cl:deletion-var-gadget}
  Suppose a gadget~$G^x$ is embedded properly in the output
  graph~$G_\phi$, and that~$S$ is an inclusion minimal $P_4$-free edge
  deletion set in~$G_\phi$ of size at most~$k_\phi$.  Then
  $|E(G^x)\cap S|\geq 4p_x$, and if $|E(G^x)\cap S|=4p_x$, then either
  $R^x_i\cap S$ realizes Elimination~$A$ for all $i\in
  \{1,2,\ldots,p_x\}$, or $R^x_i\cap S$ realizes Elimination~$B$ for
  all $i\in \{1,2,\ldots,p_x\}$.
\end{claim}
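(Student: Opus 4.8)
The plan is to combine the two preceding claims with one additional family of induced $P_4$s that straddle consecutive pairs of towers. First I would record the counting skeleton: by Claim~\ref{cl:not-touching-stacks} the set $S$ contains no shortcut edge and no edge incident to a stack vertex, so $E(G^x)\cap S\subseteq\bigcup_{i=1}^{p_x}R^x_i$; and since the eight-edge sets $R^x_i$ live on pairwise disjoint vertex sets $\{b^x_{i,1},\dots,b^x_{i,5},t^x_{i,1},t^x_{i,2}\}$, they are pairwise disjoint, whence $|E(G^x)\cap S|=\sum_{i=1}^{p_x}|R^x_i\cap S|$. Claim~\ref{cl:elimination-types} gives $|R^x_i\cap S|\ge 4$ for every $i$, so $|E(G^x)\cap S|\ge 4p_x$; and if equality holds then $|R^x_i\cap S|=4$ for all $i$, so each pair of towers realizes one of Eliminations $A,B,C,D$. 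It remains to exclude every configuration other than ``all $A$'' and ``all $B$''.

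The heart of the argument is an interface $P_4$. For each $i$ (indices cyclic modulo $p_x$) I would consider the four-vertex path $P_i=b^x_{i,4}\,b^x_{i,5}\,b^x_{i+1,1}\,b^x_{i+1,2}$. A direct check against the edge list of $G^x$ shows that the only edges among these four vertices are $b^x_{i,4}b^x_{i,5}$ (a base edge of Tower~2 of pair~$i$), $b^x_{i,5}b^x_{i+1,1}$ (the shortcut edge), and $b^x_{i+1,1}b^x_{i+1,2}$ (a base edge of Tower~1 of pair~$i+1$); since all four are base vertices and $G^x$ is properly embedded, $P_i$ is an induced $P_4$ of $G_\phi$ as well. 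As $G_\phi-S$ is $P_4$-free, $S$ must contain one of those three edges; its middle edge is a shortcut edge, hence not in $S$ by Claim~\ref{cl:not-touching-stacks}, so $S$ contains $b^x_{i,4}b^x_{i,5}$ or $b^x_{i+1,1}b^x_{i+1,2}$. Now I would inspect the four explicit elimination patterns: among $A,B,C,D$, the edge $b^x_{i,4}b^x_{i,5}$ belongs to $R^x_i\cap S$ \emph{only} when pair~$i$ realizes Elimination~$B$ (Eliminations $A,C$ delete Tower~1's base, and $D$ deletes Tower~2's base \emph{apart from} $b^x_{i,4}b^x_{i,5}$), and the edge $b^x_{i+1,1}b^x_{i+1,2}$ belongs to $R^x_{i+1}\cap S$ \emph{only} when pair~$i+1$ realizes Elimination~$A$ (Eliminations $B,D$ do not touch Tower~1's base, and $C$ deletes it \emph{apart from} $b^x_{i+1,1}b^x_{i+1,2}$). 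Hence for every $i$: pair~$i$ realizes $B$, or pair~$i+1$ realizes $A$.

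Finally I would close by cyclic propagation. Suppose some pair~$j$ does not realize~$B$ (so it realizes $A$, $C$, or $D$). The constraint for index $j$ then forces pair~$j+1$ to realize~$A$; but $A\ne B$, so pair~$j+1$ does not realize~$B$ either, and the constraint for $j+1$ forces pair~$j+2$ to realize~$A$; iterating around the cycle, every pair, including pair~$j$, realizes~$A$. Thus either all $p_x$ pairs realize~$B$, or all realize~$A$; in particular Eliminations~$C$ and~$D$ never occur when $|E(G^x)\cap S|=4p_x$, which is exactly Claim~\ref{cl:deletion-var-gadget}.

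The step I expect to be the real obstacle is isolating the correct interface gadget in the second paragraph: one needs an induced $P_4$ crossing from pair~$i$ into pair~$i+1$ whose \emph{only} deletable edges are precisely the two edges ($b^x_{i,4}b^x_{i,5}$ and $b^x_{i+1,1}b^x_{i+1,2}$) whose membership in $S$ is pinned down by the elimination type, and then to verify, carefully and case-by-case against the descriptions of $A$–$D$, both the ``if'' and the ``only if'' directions of the statement ``$b^x_{i,4}b^x_{i,5}\in S \iff$ pair~$i$ realizes $B$'' (and its mirror). Once that bookkeeping is done, the disjointness, the counting, and the cyclic propagation are all routine.
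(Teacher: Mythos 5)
Your proof is correct and follows essentially the same route as the paper: the counting via Claims~\ref{cl:not-touching-stacks} and~\ref{cl:elimination-types} on the pairwise disjoint sets $R^x_i$, an induced $P_4$ crossing consecutive tower pairs through the shortcut edge, and then cyclic propagation of the forced elimination types. In fact your interface path $b^x_{i,4}-b^x_{i,5}-b^x_{i+1,1}-b^x_{i+1,2}$ is slightly sharper than the path $b^x_{i,4}-b^x_{i,5}-b^x_{i+1,1}-b^x_{i+1,3}$ written in the paper (the latter is also destroyed by Elimination~$C$ at pair $i+1$, since $C$ deletes $b^x_{i+1,1}b^x_{i+1,3}$), so your choice pins down Elimination~$A$ for the following pair directly and makes the key step airtight.
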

\begin{proof}[of claim]
  By Claims~\ref{cl:not-touching-stacks}
  and~\ref{cl:elimination-types} we have that~$S$ does not contain any
  shortcut edge or edge incident to a stack vertex, and moreover that
  $|R^x_i \cap S|\geq 4$ for all $i\in \{1,2,\ldots,p_x\}$.  Since
  sets~$R^x_i$ are pairwise disjoint, it follows that $|E(G^x) \cap
  S|\geq 4p_x$.  Moreover, if $|E(G^x) \cap S|=4p_x$, then $|R^x_i
  \cap S|=4$ for all $i\in \{1,2,\ldots,p_x\}$ and, by
  Claim~\ref{cl:elimination-types}, for all $i\in \{1,2,\ldots,p_x\}$
  the set $R^x_i \cap S$ must realize Elimination~$A$, $B$, $C$,
  or~$D$.
  
  We say that one pair of towers is \emph{followed} by another, if the
  former has index~$i$, and the latter has index~$i+1$ (of course,
  modulo~$p_x$).  To obtain the conclusion that either all the sets
  $R^x_i \cap S$ realize Elimination~$A$ or all of them realize
  Elimination~$B$, we observe that when some pair of towers realize
  Elimination~$A$, $C$, or~$D$, then the following pair must realize
  Elimination~$A$.  Indeed, otherwise the graph~$G_\phi-S$ would
  contain an induced~$P_4$ of the form
  $b^x_{i,4}-b^x_{i,5}-b^x_{i+1,1}-b^x_{i+1,3}$, where the~$i$th pair
  of towers is the considered pair that realizes Elimination~$A$, $C$,
  or~$D$.  Now observe that since the pairs of towers are arranged on
  a cycle, then either all pairs of towers realize Elimination~$B$, or
  at least one realizes Elimination~$A$, $C$, or~$D$, which means that
  the following pair realizes Elimination~$A$, and so all the pairs
  must realize Elimination~$A$.
\end{proof}

\paragraph{Clause gadget}
We now move on to construct the clause gadget~$G^c$ for a clause $c
\in \mathcal{C}(\phi)$.  Assume that $c = \ell_x \lor \ell_y \lor
\ell_z$, where~$\ell_r$ is a literal of variable~$r$ for $r\in
\{x,y,z\}$.  We create seven vertices: one vertex~$u^c$ and
vertices~$u^r_2$ and~$u^r_3$ for $r = x,y,z$.  We also add the edges
$u^cu^r_2$, $u^cu^r_3$ and $u^r_2u^r_3$.  Now, for non-negated $r \in
\{x,y,z\}$ in~$c$, where~$c$ is the~$i$th clause~$r$ appears in, we
add edges $u^r_2 t^r_{i,1}$ and $u^r_3 t^r_{i,1}$ (recall that
$t^r_{i,1}$ is the spike of Tower~1 in tower pair~$i$).  If~$r$
appears negated, we add the edges $u^r_2 t^r_{i,2}$ and $u^r_3
t^r_{i,2}$ instead, see Figure~\ref{fig:p4-connection-gadget}.
Let~$M^c$ be the set comprising all the~$15$ created edges, including
the ones incident to the spikes of the towers.  By~$M^{c,r}$ for $r\in
\{x,y,z\}$ we denote the subset of~$M^c$ containing~$5$ edges that are
incident to vertex~$u^r_2$ or~$u^r_3$.

\begin{figure}[t]
  \centering \subfloat[Clause gadget for a clause $c = x \lor \neg y \lor
  z$.  The dashed vertices are the connection points in the variable
  gadgets.  Observe that since $y$ appears negated, we attach it to
  Tower~2 in its pair.  The clause $c$ is the $i_\ell$th clause $\ell$
  appears in.] { \centering
    \begin{tikzpicture}[every node/.style={circle, draw, scale=.7,
        inner sep=0, minimum size=2em}, scale=.7]
      
      \node[dashed] (v1) at (3,7) {$t^z_{i_z,1}$};
      \node (v2) at (2,5)         {$u^z_2$};    
      \node (v3) at (4,5)         {$u^z_3$};    
      
      \node[dashed] (v4) at (0,0) {$t^x_{i_x,1}$};
      \node (v5) at (2.5,0.5)     {$u^x_2$};    
      \node (v6) at (0.5,2.5)     {$u^x_3$};    
      
      \node[dashed] (v7) at (6,0) {$t^y_{i_y,2}$};
      \node (v8) at (5.5,2.5)     {$u^y_2$};    
      \node (v9) at (3.5,0.5)     {$u^y_3$};    
      
      \node (v10) at (3,3) {$u^c$};
      
      \draw (v1) -- (v2) -- (v10) -- (v3) -- (v1);
      \draw (v4) -- (v5) -- (v10) -- (v6) -- (v4);
      \draw (v7) -- (v8) -- (v10) -- (v9) -- (v7);
      \draw (v2) -- (v3);
      \draw (v5) -- (v6);
      \draw (v8) -- (v9);
    \end{tikzpicture}
    \label{fig:p4-clause-gadget}
  }
  \hspace{.02\textwidth}
  \subfloat[Clause gadget elimination when $c$ is satisfied by variable
      $z$.] {
    \centering
    \begin{tikzpicture}[every node/.style={circle, draw, scale=.7,
        inner sep=0, minimum size=2em}, scale=.7]
      
      \node[dashed] (v1) at (3,7) {$t^z_{i_z,1}$};
      \node (v2) at (2,5)         {$u^z_2$};    
      \node (v3) at (4,5)         {$u^z_3$};    
      
      \node[dashed] (v4) at (0,0) {$t^x_{i_x,1}$};
      \node (v5) at (2.5,0.5)     {$u^x_2$};    
      \node (v6) at (0.5,2.5)     {$u^x_3$};    
      
      \node[dashed] (v7) at (6,0) {$t^y_{i_y,2}$};
      \node (v8) at (5.5,2.5)     {$u^y_2$};    
      \node (v9) at (3.5,0.5)     {$u^y_3$};    
      
      \node (v10) at (3,3) {$u^c$};
      
      \draw (v1) -- (v2) -- (v10) -- (v3) -- (v1);
      \draw (v4) -- (v5) -- (v6) -- (v4);
      \draw (v7) -- (v8) -- (v9) -- (v7);
      \draw (v2) -- (v3);
      \draw (v5) -- (v6);
      \draw (v8) -- (v9);
    \end{tikzpicture}
    \label{fig:p4-clause-gadget-killed}
  }
  \caption{Clause gadget $G^c$ for a clause $c = x \lor \neg y \lor
    z$.  To the left it is before elimination, to the right an optimal
    elimination when satisfied by $z$.}
\end{figure}
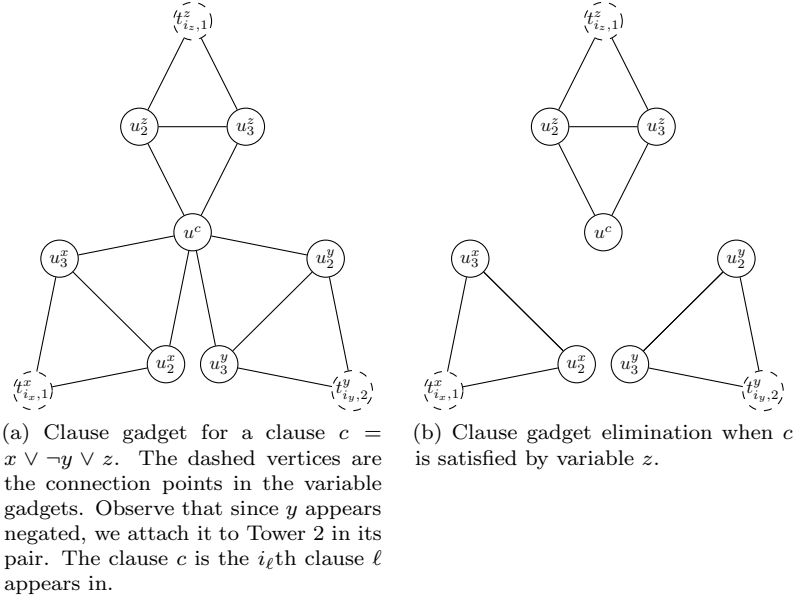

This concludes the construction of the graph~$G_\phi$; note that all the
variable gadgets are properly embedded in~$G_\phi$.  Before showing
the correctness of the reduction, we prove the following claims about
the number of edges needed for the clause gadgets:

\begin{claim}\label{cl:deletion-clause-gadget}
  Assume that~$S$ is a $P_4$-free deletion set of graph~$G_\phi$.
  Let~$c$ be a clause of~$\phi$, and assume that $x,y,z$ are the
  variables appearing in~$c$.  Then $|S \cap M^c|\geq 4$, and if $|S
  \cap M^c| = 4$, then $S \cap M^{c,r} = \emptyset$ for some $r \in
  \{x,y,z\}$ (see Figure~\ref{fig:p4-clause-gadget-killed} for an
  example where $S \cap M^{c,z} = \emptyset$).
\end{claim}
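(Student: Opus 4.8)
The plan is to exploit the ``windmill'' structure of the clause gadget. First I would record the skeleton: $M^c$ is the disjoint union of the three sets $M^{c,x},M^{c,y},M^{c,z}$, each of size five, where $M^{c,r}$ consists of the edges $u^cu^r_2$, $u^cu^r_3$, $u^r_2u^r_3$, $u^r_2t^r$ and $u^r_3t^r$, and $t^r$ denotes the spike of the variable gadget of $r$ that $c$ is attached to (either $t^r_{i,1}$ or $t^r_{i,2}$). Since the three petals are attached to pairwise distinct variable gadgets, $u^c$ is non-adjacent to every spike, and inside $G^c$ the only edges are the nine listed windmill edges, the only edges leaving the vertex set $\{u^c,u^x_2,u^x_3,u^y_2,u^y_3,u^z_2,u^z_3\}$ are the six spike edges. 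Hence for every pair of distinct petals $r\neq r'$ and every $p\in\{u^r_2,u^r_3\}$, $p'\in\{u^{r'}_2,u^{r'}_3\}$, the four-vertex path $t^r-p-u^c-p'$ is a \emph{chordless} $P_4$ of $G_\phi$. These twelve induced $P_4$s are the only obstructions the proof needs; since $G_\phi-S$ is $P_4$-free, $S$ must hit every one of them.

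Call a petal $r$ \emph{missed} if $S\cap M^{c,r}=\emptyset$. The first step is: at most one petal is missed, for if both $r$ and $r'$ were missed then $t^r-u^r_2-u^c-u^{r'}_2$ uses only edges of $M^{c,r}\cup M^{c,r'}$ and would survive in $G_\phi-S$. The second step handles the case of exactly one missed petal, say $z$: then for petal $x$ both paths $t^x-u^x_2-u^c-u^z_2$ and $t^x-u^x_3-u^c-u^z_2$ must be destroyed by an edge of $M^{c,x}$ (their unique $M^{c,z}$-edge $u^cu^z_2$ is not in $S$), and since their $M^{c,x}$-parts $\{u^x_2t^x,u^cu^x_2\}$ and $\{u^x_3t^x,u^cu^x_3\}$ are disjoint, $|S\cap M^{c,x}|\geq 2$; symmetrically $|S\cap M^{c,y}|\geq 2$. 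So in this case $|S\cap M^c|\geq 4$, which is exactly the bound the claim asks for, and it shows the ``$=4$'' case is consistent only with a missed petal.

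It remains to rule out $|S\cap M^c|\leq 4$ when no petal is missed; I claim that in fact $|S\cap M^c|\geq 5$ there, which simultaneously gives the bound and forces the ``$=4$'' case into the one-missed-petal situation. The (routine) core step is: if a petal $r$ is hit exactly once, by an edge $f\in M^{c,r}$, then I can choose $q\in\{u^r_2,u^r_3\}$ with $\{t^rq,qu^c\}\cap\{f\}=\emptyset$ --- take $q=u^r_3$ if $f\in\{u^r_2t^r,u^cu^r_2\}$, take $q=u^r_2$ if $f\in\{u^r_3t^r,u^cu^r_3\}$, and take either if $f=u^r_2u^r_3$. Then, for any other petal $r'$, both $t^r-q-u^c-u^{r'}_2$ and $t^r-q-u^c-u^{r'}_3$ are untouched inside $M^{c,r}$, so they must be killed by their $M^{c,r'}$-edges $u^cu^{r'}_2$ and $u^cu^{r'}_3$, forcing $|S\cap M^{c,r'}|\geq 2$. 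Consequently no two petals can both be hit exactly once; since no petal is missed, at least two petals are hit at least twice, so $|S\cap M^c|\geq 1+2+2=5$. Putting the three cases together: $|S\cap M^c|\geq 4$ always, and equality is possible only in the ``exactly one missed petal'' case, whence $S\cap M^{c,r}=\emptyset$ for that $r$. I expect the only genuinely delicate point to be the very first bookkeeping step --- verifying that no spike has a hidden neighbour among the clause-gadget vertices beyond its two designated ones, so that the exhibited four-vertex paths really are chordless --- but this is immediate from the construction of $G_\phi$.
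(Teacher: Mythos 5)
Your proof is correct and follows essentially the same route as the paper: both arguments hinge on the chordless paths $t^r-p-u^c-p'$ spanning two petals of the windmill, and on the forcing step that a petal with at most one deleted edge compels all four central edges $u^cu^{r'}_2,u^cu^{r'}_3$ of the other two petals to lie in $S$. Your extra case split by the number of ``missed'' petals (and the auxiliary bound of $5$ when none is missed) is a slightly more granular organization of the same counting argument, not a different method.
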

\begin{proof}[of claim]
  To simplify the notation, let $t^x,t^y,t^z$ be the corresponding
  vertices of the variable gadgets that are incident to edges
  of~$M^c$.
  
  If $|S \cap M^{c,r}|\geq 2$ for all $r\in \{x,y,z\}$, then $|S \cap M^c|\geq 6$
  and we are done.  Assume then without loss of generality that $|S
  \cap M^{c,x}| \leq 1$.  Hence, at least one of the paths $t^x -
  u^x_2 - u^c$ and $t^x - u^x_3 - u^c$ does not contain an edge
  of~$S$.  Assume without loss of generality that it is $t^x - u^x_2 -
  u^c$.  Now observe that in~$G_\phi$ we have~$4$ induced~$P_4$
  created by prolonging this~$P_3$ by vertex $u^y_2$, $u^y_3$, $u^z_2$
  or~$u^z_3$.  Since $t^x - u^x_2 - u^c$ is disjoint with~$S$, it
  follows that all the four edges connecting these vertices with~$u^c$
  must belong to~$S$.  Hence $|S \cap M^c|\geq 4$, and if $|S \cap M^c| =
  4$ then~$M^{c,x}$ must be actually disjoint with~$S$.
\end{proof}

\begin{figure}[t]
  \centering
  \begin{tikzpicture}[every node/.style={circle, draw, scale=.6},
    scale=.6]
    
    \node (v1) at (7,10) {};
    \node (v2) at (6,8) {};
    \node (v3) at (8,8) {};
    
    \node (v4) at (4,4) {};
    \node (v5) at (5,6) {};
    \node (v6) at (6,4) {};
    
    \node (v7) at (10,4) {};
    \node (v8) at (8,4) {};
    \node (v9) at (9,6) {};
    
    \node (v10) at (7,6) {};

    \draw[Blue] (v1) -- (v2) -- (v10) -- (v3) -- (v1);
    \draw[Blue] (v4) -- (v5) -- (v10) -- (v6) -- (v4);
    \draw[Blue] (v7) -- (v8) -- (v10) -- (v9) -- (v7);
    \draw[Blue] (v2) -- (v3);
    \draw[Blue] (v5) -- (v6);
    \draw[Blue] (v8) -- (v9);

    \node[draw=none,scale=2] (xxx) at (2,5) {$x$};
    \node (xt2) at (5,3) {};
    \node (xb1) at (2,3) {};
    \node (xb2) at (3,3) {};
    \node (xb3) at (3,2) {};
    \node (xb4) at (4,2) {};
    \node (xb5) at (4,1) {};

    \draw[Blue] (xb1) -- (xb2) -- (xb3) -- (xb1);
    \draw[Blue] (xb3) -- (xb4) -- (xb5) -- (xb3);
    \draw[thick,Green] (xb2) -- (v4);
    \draw[thick,Red] (xb4) -- (xt2);

    \node[draw=none,scale=2] (yyy) at (12,4.5) {$\neg y$};
    \node (yt1) at (9,3) {};
    \node (yb1) at (10,1) {};
    \node (yb2) at (10,2) {};
    \node (yb3) at (11,2) {};
    \node (yb4) at (11,3) {};
    \node (yb5) at (12,3) {};

    \draw[Blue] (yb1) -- (yb2) -- (yb3) -- (yb1);
    \draw[Blue] (yb3) -- (yb4) -- (yb5) -- (yb3);
    \draw[thick,Green] (yb2) -- (yt1);
    \draw[thick,Red] (yb4) -- (v7);

    \node[draw=none,scale=2] (zzz) at (3.5,10.5) {$z$};
    \node (zt2) at (6,10) {};
    \node (zb1) at (7.5,12.414) {}; 
    \node (zb2) at (7,11.414) {};
    \node (zb3) at (6.5,12.414) {};
    \node (zb4) at (6,11.414) {};
    \node (zb5) at (5.5,12.414) {};

    \draw[Blue] (zb1) -- (zb2) -- (zb3) -- (zb1);
    \draw[Blue] (zb3) -- (zb4) -- (zb5) -- (zb3);
    \draw[thick,Green] (zb2) -- (v1);
    \draw[thick,Red] (zb4) -- (zt2);
  \end{tikzpicture}
  \caption{For a clause $c = x \lor \neg y \lor z$, we obtain the
    above connection.  For negated variables, the rightmost spike is
    attached to the gadgets, otherwise the leftmost spike is attached.
    If~$x$ is being evaluated to a value satisfying~$c$, the edge
    spike between~$G^x$ and~$G^c$ is deleted.}
  \label{fig:p4-connection-gadget}
\end{figure}
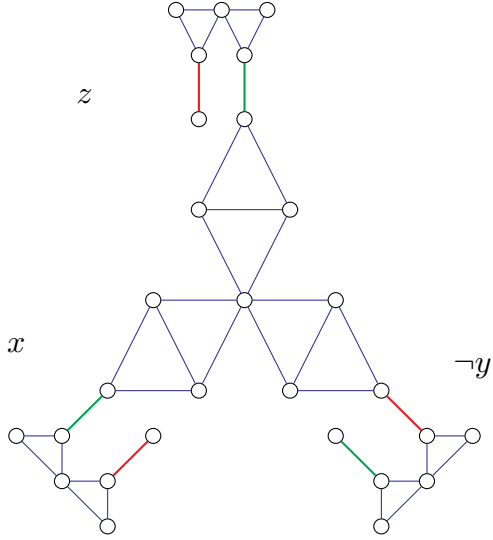

We are finally ready to prove the following lemma, which implies
correctness of the reduction.
\begin{lemma}
  Given an input instance~$\phi$ to \pname{3Sat},~$\phi$ is satisfiable if
  and only if the constructed graph~$G_\phi$ has a~$P_4$ deletion set
  of size $k_\phi = 16|\mathcal{C}(\phi)|$.
\end{lemma}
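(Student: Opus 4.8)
The plan is to prove the two implications separately, leaning on the structural claims about the gadgets already established.

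\emph{From a satisfying assignment to a deletion set.} Given a satisfying assignment $\alpha$, I would build $S$ as follows. For every variable $x$: if $\alpha(x)=\mathtt{true}$, let $S$ realize Elimination~$B$ in every one of the $p_x$ tower pairs of $G^x$; if $\alpha(x)=\mathtt{false}$, let $S$ realize Elimination~$A$ in every tower pair of $G^x$. This contributes exactly $4p_x$ edges per variable, hence $\sum_{x}4p_x=12|\mathcal{C}(\phi)|$ edges in total. For every clause $c$, pick a variable $r$ whose literal satisfies $c$, and put into $S$ the four edges of $M^c$ joining $u^c$ to the triangles of the other two variables of $c$ (exactly the deletion set exhibited in the proof of Claim~\ref{cl:deletion-clause-gadget} with $S\cap M^{c,r}=\emptyset$); this adds $4|\mathcal{C}(\phi)|$ edges. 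Thus $|S|=16|\mathcal{C}(\phi)|=k_\phi$. It remains to verify that $G_\phi-S$ is $P_4$-free. I would argue that an induced $P_4$ in $G_\phi-S$ would have to fall into one of three cases: (a) entirely inside one variable gadget $G^x$ --- impossible by Claims~\ref{cl:not-touching-stacks}--\ref{cl:deletion-var-gadget}, since $S$ realizes Elimination~$A$ or~$B$ in every tower pair; (b) entirely inside one clause gadget $G^c$ --- impossible, since after the deletion $G^c$ splits into a diamond on $\{u^c,u^r_2,u^r_3,t^r_{i,\ast}\}$ (for the chosen $r$) together with two triangles and a few degree-one vertices, all of which are $P_4$-free; or (c) passing through a spike vertex $t^x_{i,\ast}$ that is glued to a clause gadget. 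For case~(c) the key point is that, after the deletion, every spike $t^x_{i,\ast}$ is either completely disconnected from $G^x$ (its base edge was deleted, which happens precisely when the corresponding literal satisfies its clause) or attached to $G^x$ only through a degree-one vertex $b^x_{i,\ast}$ (its base triangle was deleted); and on the clause side it lies in a triangle whose other two vertices are either both adjacent to $u^c$ (only when the spike was cut off from $G^x$) or both non-adjacent to $u^c$. A short check of these two local pictures --- a paw with its pendant on $b^x_{i,\ast}$, and a triangle carrying a pendant edge --- shows no induced $P_4$ can use such a spike.

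\emph{From a deletion set to a satisfying assignment.} Conversely, assume $G_\phi-S$ is $P_4$-free with $|S|\le k_\phi$, and pass to an inclusion-minimal subset so that $S$ is an inclusion-minimal $P_4$-free edge deletion set of size at most $k_\phi$. Each variable gadget $G^x$ is properly embedded, so Claim~\ref{cl:deletion-var-gadget} gives $|E(G^x)\cap S|\ge 4p_x$, and Claim~\ref{cl:deletion-clause-gadget} gives $|M^c\cap S|\ge 4$ for every clause $c$. The sets $E(G^x)$ for $x\in\mathcal{V}(\phi)$ and $M^c$ for $c\in\mathcal{C}(\phi)$ are pairwise disjoint and $\sum_x 4p_x+\sum_c 4=12|\mathcal{C}(\phi)|+4|\mathcal{C}(\phi)|=k_\phi$, so all these inequalities are in fact equalities. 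By Claim~\ref{cl:deletion-var-gadget}, equality forces, for each $x$, that either all tower pairs of $G^x$ realize Elimination~$A$ or all realize Elimination~$B$; set $\alpha(x)=\mathtt{false}$ in the former case and $\alpha(x)=\mathtt{true}$ in the latter. By Claim~\ref{cl:deletion-clause-gadget}, for each clause $c$ there is a variable $r$ of $c$ with $M^{c,r}\cap S=\emptyset$. To see $\alpha$ satisfies $\phi$, suppose a clause $c$ is not satisfied, and let $r$ be as above, with $c$ the $i$-th clause containing $r$. If $r$ occurs non-negated in $c$, then $\alpha(r)=\mathtt{false}$, so $G^r$ realizes Elimination~$A$ in pair $i$: the spike edge $b^r_{i,2}t^r_{i,1}$ survives while the whole base triangle of Tower~1 is deleted, so $b^r_{i,2}$ has $t^r_{i,1}$ as its unique neighbour; and since $M^{c,r}\cap S=\emptyset$, the edges $t^r_{i,1}u^r_2$, $u^r_2u^r_3$, $t^r_{i,1}u^r_3$ and $u^r_2u^c$ survive. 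Then $b^r_{i,2}-t^r_{i,1}-u^r_2-u^c$ is an induced $P_4$ of $G_\phi-S$ (the only candidate chords $b^r_{i,2}u^r_2$, $b^r_{i,2}u^c$, $t^r_{i,1}u^c$ are all absent), contradicting $P_4$-freeness. The case where $r$ occurs negated is symmetric, using Tower~2, $\alpha(r)=\mathtt{true}$, Elimination~$B$, and the vertex $b^r_{i,4}$ in place of $b^r_{i,2}$. Hence every clause is satisfied.

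\emph{Main obstacle.} The backward direction is essentially a counting argument followed by the one-line exhibition of a surviving induced $P_4$, so once the gadget claims are available it should be routine. The bulk of the remaining work is the forward direction's verification that $G_\phi-S$ is $P_4$-free, and specifically case~(c): one must make sure that every ``bridge'' created where a variable gadget is glued to a clause gadget at a spike vertex never sits inside an induced $P_4$, which means checking that the relevant local neighbourhoods collapse to $P_4$-free graphs (diamonds, paws, isolated edges) in each of the combinations ``the literal satisfies its clause or not'' and ``this is or is not the variable selected for that clause''.
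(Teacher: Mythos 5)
Your proposal is correct and follows essentially the same route as the paper: the same deletion set in the forward direction (Eliminations $A$/$B$ per assignment plus cutting the two unsatisfying triangles off $u^c$), a component/local-structure check around the spikes for $P_4$-freeness, and in the backward direction the same minimality-plus-counting argument forcing uniform Elimination $A$ or $B$ per gadget, followed by exhibiting the induced path $b^r_{i,2}-t^r_{i,1}-u^r_2-u^c$ for an unsatisfied clause. The only cosmetic difference is that the paper disposes of $P_4$s inside the gadgets by a direct check rather than by citing the elimination claims (which describe minimal deletion sets and do not by themselves assert sufficiency), but this does not affect correctness.
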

\begin{proof}
  From left to right, suppose~$\phi$ is satisfiable by an assignment~$\alpha$,
  and let~$G_\phi$ and~$k_\phi$ be as above.  If a variable~$x$ is
  assigned false in~$\alpha$, we delete as in
  Figure~\ref{fig:p4-variable-gadget-kill-a}, that is, we apply
  Elimination~$A$ to all the pairs of towers in the variable
  gadget~$G^x$.  Otherwise we delete as in
  Figure~\ref{fig:p4-variable-gadget-kill-b}, that is, we apply
  Elimination~$B$ to all the pairs of towers in the variable
  gadget~$G^x$.  In other words, if~$x$ assigned to false
  (Elimination~$A$), we delete the edges $t^x_{i,2}b^x_{i,4}$,
  $b^x_{i,1}b^x_{i,2}$, $b^x_{i,1}b^x_{i,3}$, $b^x_{i,2}b^x_{i,3}$,
  otherwise, when~$x$ is assigned to true (Elimination~$B$), we delete
  the edges $t^x_{i,1}b^x_{i,2}$, $b^x_{i,3}b^x_{i,4}$,
  $b^x_{i,3}b^x_{i,5}$, $b^x_{i,4}b^x_{i,5}$, for all $i \in
  \{1,\dots,p_x\}$.
  
  Furthermore, for every clause $c = \ell_x \lor \ell_y \lor \ell_z$ we
  choose an arbitrary variable whose literal satisfies~$c$, say~$r$.
  We remove the edges $u^{r'}_2u^c$ and $u^{r'}_3u^c$ for $r'\neq r$.
  We have thus used exactly four edge removals per clause, $4
  |\mathcal{C}(\phi)|$ in total, and for each $x \in
  \mathcal{V}(\phi)$ we have removed~$4p_x$ edges.  This sums up
  exactly to $4 |\mathcal{C}(\phi)| + \sum_{x \in \mathcal{V}(\phi)}
  4p_x = 4 |\mathcal{C}(\phi)| + 4\sum_{x \in \mathcal{V}(\phi)} p_x =
  4 |\mathcal{C}(\phi)| + 4 \cdot 3 |\mathcal{C}(\phi)|= 16
  |\mathcal{C}(\phi)| = k_\phi$ edge removals.
  
  We now claim that~$G_\phi$ is $P_4$-free.  A direct check shows that
  there is no induced~$P_4$ left inside any variable gadget, nor
  inside any clause gadget.  Therefore, any induced~$P_4$ left must
  necessarily contain vertex of the form $t^x_{i,q}$ for some $x\in
  \mathcal{V}(\phi)$, $i\in \{1,2,\ldots,p_x\}$, and $q\in \{1,2\}$,
  together with the edge of the spike incident to this vertex and one
  of the edges of gadget~$G^c$ incident to this vertex, where~$c$ is
  the~$i$th clause~$x$ appears in.  Assume without loss of generality
  that~$q=1$, so~$x$ appears in~$c$ positively.  Since we did not
  delete the spike edge $t^x_{i,1}b^x_{i,2}$, we infer that
  $\alpha(x)=\mathtt{false}$.  Therefore~$x$ does not satisfy~$c$, so
  we must have deleted edges $u^cu^x_2$ and $u^cu^x_3$.  Thus in the
  remaining graph $G_\phi-S$ the connected component of the vertex
  $t^x_{i,q}$ is a triangle with a pendant edge, which is $P_4$-free.
  We conclude that~$G_\phi-S$ is indeed $P_4$-free.

  \medskip
  
  From right to left, suppose now that~$G_\phi$ is the constructed graph
  from a fixed~$\phi$ and that for~$k_\phi$ as above, we have that
  $(G_\phi,k_\phi)$ is a \yes{} instance of \pname{$P_4$-Free Edge
    Deletion}.  Let~$S$ be a~$P_4$ deletion set of size at
  most~$k_\phi$, and without loss of generality assume that~$S$ is
  inclusion minimal.  By Claims~\ref{cl:deletion-var-gadget}
  and~\ref{cl:deletion-clause-gadget}, set ~$S$ must contain at
  least~$4p_x$ edges in each set~$E(G^x)$, and at least four edges in
  each set $M^c$.  Since $4 |\mathcal{C}(\phi)| + \sum_{x \in
    \mathcal{V}(\phi)} 4p_x = k_\phi$, we infer that~$S$ contains
  exactly~$4p_x$ edges in each set~$E(G^x)$, and exactly four edges in
  each set~$M^c$.  By Claim~\ref{cl:deletion-var-gadget} we infer that
  for each variable~$x$, all the pairs of towers in~$G^x$ realize
  Elimination~$A$, or all of them realize Elimination~$B$.  Let
  $\alpha\colon \mathcal{V}(\phi) \to \{\mathtt{true},
  \mathtt{false}\}$ be an assignment that assigns value
  $\mathtt{false}$ if Elimination~$A$ is used throughout the
  corresponding gadget, and value $\mathtt{true}$ otherwise.  We claim
  that~$\alpha$ satisfies~$\phi$.
  
  Consider a clause $c \in \mathcal{C}(\phi)$ and assume that $x,y,z$ are
  variables appearing in~$c$.  By
  Claim~\ref{cl:deletion-clause-gadget} we infer that there exists $r
  \in \{x,y,z\}$ such that $S \cap M^{c,r}=\emptyset$.  Assume without
  loss of generality that $r = x$, and that~$x$ appears positively
  in~$c$.  Moreover, assume that~$c$ is the~$i_x$th clause~$x$ appears
  in.  We claim that $\alpha(x) = \mathtt{true}$, and thus~$c$ is
  satisfied by~$x$.  Indeed, otherwise the edge
  $t^x_{i_x,1}b^x_{i_x,2}$ would not be deleted, and thus
  $b^x_{i_x,2}-t^x_{i_x,1}-u^x_2-u^c$ would be an induced~$P_4$ in
  $G_\phi-S$; this is a contradiction to the definition of~$S$.
\end{proof}

Again, the proof of Theorem~\ref{thm:p4-completion-exp} follows:
combining the presented reduction with an algorithm for
\pname{$P_4$-Free Edge Deletion} working in $2^{o(k)}n^{\bigO(1)}$
time would give an algorithm for \pname{3Sat} working in $2^{o(n+m)}(n
+ m)^{\bigO(1)}$ time, which contradicts ETH by the results of
Impagliazzo, Paturi and Zane~\cite{impagliazzo2001which}.

It is easy to verify that in the presented reduction, both the graph
$G_\phi$ and $G_\phi-S$ for $S$ being the deletion set constructed for
a satisfying assignment for $\phi$ are actually $C_4$-free.  Thus the
same reduction also shows that \pname{$\{C_4, P_4\}$-free Deletion} is
not solvable in $2^{o(k)}n^{\bigO(1)}$ time unless ETH fails; Since
$\overline{P_4} = P_4$ and $\overline{C_4} = 2K_2$, it follows that
\pname{$\{2K_2, P_4\}$-Free Completion} is hard under ETH as well.  In
other words we derive the following result: \pname{Co-Trivially
  Perfect Completion} is not solvable in subexponential time unless
ETH fails.

\begin{theorem}
  \label{thm:2k2-p4-completion-exp}
  The problem \pname{$\{2K_2, P_4\}$-Free Completion} is not solvable
  in $2^{o(k)}n^{\bigO(1)}$ time unless ETH fails.
\end{theorem}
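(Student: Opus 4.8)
The plan is to avoid constructing a new reduction and instead recycle the one already built for \pname{$P_4$-Free Edge Deletion} in the proof of Theorem~\ref{thm:p4-completion-exp}. Recall that from a \pname{3Sat} formula $\phi$, regularized so that every clause has exactly three literals on pairwise distinct variables and every variable occurs in at least two clauses, that reduction produces in polynomial time a graph $G_\phi$ and a budget $k_\phi = 16|\mathcal{C}(\phi)|$, and it is shown there that $\phi$ is satisfiable if and only if $G_\phi$ admits a $P_4$-free edge deletion set of size at most $k_\phi$. Since $k_\phi$ is linear in $|\mathcal{V}(\phi)| + |\mathcal{C}(\phi)|$, by the Sparsification Lemma of Impagliazzo, Paturi and Zane~\cite{impagliazzo2001which} a $2^{o(k)}n^{\bigO(1)}$ algorithm for whatever target problem we reduce to would yield a $2^{o(n+m)}(n+m)^{\bigO(1)}$ algorithm for \pname{3Sat}, contradicting ETH. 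So it suffices to reuse this instance and reinterpret it.

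The first new step is to upgrade the target family from $\{P_4\}$ to $\{C_4,P_4\}$. I would prove two facts about the construction: (a) the graph $G_\phi$ is already $C_4$-free, and (b) if $\alpha$ is a satisfying assignment and $S$ is the deletion set built from $\alpha$ in the forward direction of the lemma preceding Theorem~\ref{thm:p4-completion-exp} (Eliminations~$A$/$B$ in each variable gadget, plus two edges per clause gadget), then $G_\phi - S$ is $C_4$-free as well. Granting (a) and (b), the instance $(G_\phi, k_\phi)$ is a yes-instance of \pname{$\{C_4,P_4\}$-Free Edge Deletion} exactly when it is a yes-instance of \pname{$P_4$-Free Edge Deletion}: every $\{C_4,P_4\}$-free deletion set is in particular $P_4$-free, and conversely the specific $S$ produced for a satisfying assignment destroys all induced $C_4$'s too. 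Hence the same reduction shows that \pname{$\{C_4,P_4\}$-Free Edge Deletion} is not solvable in $2^{o(k)}n^{\bigO(1)}$ time unless ETH fails.

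The second step is to pass from edge deletion to completion by complementation. Since $\overline{P_4} = P_4$ and $\overline{C_4} = 2K_2$, a graph $H$ is $\{C_4,P_4\}$-free if and only if $\overline{H}$ is $\{2K_2,P_4\}$-free: an induced $C_4$ in $H$ corresponds to an induced $2K_2$ in $\overline{H}$, an induced $P_4$ in $H$ corresponds to an induced $P_4$ in $\overline{H}$, and vice versa. Moreover, deleting a set $S$ of edges from $H$ corresponds to adding the same set $S$ of pairs to $\overline{H}$, with identical cardinality. Therefore $(G_\phi, k_\phi)$ is a yes-instance of \pname{$\{C_4,P_4\}$-Free Edge Deletion} if and only if $(\overline{G_\phi}, k_\phi)$ is a yes-instance of \pname{$\{2K_2,P_4\}$-Free Completion}; taking complements is polynomial and leaves the parameter unchanged, so this is a linear parameterized reduction from \pname{3Sat}. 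Pipelining a hypothetical $2^{o(k)}n^{\bigO(1)}$ algorithm for \pname{$\{2K_2,P_4\}$-Free Completion} with it contradicts ETH, which is exactly the statement of Theorem~\ref{thm:2k2-p4-completion-exp}.

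The main obstacle is verifying (a) and (b), i.e.\ the $C_4$-freeness claims. For (a) one inspects the variable gadget of Figure~\ref{fig:p4-variable-gadget-enlarged}, the clause gadget of Figure~\ref{fig:p4-clause-gadget}, and the connection edges of Figure~\ref{fig:p4-connection-gadget}, checking that each local piece (a triangle-with-pendant ``tower'', a stack of triangles sharing an edge, a base, a clause triangle) is $C_4$-free and that any two pieces share at most one vertex, so no $C_4$ can straddle two of them. For (b) one additionally uses that Eliminations~$A$ and~$B$ leave every base-plus-spikes subgraph both $P_4$-free and $C_4$-free, and that in each clause gadget removing the two edges $u^c u^{r'}_2$ and $u^c u^{r'}_3$ for the non-satisfying literals leaves only triangles with pendant edges, so the component of each spike $t^x_{i,q}$ is a triangle with a pendant edge. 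These checks are routine case analysis; the remaining parts of the argument (complementation and parameter preservation) are pure bookkeeping.
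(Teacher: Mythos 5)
Your proposal is correct and follows essentially the same route as the paper: it reuses the \pname{$P_4$-Free Edge Deletion} reduction, observes that $G_\phi$ and $G_\phi-S$ (for the deletion set built from a satisfying assignment) are $C_4$-free so the same instance witnesses hardness of \pname{$\{C_4,P_4\}$-Free Edge Deletion}, and then complements, using $\overline{P_4}=P_4$ and $\overline{C_4}=2K_2$, to obtain hardness of \pname{$\{2K_2,P_4\}$-Free Completion}. The only difference is that you spell out as explicit checks the $C_4$-freeness claims that the paper dismisses as easy to verify.
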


\section{Conclusion and future work}
\label{sec:conclusion}

In this paper, we provided several upper and lower subexponential
parameterized bounds for \fd.  The most natural open question would be
to ask for a dichotomy characterizing for which sets $\F$, \fd{}
problems are in $\cclass{P}$, in \cclass{SUBEPT}, and not in
\cclass{SUBEPT} (under ETH).  Keeping in mind the lack of such
characterization concerning classes $\cclass{P}$ and $\cclass{NP}$, an
answer to this question can be very non-trivial.  Even a more modest
task---deriving general arguments explaining what causes a completion
problem to be in \cclass{SUBEPT}---is an important open question.

\bigskip

\noindent
Similarly, from an algorithmic perspective obtaining generic
subexponential algorithms for completion problems would be a big step
forwards.  With the current knowledge, for different cases of~$\F$,
the algorithms are built on different ideas like chromatic coding,
potential maximal cliques, $k$-cuts, etc.\ and each new case requires
special treatment.

\bigskip

\noindent
Another interesting property is that all the graph classes for which
subexponential algorithms for completion problems are known, are
tightly connected to chordal graphs.  Indeed, all the known algorithms
exploit existence of a chordal-like decomposition of the target
completed graph.  Are there natural \cclass{NP}-hard graph
modification problems admitting subexponential time algorithms where
the graph class target is \emph{not} related to chordal graphs?

\bigskip

\noindent
Finally, in this paper we have presented \cclass{SUBEPT} lower bounds
(under ETH) for \fd{} for several different cases of $\F$, but we lack
a method for proving tight lower bounds on the running time for
problems that actually are in \cclass{SUBEPT}.  For instance, it may
be the case that \pname{Trivially Perfect Completion} or
\pname{Chordal Completion} can be solved in time $2^{\bigO(k^{1/4})}
n^{\bigO(1)}$.  As Fomin and Villanger~\cite{fomin2012subexponential}
observed, in the case of \pname{Chordal Completion} known
\cclass{NP}-hardness reductions provide lower bounds much weaker than
the current upper bound of $2^{\bigO(\sqrt{k}\log k)} n^{\bigO(1)}$.
However, we feel that a $2^{o(\sqrt{k})} n^{\bigO(1)}$ running time
should be impossible to achieve, since such an algorithm would
immediately imply the existence of an exact algorithm with running
time~$2^{o(n)}$.  Is it possible to prove $2^{o(\sqrt{k})}
n^{\bigO(1)}$ lower bounds under ETH for \pname{Trivially Perfect
  Completion}, \pname{Chordal Completion}, and other completion
problems to subclasses of chordal graphs known to be contained in
\cclass{SUBEPT}?

\bibliographystyle{plain}

\end{document}